\renewcommand{\vec}[1]{\mathbf{#1}}
\definecolor{BrickRed}{rgb}{0.8,0.25,0.33}
\theoremstyle{definition}
\newtheorem{example}{Example}[]
\theoremstyle{plain}
\newtheorem{thm}{Theorem}[section]
\newtheorem{prop}[thm]{Proposition}
\newtheorem{fact}[thm]{Fact}
\newtheorem{lem}[thm]{Lemma}
\newtheorem{lemma}[thm]{Lemma}
\newtheorem{claim}[thm]{Claim}
\newtheorem{Def}[thm]{Definition}
\newtheorem{obs}[thm]{Observation}
\newtheorem{rem}[thm]{Remark}
\theoremstyle{definition}
\newtheorem{definition}{Definition}[section]
\crefname{thm}{Theorem}{theorems}
\crefname{cla}{Claim}{claims}
\crefname{lem}{Lemma}{lemmas}
\crefname{fact}{Fact}{facts}
\newcommand{\E}{\mathbb{E}}
\newcommand{\Std}{\textrm{Std}}
\newcommand{\Ber}{\mathop{\textrm{Ber}}}
\newcommand{\Bin}{\mathop{\textrm{Bin}}}
\newcommand{\eps}{\varepsilon}
\newcommand{\calA}{\mathcal{A}}
\newcommand{\calB}{\mathcal{B}}
\newcommand{\calD}{\mathcal{D}}
\newcommand{\calH}{\mathcal{H}}
\newcommand{\calI}{\mathcal{I}}
\newcommand{\calM}{\mathcal{M}}
\newcommand{\OFF}{[n]}
\newcommand{\Var}{\textup{Var}}
\newcommand{\bigmid}{\,\,\middle\vert\,\,}
\newcommand{\conv}{\textup{\textsf{conv}}}
\newcommand{\opton}{\mathrm{OPT}_\mathrm{on}}
\newenvironment{wrapper}[1]
{
	\smallskip
	\begin{center}
		\begin{minipage}{\linewidth}
			\begin{mdframed}[hidealllines=true, backgroundcolor=gray!20, leftmargin=0cm,innerleftmargin=0.35cm,innerrightmargin=0.35cm,innertopmargin=0.375cm,innerbottommargin=0.375cm,roundcorner=10pt]
				#1}
			{\end{mdframed}
		\end{minipage}
	\end{center}
	\smallskip
}
\newcommand*{\todos}{}%
\def\tristan#1{}
\def\david#1{}
\def\mahsa#1{}
\def\mark#1{}
\def\amin#1{}
\def\tristan#1{\marginpar{$\leftarrow$\fbox{T}}\footnote{$\Rightarrow$~{\sf\textcolor{cyan}{#1 --Tristan}}}}
\def\david#1{\marginpar{$\leftarrow$\fbox{D}}\footnote{$\Rightarrow$~{\sf\textcolor{blue}{#1 --David}}}}
\def\mahsa#1{\marginpar{$\leftarrow$\fbox{MD}}\footnote{$\Rightarrow$~{\sf\textcolor{purple}{#1 --Mahsa}}}}
\def\mark#1{\marginpar{$\leftarrow$\fbox{MB}}\footnote{$\Rightarrow$~{\sf\textcolor{green}{#1 --Mark}}}}
\def\amin#1{\marginpar{$\leftarrow$\fbox{A}}\footnote{$\Rightarrow$~{\sf\textcolor{red}{#1 --Amin}}}}
\newcommand{\edgeapprox}{0.678}
\newcommand{\vertexapprox}{0.685}
\newcommand{\acoeff}{0.614}
\newcommand{\bcoeff}{0.122}
\newcommand{\ccoeff}{0.197}
\newcommand{\PSPACE}{\mathsf{PSPACE}}
\title{New Philosopher Inequalities for Online Bayesian Matching,\\via Pivotal Sampling}
\author[1]{Mark Braverman\thanks{Supported in part by the NSF Alan T. Waterman Award, Grant No. 1933331.}}
\author[2]{Mahsa Derakhshan}
\author[3]{Tristan Pollner}
\author[3]{Amin Saberi\thanks{Supported in part by  NSF Awards CCF2209520, CCF2312156, and a gift from CISCO.}}
\author[4]{David Wajc\thanks{Supported in part by a Taub Family “Leaders in Science \& Technology” Fellowship. Work done in part while the author was at Stanford University and Google Research.}}
\affil[1]{Princeton University}
\affil[2]{Northeastern University}
\affil[3]{Stanford University}
\affil[4]{Technion}
\begin{document}

\include{xspace}

\pagenumbering{gobble}
\date{}
\maketitle

\begin{abstract}  

    We study the polynomial-time approximability of the optimal online stochastic bipartite matching algorithm, initiated by Papadimitriou et al.~(EC'21).
    Here, nodes on one side of the graph are given upfront, while at each time $t$, an online node and its edge weights are drawn from a time-dependent distribution. The optimal algorithm is $\PSPACE$-hard to approximate within some universal constant. 
    We refer to this optimal algorithm, which requires time to think (compute), as a \emph{philosopher}, and refer to polynomial-time online approximations of 
    the above as \emph{philosopher inequalities}.
    The best known philosopher inequality for online matching yields a $0.652$-approximation. In contrast, the best possible prophet inequality, or approximation of the optimum offline solution, is $0.5$.
    
    Our main results are a $\edgeapprox$-approximate algorithm and a $\vertexapprox$-approximation for a vertex-weighted special case.
    Notably, both bounds exceed the $0.666$-approximation of the offline optimum obtained by Tang, Wu, and Wu (STOC'22) for the vertex-weighted problem. Building on our algorithms and the recent black-box reduction of Banihashem et al.~(SODA'24), we provide polytime (pricing-based) truthful mechanisms which $0.678$-approximate the social welfare of the optimal online allocation for bipartite matching markets.

    Our online allocation algorithm relies on the classic pivotal sampling algorithm (Srinivasan FOCS'01, Gandhi et al.~J.ACM'06), along with careful discarding to obtain strong negative correlations between offline nodes, while matching using the highest-value edges. Consequently, the analysis boils down to examining the distribution of a weighted sum $X$ of negatively correlated Bernoulli variables, specifically lower bounding its mass below a threshold, $\E[\min(1,X)]$, of possible independent interest. 
    Interestingly, our bound relies on an \emph{imaginary} invocation of pivotal sampling.

\end{abstract}

\newpage
\pagenumbering{arabic}
\section{Introduction}

We consider the \emph{online stochastic bipartite matching} problem. Here,
nodes of one side of a bipartite graph (offline nodes) are given up front; at timestep $t$, a node $t$ on the opposite side of the graph (an online node) reveals the weights of its edges to offline nodes, drawn from a time-dependent distribution. 
For example, in the ``Bernoulli case'' each 
online node $t$ arrives with known probability $p_t$ with known edge weights, otherwise all its edges' weights are zero.
Upon the realization of an online node $t$'s incident edge-weights, the algorithm must make an immediate and irreversible decision on whether and how to match $t$, with the objective of maximizing the weight of the resulting matching. 

When offline nodes correspond to items to sell and online nodes to impatient buyers, this problem is reminiscent of multidimensional auctions with unit-demand buyers, i.e., bipartite matching markets. 
Indeed, this connection is not only syntactic: 
\emph{pricing-based} algorithms obtaining high value for the above problem imply truthful mechanisms that approximate the optimal social welfare for such markets \cite{hajiaghayi2007automated,chawla2010multi,kleinberg2019matroid}.

The online stochastic bipartite matching problem has been intensely studied over the years via the lens of competitive analysis, or \emph{prophet~inequalities}, so-called as they compare with a ``prophet'' who has foresight of the realized graph, and computes the optimal allocation up front. A competitive ratio of $0.5$ is known to be achievable via myriad approaches \cite{feldman2015combinatorial,ezra2020online,dutting2020prophet},
 including pricing-based ones \cite{feldman2015combinatorial, dutting2020prophet},
and this is tight even without restricting to pricing-based algorithms, and even with a single offline node, or single item to sell \cite{krengel1978semiamarts}. 
For the unweighted and vertex-weighted problem, online (non-stochastic) bipartite matching algorithms yield a competitive ratio of $1-1/e\approx 0.632$ \cite{karp1990optimal,aggarwal2011online} (see also \cite{devanur2013randomized,eden2021economics}),
while the stochastic information allows for a better ratio of  $0.666$, but no more than $0.75$ \cite{tang2022fractional}. The special case of i.i.d.~distributions at all time steps has also been extensively studied \cite{feldman2009online,bahmani2010improved,mahdian2011online,karande2011online,haeupler2011online,manshadi2012online,jaillet2013online,brubach2016new,huang2019online,huang2021online,huang2022power,tang2022fractional,yan2024edge}.

Recently, Papadimitriou et al.~\cite{papadimitriou2021online} pioneered the study of the online stochastic bipartite matching problem via the lens of polytime (online) approximation of the optimal \emph{online algorithm}. The optimal online algorithm lacks foresight about the future, but possesses unlimited computational power, or time to ``think.'' We fittingly call this optimal policy a ``philosopher,'' and refer to approximation of this policy's value by polytime online algorithms as \emph{philosopher inequalities}.\footnote{We introduce this nomenclature here for the first time, although following talks by the authors at various venues, follow-up work \cite{dehaan2024matroid} and the survey \cite{huang2024online} have since adopted it.}
\cite{papadimitriou2021online} showed that unless $\mathsf{P}=\mathsf{PSPACE}$, no $(1-\epsilon)$-approximate philosopher inequality is possible for some constant $\epsilon > 0$, already for the Bernoulli problem.\footnote{In contrast, for a single offline node, with randomly permuted distributions (i.e., the prophet secretary problem), a $(1-\eps)$-approximate philosopher inequality is possible for any constant $\eps>0$ \cite{dutting2023prophet}.}
Thus, the philosopher is aptly named, as they require time to think in order to obtain optimal guarantees.\footnote{The modern equivalent of philosophers, i.e., professors, should stress this point to research funding agencies.}
In contrast, \cite{papadimitriou2021online} showed that the best philosopher inequality yields a higher approximation ratio than the best-possible prophet inequality's competitive ratio of $0.5$.
The $0.51$ bound  was subsequently improved \cite{saberi2021greedy,braverman2022max,naor2023online}, with the current best known approximation being $0.652$ \cite{naor2023online}.

Besides allowing for better quantitative bounds than achievable versus the offline optimal, the study of approximation of the optimal online algorithm opens avenues for more algorithmic and analytic ideas for online resource allocation problems. Mechanism design considerations furthermore beg the question of what guarantees for this problem can be had while maintaining incentive compatibility.

\subsection{Our Contributions}
Our main result is a new state-of-the-art philosopher inequality for online stochastic bipartite matching.
\begin{wrapper}
\begin{restatable}{thm}{thmedgeweightedintro}\label{thm-intro:edge-weighted}
(See Theorems \ref{thm:edge-weighted} and \ref{thm:edge-weighted-generalization}) There exists a polynomial-time $\edgeapprox$-approximate online algorithm for edge-weighted online stochastic bipartite matching.   
\end{restatable}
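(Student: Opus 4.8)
The plan is to upper bound $\opton$ by a linear program, round its optimal solution online using pivotal (dependent) sampling together with a careful discarding rule, and then reduce the analysis to a single scalar inequality about $\E[\min(1,X)]$ for weighted sums of negatively correlated Bernoullis. Following \cite{papadimitriou2021online} and the refinements in \cite{braverman2022max,naor2023online}, I would use the LP relaxation of the \emph{optimal online} algorithm --- crucially a relaxation of $\opton$ and not of the offline optimum, so that beating the $0.5$ prophet barrier is not a priori precluded --- and compute an optimal fractional solution $x=(x_{t,i})$ in polynomial time. Splitting each online node's type distribution into Bernoulli online nodes (one per type) reduces to the Bernoulli setting, where $x_{t,i}$ is the target probability of matching online node $t$ to offline node $i$, with $\sum_i x_{t,i}\le 1$ for each $t$, $\sum_t x_{t,i}\le 1$ for each offline node $i$, and $\sum_{t,i} x_{t,i} w_{t,i}\ge\opton$. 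It then suffices to produce an online matching in which each online node $t$ contributes expected weight at least $\edgeapprox\cdot\sum_i x_{t,i} w_{t,i}$; summing over $t$ completes the proof.

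For the algorithm, I would process online nodes in arrival order, maintaining the partial matching together with, for each still-free offline node, a residual ``fractional degree.'' When $t$ arrives, run one streaming step of pivotal sampling over the currently free offline nodes to commit at most one edge incident to $t$, with commitment probabilities driven by the $x_{t,i}$'s and with the offline nodes \emph{processed in order of decreasing weight $w_{t,i}$}, so that the algorithm matches along its highest-value edges; then decline the chosen match (discard) with a small, carefully chosen probability. This is an online analogue of Gandhi--Khanna--Srinivasan dependent rounding: because $\sum_t x_{t,i}\le 1$, each offline node is committed to at most one edge, the per-edge marginals are preserved up to the discarded slack, and --- this being the purpose of the discarding --- after every step the ``still free'' indicators of the offline nodes remain negatively correlated (negative cylinder dependence) with the correct marginals, uniformly over time.

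The crux is then a clean inequality. Fixing an online node $t$, its expected matched weight divided by $\sum_i x_{t,i} w_{t,i}$ is controlled by a ratio $\E[\min(1,X)]/\E[X]$, where $X=\sum_j a_j Y_j$, the $Y_j$ are the negatively correlated ``still free'' indicators of the offline nodes produced by the scheme, $a_j\ge 0$ are weights built from the $x_{t,i}$'s, and the truncation $\min(1,\cdot)$ encodes that $t$ makes at most one match. The heart of the argument is the bound $\E[\min(1,X)]\ge\edgeapprox\cdot\E[X]$ (when $\E[X]\le 1$) for such $X$: independence alone is useless here, since a single heavy Bernoulli drives $\E[\min(1,X)]$ to $0$, so the negative dependence coming from pivotal sampling is essential. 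I expect this to be established via an \emph{imaginary} invocation of pivotal sampling on an auxiliary instance (coupling $X$ to a variable whose number of successes lies within one of its mean), followed by a convexity / extreme-point reduction over the weight vector $(a_j)$ to pin down the optimal constant. Verifying that this constant is exactly $\edgeapprox$, and that the discarding in the second step forfeits only the slack that this inequality already absorbs, is where I expect the real difficulty to lie. The generalization (Theorem~\ref{thm:edge-weighted-generalization}) should then follow by carrying the same argument through for general edge-weight distributions, the only changes being bookkeeping in the Bernoulli reduction.
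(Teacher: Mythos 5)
Your high-level architecture matches the paper's: the LP relaxation of $\opton$, pivotal sampling over free offline nodes sorted by decreasing $w_{i,t}$, independent discarding to preserve both the marginals $\Pr[F_{i,t}]=1-y_{i,t}$ and negative cylinder dependence, and a reduction of the per-arrival analysis to lower-bounding $\E[\min(1,X)]$ for a weighted sum of NCD Bernoullis. However, the inequality you place at the crux --- $\E[\min(1,X)]\ge \edgeapprox\cdot\E[X]$ whenever $\E[X]\le 1$ and the $X_i$ are negatively correlated --- is false. Independent Bernoullis are NCD, and taking $X\sim\frac{1}{n}\Bin(n,1)$ scaled appropriately, i.e., $n$ independent coins with $c_i=1$ and $q_i=1/n$ (equivalently $c_i=1/n$, $q_i=1$... the relevant instance is the paper's \Cref{example:need-to-rescale}, where $R_{n+1}\sim\Bin(n,1/n)$), gives $\E[X]=1$ but $\E[\min(1,X)]=1-(1-1/n)^n\to 1-1/e\approx 0.632<\edgeapprox$. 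So no amount of cleverness in proving tail bounds for general NCD sums can get you past $1-1/e$; the algorithm as you describe it (run directly on the LP optimum) is genuinely only $(1-1/e)$-approximate, and the paper exhibits exactly this tight instance.

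The missing idea is the \emph{rescaling} of the LP solution before rounding (\Cref{def:scaling-gen,def:step}, following \cite{naor2023online}): replace $x_{i,t}$ by $\hat{x}_{i,t}=\int_{y_{i,t}}^{y_{i,t}+x_{i,t}}f(z)\,dz$ for a step function $f$ that deflates edges arriving while $i$ has low fractional degree ($y_{i,t}\le\theta$) and inflates later edges. The point is that the achievable lower bound on $\E[\min(1,X)]$ is \emph{not} a function of $\E[X]$ alone but depends on the marginals $q_i=1-\hat{y}_{i,t}$: the fractional bucketing bound (\Cref{lem:fracbucketingbound}, proved via the ``imaginary'' pivotal-sampling step you correctly anticipate) gives a strictly better than $1-1/e$ guarantee precisely when a constant fraction of the mass comes from offline nodes that are free with probability at least $1-\theta$, because their contributions can be fractionally packed into near-deterministic buckets. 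The rescaling then sets up a dichotomy: either most of $t$'s flow comes from low-degree neighbors (fractional bucketing wins) or from high-degree neighbors (whose $\hat{x}$-values carry the $(1+\delta)$ boost), and optimizing $\eps,\delta$ over this trade-off is what produces the constant $\edgeapprox$. Without the rescaling and without a tail bound that is sensitive to the $q_i$'s, your argument caps out at $1-1/e$.
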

\end{wrapper}

Our techniques also yield a significantly simpler proof of the $(1-1/e)$ bound due to \cite{braverman2022max} (see \Cref{claim:bdmlsimplification}).
Besides improving the $0.652$ approximation ratio of \cite{naor2023online} for our problem, our bound of \Cref{thm-intro:edge-weighted} also exceeds the best competitive ratio of $0.666$ for the \emph{vertex-weighted} version of this problem \cite{tang2022fractional}.  Our second result is a better philosopher inequality for the vertex-weighted Bernoulli problem.

\begin{wrapper}
\begin{restatable}{thm}{thmvertexweightedintro}\label{thm-intro:vertex-weighted}
(See \Cref{thm:vertex-weighted}) There exists a polynomial-time $\vertexapprox$-approximate online algorithm for vertex-weighted online Bernoulli bipartite matching.   
\end{restatable}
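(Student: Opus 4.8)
The plan is to follow the now-standard recipe for philosopher inequalities, with pivotal sampling as the rounding engine. First I would set up a linear-programming relaxation of $\opton$ in the spirit of \cite{papadimitriou2021online}: variables $x_{t,i}$ for the probability that the optimal online policy matches online node $t$ to offline node $i$, together with ``free-mass'' variables $y_{t,i}$ tracking the probability that $i$ is still available when $t$ arrives, subject to $y_{1,i}=1$, $y_{t+1,i}=y_{t,i}-x_{t,i}$, $x_{t,i}\le p_t\,y_{t,i}$, and $\sum_i x_{t,i}\le p_t$. All of these hold for the true online optimum, so the LP upper-bounds $\opton$; crucially it is a strictly tighter benchmark than the offline ex-ante LP, which is what makes it possible for the final ratio to exceed the $0.666$ offline bound of \cite{tang2022fractional}. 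After solving the LP the goal becomes: design an online algorithm matching each offline node $i$ with probability at least $\vertexapprox\cdot\sum_t x_{t,i}$, whence summing $w_i$ times this over $i$ finishes the proof.

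For the algorithm I would, at each time $t$ that $t$ actually arrives (w.p.\ $p_t$), use pivotal sampling --- run sequentially over $t$ --- to choose which of $t$'s neighbors it ``proposes'' to, coupling the choices across time so that for every fixed offline node $i$ the proposal indicators $\{B_{t,i}\}_t$ are negatively correlated in the sense of Srinivasan and of Gandhi et al., with $\sum_t B_{t,i}$ staying within $1$ of its mean. On top of this I would apply careful discarding --- occasionally declining to use an available proposed neighbor --- which keeps the availability events clean and lets the analysis decompose across offline nodes; and when several proposed neighbors of $t$ are simultaneously free I would match $t$ to the one of largest weight $w_i$. In the vertex-weighted setting this priority rule is unambiguous and only helps each node's marginal, which is the source of the improvement over the $\edgeapprox$ edge-weighted ratio. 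As a sanity check, merely independent proposals already recover the $1-1/e$ bound of \cite{braverman2022max}, so the whole gain comes from the negative correlation that pivotal sampling injects.

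The analysis then localizes to a single offline node $i$. I would show $\Pr[i\text{ is matched}]\ge \E[\min(1,X_i)]$ for $X_i=\sum_t \alpha_{t,i}B_{t,i}$, a weighted sum of the negatively correlated Bernoulli proposal indicators whose mean is a fixed fraction of $\sum_t x_{t,i}$ --- the truncation at $1$ encoding that $i$ is matched at most once, and the weights $\alpha_{t,i}$ encoding the chance that $i$ is the best free option offered to $t$. It then remains to prove a pointwise bound $\E[\min(1,X_i)]\ge h(\E[X_i])$ for an explicit function $h$ with $\inf_{0\le z\le 1} h(z)/z \ge \vertexapprox$. Generic concentration pins down $\E[X_i]$ but not the mass of $X_i$ right at the threshold $1$, so instead I would invoke pivotal sampling a second, \emph{imaginary} time: feeding a related fractional vector into a fresh pivotal-sampling process, coupled to the real one, produces an auxiliary random variable whose law can be written down exactly and whose value of $\E[\min(1,\cdot)]$ lower-bounds that of $X_i$, transferring the desired estimate from the idealized process to the real $X_i$.

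The hard part will be exactly this last lemma --- a tight lower bound on $\E[\min(1,X)]$ for a weighted sum $X$ of negatively correlated Bernoullis --- because the answer genuinely depends on the full shape of the distribution of $X$ near $1$, and the only structural handle on that shape is the pivotal-sampling construction itself; designing the imaginary rerun and its coupling so that it is simultaneously valid and sharp is the crux, and also where the precise constant $\vertexapprox$ gets determined. Two secondary obstacles are (a) calibrating the discarding so that the reduction $\Pr[i\text{ matched}]\ge\E[\min(1,X_i)]$ loses nothing beyond the available budget, and (b) justifying the largest-weight tie-break via a short exchange/monotonicity argument showing it never pushes any offline node's marginal below the target.
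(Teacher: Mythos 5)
You have the right high-level ingredients (the $\opton$ LP, per-step pivotal sampling over an arrival's free neighbors, independent discarding, matching to the heaviest proposer, and tail bounds on $\E[\min(1,X)]$ obtained via an ``imaginary'' pivotal-sampling transformation), but your analysis is transposed relative to the paper's, and the transposed version has a real gap. In the paper the negative correlation established (\Cref{lem:NCD}) is among the availability indicators $\{F_{i,t}\}_i$ of the \emph{offline} nodes at a fixed time $t$, and the truncated sum analyzed is $R_{t,w}=\sum_{i:w_{i,t}\ge w} r_{i,t}F_{i,t}$, a sum over offline nodes, which by the prefix property equals (up to $p_t$) the probability that the \emph{online} node $t$ is matched at value $\ge w$ (\Cref{per-t-min-bound}). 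You instead localize to a fixed offline node $i$ and sum over time, claiming $\Pr[i\text{ matched}]\ge\E[\min(1,X_i)]$ with $X_i=\sum_t\alpha_{t,i}B_{t,i}$ and negative correlation across time for fixed $i$. That reduction is not justified: per-step pivotal sampling does not give the cross-time negative correlation you invoke, and on the offline side the algorithm's marginals are exact ($i$ is matched \emph{or discarded} at time $t$ with probability exactly $x_{i,t}$, so the relevant sum is deterministically at most $1$ and needs no truncation); the quantity you would actually have to control is the mass lost to discards, which your framing never addresses. The paper sidesteps this entirely by working per online arrival.

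The more serious gap is the final step. A pointwise bound $\E[\min(1,X)]\ge h(\E[X])$ with $\inf_{0\le z\le 1}h(z)/z\ge\vertexapprox$ for weighted sums of NCD Bernoullis is \emph{false}: for $X\sim\Bin(n,1/n)$ (independent, hence NCD) we have $\E[X]=1$ but $\E[\min(1,X)]\to 1-1/e\approx 0.632$. This is precisely why no bound depending only on the mean can beat $1-1/e$ (cf.\ \Cref{claim:indprop1-1/e}), and it is the reason the paper's vertex-weighted proof is \emph{not} a per-node pointwise argument. Instead it first reduces vertex-weighted to unweighted by thresholding (\Cref{lem:weighted-to-unweighted}), proves two \emph{incomparable} tail bounds --- a variance-based bound (\Cref{lem:Emin1X-variance}) and the fractional bucketing bound (\Cref{lem:fracbucketingbound}) --- and then combines them through a global convex averaging over all arrivals (\Cref{lem:convexavgacrosst}), using structural inequalities on the LP solution (Lemmas~\ref{lem:betaHbound} and~\ref{lem:betaLbound}, plus a consequence of H\"older's inequality) to certify that instances bad for one bound are good for the other, with a final computer-assisted optimization at $\theta=\nicefrac{1}{2}$. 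Your proposal contains no mechanism for getting past $1-1/e$ at the level of a single node and no route to the constant $\vertexapprox$; the ``only helps'' claim about the heaviest-weight tie-break does not supply one.
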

\end{wrapper}
Complementing our positive results, we strengthen the hardness result of \cite{papadimitriou2021online}, and prove $\mathsf{PSPACE}$-hardness of $\alpha$-approximation for some $\alpha<1$ for the \emph{unweighted} Bernoulli~problem. 

Finally, we note that none of the previous philosopher inequalities for online bipartite matching were known to be achievable via pricing-based algorithms. This left a gap in our understanding of polytime approximability of the optimal online mechanism's social welfare paralleling known prophet inequalities' impact on approximation of the optimal \emph{offline} mechanism for such markets. 
Building on the exciting recent work of \cite{banihashem2024power}, we show in \Cref{app-pricing} that our algorithms imply (polytime) truthful mechanisms that provide the same approximation of the optimal online allocation for such bipartite matching markets.

\subsection{Further Related Work}

Prophet inequalities, which compare online algorithms to the benchmark of optimum offline, dominate the literature on Bayesian online resource allocation problems. Since the work of \cite{krengel1978semiamarts} and Samuel-Cahn \cite{samuel1984comparison} for single-item selection, applications in mechanism design and resource allocation have motivated the study of feasibility constraints including combinatorial auctions with XOS/subadditive valuations \cite{feldman2015combinatorial, correa2023subadditive}, (poly-)matroids \cite{duetting2015polymatroid, kleinberg2019matroid}, knapsacks \cite{dutting2020prophet, jiang2022prophet}, general downwards-closed \cite{rubinstein2016prophet} and beyond.

A new research direction considers the benchmark of the optimum \emph{online} algorithm. Perhaps the most fundamental question raised is how polynomial-time algorithms can compare --- the \emph{philosopher inequality}. After the first hardness result for approximating the optimum online benchmark was proved in \cite{papadimitriou2021online} for edge-weighted matching, there has been a line of results on this problem \cite{papadimitriou2021online, saberi2021greedy, braverman2022max, naor2023online}. Recent work also considers the problems of (laminar) matroid Bayesian selection \cite{anari2019nearly, dehaan2024matroid}, and single-item prophet secretary \cite{dutting2023prophet}, and online capacitated matching \cite{braun2024approximating} from this perspective. The optimum online benchmark also is of interest when studying algorithms that are constrained beyond just being running time. For example, recent work on the single-item problem consider algorithms that are threshold based \cite{niazadeh2018prophet} 
or order-unaware \cite{ezra2023next,ezra2023importance,chen2024setting}. 

Although the optimum online is a quite natural benchmark, it remains understudied. This is largely due to the technical challenge it presents compared to the optimum offline (see discussion in \cite[24.4.1]{roughgarden2020beyond}).  
Online matching is a natural case study to develop the new techniques needed.

\section{Preliminaries and Technical Overview}\label{sec:prelims}

\paragraph{Problem definition.}
Our input is a random weighted bipartite graph $G=(L,R,w)$. Initially, we are given the set of $n$ nodes in $L=\OFF$, referred to as \emph{offline nodes}. At each time $t\in [T]$, an \emph{online node} $t\in R$ reveals its weights $\vec w^t \in \mathbb{R}_{\ge 0}^{L}$ to the offline nodes, drawn from an \emph{a priori} known distribution, $\vec w^t \sim \calD_t$, and we may match it irrevocably to at most one neighbor $i$, who in turn may be matched to at most one online node. Matching $t$ to offline node $i$ accrues value $w^t_i$.
In the vertex-weighted setting, offline node $i$ has a known weight $w_i \ge 0$ given upfront, and $w^t_i\in \{0,w_i\}$ for each $t$, and $\calD_t$ determines the \emph{neighborhood} of $t$ (i.e., the offline nodes $t$ may be matched to).

For notational simplicity,  we study the Bernoulli case, where each online node $t$ arrives independently with probability $p_t$, in which case it reveals a known set of weights $(w_{i,t})_{i \in L}$, and otherwise its neighborhood is empty (i.e., it reveals a weight of 0 to every offline node). 
This captures the hard examples in \cite{papadimitriou2021online} and our work (in \Cref{sec:hardness}). Moreover, extending results to the non-Bernoulli case is generally syntactic \cite{papadimitriou2021online,braverman2022max,naor2023online}. 
Nonetheless, for completeness, we outline a generalization of \Cref{thm-intro:edge-weighted} to the non-Bernoulli problem in \Cref{app:general}.

\vspace{-0.25cm}
\paragraph{Approximating $OPT_{on}$: An LP Relaxation.}
The optimal online algorithm's value, denoted by $OPT_{on}$ is the solution of a (massive) Markov Decision Process (MDP). The following poly-size (and hence poly-time solvable) linear program \cite{torrico2022dynamic, papadimitriou2021online} was used by all prior polytime algorithms approximating $OPT_{on}$ for our problem \cite{papadimitriou2021online,braverman2022max,naor2023online}. (We provide additional facts regarding \eqref{LP-PPSW}, of possible interest for future work, in \Cref{app:prelims}.)
\begin{align}
    \nonumber \max & \sum_{(i,t)\in E} w_{i,t} \cdot x_{i,t} \tag{LP-OPTon} \label{LP-PPSW} \\
    \textrm{s.t.} 
    &\sum_{i} x_{i, t} \le p_t && \text{for all } t \label{eqn:OnlineMatchingConstraint}\\
    & 0\leq x_{i, t} \le p_t \cdot \left( 1 - \sum_{t' < t} x_{i, t'} \right) && \text{for all } i,t. \label{eqn:PPSWConstraint}
   \end{align}

\begin{prop}\label{lem:LP}
\textup{OPT}\eqref{LP-PPSW}$\geq OPT_{on}$.
\end{prop}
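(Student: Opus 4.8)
The plan is to show that $OPT_{on}$ — the value of the optimal online policy, viewed as a (possibly randomized) solution to the underlying MDP — naturally induces a feasible point $x$ for \eqref{LP-PPSW} with the same objective value. Concretely, fix an optimal online algorithm $\calA$, and for each offline node $i$ and time $t$ let $x_{i,t} := \Pr[\calA \text{ matches } i \text{ to } t]$, where the probability is over the randomness of the arrivals $\vec w^1,\dots,\vec w^T$ and any internal randomness of $\calA$. Since $\calA$ must play immediately and irrevocably, its decision at time $t$ depends only on the history up to time $t$; this is what will let us verify the ``forward-looking'' constraint \eqref{eqn:PPSWConstraint}. The objective is immediate by linearity of expectation: $\sum_{(i,t)\in E} w_{i,t} x_{i,t} = \E[\text{weight of }\calA\text{'s matching}] = OPT_{on}$, since a weight-$w_{i,t}$ edge contributes exactly $w_{i,t}$ whenever the event ``$\calA$ matches $i$ to $t$'' occurs.

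For constraint \eqref{eqn:OnlineMatchingConstraint}: online node $t$ is matched to at most one offline node, and only if it actually arrives, which happens with probability $p_t$ independently of the history. Hence $\sum_i x_{i,t} = \Pr[\calA \text{ matches } t \text{ to someone}] \le \Pr[t \text{ arrives}] = p_t$. For constraint \eqref{eqn:PPSWConstraint}: condition on the event that $i$ is still free at the start of step $t$, i.e.\ $i$ was not matched at any time $t' < t$; call this event $F_{i,t}$. Algorithm $\calA$ can match $i$ to $t$ only if $t$ arrives \emph{and} $F_{i,t}$ holds, so $x_{i,t} \le \Pr[t\text{ arrives and } F_{i,t}]$. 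The key point is that the arrival of $t$ is independent of everything that happened at times $t' < t$ (arrivals are independent across time, and $\calA$'s decisions up to time $t-1$ are functions of arrivals up to time $t-1$); therefore $\Pr[t\text{ arrives and }F_{i,t}] = p_t\cdot \Pr[F_{i,t}]$. Finally $\Pr[\neg F_{i,t}] = \Pr[i \text{ matched before } t] = \sum_{t'<t}\Pr[\calA\text{ matches }i\text{ to }t'] = \sum_{t'<t} x_{i,t'}$, since the events ``$\calA$ matches $i$ to $t'$'' over distinct $t'$ are disjoint. Combining, $x_{i,t}\le p_t\bigl(1 - \sum_{t'<t}x_{i,t'}\bigr)$, and nonnegativity is trivial, giving \eqref{eqn:PPSWConstraint}.

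The one place requiring care — and the main (mild) obstacle — is the independence claim used twice above: that the indicator of $t$'s arrival is independent of the $\sigma$-algebra generated by the history $(\vec w^1,\dots,\vec w^{t-1})$ together with $\calA$'s internal coin flips up to time $t-1$. This is where the Bernoulli/product structure of the arrival process and the online (non-anticipating) nature of $\calA$ are both essential; without the online restriction one could not separate $p_t$ out as a clean multiplicative factor, and indeed the inequality $OPT_{on}\le \textup{OPT}\eqref{LP-PPSW}$ can be strict. I would state this independence as a one-line lemma about the filtration and then the three bullet computations above fall out. Since $OPT_{on}$ is attained by some optimal policy (the MDP has finite horizon and finitely many states in the Bernoulli case, so an optimal deterministic policy exists; a randomized one works equally well), the induced $x$ is feasible with value exactly $OPT_{on}$, hence $\textup{OPT}\eqref{LP-PPSW}\ge OPT_{on}$.
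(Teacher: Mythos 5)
Your proposal is correct and follows essentially the same route as the paper's proof: define $x_{i,t}$ as the probability that the optimal online algorithm matches $(i,t)$, verify Constraint~\eqref{eqn:OnlineMatchingConstraint} via the arrival event, verify Constraint~\eqref{eqn:PPSWConstraint} via disjointness of the matching events of $i$ together with independence of $t$'s arrival from the pre-$t$ history (the non-anticipating property of the online algorithm), and conclude by linearity of expectation. Your write-up simply makes the filtration/independence point more explicit than the paper does.
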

\begin{proof}
Let $x_{i,t}\geq 0$ be the probability that $(i,t)$ is matched by the optimal online algorithm $\calA$. 
Clearly $\vec{x}$ satisfies Constraint \eqref{eqn:OnlineMatchingConstraint}, as $t$ can only be matched if it arrives. To see why $\vec{x}$ 
satisfies Constraint \eqref{eqn:PPSWConstraint}, we first note that 
 $\calA$ can only match $(i,t)$ if $t$ arrives and $i$ was not matched before (with probability $1-\sum_{t'<t}x_{i,t'}$, by disjointness of matching events of $i$). The constraint then follows from independence of these events, since the online algorithm $\calA$ cannot make choices before time $t$ that depend on the arrival of $t$. So, by linearity, the value obtained by the optimal online algorithm, $OPT_{on}$, is the objective of a feasible solution to \eqref{LP-PPSW}, implying the proposition.
\end{proof}

\subsection{Technical overview}\label{sec:technicaloverview}

\eqref{LP-PPSW} precisely captures the optimal online algorithm for instances with a single offline node $i$. 
Specifically, Constraint \eqref{eqn:PPSWConstraint} is sufficient to match each edge with probability precisely $x_{i,t}$, 
by matching with probability $r_{i,t}:=x_{i,t} \cdot (p_t(1-\sum_{t'<t}x_{i,t'}))^{-1}$ if $i$ is still unmatched and $t$ arrives.
The approach of the $(1-1/e)$-approximate algorithm of \cite{braverman2022max} can be seen as performing the above in some sense independently for each offline node $i$,
to obtain ``proposals'' from offline nodes, and then matching $t$ to the highest-valued neighbor; in \Cref{claim:bdmlsimplification}, we give a simpler analysis showing the $1-1/e$ bound. It is not hard to construct tight examples for this approach, even in unweighted instances; for example, for a single online node $t$ with $p_t=1$ and $n$ neighbors and \eqref{LP-PPSW} solution $x_{i,t}=1/n$.
The improvement on this algorithm by \cite{naor2023online} is obtained by maintaining the same marginals while \emph{correlating} the proposals (and careful scaling, discussed below), thus guaranteeing fewer collisions. 
Here, we improve the best known bounds for our problem, by correlating these proposals \emph{optimally}, via a classic offline algorithm: (linear-order) \emph{pivotal sampling} \cite{srinivasan2001distributions,gandhi2006dependent}, whose properties we now state.

\begin{prop}\label{lem:SR}
There exists a polynomial-time algorithm that on set $S=\{s_1,\dots,s_n\}$ and vector $\vec{v} \in [0,1]^{n}$, outputs a random subset $\textup{\textsf{PS}}(S,\vec{v}) \subseteq S$ with $X_i:=\mathds{1}[s_i \in \textup{\textsf{PS}}(S,\vec{v})]$ satisfying:
\begin{enumerate}[label=(P{{\arabic*}})]
    \item \textbf{Marginals:} $\Pr[X_i=1] = v_i$ for all $s_i\in S$.\label{level-set:marginals}
    \item \textbf{Prefix property:} $\Pr[ |\textup{\textsf{PS}}(S,\vec{v}) \cap \{s_1,\dots,s_k\}|\geq 1] = \min(1,\,\sum_{i\leq k} v_i)$ for all $k \leq n$. \label{level-set:prefix}
    \item \label{level-set:neg-corr} \textbf{Negative cylinder dependence (NCD):} For all $I\subseteq S$,
    $$\Pr \left[ \bigwedge_{i\in I} (X_i=1)\right]\leq \prod_{i\in I} \Pr[X_i=1] \qquad \textrm{and} \qquad \Pr\left[\bigwedge_{i\in I} (X_i= 0)\right]\leq \prod_{i\in I} \Pr[X_i=0].$$
\end{enumerate}
\end{prop}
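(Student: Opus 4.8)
The plan is to prove all three properties by exhibiting the standard \emph{linear pivotal sampling} procedure and analyzing it by hand. I would maintain a single ``pivot'' element carrying a fractional mass in $[0,1)$, initialized to $s_1$ with mass $v_1$, and process $s_2,\dots,s_n$ in order. To process $s_k$ I run a ``duel'' between the current pivot (of mass $a$) and $s_k$ (of mass $b:=v_k$): writing $s:=a+b$, if $s<1$ I set the pair to $(s,0)$ with probability $a/s$ and to $(0,s)$ otherwise, while if $s\ge 1$ I set it to $(1,s-1)$ with probability $(1-b)/(2-s)$ and to $(s-1,1)$ otherwise. In every (non-degenerate) case exactly one of the two elements is \emph{frozen} to a value in $\{0,1\}$ and the other becomes the new pivot with mass $a+b-(\text{frozen value})\in[0,1)$, and the two masses always sum to $a+b$. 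After the $n-1$ duels, if the final pivot still has positive mass $q$, I round it to $1$ independently with probability $q$; this is plainly a linear-time procedure.

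I would then record two deterministic invariants. First, each duel and the final rounding preserve the conditional expectation of every coordinate: a one-line computation from the duel probabilities gives $\E[a']=a$ and $\E[b']=b$ in both cases. Second, by induction on $k$, immediately after processing $s_k$ all of $s_1,\dots,s_k$ except the current pivot are frozen and the total current mass of $\{s_1,\dots,s_k\}$ equals $P_k:=\sum_{i\le k}v_i$ (the induction step just uses that a duel freezes exactly one participant and preserves the two masses' sum); since the pivot's mass lies in $[0,1)$, the number of $s_1,\dots,s_k$ already frozen to $1$ is $P_k$ minus something in $[0,1)$.

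Given these, \ref{level-set:marginals} is immediate: for each $i$ the values of coordinate $i$ through the run form a finite martingale, so $\E[X_i]=v_i$, and $X_i\in\{0,1\}$. For \ref{level-set:prefix} I fix $k$ and condition on the state just after $s_k$ is processed. If $P_k\ge 1$, the pivot's mass being $<1$ forces (by the second invariant) some $s_i$, $i\le k$, to be already frozen to $1$, so the probability is $1$. If $P_k<1$, no such element is yet frozen to $1$, the pivot $p^\ast$ among $s_1,\dots,s_k$ has mass exactly $P_k$, and every other prefix element is frozen to $0$; hence the prefix meets $\textup{\textsf{PS}}(S,\vec v)$ iff $p^\ast$ is eventually output, which --- telescoping the martingale for $p^\ast$'s value forward to the end of the run --- has probability exactly $P_k$. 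This yields $\min(1,P_k)$.

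The bulk of the work is \ref{level-set:neg-corr}. For a fixed $I\subseteq S$ I would track the potential $\Phi^{(t)}:=\E\big[\prod_{i\in I}Y^{(t)}_i\big]$, where $Y^{(t)}$ is the vector of current masses after $t$ steps, and show $\Phi$ is non-increasing. The only step that changes it is a duel whose two participants both lie in $I$; there $a'+b'=a+b=:s$ is fixed and $\E[a']=a$, so $\Cov(a',b')=\Cov(a',s-a')=-\Var(a')\le 0$ and thus $\E[a'b']=ab-\Var(a')\le ab$, while duels touching at most one coordinate of $I$ and the final rounding leave $\Phi$ untouched. Hence $\E[\prod_{i\in I}X_i]=\Phi^{(\mathrm{final})}\le\Phi^{(0)}=\prod_{i\in I}v_i$. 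The second inequality of \ref{level-set:neg-corr} comes from the identical argument applied to $1-Y_i$, using $\Cov(1-a',1-b')=\Cov(a',b')\le 0$. I expect this potential/variance step, together with the bookkeeping behind the two deterministic invariants (which power the prefix-property case split), to be the only places requiring real care.
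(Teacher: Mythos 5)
Your proof is correct. Note first that the paper itself offers no proof of this proposition: it is imported as a black box from Srinivasan (FOCS'01) and Gandhi et al.\ (J.ACM'06), with only a one-sentence description of the rounding step. What you have written is a complete, self-contained derivation, and it follows the standard analysis of linear pivotal sampling from those references: the duel probabilities are the canonical ones (and your expectation checks $\E[a']=a$, $\E[b']=b$ go through in both the $s<1$ and $s\ge 1$ cases), the martingale argument gives \ref{level-set:marginals}, the two deterministic invariants (exactly one unfrozen prefix element, prefix mass conserved at $P_k$) combined with the forward martingale give \ref{level-set:prefix}, and the potential $\E\bigl[\prod_{i\in I}Y_i^{(t)}\bigr]$ together with the identity $\E[a'b']=ab-\Var(a')\le ab$ (and its mirror for $1-Y_i$) gives both halves of \ref{level-set:neg-corr}. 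This last step is precisely the mechanism by which negative cylinder dependence is established in the cited works, so you have not found a shortcut or a detour --- you have reconstructed the intended proof.

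The only loose thread is the degenerate case $v_i\in\{0,1\}$, which you wave at with ``(non-degenerate).'' If $v_1=1$ the initial pivot mass is $1\notin[0,1)$, and if two consecutive masses both equal $1$ the duel probability $(1-b)/(2-s)$ is $0/0$. Both are repaired by pre-freezing every $s_i$ with $v_i\in\{0,1\}$ to its own value before running the duels (this clearly preserves all three properties), or by checking that your formulas degenerate to the deterministic, sum-preserving outcome whenever they are defined. Worth one sentence in a final write-up, but not a gap in the argument.
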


Pivotal sampling
repeatedly picks the two lowest indices $i\neq j$ with fractional $v_i,v_j$ and randomly sets one of these to 0 or 1 while preserving their sum and expectations.
Our algorithm uses pivotal sampling both explicitly in its description and implicitly in its analysis.
Algorithmically, at each time $t$ we apply pivotal sampling 
to the vector $(r_{i,t})_{i \text{ available}}$, sorted by decreasing $w_{i,t}$, so as to:
\begin{enumerate}[label=(Q{{\arabic*}})]
\item Assign/discard each available offline node to $t$ (if it arrives) with the correct marginal $r_{i,t}$. 
\label{prop-intro:marginals} 
\item \label{prop-intro:min(1,R)}
Get $\Pr[\textrm{match $t$ to an $i$ with $w_{i,t}\geq w$}] = p_t\cdot \min(1,R_{t,w})$, for any arrival $t$, and $w \ge 0$, where $R_{t,w} := \sum_{\stackrel{i \text{ available}}{w_{i,t} \ge w}} r_{i,t}$. 
\item Obtain strong negative correlations between offline nodes' availability, specifically NCD. \label{prop-intro:ncd}
\end{enumerate}

\noindent\textbf{Optimality.} By Property \ref{prop-intro:marginals},
the bound of Property \ref{prop-intro:min(1,R)} is optimal,
as an arriving $t$ cannot yield value $\geq w$ with probability greater than $\min(1,R_{t,w})$, by the union bound.

\subsubsection{Overview of  the analysis}
By Properties \ref{prop-intro:marginals}-\ref{prop-intro:ncd}, the core analytic challenge for our philosopher inequality is to provide tail expectation bounds for $X$ the sum of negatively correlated $[0,1]$-weighted Bernoulli~variables, specifically, lower bounding $\E[\min(1,X)]=\E[X]-\E[(X-1)\cdot \mathds{1}[X>1]]$.
We therefore provide a number of such tail bounds, of possible independent interest. To emphasize the generality of these probabilistic inequalities, we note that when $\E[X]=1$ (the worst case for our analysis), this tail expectation is half of the \emph{mean absolute deviation}, $\E[|X-\E[X]|]$. The latter is a notion of dispersion studied intently by statisticians (see \cite{berend2013sharp} and references therein), and often used by theoretical computer scientists in varied contexts (see, e.g., \cite{ambainis2014quantum,gupta2019stochastic,braverman2020coin}).

In what follows, we let $X = \sum_i c_i \cdot X_i$, where $\{X_i\}$ are NCD Bernoullis, and $c_i \in [0,1]$ for all~$i$.

\paragraph{Independent Coin Bound.} 
Let $S$ denote the random set of all $i$ such that $X_i = 1$; by the union bound, for any realization of $S$, we have that $\min \left( 1, \sum_{i \in S} c_i \right) \ge 1 - \prod_{i \in S} (1-c_i).$ 
Thus, by imagining that every $i\in S$ independently flips a $\Ber(c_i)$ variable, 
we obtain the following bound on $\E[\min(1,X)]$, which we fittingly call the \emph{independent coin bound}.
\begin{align}\label{eqn:intro:independent-proposals}
    \E[\min(1,X)] \ge \E_S \left[ 1 - \prod_{i \in S} (1 - c_i) \right] = 1 - \E \left[ \prod_i (1-c_i \cdot X_i) \right]  \ge 1 - \prod_i (1 - c_i\cdot \E[X_i]).
\end{align}
Here, the final step follows from non-trivial calculations which crucially use that $\{X_i\}$~are~NCD. Using that $1 - c_i\cdot \E[X_i]\leq \exp(-\E[c_i\cdot X_i])$, together with convexity, the RHS above (in essence the same as in \cite{braverman2022max}) suffices to obtain a $1-1/e$ approximation, but no better (see \Cref{claim:indprop1-1/e}).

\paragraph{Bucketing Bound.} The first inequality of the independent coin bound of \Cref{eqn:intro:independent-proposals} might be quite loose, e.g., if all $c_i$'s are small. We argue that this bound can be tightened if the $c_i$'s can be non-trivially partitioned into a small number of buckets $\mathcal{B}$ such that for any individual $B \in \mathcal{B}$ we have $\sum_{i \in B} c_i \le 1$. In the \emph{bucketing bound} (\Cref{lem:bucketed-independent-proposal-bound}), we show that in fact 
\begin{align}\label{eqn:intro:bucketing-bound}
\E[\min(1,X)] \ge 1 - \prod_{B \in \mathcal{B}} \left( 1 - \sum_{i \in B} c_i \cdot \E[X_i] \right).
\end{align}
This bound is similar to the bound obtained by \cite{naor2023online} for their algorithm, though their bound holds for a \emph{single}, \emph{explicit} bucketing $\mathcal{B}$ chosen in advance for each online arrival, and used to inform the design of the algorithm.
Our bound holds \emph{implicitly} for \emph{all} valid bucketings simultaneously. 

\paragraph{Fractional Bucketing Bound.} 
The bucketing bound of \Cref{eqn:intro:bucketing-bound} unfortunately can also be quite lossy, due to integrality issues. For example, if $c_i = 0.51$ for every $i$ then no non-trivial bucketing occurs, and we revert to the independent bound.
Our most novel bound on $\E[\min(1,X)]$ allows us in some sense to pack fractions of these $c_i$ together.
In particular, for any time-step $t$ and threshold $\theta\in [0,1]$, for $C:=\sum_{i: q_i\geq 1-\theta} c_i\cdot q_i$ the expected sum of weights of high-probability variables, then,  for $\{z\}=z-\lfloor z\rfloor$ the fractional part of $z$, we prove the following bound.

\begin{align}\label{eqn:intro:fractional-bucketing-bound}
    \E[\min(1,X)]\geq 1-\left(1-(1-\theta)\cdot \left\{\frac{C}{1-\theta}\right\}\right)\cdot \theta^{\lfloor \frac{C}{1-\theta}\rfloor}\cdot \prod_{i: q_i < 1-\theta} (1-c_i\cdot q_i).
\end{align}
Interestingly, the key step in the transformation is again (a generalization of) pivotal sampling, but this time only used implicitly to obtain this simpler instance. 
In particular, using this generalized pivotal sampling, we can modify the $\{c_i \mid q_i\geq 1-\theta\}$ so that at most one of these is not binary, while preserving expectations of these (now random) $c_i$ (and thus $C$), and without increasing the expression $\E[\min(1,X)]$ (and indeed any concave expression in $X$). This then results in a surrogate instance where the independent coin bound yields \Cref{eqn:intro:fractional-bucketing-bound} for the initial instance. 

\paragraph{Our algorithmic results.} 
We now circle back to our original problem. 
Our algorithm as outlined only yields a $(1-1/e)$-approximation for edge-weighted matching, by a similar example to the tight example of \cite{braverman2022max} (see \Cref{example:need-to-rescale}). 
To overcome this, we combine our fractional bucketing bound of \Cref{eqn:intro:fractional-bucketing-bound} 
with the scaling approach of \cite{naor2023online}:
decreasing values $x_{i,t}$
for times $t$ when $i$ has low fractional degree $\sum_{t'<t} x_{i,t'}$ at most $\theta$, and increasing values $x_{i,t}$ later. 
This way, every $t$ either has much $x$-flow from low-degree neighbors, in which case the fractional bucketing part of \Cref{eqn:intro:fractional-bucketing-bound} yields a significant boost, or much flow comes from high-degree neighbors, whose value is boosted and similarly provides a boost.

Our vertex-weighted algorithm avoids the above scaling step. 
Its analysis relies on a local-to-global convex averaging argument, showing that a high  lower bound on $\E[\min(1,R_t)]$ on average suffices for a similar approximation ratio (\Cref{lem:convexavgacrosst}).
We combine this averaging argument with the observation that on average, either much of the \eqref{LP-PPSW} solution induces sequences of low-variance $X$, which we show in \Cref{lem:variance-bound} implies a high lower bound on $\E[\min(1,X)]/\E[X]$, or the $c_i\cdot q_i$ terms are high on average, in which case a consequence of H\"older's inequality combined with our fractional bucketing bound yields a high approximation on average.

\section{The Algorithm}\label{sec:algo}

In this section we introduce our core algorithm and prove some key properties needed for its analysis in later sections.

Our algorithm takes as input a solution $\vec x$ satisfying \eqref{LP-PPSW}~Constraint~\eqref{eqn:PPSWConstraint} (though possibly not the others).
The algorithm maintains a matching $\calM$ and set $F_t$ of \emph{free} offline nodes before time $t$. Let $F_{i,t}:=\mathds{1}[i\in F_t]$ denote the matched status of offline node $i$ before time $t$. 
The algorithm strives to remove offline nodes from the set of free nodes at time $t$ by matching them or \emph{discarding} them, so as to guarantee (i) a closed form for the probability of $i$ to be free, $\Pr[F_{i,t}]=1-y_{i,t}$, where $y_{i,t} := \sum_{t' < t} x_{i,t'}$, (ii) negative correlations between the offline nodes' free statuses, and (iii) maximum expected weight of each match, conditioned on the above.
To this end, at each time $t$, we apply pivotal sampling to the vector $(r_{i,t}:=x_{i,t}/(p_t\cdot (1-y_{i,t})))_i$ indexed by free $i\in F_t$ sorted in decreasing order of $w_{i,t}$, to pick a set of offline \emph{proposers} to $t$, denoted by $I_t$. We then match $t$ to a highest-weight such proposer $i^*_t\in I_t$ if $t$ arrives, and, independently of $t$'s arrival, discard all other proposers $i\in I_t\setminus\{i^*_t\}$ with probability $p_t$.
The algorithm's pseudocode is given in \Cref{alg:proposals-core}.

\begin{algorithm}[H]
	\caption{Online Correlated Proposals}
	\label{alg:proposals-core}
	\begin{algorithmic}[1]
 \Statex \textbf{Input:} A vector $\vec x$ satisfying \eqref{LP-PPSW}~Constraint~\eqref{eqn:PPSWConstraint} 
 \medskip
\State $\calM\gets \emptyset,\;  F_1\gets [n]$
		\Comment{$\calM$ is the output matching}
		\ForAll{times $t$} \label{line:loop-start}
        \State $F_{t+1}\gets F_t$ \Comment{initially, no free node $i\in F_t$ is matched/discarded before time $t+1$}
		\ForAll{offline nodes $i$}
		\State $r_{i,t} \gets \frac{x_{i,t}}{p_t\cdot \left(1-\sum_{t'<t} x_{i,t'}\right)}$
        \EndFor
        \State Let $\vec v$ be the vector $(r_{i,t})_{i \in F_t}$ indexed by $i$ sorted in decreasing order of $w_{i,t}$
        \State Let $I_t \gets \textsf{PS}(F_t, \vec v)$ \label{line:calltoSR}
        \If{$I_t\neq \emptyset$}
        \State Pick some $i_t^*\in \arg\max_{i\in I_t} \{w_{i,t}\}$
        \If{$t$ arrives}
        \State Add $(i^*_t,t)$ to the matching $\calM$ and set $F_{t+1}\gets F_{t+1}\setminus \{i^*_t\}$ \Comment{match $i^*_t$}
        \EndIf
\label{line:match-end}
        \ForAll{$i \in I_t \setminus \{i^*_t\}$}
        \State  With probability $p_t$ (independently) set $F_{t+1}\gets F_{t+1}\setminus \{i\}$  \Comment{discard $i$} \label{line:loop-end}
        \EndFor
        \EndIf
        \EndFor
	\end{algorithmic}
\end{algorithm}	

We first note that \Cref{alg:proposals-core} is well-defined, by Constraint~\eqref{eqn:PPSWConstraint} guaranteeing that for every $i \in F_t$ we have that $r_{i,t} \in [0,1]$. This is necessary for invoking pivotal sampling (\Cref{lem:SR}). We can also easily observe that the algorithm matches/discards offline nodes precisely according to the input vector $\vec x$ (see \Cref{app:algo} for the short inductive proof). 

\begin{restatable}{lem}{obsmatchmarginals} \label{obs:matchmarginals}\label{cor:Fit}
For every pair $(i,t)$, we have that $\Pr[F_{i,t}]=1-y_{i,t}$. 
\end{restatable}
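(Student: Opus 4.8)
The plan is to prove this by induction on $t$, tracking not just the marginal probability $\Pr[F_{i,t}]$ but enough structure to push the induction through. The base case is immediate: $F_1 = [n]$, so $\Pr[F_{i,1}] = 1 = 1 - y_{i,1}$ since $y_{i,1} = \sum_{t' < 1} x_{i,t'} = 0$.

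For the inductive step, fix $i$ and suppose $\Pr[F_{i,t}] = 1 - y_{i,t}$. Offline node $i$ leaves $F_{t+1}$ (relative to $F_t$) in exactly one of two disjoint ways, and only if $i \in F_t$ and $i \in I_t$: either $i$ is matched as $i^*_t$ (which happens only if $t$ arrives) or $i$ is discarded with probability $p_t$ (independently of $t$'s arrival). The key observation is that conditioned on $i \in F_t$ and $i \in I_t$, the probability that $i$ is removed is exactly $p_t$ in both cases — if $i = i^*_t$ this is the arrival probability, and if $i \neq i^*_t$ this is the explicit discard probability — and crucially this removal event is independent of whether $i \in I_t$ given $i \in F_t$ (the pivotal sampling $\textsf{PS}(F_t, \vec v)$ is run without reference to $t$'s arrival, and the discard coins are fresh). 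So I would argue
\begin{align*}
\Pr[i \text{ removed at time } t] &= \Pr[i \in F_t] \cdot \Pr[i \in I_t \mid i \in F_t] \cdot p_t \\
&= (1 - y_{i,t}) \cdot r_{i,t} \cdot p_t = (1 - y_{i,t}) \cdot \frac{x_{i,t}}{p_t(1-y_{i,t})} \cdot p_t = x_{i,t},
\end{align*}
using the inductive hypothesis and Property \ref{level-set:marginals} of pivotal sampling for the conditional probability $\Pr[i \in I_t \mid i \in F_t] = r_{i,t}$. Then $\Pr[F_{i,t+1}] = \Pr[F_{i,t}] - x_{i,t} = 1 - y_{i,t} - x_{i,t} = 1 - y_{i,t+1}$.

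The main obstacle — and the place where care is genuinely needed — is justifying the conditional independence and the claim that $\Pr[i \in I_t \mid i \in F_t] = r_{i,t}$. Property \ref{level-set:marginals} gives marginals for $\textsf{PS}(F_t, \vec v)$ over the randomness of that single call, but $F_t$ itself is a random set, and the vector $\vec v = (r_{i,t})_{i \in F_t}$ — hence the input to pivotal sampling — depends on which nodes are free. One must check that $r_{i,t} = x_{i,t}/(p_t(1 - y_{i,t}))$ is a fixed deterministic quantity not depending on the realization of $F_t$ (it depends only on the input $\vec x$ and $i, t$), so that conditioning on the event $\{i \in F_t\}$ and any particular realization of $F_t$ containing $i$, the call to $\textsf{PS}$ includes coordinate $i$ with value $r_{i,t}$ and outputs $i \in I_t$ with probability exactly $r_{i,t}$; averaging over realizations of $F_t$ preserves this. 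Likewise the discard coin at line \ref{line:loop-end} is drawn independently of everything else, and $t$'s arrival is independent of all algorithmic randomness before time $t$, which is what licenses multiplying by $p_t$ regardless of whether $i$ becomes $i^*_t$. Once these independence points are spelled out, the computation above closes the induction. (The excerpt defers this to \Cref{app:algo}, consistent with it being a short but slightly fiddly argument.)
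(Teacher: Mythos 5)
Your proof is correct and follows essentially the same route as the paper's: induction on $t$, showing the per-step removal probability equals $x_{i,t}$ by combining the inductive hypothesis $\Pr[F_{i,t}]=1-y_{i,t}$ with $\Pr[i\in I_t\mid F_{i,t}]=r_{i,t}$ (justified, as you do, by total probability over histories plus Property~\ref{level-set:marginals}) and the observation that a proposer is removed with probability exactly $p_t$ whether it is $i^*_t$ or not. The independence caveats you flag are precisely the points the paper's deferred proof spells out.
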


To gain intuition for the effect of discarding in  \Cref{alg:proposals-core}, we quickly show that if we replace our call to pivotal sampling in \Cref{line:calltoSR} by \emph{independent sampling} (i.e., each $i$ is sampled with probability $r_{it}$ independently), we maintain full independence between offline nodes. This in turn lets us argue in one paragraph that the algorithm is $(1-1/e)$-approximate, significantly simplifying the result of \cite{braverman2022max}. 

\begin{claim}
\Cref{alg:proposals-core} run on an optimal solution to \textup{\eqref{LP-PPSW}}, but with the call to $\textup{\textsf{PS}}(\cdot, \cdot)$ in \Cref{line:calltoSR} replaced by independent sampling, yields a $(1-1/e)$-approximate philosopher inequality. 
 \label{claim:bdmlsimplification}
\end{claim}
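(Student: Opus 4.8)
The plan is to show that with independent sampling in place of pivotal sampling, the free statuses $\{F_{i,t}\}_i$ at each fixed time $t$ are mutually independent, and then to run the union-bound/independent-coin argument on each online node. First I would establish the independence claim by induction on $t$. At time $t=1$ all offline nodes are free deterministically, so independence is trivial. For the inductive step, condition on $F_t$ (the inductive hypothesis says the indicators $F_{i,t}$ are independent, with $\Pr[F_{i,t}]=1-y_{i,t}$ by \Cref{cor:Fit}). Given $F_t$, each free $i$ is put into $I_t$ independently with probability $r_{i,t}$; the only coupling among the proposers comes through the single match of $i^*_t$, but crucially $i^*_t$ is removed from $F_{t+1}$ only when $t$ \emph{arrives}, and every other proposer $i \in I_t\setminus\{i^*_t\}$ is discarded with probability $p_t$ \emph{independently of $t$'s arrival}. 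I would argue that, because of this deliberate symmetric discarding, each free node $i$ leaves $F_{t+1}$ with probability exactly $p_t\cdot r_{i,t} = x_{i,t}/(1-y_{i,t})$, \emph{independently across $i$}: conditioned on the whole set $I_t$, the event ``$i$ removed'' for $i \neq i^*_t$ is an independent $\Ber(p_t)$ coin, and for $i = i^*_t$ it is the arrival coin $\Ber(p_t)$ — in all cases an independent $\Ber(p_t)$ coin, and whether $i\in I_t$ was itself independent across $i$. Hence unconditionally the removal events at time $t$ are independent across $i$ and independent of $F_t$, so $F_{t+1}$ has independent coordinates; combined with $\Pr[F_{i,t+1}] = (1-y_{i,t})(1 - x_{i,t}/(1-y_{i,t})) = 1 - y_{i,t+1}$ (matching \Cref{cor:Fit}), the induction goes through.

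Next I would lower bound the value. Fix an online node $t$ and condition on its arrival (probability $p_t$). Let $S_t \subseteq F_t$ be the set of proposers $I_t$; the algorithm matches $t$ to $\arg\max_{i\in I_t} w_{i,t}$, so for any threshold $w\ge 0$, $t$ obtains value $\ge w$ iff some proposer $i$ with $w_{i,t}\ge w$ exists. Writing $R_{t,w} := \sum_{i\in F_t:\, w_{i,t}\ge w} r_{i,t}$, by the union bound applied pointwise and then the independent-coin calculation (exactly as in \Cref{eqn:intro:independent-proposals}, using independence of the $F_{i,t}$ and of the independent-sampling coins),
\[
\Pr[\text{$t$ gets value} \ge w \mid t \text{ arrives}] \;=\; \E\!\left[1 - \prod_{\substack{i\in F_t\\ w_{i,t}\ge w}}(1 - r_{i,t})\right] \;\ge\; 1 - \prod_{i:\,w_{i,t}\ge w}\bigl(1 - r_{i,t}\cdot\Pr[F_{i,t}]\bigr) \;\ge\; 1 - e^{-\sum_{i:\,w_{i,t}\ge w} x_{i,t}},
\]
using $\Pr[F_{i,t}]\cdot r_{i,t} = x_{i,t}$ and $1-z\le e^{-z}$. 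Integrating over $w$ (layer-cake / Fubini) and multiplying by $p_t$, the expected value collected at node $t$ is at least $p_t\int_0^\infty \bigl(1 - e^{-\sum_{i:\,w_{i,t}\ge w} x_{i,t}}\bigr)\,dw$. A standard convexity argument — the function $z\mapsto 1-e^{-z}$ is concave, and for the optimal \eqref{LP-PPSW} solution the total LP value routed to $t$ is $\sum_i x_{i,t}\le p_t$ — shows this is at least $(1-1/e)\sum_i w_{i,t} x_{i,t}$; summing over $t$ and invoking \Cref{lem:LP} ($\mathrm{OPT}(\text{\eqref{LP-PPSW}})\ge \mathrm{OPT}_{\mathrm{on}}$) gives total expected value $\ge (1-1/e)\,\mathrm{OPT}_{\mathrm{on}}$.

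The main obstacle is the independence step: one has to argue carefully that matching $i^*_t$ — which, unlike the discards, is tied to $t$'s arrival — does not create correlations among the surviving free nodes, and that the arrival coin can be ``folded in'' as just one more independent $\Ber(p_t)$ removal coin so that every free $i$ (proposer or not) is removed with the same independent probability $p_t r_{i,t}$ conditioned on $F_t$. Once this is set up, the rest is the routine union-bound-plus-concavity computation; the only mild subtlety there is handling the integral over thresholds $w$ and the edge case $\sum_i x_{i,t} = p_t$ versus $<p_t$, which only helps. (The analogous but harder fact for \emph{pivotal} sampling, giving the stronger NCD guarantee and the $\min(1,\cdot)$ bound of Property~\ref{prop-intro:min(1,R)}, is what drives the improved ratios later; here independence suffices for $1-1/e$.)
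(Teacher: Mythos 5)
Your approach is essentially the paper's: you establish independence of the free statuses $\{F_{i,t}\}_i$ under independent sampling by observing that the arrival coin for the top-weight proposer acts as just one more independent $\Ber(p_t)$ removal coin (the paper phrases this as a coupling, you as an induction, but the key observation is identical), and then run the layer-cake / independent-coin / concavity computation. One arithmetic slip: $\Pr[F_{i,t}]\cdot r_{i,t} = (1-y_{i,t})\cdot \frac{x_{i,t}}{p_t(1-y_{i,t})} = x_{i,t}/p_t$, not $x_{i,t}$, so the exponent in your display should be $-\sum_{i:\,w_{i,t}\ge w} x_{i,t}/p_t$; taken literally, your version only yields $(1-1/e)\cdot p_t\cdot\sum_i w_{i,t}x_{i,t}$ after the concavity step, losing a factor of $p_t$. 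With the corrected identity, Constraint~\eqref{eqn:OnlineMatchingConstraint} gives $\sum_i x_{i,t}/p_t\le 1$, and the computation closes exactly as in the paper.
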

\begin{proof}
    With independent sampling, we claim that for every $t$ the indicators $\{F_{i,t}\}_i$ are independent. This can be observed by a simple coupling argument: if we ignore the arrival status of $t$, and simply discard every proposing offline node with an independent $\text{Ber}(p_t)$, independence is trivial. Consider the coupling that replaces for every $t$, one of the independent $\text{Ber}(p_t)$ variables with the arrival status of $t$, for the top weight proposing $i$. Observe $t$'s arrival status is independent of all previous proposals, free statuses, and remaining $\text{Ber}(p_t)$'s.  

    Because we sample with the same marginals as pivotal sampling, for every $(i,t)$ we have $\Pr[F_{i,t}] = 1-y_{i,t}$ as in \Cref{cor:Fit}. If $W_t$ denotes the (random) weight $t$ of $t$'s match, then we note the probability $W_t$ is at least some fixed $w$ is 
    \begin{align*}
       p_t \cdot \left( 1 - \prod_{i : w_{i,t} \geq w} \left( 1 - \Pr[F_{i,t}] \cdot r_{i,t} \right) \right)  &= p_t \cdot \left( 1 - \prod_{i : w_{i,t} \geq w} \left( 1 - \frac{x_{i,t}}{p_t} \right) \right) \\
       &\ge p_t \cdot \left( 1 - \exp \left( - \sum_{i: w_{i,t} \ge w} \frac{x_{i,t}}{p_t} \right) \right), 
    \end{align*}
    which in turn is at least $(1-1/e) \cdot \sum_{i: w_{i,t} \ge w} x_{i,t}$ by convexity and Constraint~\eqref{eqn:OnlineMatchingConstraint}. Thus $$\E[w(\mathcal{M}_t)] = \int_{0}^\infty \Pr[W_t \ge w] \, dw \ge \int_0^{\infty} (1-1/e) \cdot \sum_i x_{i,t} \cdot \mathbbm{1}[w_{i,t} \ge w] \, dw = (1-1/e) \cdot \sum_i x_{i,t} w_{i,t}$$ which demonstrates a $(1-1/e)$-approximation to \text{OPT}\eqref{LP-PPSW}, and hence $\opton$. 
\end{proof}
	
	In this work, we prove that better guarantees are possible with pivotal sampling. For any $w\geq 0$, we let $R_{t,w} := \sum_{i: w_{i,t}\geq w} r_{i,t}\cdot F_{i,t}$ be the ``request'' at time $t$ of offline nodes $i$ with edge $(i,t)$ of weight $w_{i,t} \geq w$.
 With this notation, we have the following. 
	\begin{lem}\label{per-t-min-bound}
		For any time $t$ and value $w\geq 0$, 
		$$\Pr[\calM\ni (i,t):\;  w_{i,t}\geq w] = p_t\cdot \E[\min(1,R_{t,w})].$$
	\end{lem}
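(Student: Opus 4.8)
The plan is to prove \Cref{per-t-min-bound} by fixing a time $t$ and a threshold $w\geq 0$, conditioning on the entire history up to (but not including) time $t$ — in particular on the set $F_t$ of free offline nodes and on the values $r_{i,t}$ — and then arguing over the randomness of the pivotal sampling call in \Cref{line:calltoSR} together with the arrival status of $t$. The event $\{\calM\ni(i,t):\ w_{i,t}\geq w\}$ happens exactly when (a) $t$ arrives, which is independent of everything else and occurs with probability $p_t$, and (b) the chosen proposer $i^*_t$, which is a highest-weight element of $I_t$, has $w_{i^*_t,t}\geq w$. Since the vector $\vec v$ fed to pivotal sampling is $(r_{i,t})_{i\in F_t}$ sorted in \emph{decreasing} order of $w_{i,t}$, the offline nodes with $w_{i,t}\geq w$ form exactly a prefix $\{s_1,\dots,s_k\}$ of this ordering (modulo ties, which I would handle by noting any consistent tie-breaking keeps the high-weight nodes as an initial segment). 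Thus $w_{i^*_t,t}\geq w$ iff $I_t$ contains at least one node from this prefix, i.e. iff $|\textup{\textsf{PS}}(F_t,\vec v)\cap\{s_1,\dots,s_k\}|\geq 1$.

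Now I invoke the prefix property \ref{level-set:prefix} of \Cref{lem:SR}: conditioned on the history, $\Pr[|\textup{\textsf{PS}}(F_t,\vec v)\cap\{s_1,\dots,s_k\}|\geq 1]=\min(1,\sum_{i\leq k} v_i)=\min\bigl(1,\sum_{i\in F_t:\,w_{i,t}\geq w} r_{i,t}\bigr)$. By definition $R_{t,w}=\sum_{i:w_{i,t}\geq w} r_{i,t}F_{i,t}=\sum_{i\in F_t:\,w_{i,t}\geq w} r_{i,t}$, so this conditional probability is exactly $\min(1,R_{t,w})$ (where $R_{t,w}$ is now a constant, being measurable with respect to the history). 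Combining with the independent arrival of $t$, the conditional probability of the event in the lemma is $p_t\cdot\min(1,R_{t,w})$. Taking expectation over the history — over which $R_{t,w}$ is a random variable determined by the free statuses $F_{i,t}$ — and using the law of total probability gives $\Pr[\calM\ni(i,t):\ w_{i,t}\geq w]=p_t\cdot\E[\min(1,R_{t,w})]$, as claimed.

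The one genuine subtlety — and the step I'd treat most carefully — is the interaction between the arrival of $t$ and the pivotal sampling randomness, together with the bookkeeping of what "the history before time $t$" includes. Pivotal sampling is run regardless of whether $t$ arrives (the set $I_t$ and the designated $i^*_t$ are computed first), and only then is the arrival consulted for the actual match; discards of $I_t\setminus\{i^*_t\}$ use fresh independent $\Ber(p_t)$ coins. So the arrival indicator of $t$ is independent of $F_t$, of the $r_{i,t}$'s, and of the pivotal sampling call, which justifies multiplying the two probabilities. I should also make explicit that the ordering supplied to $\textup{\textsf{PS}}$ places the $\{i:w_{i,t}\geq w\}$ nodes (restricted to $F_t$) as an exact prefix, so that property \ref{level-set:prefix} applies to precisely the index set we need; this is immediate from the decreasing-weight sort but worth a sentence. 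Everything else is a routine conditioning argument, so I'd expect the proof to be short — essentially "condition on the history, apply \ref{level-set:prefix}, multiply by $p_t$, take expectations."
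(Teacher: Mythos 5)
Your proposal is correct and is exactly the argument the paper intends: the paper's one-line proof simply cites the prefix property \ref{level-set:prefix} together with the fact that $t$ is matched to a highest-weight proposer upon arrival, and your write-up is a careful expansion of that same conditioning-on-history argument. The subtleties you flag (the decreasing-weight sort making $\{i \in F_t : w_{i,t} \ge w\}$ an exact prefix, and the independence of $t$'s arrival from the sampling randomness) are precisely the points the paper leaves implicit.
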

	\begin{proof}
        Follows directly from Property \ref{level-set:prefix} of pivotal sampling (\Cref{lem:SR}), together with $t$ being matched to a highest-weight offline neighbor $i\in I_t$ if it arrives.
    \end{proof}

This motivates the study of sums $R_{i,t}:=r_{i,t}\cdot F_{i,t}$ for our analysis, and in particular tail expectation bounds. Crucial to our bounding is the lemma that the variables $\{F_{i,t}\}_i$ are negative cylinder dependent (NCD).

\begin{restatable}{lemma}{ncd}\label{lem:NCD}
For any time $t$, the variables $\{F_{i,t}\}_{i}$ are NCD. That is, for any $I \subseteq \OFF$, $$\Pr \left[ \bigwedge_{i \in I} F_{i,t} \right] \le \prod_{i \in I} \Pr[F_{i,t}] = \prod_{i \in I} (1-y_{i,t}) \quad \quad \text{ and } \quad \quad \Pr \left[ \bigwedge_{i \in I} \overline{F_{i,t}} \right] \le \prod_{i \in I} \Pr[\overline{F_{i,t}}] = \prod_{i \in I} y_{i,t}.$$ 
\end{restatable}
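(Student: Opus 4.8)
The plan is to induct on the time step $t$, proving that $\{F_{i,t}\}_i$ is NCD (i.e., both displayed inequalities hold, with $\Pr[F_{i,t}]=1-y_{i,t}$ supplied by \Cref{cor:Fit}). The base case $t=1$ is immediate since $F_{i,1}\equiv 1$ and all $y_{i,1}=0$, so both inequalities hold with equality. For the inductive step the crucial structural observation is that, conditioned on the free set $F_t$ and the proposer set $I_t$ chosen in \Cref{line:calltoSR}, the variables $\{F_{i,t+1}\}_i$ are \emph{mutually independent}: each $i\notin F_t$ has $F_{i,t+1}=0$, each $i\in F_t\setminus I_t$ has $F_{i,t+1}=1$, and each $i\in I_t$ is removed (matched or discarded) independently with probability exactly $p_t$ --- for non-top proposers via a fresh $\Ber(p_t)$ discard coin, and for $i^*_t$ via the arrival of $t$, which (as already exploited in \Cref{claim:bdmlsimplification}) is independent of $I_t$, of all previous randomness, and of the discard coins. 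Moreover, conditioned on $F_t=S$ the proposer set is distributed as $\textsf{PS}(S,(r_{i,t})_{i\in S})$, hence is NCD by Property \ref{level-set:neg-corr} of pivotal sampling. I will also use the identity $1-y_{i,t+1}=(1-y_{i,t})(1-p_tr_{i,t})$, which follows from $x_{i,t}=p_tr_{i,t}(1-y_{i,t})$.

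For the upper (``$\bigwedge F$'') inequality, the observations above give $\Pr[\bigwedge_{i\in I}F_{i,t+1}]=\sum_{S\supseteq I}\Pr[F_t=S]\cdot\E_{I_t\sim\textsf{PS}(S,\cdot)}\big[\prod_{i\in I}(1-p_t\mathds{1}[i\in I_t])\big]$. The key algebraic move is to rewrite each factor as $1-p_t\mathds{1}[i\in I_t]=(1-p_t)+p_t\mathds{1}[i\notin I_t]$; expanding the product over $I$ then yields a sum over $J\subseteq I$ of \emph{non-negative} coefficients multiplying $\Pr[\bigwedge_{i\in J}(i\notin I_t)\mid F_t=S]$, which is at most $\prod_{i\in J}(1-r_{i,t})$ by the second NCD inequality of pivotal sampling. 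Re-collapsing the sum gives $\E_{I_t}[\,\cdot\,]\le\prod_{i\in I}(1-p_tr_{i,t})$, whence $\Pr[\bigwedge_{i\in I}F_{i,t+1}]\le\prod_{i\in I}(1-p_tr_{i,t})\cdot\Pr[\bigwedge_{i\in I}F_{i,t}]$; applying the induction hypothesis and the identity above closes this case.

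For the lower (``$\bigwedge\overline F$'') inequality, note $\overline{F_{i,t+1}}$ forces $i\notin F_t$ or ($i\in I_t$ and $i$ is removed). Conditioning on $(F_t,I_t)$ and using conditional independence, $\Pr[\bigwedge_{i\in I}\overline{F_{i,t+1}}\mid F_t,I_t]$ vanishes unless $I\cap F_t\subseteq I_t$, in which case it equals $p_t^{|I\cap F_t|}$; taking expectations and applying the first NCD inequality of pivotal sampling to $\Pr[I\cap F_t\subseteq I_t\mid F_t]$ bounds the whole thing by $\E\big[\prod_{i\in I}(1-(1-p_tr_{i,t})F_{i,t})\big]$. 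Writing $1-(1-p_tr_{i,t})F_{i,t}=p_tr_{i,t}+(1-p_tr_{i,t})\overline{F_{i,t}}$ --- again non-negative coefficients --- and expanding reduces the bound to a non-negative combination of terms $\Pr[\bigwedge_{i\in J}\overline{F_{i,t}}]\le\prod_{i\in J}y_{i,t}$ furnished by the induction hypothesis; re-collapsing gives $\prod_{i\in I}\big(y_{i,t}+p_tr_{i,t}(1-y_{i,t})\big)=\prod_{i\in I}y_{i,t+1}$, as required.

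I expect the main obstacle to be making the conditional-independence setup fully precise: cleanly arguing that coupling the top proposer's removal to the arrival of $t$ introduces no dependence, and that conditioning on $F_t$ leaves the law of $I_t$ exactly equal to plain pivotal sampling on $F_t$. The second, purely algebraic, subtlety worth flagging is that the alternating signs in $\prod_i(1-p_t\mathds{1}[i\in I_t])$ and $\prod_i(1-(1-p_tr_{i,t})F_{i,t})$ obstruct a naive term-by-term use of NCD; it is only after re-expressing these products in terms of the \emph{complementary} monotone events $\{i\notin I_t\}$ and $\{\overline{F_{i,t}}\}$ --- which makes every expansion coefficient non-negative --- that the one-directional NCD inequalities of Property \ref{level-set:neg-corr} and of the induction hypothesis apply directly.
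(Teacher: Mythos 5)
Your proof is correct and follows essentially the same route as the paper's: induction on $t$, the observation that independent discarding makes each proposer's removal a conditionally independent $\Ber(p_t)$ event (with the top proposer's coin coupled to $t$'s arrival), both directions of pivotal sampling's NCD property applied to $I_t$ conditioned on the history, and the multi-binomial collapse to $\prod_i(1-p_t r_{i,t})$ resp.\ $\prod_i(y_{i,t}+x_{i,t})$. The only cosmetic difference is that your rewrite via complementary indicators with non-negative coefficients replaces the paper's invocation of its Fact~\ref{fact:independent-coins}, which is proved by exactly the same expansion.
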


The proof of \Cref{lem:NCD} uses an inductive argument; at a high level it bears resemblance to bounds such as those in \cite{cohen2018randomized, braverman2022max}, although it differs from prior work in the reliance on properties of pivotal sampling and the independent discarding. We defer a detailed proof to \Cref{app:algo}, and provide here a brief intuition behind (half of) the proof upper bounding the probability all $i\in I$ are not free at time $t+1$. The critical claim is that the probability the nodes in the free subset $S = I\cap F_t$ are all matched/discarded at time $t$ is at most $\prod_{i \in S} (p_t\cdot r_{i,t}).$ To show this, we first note that for any history resulting in $S$ all being free at time $t$, 
the probability all $i\in S$ propose is at most $\prod_{i\in S} r_{i,t}$, by Properties \ref{level-set:marginals} and \ref{level-set:neg-corr} of pivotal sampling. 
The crucial claim then follows from the above and our \emph{independent discarding}, which has every proposing node $i\in I_t$ be matched or discarded at time $t$ with an independent $\text{Ber}(p_t)$ coin flip, regardless of whether it is the top-weight proposer (saved for the possible arrival at $t$) or a lower-weight proposer.

\newpage \section{Tail Expectation Bounds: Lower Bounding \texorpdfstring{$\E[\min(1,X)]$}{E[min(1,X)]}}\label{sec:Emin1X}

In this section, motivated by Lemmas \ref{per-t-min-bound} and \ref{lem:NCD}, we prove lower bounds on $\E[\min(1,X)]$, thus upper bounding the tail expectation $\E[(X-1)\cdot \mathds{1}[X>1]]$, for the following general setup:\footnote{In our algorithm's analysis we are concerned with $X=R_{t,w}$ and $X_i=F_{i,t}$. This more general notation simplifies the discussion later and highlights the generality of the tail bounds studied in this section.}

\begin{Def}\label{sec-setup}
In this section, $X:=\sum_{i=1}^n c_i \cdot X_i$, where $c_i\in [0,1]$ and $\{X_i\sim \textup{Ber}(q_i)\}$ are NCD.
\end{Def}

\subsection{The independent coin and bucketing bounds}

Our first bound on $\E[\min(1,X)]$ is as one would obtain for independent $\{X_i\cdot \Ber(c_i)\}$. 
We therefore refer to this as the independent coin bound. 

\begin{restatable}{lem}{indproposals}
\label{lem:basic-independent-proposal-bound}
Let $X,X_1,\dots,X_n$ be as in \Cref{sec-setup}. Then,
$$\E[\min(1,X)] \geq 1-\prod_{i\in \OFF} \left(1-c_i\cdot q_i\right).$$\end{restatable}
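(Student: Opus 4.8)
The plan is to follow the argument sketched in the technical overview: establish the inequality $\min(1, X) \ge 1 - \prod_{i} (1 - c_i X_i)$ pointwise, then take expectations and push the expectation inside the product using the negative cylinder dependence of the $\{X_i\}$.

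\textbf{Step 1: A pointwise inequality.} First I would show that for any realization of the $X_i \in \{0,1\}$, we have $\min(1, \sum_i c_i X_i) \ge 1 - \prod_i (1 - c_i X_i)$. Let $S = \{i : X_i = 1\}$. The right-hand side equals $1 - \prod_{i \in S}(1 - c_i)$. If $\sum_{i \in S} c_i \ge 1$, then the left side is $1$ and the bound is trivial since each factor $1 - c_i \in [0,1]$. Otherwise, the claim is $\sum_{i \in S} c_i \ge 1 - \prod_{i \in S}(1 - c_i)$, which is the standard union-bound/inclusion-exclusion inequality $\prod_{i \in S}(1 - c_i) \ge 1 - \sum_{i \in S} c_i$, provable by a one-line induction on $|S|$ (or by noting $1 - \prod(1-c_i)$ is the probability a union of independent events of probabilities $c_i$ occurs, which is at most $\sum c_i$).

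\textbf{Step 2: Take expectations.} Applying $\E[\cdot]$ to Step 1 gives $\E[\min(1,X)] \ge 1 - \E\!\left[ \prod_i (1 - c_i X_i)\right]$. It remains to show $\E\!\left[ \prod_i (1 - c_i X_i)\right] \le \prod_i (1 - c_i q_i)$. This is the step I expect to be the main obstacle, since it is exactly where the NCD hypothesis must be used rather than mere pairwise negative correlation. I would expand the product: $\prod_i (1 - c_i X_i) = \sum_{T \subseteq [n]} (-1)^{|T|} \left(\prod_{i \in T} c_i\right) \prod_{i \in T} X_i$, and similarly $\prod_i (1 - c_i q_i) = \sum_{T} (-1)^{|T|} \left(\prod_{i \in T} c_i\right) \prod_{i \in T} q_i$. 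Since $X_i \in \{0,1\}$, $\prod_{i \in T} X_i = \mathds{1}[\bigwedge_{i \in T} X_i = 1]$, so $\E[\prod_{i \in T} X_i] = \Pr[\bigwedge_{i \in T}(X_i = 1)] \le \prod_{i \in T} q_i$ by the first NCD inequality in \Cref{sec-setup}. But the signs alternate, so a termwise comparison does not immediately work; I would instead appeal to the standard fact that NCD (negative cylinder dependence) implies the Chernoff-type inequality $\E[\prod_i f_i(X_i)] \le \prod_i \E[f_i(X_i)]$ for nonincreasing nonnegative functions $f_i$ — here $f_i(x) = 1 - c_i x$, which is nonincreasing and nonnegative on $\{0,1\}$ since $c_i \in [0,1]$.

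\textbf{Alternative for Step 2.} If the paper does not wish to invoke that fact as a black box, I would prove the needed inequality directly by induction on $n$: condition on $X_n$, write $\E[\prod_{i=1}^n (1 - c_i X_i)] = \E\!\left[(1 - c_n X_n)\, g(X_1,\dots,X_{n-1})\right]$ where $g = \prod_{i<n}(1-c_i X_i) \ge 0$ is a nonincreasing function of each coordinate, and use that negative cylinder dependence is closed under conditioning / passes to the relevant correlation inequality $\E[(1 - c_n X_n) g] \le \E[1 - c_n X_n]\,\E[g] = (1 - c_n q_n)\E[g]$, then apply the inductive hypothesis to $\E[g]$ (noting $\{X_i\}_{i<n}$ remain NCD). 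Either route closes the proof: combining with Step 1 yields $\E[\min(1,X)] \ge 1 - \prod_i (1 - c_i q_i)$, as claimed.
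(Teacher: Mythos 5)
Your primary route is essentially the paper's proof: Step 1 is the paper's union-bound fact, and Step 2's target inequality $\E[\prod_i(1-c_iX_i)]\le\prod_i(1-c_iq_i)$ is exactly the step the paper carries out. One caveat: the ``standard fact'' you invoke ($\E[\prod_i f_i(X_i)]\le\prod_i\E[f_i(X_i)]$ for nonincreasing $f_i$) is, in general, a property of \emph{negative association}, which is strictly stronger than NCD; you should not cite it as a black-box consequence of NCD. It does hold here, but only because the $X_i$ are Bernoulli and all $f_i$ are monotone in the same direction: writing $1-c_iX_i=(1-c_i)+c_i(1-X_i)$ and expanding the product gives a sum with \emph{nonnegative} coefficients multiplying terms $\E[\prod_{i\in T}(1-X_i)]=\Pr[\bigwedge_{i\in T}(X_i=0)]\le\prod_{i\in T}(1-q_i)$, where the last step is the second NCD inequality; the multi-binomial theorem then reassembles $\prod_i((1-c_i)+c_i(1-q_i))=\prod_i(1-c_iq_i)$. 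This complement expansion is precisely what the paper's Fact~\ref{fact:probabilisticfact2} accomplishes, and it is the correct way to dodge the alternating-sign problem you rightly identified.

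Your ``alternative for Step 2,'' however, has a genuine gap. NCD is \emph{not} closed under conditioning, and the correlation inequality $\E[(1-c_nX_n)\,g]\le\E[1-c_nX_n]\,\E[g]$ for $g$ a nonincreasing function of $X_1,\dots,X_{n-1}$ amounts to $\Cov(X_n,g)\ge 0$, i.e., that $X_n$ is negatively correlated with every increasing function of the other coordinates. That is again a negative-association-type statement; NCD only controls the probabilities of the two extreme cylinder events and does not imply it. So you should drop the inductive variant and make the expansion in the previous paragraph explicit instead.
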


The above already implies a bound of $1-1/e$ for our algorithmic problem, but no better (see \Cref{claim:indprop1-1/e}). 
The following \emph{bucketing bound} generalizes the above.
\begin{restatable}{lem}{indproposalsbucketed}
\label{lem:bucketed-independent-proposal-bound}
Let $X,X_1,\dots,X_n$ be as in \Cref{sec-setup}, and let $\calB$ be a partition of $\OFF$ such that $\sum_{i \in B} c_i \le 1$ for every bucket $B \in \calB$. Then, $$\E[\min(1,X)] \geq 1-\prod_{B \in \calB} \left(1- \sum_{i \in B} c_i\cdot q_i\right).$$
\end{restatable}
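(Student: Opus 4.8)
\textbf{Proof proposal for \Cref{lem:bucketed-independent-proposal-bound}.}

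The plan is to reduce the bucketing bound to the independent coin bound of \Cref{lem:basic-independent-proposal-bound} by collapsing each bucket into a single Bernoulli variable. Fix the partition $\calB$. For each bucket $B \in \calB$, define the aggregated indicator $Y_B := \mathds{1}[\exists i \in B : X_i = 1]$, i.e.\ $Y_B = 1$ iff at least one of the $X_i$ in $B$ fires. I would first observe the pointwise inequality, valid for every realization of $(X_i)_{i}$: since $c_i \in [0,1]$ and $\sum_{i\in B} c_i \le 1$ for each $B$, on any outcome we have
\begin{align*}
\min(1,X) = \min\Bigl(1,\ \sum_{B\in\calB}\sum_{i\in B} c_i X_i\Bigr) \ \ge\ \min\Bigl(1,\ \sum_{B\in\calB} Y_B \cdot \min\bigl(1,\textstyle\sum_{i\in B}c_iX_i\bigr)\Bigr),
\end{align*}
and in fact the cleanest route is to directly lower bound $\min(1,X)$ by $1 - \prod_{B\in\calB}(1-Y_B)$: indeed if some bucket $B$ has $Y_B = 1$ then there is at least one $i\in B$ with $X_i=1$; it is \emph{not} true in general that $X\ge 1$ from a single fired variable (the $c_i$ can be small), so this naive step fails. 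Instead the right pointwise claim is the union-bound-style inequality $\min(1,X) \ge 1 - \prod_{B\in\calB}\bigl(1 - \min(1,\sum_{i\in B}c_iX_i)\bigr)$, which holds because $X \ge \sum_B s_B$ where $s_B := \min(1,\sum_{i\in B}c_iX_i)\in[0,1]$ and for any reals $s_B\in[0,1]$ one has $\min(1,\sum_B s_B)\ge 1-\prod_B(1-s_B)$ (a standard "one minus product" bound, provable by induction on $|\calB|$).

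Taking expectations and pulling the expectation inside via the NCD-type argument is the crux. I would define $Z_B := \min(1,\sum_{i\in B}c_iX_i)$, so that $\E[\min(1,X)] \ge 1 - \E\bigl[\prod_{B\in\calB}(1-Z_B)\bigr]$, and the goal becomes $\E\bigl[\prod_B (1-Z_B)\bigr] \le \prod_B\bigl(1 - \sum_{i\in B}c_i q_i\bigr)$. For this I would want: (a) a per-bucket estimate $\E[Z_B] \ge \sum_{i\in B} c_i q_i$ when the variables within $B$ are NCD — this is exactly the content of (the proof of) \Cref{lem:basic-independent-proposal-bound} applied to the single bucket $B$, whose internal step already shows $\E[\min(1,\sum_{i\in B}c_iX_i)] \ge 1 - \prod_{i\in B}(1-c_iq_i) \ge \sum_{i\in B}c_iq_i$ when $\sum_{i\in B}c_i\le 1$ (the last inequality since $1-\prod(1-a_i)\ge \sum a_i - \sum_{i<j}a_ia_j \cdots$, or more simply since $\sum a_i \le 1$ forces $\prod(1-a_i)\le 1-\sum a_i$ is false in that direction — so I'd instead directly carry the product form $\prod_{i\in B}(1-c_iq_i)$ through); and (b) a statement that $\prod_B(1-Z_B)$ has expectation bounded by the product of the per-bucket expectations $\prod_B \E[1-Z_B]$, i.e.\ a negative-association-style decorrelation across buckets.

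This last point (b) is the main obstacle, and the honest approach is to mirror the proof of \Cref{lem:basic-independent-proposal-bound} rather than invoke a black box: in that proof one shows $\E[\prod_i(1-c_iX_i)]\le\prod_i\E[1-c_iX_i]$ using negative cylinder dependence of $\{X_i\}$ directly (each factor $1-c_iX_i$ is a decreasing function of $X_i$ taking two values, and NCD gives exactly the needed correlation inequality after expanding the product into a sum over subsets $I$ of $\Pr[\bigwedge_{i\in I}X_i=1]\le\prod\Pr[X_i=1]$ with alternating signs — one must check the signs work out, which they do because $\prod_i(1-c_iX_i) = \sum_{I}(-1)^{|I|}\prod_{i\in I}c_i \prod_{i\in I}X_i$ and the $(-1)^{|I|}$ pairs with the NCD inequality in the favorable direction). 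Here I would expand $\prod_{B\in\calB}(1-Z_B)$ the same way but now it is not multilinear in the $X_i$, so the clean subset expansion breaks. The fix: do \emph{not} introduce $Z_B$ at all, and instead keep working with $\prod_{i}(1-c_iX_i)$ directly — regroup the pointwise bound as $\min(1,X)\ge 1 - \prod_{i\in\OFF}(1-c_iX_i)$ (the original independent coin inequality, which does hold pointwise bucket-free!), but that just re-derives \Cref{lem:basic-independent-proposal-bound}, not the bucketing improvement. So the genuine gain must come from the bucket structure: within a bucket, because $\sum_{i\in B}c_i\le 1$, the event "at least one $i\in B$ fires" already yields value close to $\sum_{i\in B}c_iX_i$ with no loss, so I would replace the within-bucket product $\prod_{i\in B}(1-c_iX_i)$ by the single term $(1-\sum_{i\in B}c_iX_i)$ — valid pointwise since $\prod_{i\in B}(1-c_iX_i)\ge 1-\sum_{i\in B}c_iX_i$ always (subadditivity of the complement), giving $\min(1,X)\ge 1-\prod_{B\in\calB}\bigl(1-\sum_{i\in B}c_iX_i\bigr)$ pointwise, where now each factor $1-\sum_{i\in B}c_iX_i\in[0,1]$ is a \emph{nonincreasing multilinear} function of the disjoint variable blocks. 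Then expanding the product over $B$ into a sum over transversals and applying NCD of $\{X_i\}_{i\in\OFF}$ across the chosen representatives (exactly as in the proof of \Cref{lem:basic-independent-proposal-bound}, since each monomial $\prod_{B\in S}c_{i_B}X_{i_B}$ involves distinct indices and the sign is $(-1)^{|S|}$) yields $\E\bigl[\prod_B(1-\sum_{i\in B}c_iX_i)\bigr]\le\prod_B\bigl(1-\sum_{i\in B}c_iq_i\bigr)$, completing the proof. I expect the only delicate bookkeeping to be verifying the sign pattern in the multilinear expansion so that every application of the NCD inequality is in the correct (upper-bounding) direction, which is the same check already performed for \Cref{lem:basic-independent-proposal-bound}.
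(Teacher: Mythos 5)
Your pointwise reduction is correct and matches the paper's: since $\sum_{i\in B}c_i\le 1$ forces each factor $1-\sum_{i\in B}c_iX_i$ into $[0,1]$, the standard ``one minus product'' bound gives $\min(1,X)\ge 1-\prod_{B\in\calB}\bigl(1-\sum_{i\in B}c_iX_i\bigr)$ on every outcome, and the remaining task is exactly the one you identify: $\E\bigl[\prod_B(1-\sum_{i\in B}c_iX_i)\bigr]\le\prod_B\bigl(1-\sum_{i\in B}c_iq_i\bigr)$.

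The gap is in how you propose to prove that last inequality. Expanding the product into signed monomials $\sum_{S\subseteq\calB}(-1)^{|S|}\sum_{(i_B)}\prod_{B\in S}c_{i_B}\E[\prod_{B\in S}X_{i_B}]$ and applying NCD termwise does \emph{not} work, and the analogous claim you make about the proof of \Cref{lem:basic-independent-proposal-bound} (``the $(-1)^{|I|}$ pairs with the NCD inequality in the favorable direction'') is false. For $|S|$ even the monomial enters with a $+$ sign and the upper bound $\E[\prod_{i\in I}X_i]\le\prod_i q_i$ is favorable, but for $|S|$ odd the monomial enters with a $-$ sign and you would need the \emph{reverse} inequality $\E[\prod_{i\in I}X_i]\ge\prod_i q_i$, which NCD does not supply (indeed it asserts the opposite). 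Already for three singleton buckets with $c_i=1$ the expansion requires $\E[X_1X_2X_3]\ge q_1q_2q_3$. The paper circumvents this with an inclusion-reversal identity (\Cref{fact:probabilisticfact2}, generalized to \Cref{fact:independent-buckets}): one rewrites $\sum_J\Pr[S=J]\cdot\prod_B(1-\sum_{i\in B\cap J}c_i)$ as $\sum_J\Pr[S\subseteq J]$ times \emph{nonnegative} coefficients, where $\Pr[S\subseteq J]=\Pr[\bigwedge_{i\notin J}\overline{X_i}]$. Only the ``all zeros'' half of NCD is then invoked, uniformly in the favorable direction, and the multi-binomial theorem collapses the sum into $\prod_B(1-\sum_{i\in B}c_iq_i)$. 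Without this rewriting (or some substitute that avoids needing both directions of the correlation inequality), your step (b) does not go through.
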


We provide a direct proof of \Cref{lem:basic-independent-proposal-bound} here, and defer a slightly more notation-heavy generalization of this proof, yielding \Cref{lem:bucketed-independent-proposal-bound}, to \Cref{app:Emin1X}. The proof of \Cref{lem:basic-independent-proposal-bound} uses the following two facts about real numbers, both provable via probabilistic arguments. 

\begin{restatable}[Union Bound]{fact}{unionbound} \label{fact:unionbound}
If $0 \le c_1, c_2, \ldots c_n \le 1$, then  $\min \left( 1, \,\sum_{i=1}^n c_i \right) \ge 1 - \prod_{i=1}^n \left( 1 -  c_i \right).$
\end{restatable}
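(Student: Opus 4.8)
I need to prove the "Union Bound" fact: if $0 \le c_1, \ldots, c_n \le 1$, then $\min(1, \sum_i c_i) \ge 1 - \prod_i (1 - c_i)$.

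\textbf{Proof plan.}

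The plan is to give the probabilistic argument hinted at by the phrase ``provable via probabilistic arguments,'' which is essentially a one-line reduction to the genuine union bound on events. First I would introduce independent random variables $Y_1,\dots,Y_n$ with $Y_i\sim\Ber(c_i)$ (well-defined since each $c_i\in[0,1]$), and consider the event $\mathcal{E}=\bigcup_{i=1}^n\{Y_i=1\}$. On the one hand, by independence, $\Pr[\mathcal{E}] = 1-\Pr\bigl[\bigwedge_i (Y_i=0)\bigr] = 1-\prod_{i=1}^n(1-c_i)$. On the other hand, by the union bound over events together with the trivial bound $\Pr[\mathcal{E}]\le 1$, we get $\Pr[\mathcal{E}]\le \min\bigl(1,\sum_{i=1}^n\Pr[Y_i=1]\bigr)=\min\bigl(1,\sum_{i=1}^n c_i\bigr)$. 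Chaining the equality and the inequality yields $1-\prod_{i=1}^n(1-c_i)\le \min\bigl(1,\sum_{i=1}^n c_i\bigr)$, which is exactly the claim.

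Alternatively — and this is worth keeping in reserve in case a fully elementary argument is preferred — the statement admits a short induction on $n$. The base case $n=1$ is the identity $\min(1,c_1)=c_1\ge 1-(1-c_1)$. For the inductive step, write $S=\sum_{i<n}c_i$ and $P=\prod_{i<n}(1-c_i)$; the inductive hypothesis gives $1-P\le\min(1,S)$, hence both $1-P\le 1$ and $1-P\le S$. If $S+c_n\ge 1$ then $\min(1,S+c_n)=1\ge 1-P(1-c_n)$ since $P(1-c_n)\ge 0$; if $S+c_n<1$ then $\min(1,S+c_n)=S+c_n\ge (1-P)+c_n\ge (1-P)+c_nP=1-P(1-c_n)$, using $S\ge 1-P$ and $P\le 1$.

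There is essentially no obstacle here: the only things to be careful about are that the $Y_i$ are legitimate Bernoulli variables (guaranteed by $c_i\in[0,1]$) and that the union bound is applied together with the trivial cap at $1$ so that the $\min$ on the right-hand side comes out correctly. I would present the probabilistic version as the main proof, since it also motivates the ``imagined independent coins'' perspective used immediately afterward in \Cref{lem:basic-independent-proposal-bound} and in \Cref{eqn:intro:independent-proposals}.
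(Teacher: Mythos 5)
Your proposal is correct and matches the paper's intended argument: the paper proves this fact (via its bucketed generalization, \Cref{fact:unionbound-buckets}, with singleton buckets) by realizing each index independently with probability $c_i$ and comparing the exact probability that at least one index is realized, $1-\prod_i(1-c_i)$, against the union bound capped at $1$. Your independent $\Ber(c_i)$ coin argument is exactly this; the inductive alternative is a fine backup but unnecessary.
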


\begin{restatable}{fact}{probabilisticfact2}\label{fact:probabilisticfact2}
Let $S$ be a random subset of $[n]$ and $0 \le c_1, c_2, \ldots, c_n \le 1$. Then, $$\sum_{J \subseteq [n]} \Pr[S = J] \cdot \prod_{i \in J} \ \left( 1 -  c_i \right) = \sum_{J \subseteq [n]} \Pr[S \subseteq J] \cdot \prod_{i\not\in J} c_i\cdot \prod_{i \in J} \left(1-  c_i\right).$$
\end{restatable}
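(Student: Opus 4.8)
\textbf{Proof proposal for Fact~\ref{fact:probabilisticfact2}.}

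The plan is to give a clean probabilistic proof by introducing independent auxiliary coins and computing the same quantity two ways. Specifically, I would let $Y_1,\dots,Y_n$ be independent Bernoulli random variables, independent of $S$, with $\Pr[Y_i = 1] = c_i$ (so $\Pr[Y_i=0]=1-c_i$), and consider the event $\calE := \{\, i \notin S \implies Y_i = 1 \,\}$, i.e.\ every index outside $S$ has its coin come up $1$. The strategy is to express $\Pr[\calE]$ by conditioning on $S$ in the first way, and by conditioning on the set $\{i : Y_i = 1\}$ in the second way, and observe that the two resulting sums are exactly the left- and right-hand sides of the claimed identity.

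The key steps, in order: First, condition on the realization $S = J$. Given $S = J$, the event $\calE$ says $Y_i = 1$ for all $i \notin J$ (with no constraint on $Y_i$ for $i \in J$), which by independence of the $Y_i$'s has probability $\prod_{i \notin J}(1 - c_i)$. Wait --- I need to recheck the exponent: $\Pr[Y_i = 1] = c_i$, so $\Pr[Y_i=1 \text{ for all } i\notin J] = \prod_{i\notin J} c_i$. Hmm, that does not match either side directly, so I should instead choose $\Pr[Y_i = 1] = 1 - c_i$. Let me restate: take $Y_i$ independent with $\Pr[Y_i = 0] = c_i$ and $\Pr[Y_i = 1] = 1 - c_i$, and let $\calE$ be the event that $Y_i = 1$ for every $i \notin S$. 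Then conditioning on $S = J$ gives $\Pr[\calE \mid S = J] = \prod_{i \notin J}(1 - c_i)$, so
\begin{align*}
\Pr[\calE] = \sum_{J \subseteq [n]} \Pr[S = J]\cdot \prod_{i \notin J}(1 - c_i).
\end{align*}
Hmm, this gives $\prod_{i\notin J}$, but the LHS of the Fact has $\prod_{i\in J}(1-c_i)$. So the roles of ``inside'' and ``outside'' are swapped relative to my event; the fix is to define $\calE$ as the event that $Y_i = 1$ for every $i \in S$ instead. Then $\Pr[\calE \mid S = J] = \prod_{i \in J}(1-c_i)$, matching the LHS. Second, compute $\Pr[\calE]$ the other way: let $T := \{ i : Y_i = 1\}$ be the random set of ``successful'' coins. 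The event $\calE = \{S \subseteq T\}$. Conditioning on $T = J$, and using independence of $T$ from $S$, $\Pr[\calE \mid T = J] = \Pr[S \subseteq J]$, while $\Pr[T = J] = \prod_{i \in J}(1-c_i)\cdot\prod_{i \notin J} c_i$. Summing,
\begin{align*}
\Pr[\calE] = \sum_{J \subseteq [n]} \Pr[S \subseteq J]\cdot \prod_{i \notin J} c_i \cdot \prod_{i \in J}(1 - c_i).
\end{align*}
Equating the two expressions for $\Pr[\calE]$ yields the identity.

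The only real subtlety --- and the one place I would be careful --- is the bookkeeping of which product runs over $J$ versus its complement, and making sure the definition of the auxiliary event $\calE$ (coins on $S$ being $1$, with $\Pr[Y_i=1] = 1-c_i$) is the one that produces $\prod_{i\in J}(1-c_i)$ on the left and $\prod_{i \in J}(1-c_i)\prod_{i\notin J} c_i$ on the right; a careless choice flips an index set and the identity fails to match the stated form. There is no analytic obstacle: once the right auxiliary experiment is set up, both evaluations are immediate from independence and the law of total probability. (Alternatively, one could prove it purely algebraically by expanding $\prod_{i\in J}(1-c_i) = \sum_{K\subseteq J}\prod_{i\in K}(-c_i)$ and swapping the order of summation, but the probabilistic argument is cleaner and matches the paper's stated intent that the fact is ``provable via probabilistic arguments.'')
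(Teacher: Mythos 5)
Your proof is correct and is exactly the paper's argument: the paper's one-line proof flips an independent coin for each $i$ that is heads with probability $c_i$ and computes the probability that every element of $S$ flipped tails in two ways, which is precisely your auxiliary experiment with $Y_i=1$ playing the role of ``tails.'' The self-corrections in your write-up land on the right event, and the final bookkeeping matches both sides of the identity.
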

\begin{proof}
For each $i$ independently flip a biased coin that is heads with probability $c_i$. The LHS and RHS both count the probability that every element in $S$ flipped tails. 
\end{proof}

\begin{proof}[Proof of \Cref{lem:basic-independent-proposal-bound}]
Let $S := \{i \mid X_i = 1\}$. Using the above facts, we get our desired~tail~bound.
\begin{align}
   \E[\min(1,X)]  &= \sum_{J \subseteq [n]} \Pr[S = J] \cdot \min \left(1,\, \sum_{i \in J} c_i \right) \nonumber \\
    &\ge \sum_{J \subseteq \OFF} \Pr[S = J] \cdot \left( 1 - \prod_{i \in J} (1 - c_i)  \right) && \text{\Cref{fact:unionbound}} \nonumber \\
    &= 1 - \sum_{J \subseteq \OFF} \Pr[S = J] \cdot  \prod_{i \in J} (1 -c_i) \nonumber \\
    &=  1 - \sum_{J \subseteq \OFF} \left( \Pr[ S \subseteq J] \cdot \prod_{i \not\in J} c_i \cdot  \prod_{i \in J} (1 - c_i) \right)  && \text{\Cref{fact:probabilisticfact2}} \nonumber \\
    &\ge  1 - \sum_{J \subseteq \OFF} \left( \prod_{i \not\in J} (1-q_i) \cdot \prod_{i \not\in J} c_i \cdot  \prod_{i \in J} (1 -c_i) \right) && \{X_i\} \text{ NCD} \label{eqn:crucial-NCD} \\
    &=  1 - \prod_{i \in \OFF} \left( (1-q_i) \cdot c_i + 1 - c_i\right) && \textrm{multi-binomial theorem} \nonumber \\
    & = 1 - \prod_{i \in [n]} \left( 1 - c_i \cdot q_i \right). &&  \qedhere \nonumber
\end{align}
\end{proof}

\subsection{``Fractional'' bucketing bound via generalized pivotal sampling}

As noted in \Cref{sec:prelims}, the bucketing bound can be lossy due to integrality issues. The following lemma allows us to transform (in our analysis only) the weighted Bernoullis $c_i\cdot X_i$, resulting in a simpler instance implying a stronger tail bound for the original instance.

\begin{lem}\label{lem:pipage}
Let $X = \sum_{i=1}^n c_i \cdot X_i$, where $X_1, \ldots, X_n$ are (possibly correlated) Bernoulli random variables with $c_i \in [0,1]$ for all $i$, with $0 < c_j, c_k < 1$ for indices $j\neq k$. Let $\rho^+ \ge 0$ and $\rho^- \ge 0$ be the largest real numbers such that $c_j + \rho^+ \cdot \E[X_k]$, $c_k - \rho^+ \cdot \E[X_j]$, $c_j - \rho^- \cdot \E[X_k]$ and $c_k + \rho^- \cdot \E[X_j]$ are all in $[0,1]$. Define the following two variables:
\begin{align*}X^+ & := \left(c_j + \rho^+ \cdot \E[X_k] \right) \cdot X_j + \left(c_k - \rho^+ \cdot \E[X_j] \right) \cdot X_k + \sum_{i \in [n] \setminus \{j,k\}} c_i \cdot X_i,\\
X^- & := \left(c_j - \rho^-\cdot \E[X_k] \right) \cdot X_j + \left( c_k + \rho^- \cdot\E[X_j] \right) \cdot X_k + \sum_{i \in [n] \setminus \{j,k\}} c_i \cdot X_i.
\end{align*}
Then, $\E[X]=\E[X^+]=\E[X^-]$, and for any concave function  $f(\cdot)$, $$\E[f(X)] \ge \min(\E[f(X^+)],\E[f(X^-)]).$$  
\end{lem}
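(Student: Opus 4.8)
The plan is to reduce the claim to a one-dimensional statement about the two parameters $\rho^+$ and $\rho^-$. Define, for a real parameter $t$ ranging over an interval containing $0$, the random variable
\[
X(t) := \bigl(c_j + t\cdot \E[X_k]\bigr)\cdot X_j + \bigl(c_k - t\cdot \E[X_j]\bigr)\cdot X_k + \sum_{i\in[n]\setminus\{j,k\}} c_i\cdot X_i,
\]
so that $X(0)=X$, $X(\rho^+)=X^+$, and $X(-\rho^-)=X^-$. The first observation, which is immediate from linearity of expectation, is that $\E[X(t)]$ does not depend on $t$: the $t$-dependent contributions are $t\cdot\E[X_k]\cdot\E[X_j]$ and $-t\cdot\E[X_j]\cdot\E[X_k]$, which cancel. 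This proves $\E[X]=\E[X^+]=\E[X^-]$.

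Next I would fix an arbitrary realization $\omega$ of the underlying randomness (equivalently, condition on the values $(X_1,\dots,X_n)$). On this event, $X(t)$ is an \emph{affine} function of $t$: its slope is $\E[X_k]\cdot X_j - \E[X_j]\cdot X_k$, a constant depending only on $\omega$. Hence, for any concave $f$, the map $t\mapsto f(X(t))$ is a concave function of $t$ (a concave function composed with an affine function). Taking expectations over $\omega$, and using that an average of concave functions is concave, the function $g(t) := \E[f(X(t))]$ is concave on its interval of definition. Since $0$ lies in the interval $[-\rho^-,\rho^+]$, concavity of $g$ gives $g(0)\ge \min(g(\rho^+),g(-\rho^-)) = \min(\E[f(X^+)],\E[f(X^-)])$, which is exactly the desired inequality. (Concretely: write $0 = \lambda\cdot\rho^+ + (1-\lambda)\cdot(-\rho^-)$ for the appropriate $\lambda\in[0,1]$ — which exists precisely because $-\rho^-\le 0\le\rho^+$ — and apply the definition of concavity to get $g(0)\ge \lambda g(\rho^+)+(1-\lambda)g(-\rho^-)\ge\min(g(\rho^+),g(-\rho^-))$; if $\rho^+=\rho^-=0$ the statement is trivial.)

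The only genuine subtlety is a boundary/degeneracy check rather than a real obstacle: one must confirm that $X(t)$ is well-defined as a ``weighted sum of Bernoullis with weights in $[0,1]$'' for all $t\in[-\rho^-,\rho^+]$, but this is exactly what the maximality definition of $\rho^+,\rho^-$ guarantees — the four expressions $c_j\pm\rho^\pm\E[X_k]$, $c_k\mp\rho^\pm\E[X_j]$ are forced into $[0,1]$, and by the affine (hence monotone) dependence on $t$ the intermediate values stay in $[0,1]$ as well; the other coefficients $c_i$ are untouched. I would also note that the lemma is used downstream only with $f(x)=\min(1,x)$ (or more general concave $f$), so no integrability issues arise since $X(t)$ is bounded. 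Everything else is the two short computations above — invariance of the mean and affineness in $t$ — so I do not anticipate a hard step; the content is entirely in recognizing that the ``pipage''-style perturbation moves $X$ along a line in a space where $f$ pulls back to a concave function.
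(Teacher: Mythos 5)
Your proposal is correct and is essentially the paper's own argument in slightly different clothing: the paper directly observes that $X = \sigma \cdot X^+ + (1-\sigma)\cdot X^-$ pointwise for $\sigma = \rho^-/(\rho^+ + \rho^-)$ and applies concavity of $f$ once, which is exactly your observation that $X(t)$ is affine in $t$ on each realization, specialized to $t=0 = \lambda\rho^+ + (1-\lambda)(-\rho^-)$. The mean-invariance computation and the convex-combination step are identical in substance, so there is nothing to flag beyond your (harmless) extra care for the degenerate case $\rho^+=\rho^-=0$.
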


\begin{proof}
Note that $X = \sigma \cdot X^+ + (1 - \sigma) \cdot X^-$ for $\sigma = \frac{\rho^-}{\rho^+ + \rho^-}$. Therefore,  by concavity of $f(\cdot)$,  
\begin{align*}
\E[f(X)] & \ge \sigma \cdot \E[f(X^+)] + (1-\sigma) \cdot \E[f(X^-)]\geq \min(\E[f(X^+)],\E[f(X^-)]).\qedhere 
\end{align*}
\end{proof}

Repeatedly applying the preceding lemma to the weights $c_j,c_k$ of pairs of variables $X_j,X_k$ with $\E[X_j]=\E[X_k]\geq 1-\theta$, we can obtain a new weight vector $c'$ with at most one fractional value, with the new variable $X'=\sum_i c'_i\cdot X_i$ satisfying $\E[\min(1,X)]\geq \E[\min(1,X')]$. 
The following lemma asserts that the worst-case scenario for this bound is when $\E[X_i]=1-\theta$ for all $X_i$ with $\E[X_i]\geq 1-\theta$ and $c'_i\neq 0$. 
Recall that~$\{z\}=z - \lfloor z \rfloor$ denotes the fractional part of $z$. 

\begin{restatable}{lem}{amgmforbucketing}\label{claim:amgmbasedineq}
    Let  $\theta, c \in [0,1]$, and $\{ q_i\}_{i \in A}$ be real numbers such that $q_i \in [1 - \theta, 1]$ for all $i$. Finally, let $\mu(c,\vec{q}) := c\cdot(1 - \theta) + \sum_{i \in A} q_i.$ Then, for $\mu:=\mu(c,\vec{q})$, we have
    $$(1 - c\cdot(1 - \theta)) \cdot \prod_{i \in A} (1 - q_i) \le \left( 1 - (1-\theta) \cdot \Big\{ \frac{\mu}{1-\theta} \Big\} \right) \cdot \theta^{\lfloor \frac{\mu}{1-\theta} \rfloor}.$$
\end{restatable}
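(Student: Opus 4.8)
The plan is to reduce the inequality to a clean one-variable convexity statement. Write $k := \lfloor \frac{\mu}{1-\theta} \rfloor$ and $\{\frac{\mu}{1-\theta}\} = \frac{\mu}{1-\theta} - k$, so the right-hand side is $(1 - (1-\theta)(\frac{\mu}{1-\theta} - k)) \cdot \theta^k = (1 + (1-\theta)k - \mu)\cdot \theta^k$. Since $\mu = c(1-\theta) + \sum_{i \in A} q_i$ is fixed, both sides depend on $\vec q$ and $c$ only through how the ``total mass'' $\mu$ is split. So the first step is to think of the left-hand side $\Phi(c, \vec q) := (1 - c(1-\theta)) \prod_{i \in A}(1 - q_i)$ as a function to be \emph{maximized} subject to $c \in [0,1]$, $q_i \in [1-\theta, 1]$, and $\mu(c,\vec q)$ held fixed, and to show the maximum is at most the right-hand side.

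\textbf{Key steps.} First I would observe that each factor $(1 - q_i)$ ranges over $[0, \theta]$ as $q_i$ ranges over $[1-\theta,1]$, and likewise $(1 - c(1-\theta))$ ranges over $[\theta, 1]$ as $c$ ranges over $[0,1]$; moreover $1 - q_i = \theta - (q_i - (1-\theta))$, so the ``cost in $\mu$'' of making a factor equal to $\theta$ (i.e. $q_i = 1-\theta$) is exactly $1-\theta$, while making it $0$ (i.e. $q_i = 1$) costs $\theta \cdot$ nothing extra beyond... — more precisely the factor $1-q_i$ and the mass contribution $q_i$ satisfy $(1-q_i) + $ (something linear). The cleanest route: substitute $a_i := 1 - q_i \in [0,\theta]$ and $a_0 := 1 - c(1-\theta) \in [\theta, 1]$, rescale via $b_i := a_i/\theta \in [0,1]$ for $i \in A$ and note $a_0 = \theta \cdot (a_0/\theta)$ with $a_0/\theta \in [1, 1/\theta]$; then $q_i = 1 - \theta b_i$ and the mass constraint becomes $\mu = c(1-\theta) + \sum_i (1 - \theta b_i) = (1 - a_0) + |A| - \theta \sum_i b_i$, i.e. $\sum_{i \in A} b_i + \frac{a_0}{\theta} \cdot (\text{?})$ — I would carefully track constants so that the constraint reads $\sum_{i\in A} b_i + (\text{linear in } a_0) = $ const. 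The upshot I expect: after this normalization the problem becomes: maximize $a_0 \cdot \theta^{|A|} \prod b_i$ subject to $\sum b_i + (\text{coeff})\, a_0 = $ const with box constraints, and by AM–GM / the fact that $x \mapsto \log x$ is concave, the product $\prod b_i$ (with one ``special'' variable $a_0$ weighted differently) is maximized by pushing as many $b_i$ to their extreme value $1$ as the budget allows, leaving exactly one fractional coordinate — which is precisely the structure $\theta^k \cdot (\text{fractional remainder})$ on the RHS. This is exactly the ``pivotal sampling / pipage'' intuition: extremize coordinates one pair at a time.

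\textbf{Alternative cleaner route.} Rather than Lagrange multipliers, I would argue directly by a pairwise exchange / smoothing argument mirroring \Cref{lem:pipage}: if two coordinates among $\{q_i\}_{i\in A} \cup \{c\}$ are both strictly interior to their allowed ranges, I can move them in opposite directions preserving $\mu$; the product $\Phi$ is log-linear... no, it is a product of affine functions, hence along such a line $\Phi$ is a (possibly degenerate) quadratic or the geometric-mean-type expression, and in any case its maximum over the segment is at an endpoint, so WLOG at the optimum at most one coordinate is interior. Handling the asymmetry that $c$'s factor lives in $[\theta,1]$ rather than $[0,\theta]$: check the two cases (the interior coordinate is $c$, or it is some $q_i$) and in each case the resulting extremal configuration has all-but-one $q_i$ equal to $1-\theta$ (contributing a factor $\theta$ each, and mass $1-\theta$ each), the remaining $q_i$ or $c$ absorbing the fractional slack, and the total number of $\theta$-factors equal to $k = \lfloor \mu/(1-\theta)\rfloor$; substituting back gives exactly the RHS. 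I would also need the boundary check that if the slack is absorbed by $c$ (so $c < 1$), the bound $1 - c(1-\theta) \le 1$ combined with the count of full $\theta$-factors still matches.

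\textbf{Main obstacle.} The delicate point is bookkeeping the asymmetric role of $c$: its factor $1 - c(1-\theta)$ lies in $[\theta, 1]$, which is ``worth up to one full $\theta$-block and then some,'' and getting the floor/fractional-part accounting to come out exactly right — i.e. verifying that when $\frac{\mu}{1-\theta}$ is slightly above an integer $k$, the optimal configuration indeed has $k$ factors of $\theta$ and one residual factor $1 - (1-\theta)\{\mu/(1-\theta)\}$, with the $c$-factor either saturated at $1$ or playing the residual role — and confirming the inequality is tight there. I expect the exchange argument to handle existence of an extremal point with $\le 1$ interior coordinate cleanly; the residual work is the finite case analysis pinning down that extremal point and matching it to the stated closed form.
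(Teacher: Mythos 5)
There is a genuine gap, and it lies at the heart of both of your routes: the direction of the smoothing/AM--GM step is reversed. Along a segment that moves two coordinates in opposite directions while preserving $\mu$, the function $\Phi$ restricted to that segment is a \emph{concave} quadratic in the step size $s$ (the $s^2$-coefficient is minus the product of the two slopes, which have opposite signs), so its \emph{maximum} is generically attained in the interior of the segment, not at an endpoint; only its minimum must sit at an endpoint. Consequently the conclusion ``WLOG at the optimum at most one coordinate is interior'' does not follow, and the claimed extremal configuration is in fact wrong. Concretely, take $\theta=\nicefrac12$, $c=0$, $|A|=2$, $q_1+q_2=1.5$: the maximum of $(1-q_1)(1-q_2)$ subject to $q_i\in[\nicefrac12,1]$ is $\nicefrac1{16}$ at the balanced point $q_1=q_2=0.75$ (both strictly interior), whereas your ``all-but-one at $1-\theta$'' configuration $q_1=\nicefrac12,\,q_2=1$ gives $0$. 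The same reversal infects the first route: with $\sum_i b_i$ fixed and $b_i\in[0,1]$, AM--GM says $\prod_i b_i$ is maximized when all $b_i$ are \emph{equal}, not when they are pushed to $\{0,1\}$ with one fractional remainder (indeed any $b_i=0$ kills the product). The paper's own exchange (\Cref{claim:amgmbasedineq}'s proof) is valid only for the special pair $(c,q_j)$: there the two ``mass'' variables $c\cdot(1-\theta)\in[0,1-\theta]$ and $q_j\in[1-\theta,1]$ live in intervals that abut at $1-\theta$, so moving toward balance --- which \emph{increases} the product, exactly as needed for an upper bound --- necessarily terminates at a box boundary ($c=1$ or $q_j=1-\theta$). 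A generic exchange between two $q_i$'s has no such termination and pushes the wrong way.

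Because the true maximizer of the $q$-part is the balanced point, the floor/fractional-part structure of the right-hand side cannot be recovered combinatorially: the bound $\bigl(1-(1-\theta)\{\tfrac{\mu}{1-\theta}\}\bigr)\theta^{\lfloor \mu/(1-\theta)\rfloor}$ corresponds to a configuration with $\lfloor\mu/(1-\theta)\rfloor$ terms, which is generally infeasible for the given $|A|$. Bridging this gap requires an analytic step that your proposal omits entirely. The paper's proof, after reducing via its one legitimate exchange to the cases ``all $q_i=1-\theta$'' (equality by inspection) or ``$c\in\{0,1\}$,'' handles the latter by AM--GM in the correct direction ($\prod_i(1-q_i)\le(1-\bar q)^{|A|}$ for the average $\bar q$), then the pointwise inequality $(1-q)^{1-\theta}\le\theta^{q}$ for $q\ge 1-\theta$ (from monotonicity of $\ln(x)/(1-x)$) to convert $(1-\bar q)^{|A|}$ into $\theta^{\mu/(1-\theta)}$, and finally $\theta^{a}\le 1-(1-\theta)a$ for $a\in[0,1]$ to split off the fractional part. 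Some such per-unit-mass comparison between a factor $(1-q)$ and $\theta^{q/(1-\theta)}$ is indispensable; without it, the ``finite case analysis'' you defer to at the end cannot close.
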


To gain intuition about this lemma, note that it is tight when $q_i = 1-\theta$ for all $i$; in \Cref{app:Emin1X} we show that this is indeed the worst case, using a local exchange argument. 

\newpage 
Armed with the above, we are now ready to prove our main lower bound on $\E[\min(1,X)]$. 

\begin{restatable}{lem}{fracbucketing}\label{lem:fracbucketingbound}
Let $X,X_1,\dots,X_n$ be as in \Cref{sec-setup}. For threshold $\theta \in [0,1]$, we consider the set $S :=\{i\in \OFF \mid q_i \ge 1-\theta\}$ and $\mu_S := \sum_{i \in S} c_i \cdot q_i$. Then, $$\E[\min(1,X)] \ge 1 - \left( 1 - (1-\theta) \cdot \Big\{ \frac{\mu_S}{1-\theta} \Big\} \right) \cdot \theta^{\lfloor \frac{\mu_S}{1-\theta} \rfloor} \cdot \prod_{i \notin S} \left( 1 - c_i \cdot q_i \right) .$$
\end{restatable}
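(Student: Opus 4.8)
The plan is to combine the two tools already set up in this subsection: the transformation \Cref{lem:pipage}, which lets us ``round'' the weights $c_i$ of the high-probability variables while preserving all expectations and not increasing $\E[\min(1,X)]$ (since $x\mapsto \min(1,x)$ is concave), and the arithmetic inequality \Cref{claim:amgmbasedineq}, which controls the resulting product. The starting point is the independent coin bound \Cref{lem:basic-independent-proposal-bound}, which gives $\E[\min(1,X)] \ge 1 - \prod_i (1 - c_i q_i)$. The goal is to show that, for the subset $S$ of high-probability indices, we may replace $\prod_{i\in S}(1-c_i q_i)$ by the cleaner expression $\bigl(1 - (1-\theta)\{\mu_S/(1-\theta)\}\bigr)\cdot \theta^{\lfloor \mu_S/(1-\theta)\rfloor}$ at the cost of only decreasing $\E[\min(1,X)]$.

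First I would set up the transformation step. Working only with the variables indexed by $S$ (leaving the $i\notin S$ terms untouched), I repeatedly apply \Cref{lem:pipage} to pairs $j,k\in S$ whose current weights $c_j,c_k$ are both strictly fractional. Each application replaces $(c_j,c_k)$ by one of two new pairs, $(c_j+\rho^+\E[X_k],\,c_k-\rho^+\E[X_j])$ or $(c_j-\rho^-\E[X_k],\,c_k+\rho^-\E[X_j])$, chosen so as to not increase $\E[\min(1,X)]$; by maximality of $\rho^+,\rho^-$, in either branch at least one of the two new weights becomes $0$ or $1$. Crucially $\E[X]$ and hence (since the weights on $S$ change but the $q_i$ are fixed) the quantity $\sum_{i\in S} c_i q_i = \mu_S$ is preserved throughout — this is because the change in $c_j X_j + c_k X_k$ has expectation $\pm\rho(\E[X_k]\E[X_j] - \E[X_j]\E[X_k]) = 0$. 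So after finitely many steps we reach weights $(c_i')_{i\in S}$, with at most one fractional value, still satisfying $\sum_{i\in S} c_i' q_i = \mu_S$, and with $\E[\min(1,X)]\ge \E[\min(1,X')] \ge 1 - \prod_{i\in S}(1 - c_i' q_i)\cdot\prod_{i\notin S}(1 - c_i q_i)$, the last inequality again by \Cref{lem:basic-independent-proposal-bound} applied to $X'$ (whose $X_i$ are still NCD, and whose weights are still in $[0,1]$). Here one should double-check a subtlety: \Cref{lem:pipage} is stated so the two endpoints both have weights in $[0,1]$, and repeated application needs the pivoted indices to remain available; since $S$ and the $q_i$ are fixed and each step strictly increases the number of integral weights in $S$, this terminates.

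Next I would evaluate $\prod_{i\in S}(1 - c_i' q_i)$. Among the $i\in S$, those with $c_i' = 0$ contribute a factor $1$; those with $c_i' = 1$ contribute $1 - q_i$; and at most one index $i_0$ has $c_i' = c \in (0,1)$, contributing $1 - c(1-\theta)$... wait — more carefully, $i_0$ contributes $1 - c\,q_{i_0}$ with $q_{i_0}\ge 1-\theta$. To invoke \Cref{claim:amgmbasedineq} cleanly I would let $A := \{i\in S : c_i' = 1\}$ be the set of integral-weight contributors and note $\mu_S = c\,q_{i_0} + \sum_{i\in A} q_i$. Then I want to bound $(1 - c\,q_{i_0})\prod_{i\in A}(1-q_i)$. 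Since $q_{i_0}\ge 1-\theta$ and the map $c\mapsto 1 - c\,q_{i_0}$ is decreasing, $(1-c\,q_{i_0}) \le (1 - c(1-\theta)\cdot \tfrac{q_{i_0}}{1-\theta})$; combining with $\mu_S \ge c(1-\theta)\cdot\tfrac{q_{i_0}}{1-\theta} + \sum_{i\in A} q_i$ we can apply \Cref{claim:amgmbasedineq} with the parameter ``$c$'' there set to $c\cdot q_{i_0}/(1-\theta)$ and the reals $\{q_i\}_{i\in A}$, noting $\mu$ there equals $\mu_S$. This yields $(1 - c\,q_{i_0})\prod_{i\in A}(1-q_i) \le \bigl(1 - (1-\theta)\{\mu_S/(1-\theta)\}\bigr)\,\theta^{\lfloor \mu_S/(1-\theta)\rfloor}$, and multiplying by $\prod_{i\notin S}(1 - c_i q_i)$ and substituting into the displayed bound gives exactly the statement. (The degenerate cases — no fractional index at all, or $S$ empty — are subsumed by allowing $c=0$ or $A=\emptyset$ in \Cref{claim:amgmbasedineq}.)

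The main obstacle I anticipate is bookkeeping around the transformation rather than any deep difficulty: making sure \Cref{lem:pipage} can legitimately be applied iteratively (the lemma is stated for a single pivot, so I need the termination/availability argument above, and I need to track that the ``other'' variables' weights stay fixed so the product over $i\notin S$ is untouched), and correctly matching parameters when calling \Cref{claim:amgmbasedineq} — in particular handling the single fractional leftover weight by folding its $q_{i_0}$ into the ``$c$'' parameter as above, since \Cref{claim:amgmbasedineq} only allows the fractional term to have coefficient $c(1-\theta)$, not $c\,q_{i_0}$. Everything else is a direct chaining of \Cref{lem:basic-independent-proposal-bound}, \Cref{lem:pipage}, and \Cref{claim:amgmbasedineq}.
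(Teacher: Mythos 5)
Your proposal follows the paper's proof essentially step for step: reduce via repeated application of \Cref{lem:pipage} to a weight vector on $S$ with at most one fractional entry while preserving $\mu_S$ and not increasing $\E[\min(1,X)]$, then apply \Cref{lem:basic-independent-proposal-bound} to the surrogate variable, and finally invoke \Cref{claim:amgmbasedineq} to bound the product over $S$. Your observations about why the iteration is legitimate (each step strictly increases the number of integral weights in $S$, the weights outside $S$ and the $q_i$ are untouched, and NCD is a property of the $X_i$ alone so the independent coin bound still applies to $X'$) are all correct and match the paper.

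There is, however, one concrete gap in your final parameter-matching step. You fold the leftover fractional index $i_0$ into the claim by setting its ``$c$'' parameter to $c\cdot q_{i_0}/(1-\theta)$, but \Cref{claim:amgmbasedineq} requires $c\in[0,1]$, and this substituted value exceeds $1$ whenever $c\cdot q_{i_0}>1-\theta$ (e.g.\ $\theta=\nicefrac12$, $q_{i_0}=1$, $c=0.9$ gives $1.8$). As written, the claim is then being invoked outside its hypotheses, and its proof genuinely uses $c\le 1$ (the exchange step takes $\eps=\min(q_j-(1-\theta),(1-c)(1-\theta))$). The paper closes this with a two-case rewrite $(1-c'_{i_0}q_{i_0})=(1-c''_{i_0}q''_{i_0})$: if $c\cdot q_{i_0}\le 1-\theta$ take $q''_{i_0}=1-\theta$ and $c''_{i_0}=c\,q_{i_0}/(1-\theta)\le 1$ (your case); otherwise take $c''_{i_0}=1$ and $q''_{i_0}=c\,q_{i_0}\in[1-\theta,1]$, so that $i_0$ joins the binary-coefficient set $A$ and the claim is applied with its ``$c$'' parameter equal to $0$. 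Both rewrites preserve $\mu_S$ and the left-hand product, so adding this second branch completes your argument; everything else in your write-up is sound.
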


\begin{proof}
    By \Cref{lem:pipage}, repeatedly applying the generalized pivotal sampling step of increasing and decreasing $c_j$ and $c_k$ yields a new set of weights $\vec{c}'\in \mathbb{R}^n$ with $X':=\sum_i c'_i\cdot X_i$ satisfying $\E[\min(1,X)]\geq \E[\min(1,X')]$. 
    Moreover, we also have  (deterministically) that $\sum_{i\in S} c'_i\cdot X_i=\mu_S$ and that $c'_i\not\in \{0,1\}$ for at most one $i^*\in S$. 

Now, using the independent coins bound of \Cref{lem:basic-independent-proposal-bound} (which coincidentally is tight if all but one coefficient is binary), we find that indeed 
    \begin{align*}
        \E[\min(1,X)] &\ge \E[\min(1,X')] && \textrm{\Cref{lem:pipage}}\\
        & \geq \E\left[1 - \prod_{i} (1 - c'_i \cdot q_i)\right] && \textrm{\Cref{lem:basic-independent-proposal-bound}}\\
        &\ge 1 - \left( 1 - (1-\theta) \cdot \Big\{ \frac{\mu_S}{1-\theta} \Big\} \right) \cdot \theta^{\lfloor \frac{\mu_S}{1-\theta} \rfloor} \cdot  \prod_{i \notin S} (1 - c_i \cdot q_i) . && \text{\Cref{claim:amgmbasedineq}} 
    \end{align*}
    To justify our use of \Cref{claim:amgmbasedineq} in the final line, note that while the index $i$ with fractional coefficient $c'_i$ may not have $q_i = (1 - \theta) $ after the transformations of \Cref{lem:pipage}, after using the independent coins bound the resulting expression will be the same with syntactic changes.\footnote{In particular, write $(1 - c'_i \cdot q_i) = (1 - c_i'' \cdot q_i'')$, where $c_i''$ is taken to be as large as possible with the caveats that it should not exceed $1$, and $q_i''$ should not drop below $(1-\theta)$. One of these two inequalities becomes tight first, and both cases fit the form of \Cref{claim:amgmbasedineq} (the first has the fractional coefficient corresponding to probability $(1-\theta)$, and the second has everyone in $S$ with binary coefficients).}
\end{proof}

\subsection{Variance-based bound}
For our final approach to bounding $\E[\min(1,X)]$, we note that assuming $\E[X] \le 1$, Jensen's inequality implies that $\E[\min(1,X)]\geq \E[X]-\frac{1}{2}\cdot \sqrt{\text{Var}(X)}$, 
though this can be loose if $\E[X]<1$.
The following lemma refines this bound (even for $X$ not as in \Cref{sec-setup}). 

\begin{restatable}{lem}{varbound}\label{lem:Emin1X-variance}
Let $X$ be a non-negative random variable with $\E[X]\leq 1$. Then, $$\E[\min(1,X)]\geq \E[X] - \frac{1}{2 } \cdot \sqrt{\Var( X)\cdot \E[X]}.$$
\end{restatable}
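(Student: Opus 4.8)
The plan is to prove the stronger statement that for every threshold $a \in [0,1]$,
\[
\E[\min(1,X)] \;\ge\; \E[X] - \frac{\Var(X)}{2(1-a)} - \frac{a}{2}\cdot\Pr[X > a]\cdot 0 \;\;\text{(something of this flavor)},
\]
and then optimize over $a$. More concretely, the cleanest route I would take is the following. Write $\E[\min(1,X)] = \E[X] - \E[(X-1)\mathds{1}[X>1]]$, so it suffices to upper bound the tail expectation $\E[(X-1)\mathds{1}[X>1]]$ by $\tfrac12\sqrt{\Var(X)\E[X]}$. For any $a \in [0,1)$, on the event $\{X>1\}$ we have $X-1 \le (X-a) - (1-a)$ and also $X - 1 \le \frac{(X-a)^2}{1-a} - (1-a)$ is not quite right; instead use the elementary pointwise bound that for $X \ge 1 > a$,
\[
X - 1 \;\le\; \frac{(X-a)^2 - (1-a)^2}{2(1-a)} \;=\; \frac{(X-a)^2}{2(1-a)} - \frac{1-a}{2},
\]
which holds because $2(1-a)(X-1) \le (X-a)^2 - (1-a)^2 = (X-1)(X+1-2a)$ is equivalent to $X + 1 - 2a \ge 2(1-a)$, i.e. $X \ge 1$. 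Hence, bounding the indicator by $1$ on the relevant event and dropping the negative tail of $(X-a)^2$ elsewhere,
\[
\E[(X-1)\mathds{1}[X>1]] \;\le\; \frac{\E[(X-a)^2]}{2(1-a)} - \frac{1-a}{2}\cdot\Pr[X>1].
\]
This still has the awkward $\Pr[X>1]$ term; to avoid needing it, I would instead simply drop it (it is nonnegative) and write $\E[(X-a)^2] = \Var(X) + (\E[X]-a)^2$, giving
\[
\E[\min(1,X)] \;\ge\; \E[X] - \frac{\Var(X) + (\E[X]-a)^2}{2(1-a)}.
\]

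Now I would choose $a$ to optimize the right-hand side. Setting $a = \E[X] - \sqrt{\Var(X)\cdot ??}$ — more precisely, one wants $a$ such that $1-a = $ the minimizer of $g(u) := \frac{\Var(X) + (\E[X]-1+u)^2}{2u}$ over $u = 1-a \in (0,1]$. Since $\E[X]\le 1$, write $\delta := 1 - \E[X] \ge 0$, so $g(u) = \frac{\Var(X) + (u-\delta)^2}{2u} = \frac{u}{2} - \delta + \frac{\Var(X)+\delta^2}{2u}$, minimized at $u^\star = \sqrt{\Var(X)+\delta^2}$ with value $g(u^\star) = \sqrt{\Var(X)+\delta^2} - \delta$. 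One must check $u^\star = \sqrt{\Var(X)+\delta^2} \le 1$, i.e. that $a = 1 - u^\star \ge 0$ is a legitimate choice; this holds since $\Var(X)+\delta^2 = \Var(X) + (1-\E[X])^2 \le \E[X^2] - 2\E[X] + 1 = \E[(X-1)^2] \le 1$ as $0 \le X$ and ... hmm, actually $\E[(X-1)^2]\le 1$ need not hold in general — but we only need $u^\star\le 1$, and if $u^\star > 1$ we instead take $u=1$ (i.e. $a=0$), which only makes the bound weaker and still beats the claimed bound since then $\sqrt{\Var(X)+\delta^2}\ge 1 \ge \E[X] \ge \E[\min(1,X)]$ makes the inequality trivial; I would handle that edge case separately in one line. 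So in the main case we obtain $\E[\min(1,X)] \ge \E[X] - (\sqrt{\Var(X)+\delta^2} - \delta) = 1 - \sqrt{\Var(X) + (1-\E[X])^2}$.

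Finally I would deduce the stated form from this. We have shown $\E[\min(1,X)] \ge 1 - \sqrt{\Var(X) + (1-\E[X])^2}$, and it remains to check $1 - \sqrt{\Var(X)+(1-\E[X])^2} \ge \E[X] - \tfrac12\sqrt{\Var(X)\E[X]}$, i.e., writing $\delta = 1-\E[X]\in[0,1]$ and $V = \Var(X)$, that $\delta + \tfrac12\sqrt{V(1-\delta)} \ge \sqrt{V + \delta^2}$. Squaring (both sides nonnegative): $\delta^2 + \delta\sqrt{V(1-\delta)} + \tfrac14 V(1-\delta) \ge V + \delta^2$, i.e. $\delta\sqrt{V(1-\delta)} \ge V(1 - \tfrac14(1-\delta)) = V\cdot\tfrac{3+\delta}{4}$. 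If $V = 0$ this is trivial; otherwise divide by $\sqrt V$: need $\delta\sqrt{1-\delta} \ge \tfrac{3+\delta}{4}\sqrt V$, which requires an upper bound on $V$ in terms of $\delta$. This is where I expect the real work to be — the naive chain does not close, so the optimization over $a$ must be done more carefully, or the pointwise inequality should retain a bit more of the structure (e.g. keeping a fraction of the $-\frac{1-a}{2}\Pr[X>1]$ term, or using that on $\{X\le 1\}$, $\min(1,X)=X$ exactly rather than bounding crudely). The main obstacle, then, is selecting the threshold $a$ and the pointwise quadratic majorant so that the resulting closed form is exactly $\E[X] - \tfrac12\sqrt{\Var(X)\E[X]}$ rather than a nearby but incomparable quantity; I would reverse-engineer $a$ from the target bound (guessing $1 - a \propto \sqrt{\Var(X)/\E[X]}$ or $a = 1 - \tfrac12\sqrt{\Var(X)/\E[X]}\cdot(\text{const})$) and verify the pointwise inequality holds for that choice, treating the degenerate cases $\Var(X)=0$ and the "threshold exceeds $1$" case by hand.
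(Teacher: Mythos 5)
You are right that your chain does not close, and the gap is quantitative, not cosmetic. Dropping the term $-\frac{1-a}{2}\cdot\Pr[X>1]$ costs you exactly a factor of $2$ in the deviation term: your optimized bound is $\E[\min(1,X)]\ge \E[X]-\bigl(\sqrt{\Var(X)+\delta^2}-\delta\bigr)$ with $\delta=1-\E[X]$, whereas the sharp mean--variance bound on the tail is
$\E[(X-1)^+]\le \tfrac12\bigl(\sqrt{\Var(X)+\delta^2}-\delta\bigr)$,
obtained for instance from $\E[(X-1)^+]=\tfrac12\bigl(\E[X-1]+\E|X-1|\bigr)\le\tfrac12\bigl(\E[X]-1+\sqrt{\E[(X-1)^2]}\bigr)$. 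Already at $\delta=0$ your bound reads $1-\sqrt{V}$ against the target $1-\tfrac12\sqrt{V}$, so no choice of the threshold $a$ can rescue the lossy version. Moreover, even the sharp bound does not imply the stated inequality unconditionally: $\tfrac12\bigl(\sqrt{V+\delta^2}-\delta\bigr)\le\tfrac12\sqrt{V\cdot\E[X]}$ is equivalent (for $V,\delta>0$) to $V\le 4\E[X]$, which can fail. The missing observation that completes this route is that when $V>4\E[X]$ the right-hand side of the lemma is nonpositive, so the claim is vacuous there; combining the sharp tail bound with that case split does yield a correct, complete proof.

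The paper proceeds differently: it replaces $X$ by the two-point variable $X'$ supported on $\E[X\mid X\le 1]$ and $\E[X\mid X>1]$ with the corresponding probabilities, notes that this preserves both $\E[X]$ and $\E[\min(1,X)]$ while (by Jensen) not increasing the variance, and then verifies the inequality for two-point distributions directly, using that $\E[X']\le 1$ forces $p=\Pr[X\le1]\ge\frac{1-b}{a-b}$, monotonicity in $p$, and AM--GM. That reduction is essentially the extremal structure underlying the sharp tail bound above, so the two approaches are close in spirit; but as written your argument is missing both the tighter handling of the $\Pr[X>1]$ term and the $V>4\E[X]$ case split, and it does not prove the lemma.
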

The proof of \Cref{lem:Emin1X-variance}, obtained by considering a two-point distribution with the same expectation as $X$ which is supported on only $\E[X\mid X\leq 1]$ and $\E[X\mid X>1]$, is deferred to \Cref{app:Emin1X}.

The bound of \Cref{lem:Emin1X-variance} is generally incomparable with \Cref{lem:fracbucketingbound} (see \Cref{sec:incomparable}). 
In \Cref{sec:vertexweighted} we combine both to analyze our algorithm for the vertex-weighted and unweighted settings. For now, we turn to applying the bound of \Cref{lem:fracbucketingbound} to obtain our edge-weighted result, in the next section.

\newpage
\section{The Edge-Weighted Algorithm} \label{sec:edge-weighted}

In this section we provide our application of \Cref{alg:proposals-core} to the general edge-weighted problem.

\subsection{Rescaling the LP solution}\label{sec:scaling}

The following illustrative example of \cite{braverman2022max} shows that running \Cref{alg:proposals-core} on an optimal solution $\vec{x}$ to \eqref{LP-PPSW} could yield only a $(1-1/e)$-approximation. 
\begin{example}\label{example:need-to-rescale}
    Consider an instance with $n$ offline nodes and $n+1$ online nodes; for $t\in [n]$, the $t$\textsuperscript{th} online node neighbors only offline node $i=t$ with $w_{i,t}=1$ and arrives with probability $1-1/n$. The last online node neighbors all offline nodes with large weight $W\gg 1$, and arrives with probability one. The unique optimal LP solution sets $x_{i,i} = 1-1/n$ and $x_{i, n+1} = 1/n$ for all $i \in [n]$. In this case, we have that $ R_{n+1} \sim \Bin(n,1/n)$, and so $\E[\min(1,R_{n+1})] = 1-(1-1/n)^n\to 1-1/e$. 
\end{example}

In \Cref{example:need-to-rescale}, informally, the ``early'' edges of offline nodes $i$ (i.e., edges to nodes $t$ where $y_{i,t}$ is low) result in a match with probability close to $x_{i,t}$, while ``later'' edges have a a much lower matching probability than $x_{i,t}$. 
This motivates the rescaling approach of  \cite{naor2023online}: decreasing $x_{i,t}$ for early edges $(i,t)$ and increasing $x_{i,t}$ for late edges. 
For completeness, we introduce (and slightly generalize) this rescaling approach here, which we adopt shortly.

\begin{Def}\label{def:scaling-gen}
For non-decreasing function $f: [0,1] \mapsto \mathbb{R}_{\ge 0}$ with $\int_0^1 f(z) \, dz = 1$, define for~all~$i,t$:
\begin{align}
\hat{x}_{i,t} & := \int_{y_{i,t}}^{y_{i,t}+x_{i,t}} f(z) \, dz,\label{eqn:scaling} \\
\hat{y}_{i,t} & := \sum_{t' < t} \hat{x}_{i,t'} = \int_{0}^{y_{i,t}} f(z) \, dz.
\end{align} 
\end{Def}

We first show that the transformation $\vec{x}\mapsto \vec{\hat{x}}$ preserves Constraint \eqref{eqn:PPSWConstraint}, i.e.,  $\hat{r}_{i,t}:=\frac{\hat{x}_{i,t}}{p_t(1-\hat{y}_{i,t})}$ is in $[0,1]$ if $r_{i,t}\in [0,1]$. Non-negativity is trivial, while the upper bound is proven in the following. 
\begin{claim} \label{hatxwelldefined}
    If $\vec{x}$ satisfies Constraint \eqref{eqn:PPSWConstraint}, then $\hat{x}_{i,t} \le p_t \cdot \left( 1 - \hat{y}_{i,t} \right)$ for all $i,t$.
\end{claim}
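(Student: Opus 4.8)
The plan is to reduce the claimed inequality $\hat{x}_{i,t} \le p_t(1-\hat{y}_{i,t})$ to the corresponding inequality for the original solution, $x_{i,t} \le p_t(1-y_{i,t})$, by exploiting monotonicity of $f$. First I would rewrite both sides using \Cref{def:scaling-gen}: we have $\hat{x}_{i,t} = \int_{y_{i,t}}^{y_{i,t}+x_{i,t}} f(z)\,dz$, and $1-\hat{y}_{i,t} = 1 - \int_0^{y_{i,t}} f(z)\,dz = \int_{y_{i,t}}^{1} f(z)\,dz$, where the last equality uses the normalization $\int_0^1 f(z)\,dz = 1$. So the inequality to prove becomes
\[
\int_{y_{i,t}}^{y_{i,t}+x_{i,t}} f(z)\,dz \;\le\; p_t \cdot \int_{y_{i,t}}^{1} f(z)\,dz.
\]

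Now I would split the right-hand integral at $y_{i,t}+x_{i,t}$ (note $y_{i,t}+x_{i,t}\le 1$ since $\vec{x}$ satisfies Constraint \eqref{eqn:PPSWConstraint} with $p_t\le 1$, or more directly since $y_{i,t}+x_{i,t} = y_{i,t+1} = \sum_{t'\le t} x_{i,t'}$ is a sum of marginal probabilities bounded by $1$), giving
\[
p_t \int_{y_{i,t}}^{1} f(z)\,dz = p_t \int_{y_{i,t}}^{y_{i,t}+x_{i,t}} f(z)\,dz + p_t \int_{y_{i,t}+x_{i,t}}^{1} f(z)\,dz.
\]
Thus it suffices to show $(1-p_t)\int_{y_{i,t}}^{y_{i,t}+x_{i,t}} f(z)\,dz \le p_t \int_{y_{i,t}+x_{i,t}}^{1} f(z)\,dz$. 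Since $f$ is non-decreasing, the average value of $f$ over the interval $[y_{i,t}, y_{i,t}+x_{i,t}]$ is at most its average value over $[y_{i,t}+x_{i,t},1]$ (every point of the first interval lies to the left of every point of the second); hence $\int_{y_{i,t}}^{y_{i,t}+x_{i,t}} f(z)\,dz \le \frac{x_{i,t}}{1-y_{i,t}-x_{i,t}}\int_{y_{i,t}+x_{i,t}}^{1} f(z)\,dz$, provided $1 - y_{i,t} - x_{i,t} > 0$. Combining, it remains to check $(1-p_t)\cdot \frac{x_{i,t}}{1-y_{i,t}-x_{i,t}} \le p_t$, i.e. $(1-p_t)x_{i,t} \le p_t(1 - y_{i,t} - x_{i,t})$, i.e. $x_{i,t} \le p_t(1-y_{i,t})$ — which is exactly Constraint \eqref{eqn:PPSWConstraint} for $\vec{x}$. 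The edge case $1-y_{i,t}-x_{i,t}=0$ forces $p_t = 1$ (again by Constraint \eqref{eqn:PPSWConstraint}), and then the desired inequality reads $\hat{x}_{i,t}\le 1-\hat{y}_{i,t}$, which holds since $\hat{x}_{i,t}+\hat{y}_{i,t} = \int_0^{y_{i,t}+x_{i,t}} f = \int_0^1 f = 1$.

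I don't anticipate a serious obstacle here; the only things to be careful about are (i) correctly handling the boundary case where $1-y_{i,t}-x_{i,t}=0$ (equivalently $y_{i,t}+x_{i,t}=1$), and (ii) stating precisely the "averaging" step — that for a non-decreasing $f$ and $a\le b\le c$ one has $\frac{1}{b-a}\int_a^b f \le \frac{1}{c-b}\int_b^c f$ — which is immediate because $\sup_{[a,b]} f \le f(b) \le \inf_{[b,c]} f$ up to measure-zero issues, or can be phrased via the substitution/monotonicity of $\int f$. A clean way to package the whole argument is to define $g(s) := \int_0^s f(z)\,dz$, note $g$ is convex (as $f = g'$ is non-decreasing) and non-decreasing with $g(0)=0$, $g(1)=1$, and observe that the claim is $g(y_{i,t}+x_{i,t}) - g(y_{i,t}) \le p_t(1 - g(y_{i,t}))$; convexity gives $g(y_{i,t}+x_{i,t}) \le \frac{x_{i,t}}{1-y_{i,t}} g(1) + \frac{1-y_{i,t}-x_{i,t}}{1-y_{i,t}} g(y_{i,t})$ (interpolating between $g(y_{i,t})$ and $g(1)$ since $x_{i,t}\le 1-y_{i,t}$), and plugging this in reduces exactly to $x_{i,t}\le p_t(1-y_{i,t})$ after rearranging. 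I would present the convexity version as the main line of proof since it is the cleanest.
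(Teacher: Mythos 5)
Your proof is correct and, in its final ``convexity of $g(s)=\int_0^s f$'' packaging, is exactly the paper's one-line argument: the interpolation $g(y_{i,t}+x_{i,t})-g(y_{i,t})\le \frac{x_{i,t}}{1-y_{i,t}}(1-g(y_{i,t}))$ is the paper's inequality $\hat{x}_{i,t}\le \frac{x_{i,t}}{1-y_{i,t}}\int_{y_{i,t}}^1 f(z)\,dz$, followed by Constraint \eqref{eqn:PPSWConstraint}. The alternative split at $y_{i,t}+x_{i,t}$ and the edge-case discussion are fine but unnecessary detours from the same two ingredients (monotonicity of $f$ and Constraint \eqref{eqn:PPSWConstraint}).
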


\begin{proof}
Using the definition of $\hat{x}_{i,t}$, monotonicity of $f(\cdot)$, Constraint \eqref{eqn:PPSWConstraint} and $\int_{0}^1 f(z) \; dz = 1$, and finally the definition of $\hat{y}_{i,t}$, we obtain our desired bound. 
\begin{align*}
    \hat{x}_{i,t} &= \int_{y_{i,t}}^{y_{i,t} + x_{i,t}} f(z) \, dz  \le \frac{x_{i,t}}{1-y_{i,t}} \cdot \int_{y_{i,t}}^1 f(z) \, dz 
    \le p_t \cdot \left( 1 - \int_0^{y_{i,t}} f(z) \, dz \right) = p_t \cdot (1 - \hat{y}_{i,t} ).  \qedhere
    \end{align*}
    
\end{proof}

Thus, $\vec{\hat{x}}$ is still a valid input for \Cref{alg:proposals-core} (though it is not a valid solution to \eqref{LP-PPSW}). This motivates our \Cref{alg:proposals-edge-weighted} for the edge-weighted problem, where we rescale according to \Cref{def:scaling-gen}, with $f(\cdot)$ a step function, as in \cite{naor2023online}. 
\begin{Def}\label{def:step}
For some $\eps,\delta\in [0,1]$ to be determined later and $\theta := \frac{\delta}{\delta + \eps}$, we set:
\begin{align*}
f(z) := \begin{cases} 1 - \eps & z \in [0, \theta]\\
1 + \delta & z \in (\theta, 1].
\end{cases}
\end{align*}
Note that by the choice of $\theta$, we have that $\int_0^1 f(z) \, dz = 1$.
\end{Def}

\begin{algorithm}[H]
	\caption{Online Correlated Proposals with Scaling}
	\label{alg:proposals-edge-weighted}
	\begin{algorithmic}[1]
\State Compute optimal solution $\vec{x}$ to \eqref{LP-PPSW}
\State For every $(i,t)$, compute $\hat{x}_{i,t}$ according to Definitions \ref{def:scaling-gen} and \ref{def:step}
\State Run \Cref{alg:proposals-core} using $\vec{\hat{x}}$
\end{algorithmic}
\end{algorithm}

\subsection{Edge-weighted analysis}

In this section we provide our main result for edge-weighted matching.

\begin{restatable}{thm}{thmedgeweighted}\label{thm:edge-weighted}
\Cref{alg:proposals-edge-weighted} with appropriate $\eps,\delta\in [0,1]$ is a polynomial-time $\edgeapprox$-approximate online algorithm for edge-weighted online stochastic bipartite matching.   
\end{restatable}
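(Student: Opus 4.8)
The plan is to argue one online node, and one weight-threshold, at a time, exactly as in the proof of \Cref{claim:bdmlsimplification}. By \Cref{per-t-min-bound} the expected weight that \Cref{alg:proposals-edge-weighted} collects from $t$ equals $p_t\int_0^\infty\E[\min(1,R_{t,w})]\,dw$, where $R_{t,w}=\sum_{i:\,w_{i,t}\ge w}\hat r_{i,t}\,F_{i,t}$, while the contribution of $t$ to $\textup{OPT}\eqref{LP-PPSW}$ is $\int_0^\infty\big(\sum_{i:\,w_{i,t}\ge w}x_{i,t}\big)\,dw$; so by \Cref{lem:LP} it suffices to show, for every online node $t$ and every $w\ge 0$,
\[
p_t\cdot\E[\min(1,R_{t,w})]\ \ge\ \edgeapprox\cdot\!\!\sum_{i:\,w_{i,t}\ge w}\!\!x_{i,t}.
\]
Fix $t,w$, write $N=\{i:w_{i,t}\ge w\}$ for the heavy neighborhood, and set $q_i=\Pr[F_{i,t}]=1-\hat y_{i,t}$ (by \Cref{cor:Fit} applied to $\vec{\hat x}$) and $c_i=\hat r_{i,t}$, so that $c_iq_i=\hat x_{i,t}/p_t$ and, by \Cref{lem:NCD}, $R_{t,w}=\sum_{i\in N}c_i\,F_{i,t}$ is of the form covered by \Cref{sec-setup}.

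The heart of the argument is to split $N$ according to the rescaling of \Cref{def:scaling-gen} and \Cref{def:step}: an \emph{early} heavy edge (with $y_{i,t}$ below the breakpoint $\theta$) has $\hat x_{i,t}=(1-\eps)x_{i,t}$ and small $\hat y_{i,t}$ — hence its $F_{i,t}$ has large survival probability $q_i$ — whereas a \emph{late} heavy edge has its value boosted to $\hat x_{i,t}=(1+\delta)x_{i,t}$. I would then invoke the fractional bucketing bound (\Cref{lem:fracbucketingbound}) with an appropriate threshold $\sigma$: choosing $\sigma$ so that $\{i\in N:q_i\ge 1-\sigma\}$ captures (essentially) the early heavy edges, the quantity $\mu=\sum_{i\in N:\,q_i\ge 1-\sigma}c_iq_i$ becomes their total rescaled $x$-mass divided by $p_t$, entering the bound through the strongly decreasing ``fractional bucketing'' factor $\big(1-(1-\sigma)\{\mu/(1-\sigma)\}\big)\sigma^{\lfloor\mu/(1-\sigma)\rfloor}$, while the residual product $\prod_{i\in N:\,q_i<1-\sigma}(1-c_iq_i)\le\exp\!\big(-(1+\delta)\cdot(\text{late heavy }x\text{-mass})/p_t\big)$ carries the $(1+\delta)$ boost of the late edges. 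The upshot is a lower bound on $p_t\E[\min(1,R_{t,w})]$ controlled by two quantities — the early and the late heavy $x$-mass at $t$ — in which \emph{either} being large forces a value strictly above $1-1/e$: much early mass makes the fractional bucketing factor small, much late mass makes the boosted exponential small.

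It then remains to verify a real-variable inequality: for a suitable choice of $\eps,\delta\in[0,1]$ (equivalently $\theta=\tfrac{\delta}{\delta+\eps}$, together with the threshold $\sigma$), the resulting lower bound is at least $\edgeapprox$ times the total heavy $x$-mass, over all feasible configurations. Two features make this the main obstacle. First, the fractional bucketing factor is only piecewise linear in $\mu$, so the optimization must be carried out piece by piece, the worst cases sitting at the breakpoints $\mu\in(1-\sigma)\Z$ and on the boundary where $t$ is saturated by heavy edges. Second — and this is what prevents a collapse to a clean two-parameter problem — the configurations are constrained: Constraint~\eqref{eqn:PPSWConstraint} forces $\hat r_{i,t}=\hat x_{i,t}/(p_t(1-\hat y_{i,t}))\le 1$, so an early edge cannot simultaneously carry large $x$-mass and large $\hat y_{i,t}$, and exactly this coupling is what lets the fractional bucketing bound close the gap in the otherwise-tight ``many early edges'' regime. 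Granting the probabilistic core (\Cref{lem:fracbucketingbound}, which rests on pivotal sampling), what remains is to pin down $\eps,\delta,\sigma$ and check that the worst case of this constrained piecewise optimization equals $\edgeapprox$.
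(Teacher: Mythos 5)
Your proposal is correct and follows essentially the same route as the paper: reduce to a per-$t$, per-threshold bound via \Cref{per-t-min-bound}, split the heavy neighborhood into low-degree ($y_{i,t}\le\theta$) and high-degree edges, apply the fractional bucketing bound with threshold $\hat\theta=\theta(1-\eps)$ (your $\sigma$, which indeed exactly captures the early edges since $\hat y_{i,t}\le\hat\theta$ iff $y_{i,t}\le\theta$), bound the high-degree product by $\exp(-(1+\delta)x_H)$, and finish with a constrained optimization over $\eps,\delta$. The paper completes that last step by normalizing $x_L+x_H$ to $1$ (reducing to a one-variable function $k_{\eps,\delta}(z)$) and verifying $k_{0.11,0.18}(z)\ge\edgeapprox$ by a Lipschitz-plus-grid computer-assisted check, which is the piece you correctly flag as remaining.
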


The above algorithm's approximation ratio boils down to proving the following lemma. 

\begin{lemma}\label{lem:edgeweightedbound} 
There exist $\eps, \delta \in [0,1]$ such that for any time $t$ and weight $w\geq 0$, 
the variable $\hat{R}_{t,w} := \sum_{i: w_{i,t}\geq w} \hat{r}_{i,t} \cdot F_{i,t}$ satisfies 
$$\E[\min(1, \hat{R}_{t,w})] \ge \edgeapprox \cdot   \sum_{i : w_{i,t} \ge w} \frac{x_{i,t}}{p_t} .$$
\end{lemma}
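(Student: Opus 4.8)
The plan is to fix a time $t$ and weight $w$, write $q_i := 1 - \hat y_{i,t}$ (so by \Cref{cor:Fit} applied to $\vec{\hat{x}}$ we have $\Pr[F_{i,t}] = q_i$), $c_i := \hat r_{i,t}$, and restrict attention to the index set $N := \{i : w_{i,t} \ge w\}$, so that $\hat R_{t,w} = \sum_{i \in N} c_i \cdot F_{i,t}$ is exactly of the form in \Cref{sec-setup} (the $\{F_{i,t}\}$ being NCD by \Cref{lem:NCD}). I will abbreviate $a_i := \hat x_{i,t}/p_t = c_i\cdot q_i$ and $b_i := x_{i,t}/p_t$, so the goal is to show $\E[\min(1,\hat R_{t,w})] \ge \edgeapprox \cdot \sum_{i\in N} b_i$. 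Note $\sum_{i\in N} b_i \le \sum_i x_{i,t}/p_t \le 1$ by Constraint~\eqref{eqn:OnlineMatchingConstraint}, and I may as well assume $\sum_{i\in N} b_i = 1$ (this is the worst case, since scaling all of $w$'s incident LP mass down only helps — formally one can add a dummy offline node, or argue monotonicity of both sides). The key quantitative relationship is that, from \Cref{def:step} and \eqref{eqn:scaling}, each early-to-late conversion ratio satisfies $a_i \ge (1-\eps)\, b_i$ always, and moreover $a_i \ge (1+\delta)\, b_i$ whenever $i$ has ``high degree'' $y_{i,t} \ge \theta$; also $q_i = 1 - \hat y_{i,t} \ge \theta$ exactly when $y_{i,t} \le \theta$ (by monotonicity of the CDF of $f$ and $\hat y = \int_0^{y} f$, since $\int_0^\theta f = (1-\eps)\theta = \theta\cdot\frac{\eps}{\delta+\eps}\cdot\frac{\delta+\eps}{\eps}$… more simply $\hat y_{i,t} = \theta$ iff $y_{i,t} = \theta$, as $f$ integrates to the identity map at the breakpoint by the choice of $\theta$). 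Thus with $S := \{i \in N : q_i \ge 1-\theta\} = \{i \in N : y_{i,t}\le \theta\}$ being the ``low-degree'' nodes and $N\setminus S$ the ``high-degree'' nodes: $a_i \ge (1-\eps)b_i$ for $i \in S$ and $a_i \ge (1+\delta)b_i$ for $i\in N\setminus S$.

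The plan is then to invoke the fractional bucketing bound, \Cref{lem:fracbucketingbound}, with this threshold $\theta$: writing $\mu_S := \sum_{i\in S} a_i$ and $P := \sum_{i\notin S} b_i$ (so $\sum_{i\in S} b_i = 1 - P$), we get
\begin{align*}
\E[\min(1,\hat R_{t,w})] \ge 1 - \left(1 - (1-\theta)\cdot\Big\{\tfrac{\mu_S}{1-\theta}\Big\}\right)\cdot \theta^{\lfloor \mu_S/(1-\theta)\rfloor}\cdot \prod_{i\notin S}(1 - a_i).
\end{align*}
For the high-degree product I would bound $\prod_{i\notin S}(1-a_i) \le \prod_{i\notin S}(1-(1+\delta)b_i) \le \exp(-(1+\delta)P)$, using $a_i\ge (1+\delta)b_i$ and $1-x\le e^{-x}$; by convexity this is the bottleneck shape in $P$. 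For the low-degree factor, using $\mu_S \ge (1-\eps)(1-P)$ and the fact that the function $g(m) := (1 - (1-\theta)\{m/(1-\theta)\})\cdot\theta^{\lfloor m/(1-\theta)\rfloor}$ is non-increasing in $m \ge 0$ (it is a continuous, piecewise-linear, decreasing interpolation between the points $(k(1-\theta), \theta^k)$), I get the low-degree factor is at most $g((1-\eps)(1-P))$. So the whole thing reduces to a one-variable optimization: show that there exist $\eps,\delta\in[0,1]$ with
\begin{align*}
\min_{P\in[0,1]} \; \Big(1 - g\big((1-\eps)(1-P)\big)\cdot e^{-(1+\delta)P}\Big) \;\ge\; \edgeapprox.
\end{align*}
This is exactly the kind of bounded, explicit optimization that can be verified numerically (and the authors presumably pick concrete $\eps,\delta$, e.g.\ near the values hinted at by the \texttt{acoeff}/\texttt{bcoeff}/\texttt{ccoeff} macros) by checking the two regimes $P$ small (low-degree dominates, fractional packing gives the boost) and $P$ large (high-degree dominates, the $(1+\delta)$ scaling gives the boost), with the crossover being the worst case.

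A couple of subtleties I would handle carefully. First, I should double-check that I'm allowed to apply \Cref{lem:fracbucketingbound} with $c_i = \hat r_{i,t} \in [0,1]$ and $q_i = \hat r_{i,t}$'s associated marginal — this is fine since \Cref{claim:amgmbasedineq} and \Cref{lem:pipage} only need $c_i\in[0,1]$ and the $X_i$ NCD, both of which hold here; the ``$q_i\ge 1-\theta$'' membership condition defining $S$ is exactly the low-degree condition, so no mismatch. Second, the reduction to $\sum_{i\in N} b_i = 1$: if $\sum_{i\in N} b_i =: \beta < 1$, then since $\min(1,\cdot)$ is concave and $1$-Lipschitz and each $a_i,b_i$ scales linearly, it suffices to observe that the ratio $\E[\min(1,\hat R_{t,w})]/\sum b_i$ is minimized at $\beta = 1$; a clean way is to note $\E[\min(1,\hat R)] \ge \beta\cdot\E[\min(1,\hat R/\beta)]$ by concavity, and $\hat R/\beta$ corresponds to an instance with total mass $1$ (one can pad with a zero-weight dummy if $\hat R/\beta$ slightly exceeds feasibility, but in fact scaling $\vec x$ on $N$ keeps Constraint~\eqref{eqn:PPSWConstraint}). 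I expect the main obstacle to be none of the above structural steps but rather the final numerical optimization: getting the constant $\edgeapprox$ out requires choosing $\eps,\delta$ to simultaneously balance the ``packing boost'' from fractional bucketing in the low-degree-heavy regime against the ``scaling boost'' $(1+\delta)$ in the high-degree-heavy regime, and verifying that the worst $P$ indeed clears the target. Monotonicity of $g$ and the clean exponential upper bound make this a genuinely one-dimensional, verifiable claim, but it is where all the care (and the specific numeric constants) lives.
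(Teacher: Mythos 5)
Your overall strategy --- split the neighbors of $t$ at the degree threshold $\theta$, apply the fractional bucketing bound to the low-degree part and the elementary $1-z\le e^{-z}$ bound to the $(1+\delta)$-boosted high-degree part, normalize so the total LP mass is $1$, and finish with a one-dimensional numerical optimization over $\eps,\delta$ --- is exactly the paper's proof. However, there is a concrete error in how you instantiate the fractional bucketing bound, and it changes the function you ultimately need to optimize. You invoke \Cref{lem:fracbucketingbound} with threshold $\theta$ and assert that
$$\{i : q_i \ge 1-\theta\} \;=\; \{i : y_{i,t}\le\theta\},$$
justified by the claim that ``$\hat y_{i,t}=\theta$ iff $y_{i,t}=\theta$.'' This is false: by \Cref{def:scaling-gen,def:step}, $\hat y_{i,t}=\int_0^{y_{i,t}}f(z)\,dz=(1-\eps)\,y_{i,t}$ for $y_{i,t}\le\theta$, so the breakpoint $y_{i,t}=\theta$ maps to $\hat y_{i,t}=(1-\eps)\theta$, not $\theta$. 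Consequently $\{i: q_i\ge 1-\theta\}=\{i:\hat y_{i,t}\le\theta\}$ strictly contains the low-degree set (it also picks up nodes with $y_{i,t}\in(\theta,\theta+\eps\theta/(1+\delta)]$). The correct move --- and what the paper does --- is to apply the bound with threshold $\hat\theta:=(1-\eps)\theta$, for which the set identity $\{i:q_i\ge 1-\hat\theta\}=\{i:y_{i,t}\le\theta\}$ does hold.

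This is not merely cosmetic. Your downstream inequalities ($\mu_S\ge(1-\eps)(1-P)$ and $\prod_{i\notin S}(1-a_i)\le e^{-(1+\delta)P}$) do survive with the actual set $S=\{i:q_i\ge1-\theta\}$, but the resulting target is $\min_P\bigl(1-g_{\theta}((1-\eps)(1-P))\,e^{-(1+\delta)P}\bigr)$ with $g_\theta$, whereas the paper's is the same expression with $g_{\hat\theta}$. Since $x\mapsto\ln(x)/(1-x)$ is increasing, one has $g_{\theta}(m)\ge g_{\hat\theta}(m)$ pointwise for $\theta>\hat\theta$, so your bound is strictly weaker than the paper's for every choice of $\eps,\delta$; with the paper's (approximately optimal) parameters $\eps=0.11,\delta=0.18$ the loss is on the order of $10^{-2}$, which is enough to jeopardize the constant $\edgeapprox$. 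So as written the quantitative claim is not established; replacing $\theta$ by $\hat\theta$ in the bucketing step repairs the argument and recovers the paper's proof. Two smaller remarks: your normalization to $\sum_{i\in N}x_{i,t}/p_t=1$ via concavity of $\min(1,\cdot)$ needs care because the rescaled coefficients $c_i/\beta$ may exceed $1$ (the paper instead proves directly, in \Cref{scalingsumto1}, that the normalized expression is monotone, which sidesteps this); and the final numerical verification, which you correctly defer, is done in the paper by a grid search certified via a Lipschitz bound (\Cref{claim:lipschitz}).
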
 

Before proving 
\Cref{lem:edgeweightedbound}, we show that it implies our main result.
\begin{proof}[Proof of \Cref{thm:edge-weighted}]
That the algorithm can be implemented in polynomial time is immediate, as \eqref{LP-PPSW} is a polynomially-sized linear program.
For the approximation ratio, let $\mathcal{M}$ denote the matching produced by \Cref{alg:proposals-edge-weighted}, and note
\begin{align*}
    \sum_t \E[w(\mathcal{M}(t)] &= \sum_t \int_0^{\infty} \Pr[\calM\ni (i,t): \; w_{i,t} \ge w] \, dw \\
    &\ge \sum_t p_t \cdot \int_0^{\infty} \E[\min(1,\hat{R}_{t,w})] \, dw && \text{\Cref{per-t-min-bound}} \\
    &\ge \sum_t p_t \cdot \int_0^{\infty} \edgeapprox \cdot \sum_{i: w_{i,t} \ge w} \frac{x_{i,t}}{p_t} \, dw  && \text{\Cref{lem:edgeweightedbound}} \\
    &= \edgeapprox \cdot \sum_t \sum_i x_{i,t} \cdot w_{i,t} \\
    & = \edgeapprox \cdot \text{OPT}(\textrm{\ref{LP-PPSW}}) && \textrm{Choice of $\vec{x}$}\\
    & \ge \edgeapprox \cdot OPT_{on}. && \textrm{\Cref{lem:LP}} \qedhere 
\end{align*}
\end{proof}

\begin{proof}[Proof of \Cref{lem:edgeweightedbound}] 

For ease of notation we fix $w$ and let $I := \{i : w_{i,t} \ge w\}$. Let $L_t := \{i\in I : y_{i,t} \le \theta \}$ and $H_t=I\setminus L_t$ denote the low- and high-degree neighbors of $t$. Note that any $i \in L_t$ has $$\hat{y}_{i,t} = \int_0^{y_{i,t}} f(z) \, dz = (1-\eps)y_{i,t} \le (1-\eps) \theta .$$ For ease of notation, let $x_L := \sum_{i\in L_t} \frac{x_{i,t}}{p_t}$ and $x_H:=\sum_{i\not\in L_t} \frac{x_{i,t}}{p_t}$. Define 
\begin{align*}\hat{x}_L &:= \sum_{i \in L_t} \frac{\hat{x}_{i,t}}{p_t} = \frac{1}{p_t} \cdot \sum_{i \in L_t} \int_{y_{i,t}}^{y_{i,t}+x_{i,t}} f(z) \, dz \ge (1-\eps) \cdot x_L, \\
\hat{x}_H & := \sum_{i \not\in L_t} \frac{\hat{x}_{i,t}}{p_t} = (1 + \delta) \cdot x_{H}.
\end{align*}

For $\hat{\theta} := \theta(1-\eps)$ we have that $\hat{y}_{i,t} \le \hat{\theta}$ if and only if $i \in L_t$. To lower bound $\E[\min(1,\hat{R}_t)]$, we will apply the fractional bucketing bound with threshold $\hat{\theta}$. For convenience of notation, we define the following function which naturally arises in this bound.

\begin{Def}
    For $\theta \in [0,1]$, define $g_{\theta}(x) := \left( 1 - (1-\theta) \cdot \Big\{ \frac{x}{1-\theta} \Big\} \right) \cdot \theta^{\lfloor \frac{x}{1-\theta} \rfloor}$.
\end{Def}

By some technical facts concerning the function $g(\cdot)$, namely Claims \ref{claim:gdecreasing} and \ref{scalingsumto1}, whose statements and proofs are included in \Cref{app:edgeweighted}, we obtain the following. 
\begin{align*}
    \E[\min(1,\hat{R}_t)] &\ge 1 - g_{\hat{\theta}} \left( \hat{x}_L \right)\cdot \prod_{i\in H_t} (1-\hat{x}_{i,t}/p_t) && \text{\Cref{lem:fracbucketingbound}} \\
    &\ge 1- g_{\hat{\theta}} \left( \hat{x}_L \right)\cdot \exp(-\hat{x}_H) &&  \hspace{-2em} 1-z\leq \exp(-z)\\
    &= 1- g_{\hat{\theta}} \left( \hat{x}_L \right)\cdot \exp(-(1+\delta) x_H)  \\
    &\ge 1- g_{\hat{\theta}} \left( (1-\eps) x_L \right)\cdot \exp(-(1+\delta) x_H) && 
    \text{ \Cref{claim:gdecreasing}} \\
    &\ge (x_L + x_H)\cdot \left( 1- g_{\hat{\theta}} \left( (1-\eps) \frac{x_L}{x_L+x_H}\right)\cdot \exp\left(-(1+\delta) \frac{x_H}{x_L+x_H}\right)\right) && \text{ \Cref{scalingsumto1}} \\
    &\ge \left( \sum_i \frac{x_{i,t}}{p_t} \right) \cdot k_{\eps, \delta} \left(\frac{x_L}{x_L+x_H}\right),
\end{align*}
for $k_{\eps,\delta}(z) :=  \Big( 1- g_{\hat{\theta}} \left( (1-\eps) \cdot z \right)\cdot \exp(-(1+\delta) (1-z)) \Big).$
With computer assistance we choose the parameters $\eps=0.11,
\delta=0.18$ that result in the (approximately) optimal lower bound 
\begin{align}\label{eqn:optimization-edge-weighted}
k_{\eps,\delta}(z) \ge \edgeapprox \qquad \forall z\in [0,1].
\end{align}
In particular, we evaluate the function in the RHS of \Cref{eqn:optimization-edge-weighted} at $10^4$ equally-spaced points $z\in [0,1]$ and rely on Lipschitzness of the RHS (see \Cref{claim:lipschitz}) to argue that the error obtained this way is at most a negligible $3 \cdot 10^{-4}$.\footnote{See code at \url{https://tinyurl.com/ydmndape}.} 
See also \Cref{fig:edge-weighted} for a pictorial validation of this bound. 
\end{proof}
\vspace{-0.3cm}

\begin{figure}[H]
\centering \includegraphics[width=0.8\textwidth]{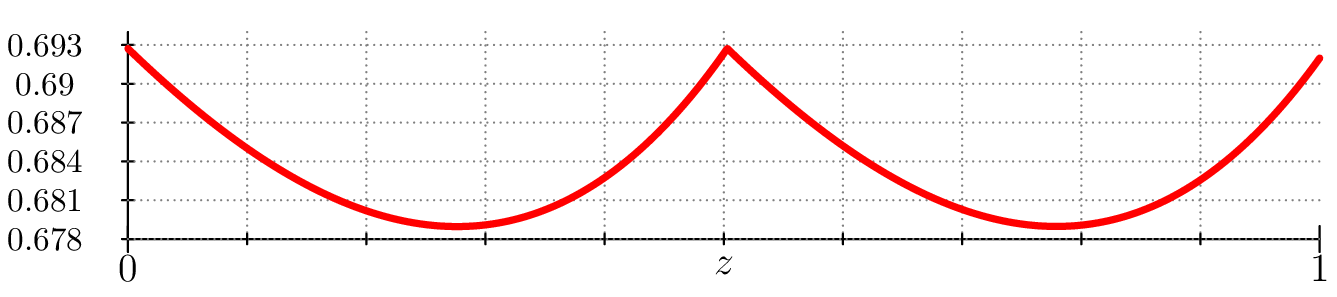}
\caption{A plot of  $k_{\eps,\delta}(z) \ge \edgeapprox$ as a function of $z\in [0,1]$.}
    \label{fig:edge-weighted}
\end{figure}

\vspace{-0.75cm}

\section{The Vertex-Weighted Algorithm} \label{sec:vertexweighted}

In this section we provide improved bounds for the vertex-weighted and unweighted problems. Here we avoid the use of scaling, and simply run \Cref{alg:proposals-core} on an optimal solution to \eqref{LP-PPSW}. 

\begin{algorithm}[H]
	\caption{Online Correlated Proposals Unscaled}
	\label{alg:proposals-vtx-weighted}
	\begin{algorithmic}[1]
\State Compute optimal solution $\vec{x}$ to \eqref{LP-PPSW}
\State Run \Cref{alg:proposals-core} using $\vec{x}$
\end{algorithmic}
\end{algorithm}

\paragraph{Reducing to unweighted matching.} The following lemma, which follows from the sorted order of the vector $\vec{v} = (r_{i,t})_{i\in F_t}$, allows us to avoid notational clutter and focus on unweighted graphs in our analysis, while still retaining the same guarantees for vertex-weighted matching.
\begin{lem}\label{lem:weighted-to-unweighted}
If \Cref{alg:proposals-vtx-weighted} produces a matching with expected size at least $\alpha \cdot \textup{OPT}\eqref{LP-PPSW}$ for any \emph{unweighted} graph, then it is $\alpha$-approximate for any \emph{vertex-weighted} graph. 
\end{lem}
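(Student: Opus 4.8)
\textbf{Proof plan for Lemma \ref{lem:weighted-to-unweighted}.}

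The plan is to reduce a vertex-weighted instance to a family of unweighted instances by a standard layer-cake decomposition over the weight levels, and to argue that \Cref{alg:proposals-vtx-weighted} behaves consistently across these layers because its only weight-dependent step—the sorting of $\vec v = (r_{i,t})_{i \in F_t}$ by decreasing $w_{i,t}$—is insensitive to monotone reparametrizations of the weights. Concretely, fix a vertex-weighted instance with offline weights $w_1,\dots,w_n\ge 0$. For a threshold $w\ge 0$, let the \emph{$w$-truncated unweighted instance} be the instance with the same arrival probabilities $p_t$ and neighborhoods, but where offline node $i$ is present only if $w_i\ge w$ (equivalently, we zero out all edges to nodes with $w_i<w$). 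The key point is that the \eqref{LP-PPSW} objective of the vertex-weighted instance equals $\int_0^\infty \mathrm{OPT}^{(w)}\eqref{LP-PPSW}\,dw$, where $\mathrm{OPT}^{(w)}$ is the optimum of the $w$-truncated unweighted LP; this follows because a feasible $\vec x$ for the full instance restricts to a feasible solution for each truncated instance (Constraint~\eqref{eqn:PPSWConstraint} and \eqref{eqn:OnlineMatchingConstraint} only get easier after deleting nodes), and conversely one can glue the layered optima back together since the constraints are ``downward'' in the node set; writing $\sum_i w_i\, x_{i,t} = \int_0^\infty \sum_{i: w_i\ge w} x_{i,t}\,dw$ makes the identity transparent. (If one prefers to avoid proving the LP identity in both directions, it suffices to use $\mathrm{OPT}\eqref{LP-PPSW} = \int_0^\infty \mathrm{OPT}^{(w)}\eqref{LP-PPSW}\,dw$ with ``$\le$'' being all that is needed for the conclusion, since we lower-bound the algorithm's value and upper-bound it by the LP.)

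Next I would track the algorithm's execution. Run \Cref{alg:proposals-vtx-weighted} on the vertex-weighted instance. I claim that for each threshold $w$, the indicator that $\calM$ matches online node $t$ to \emph{some} offline node $i$ with $w_i\ge w$ is distributed exactly as the corresponding match indicator of \Cref{alg:proposals-vtx-weighted} run on the $w$-truncated unweighted instance. The reason is a coupling: within a vertex-weighted instance, all edges incident to a given online node $t$ from nodes with $w_i\ge w$ sit in a contiguous prefix of the sorted vector $\vec v$ at time $t$ (since $\vec v$ is sorted by decreasing $w_{i,t}$, and $w_{i,t}\in\{0,w_i\}$), so by the prefix property \ref{level-set:prefix} of pivotal sampling (\Cref{lem:SR}), the event ``$I_t$ intersects the set $\{i: w_i\ge w\}$'' has probability $\min(1,\sum_{i\in F_t: w_i\ge w} r_{i,t})$, which is precisely the probability that the truncated instance's run matches $t$. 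To make this a genuine coupling across \emph{all} times, one uses that the pivotal-sampling subroutine on a prefix-sorted vector can be realized so that its output on the sub-vector indexed by $\{i: w_i\ge w\}$ is a deterministic function (indeed, intersection with that prefix) of its output on the full vector, and that the free-set dynamics of the truncated run is the restriction of the free-set dynamics of the full run to $\{i: w_i\ge w\}$ (matches and discards of nodes with $w_i<w$ are irrelevant to whether a node with $w_i\ge w$ is free). Then by \Cref{per-t-min-bound} applied at weight level $w$, $\Pr[\calM\ni(i,t): w_i\ge w] = p_t\cdot \E[\min(1,R_{t,w})]$ in the vertex-weighted run equals the same quantity for the truncated unweighted run.

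Finally I would assemble the layer-cake inequality. By the hypothesis, for every unweighted graph the algorithm's expected matching size is at least $\alpha\cdot\mathrm{OPT}\eqref{LP-PPSW}$; applying this to each $w$-truncated instance and integrating,
\[
\E[w(\calM)] \;=\; \int_0^\infty \E\big[\,|\{(i,t)\in\calM : w_i\ge w\}|\,\big]\,dw \;\ge\; \int_0^\infty \alpha\cdot \mathrm{OPT}^{(w)}\eqref{LP-PPSW}\,dw \;=\; \alpha\cdot \mathrm{OPT}\eqref{LP-PPSW},
\]
where the first equality is the layer-cake formula $w(\calM)=\sum_{(i,t)\in\calM} w_i = \int_0^\infty |\{(i,t)\in\calM: w_i\ge w\}|\,dw$, the inequality is the hypothesis applied per layer using the coupling of the previous paragraph, and the last equality is the LP decomposition from the first paragraph. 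Since $\mathrm{OPT}\eqref{LP-PPSW}\ge \opton$ by \Cref{lem:LP}, this gives the claimed $\alpha$-approximation.

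I expect the main obstacle to be making the cross-layer coupling fully rigorous: one must pin down an implementation of pivotal sampling whose restriction to a sorted prefix is itself a faithful pivotal-sampling run (this is where the ``sorted order of $\vec v$'' hypothesis in the lemma statement is essential), and check that discarding of low-weight nodes genuinely does not affect the free statuses of high-weight nodes. The LP decomposition and the layer-cake integral are routine; everything hinges on the observation that the algorithm never looks at numerical weight values, only at their induced order, so running it on the vertex-weighted instance simulates all $w$-truncated unweighted runs simultaneously.
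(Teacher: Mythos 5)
Your proposal is correct and takes essentially the same route as the paper: a layer-cake decomposition over weight thresholds, the observation that the nodes with $w_i\ge w$ occupy a prefix of the sorted vector so that \Cref{per-t-min-bound} and the unweighted analysis apply per layer within the single vertex-weighted run, and integration of the per-layer guarantee against $\sum_{i,t} x_{i,t}\cdot\mathbbm{1}[w_i\ge w]$. The only caveat (present in the paper's proof as well) is that the unweighted hypothesis must be invoked for the \emph{restriction} of $G$'s LP solution to the high-weight nodes rather than for the truncated instance's own LP optimum --- i.e., in the form $\E[|\calM|]\ge\alpha\cdot\sum_{i,t}x_{i,t}$ for any feasible input $\vec x$, which is what the proof of \Cref{thm:vertex-weighted} actually establishes --- so your detour through $\mathrm{OPT}^{(w)}$ and the exact coupling to a re-optimized run on $G_w$ can be dropped in favor of integrating the restricted LP values directly.
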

\begin{proof}

Consider an arbitrary vertex-weighted input $G$ and threshold $w \ge 0$. 
Consider the input $G_w$ consisting of the subgraph of $G$ induced by the online nodes and only the offline nodes with weights at least $w$; with all vertex-weights set to $w$. 
By \Cref{per-t-min-bound} and the current lemma's hypothesis, 
$$\frac{\sum_t p_t \cdot \E[\min(1,R_{t,w})] \cdot w}{\sum_t \sum_i x_{i,t} \cdot \mathbbm{1}[w_i \ge w] \cdot w} \ge \alpha.$$ 
Hence the approximation ratio of \Cref{alg:proposals-core} on a vertex-weighted input is at least
\begin{align*}
\frac{\int_{w=0}^\infty \sum_t p_t \cdot \E[\min(1,R_{t,w})] \, dw}{OPT_{on}} &\geq   \frac{\int_{w=0}^\infty \sum_t p_t \cdot \E[\min(1,R_{t,w})] \, dw}{\sum_t \sum_i x_{i,t} \cdot w_i } & \textrm{\Cref{lem:LP}} \\
&\ge \frac{\int_{w=0}^\infty \alpha \cdot \left( \sum_t \sum_i x_{i,t} \cdot \mathbbm{1}[w_i \ge w] \right) \, dw}{\sum_t \sum_i x_{i,t} \cdot w_i } \\
&= \alpha \cdot \frac{\sum_t \sum_i x_{i,t} \cdot \int_{w=0}^\infty \mathbbm{1}[w_i \ge w] \, dw}{\sum_t \sum_i x_{i,t} \cdot w_i}  = \alpha.  && \qedhere
\end{align*}
\end{proof}

In light of \Cref{lem:weighted-to-unweighted}, we henceforth focus our attention on $R_t:= R_{t,0} = \sum_i r_{i,t}\cdot F_{i,t}$. 

\subsection{Reducing global to local bounds via convexity}
The bounds we obtain on $\E[\min(1,R_t)]$ are convex in various parameters associated with online nodes $t$. The following lemma allows us to leverage such convex lower bounds to obtain a lower bound on $\sum_t p_t \cdot \E[\min(1,R_t)]$, the expected size of the matching produced by \Cref{alg:proposals-vtx-weighted}.

\begin{lem}\label{lem:convexavgacrosst}
Let $f$ be a convex function and $(\boldsymbol{\gamma}_t)_t$ be a set of vectors indexed by time.
If for every time $t$ we have that
$\E[\min(1,R_t)]\geq f(\boldsymbol{\gamma}_t)\cdot \E[R_t]$, 
then \Cref{alg:proposals-vtx-weighted} is $f(\gamma)$-approximate, where $$\boldsymbol{\gamma} := \frac{\sum_t p_t \cdot \E[R_t] \cdot \boldsymbol{\gamma}_t}{\sum_t p_t \cdot \E[R_t]} = \frac{\sum_t (\sum_i x_{i,t}) \cdot \boldsymbol{\gamma}_t}{\sum_t \sum_i x_{i,t}}.$$
\end{lem}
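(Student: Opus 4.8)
The plan is a direct computation mirroring the proof of \Cref{lem:weighted-to-unweighted}, now exploiting convexity via Jensen's inequality. Write $M := \sum_t p_t \cdot \E[\min(1,R_t)]$ for the expected size of the matching produced by \Cref{alg:proposals-vtx-weighted}; by \Cref{per-t-min-bound} (with $w=0$) this is exactly $\sum_t p_t \cdot \Pr[t \text{ matched}]$, i.e.\ the true expected matching size. We want to show $M \ge f(\boldsymbol\gamma) \cdot \mathrm{OPT}(\textrm{\ref{LP-PPSW}})$, which by \Cref{lem:LP} gives $M \ge f(\boldsymbol\gamma)\cdot OPT_{on}$, the claimed approximation.

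First I would use the per-time hypothesis to bound each term: $p_t\cdot \E[\min(1,R_t)] \ge p_t\cdot f(\boldsymbol\gamma_t)\cdot \E[R_t]$. Summing over $t$ gives $M \ge \sum_t p_t\cdot \E[R_t]\cdot f(\boldsymbol\gamma_t)$. The key observation is that $\{p_t\cdot \E[R_t]\}_t$, once normalized by their sum $Z := \sum_t p_t\cdot \E[R_t]$, form a probability distribution over times, so $\sum_t \frac{p_t\E[R_t]}{Z} f(\boldsymbol\gamma_t) \ge f\!\left(\sum_t \frac{p_t\E[R_t]}{Z}\boldsymbol\gamma_t\right) = f(\boldsymbol\gamma)$ by Jensen's inequality (convexity of $f$), where $\boldsymbol\gamma$ is precisely the weighted average in the lemma statement. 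Hence $M \ge Z\cdot f(\boldsymbol\gamma)$.

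It remains to identify $Z$ with $\mathrm{OPT}(\textrm{\ref{LP-PPSW}})$ in the unweighted setting. By \Cref{obs:matchmarginals}, $\E[F_{i,t}] = 1 - y_{i,t}$, and since $r_{i,t} = x_{i,t}/(p_t(1-y_{i,t}))$ we get $\E[R_t] = \sum_i r_{i,t}\cdot \E[F_{i,t}] = \sum_i x_{i,t}/p_t$, so $p_t\cdot \E[R_t] = \sum_i x_{i,t}$ and $Z = \sum_t\sum_i x_{i,t}$, which is exactly the objective value $\mathrm{OPT}(\textrm{\ref{LP-PPSW}})$ of the optimal LP solution $\vec x$ fed to the algorithm (in the unweighted case all $w_{i,t}\in\{0,1\}$, and one may assume the LP support only uses weight-$1$ edges). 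This also confirms the second expression for $\boldsymbol\gamma$ in the statement. Combining, $M \ge f(\boldsymbol\gamma)\cdot \mathrm{OPT}(\textrm{\ref{LP-PPSW}}) \ge f(\boldsymbol\gamma)\cdot OPT_{on}$ by \Cref{lem:LP}, as desired; by \Cref{lem:weighted-to-unweighted} this $f(\boldsymbol\gamma)$ bound then transfers to the vertex-weighted problem.

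I do not expect a genuine obstacle here — the lemma is essentially "apply Jensen to the per-time bounds and recognize the normalizing constant as the LP value." The only point requiring mild care is the bookkeeping that $p_t\cdot\E[R_t] = \sum_i x_{i,t}$ (which needs \Cref{obs:matchmarginals} and the definition of $r_{i,t}$), and noting that convexity is used with the correct weights so that the Jensen step is legitimate; if $Z = 0$ the statement is vacuous.
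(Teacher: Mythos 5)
Your proposal is correct and follows essentially the same route as the paper's proof: apply the per-time hypothesis, use Jensen's inequality with weights $p_t\cdot\E[R_t]/\sum_t p_t\cdot\E[R_t]$, and identify $p_t\cdot\E[R_t]=\sum_i x_{i,t}$ via \Cref{cor:Fit} so that the normalizer equals the LP value, which dominates $OPT_{on}$ by \Cref{lem:LP}. The only differences are cosmetic (order of steps and your explicit remarks on the unweighted reduction and the degenerate case $Z=0$).
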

\begin{proof}
Indeed, the approximation ratio of \Cref{alg:proposals-vtx-weighted} is lower bounded by
\begin{align*}
\frac{\E[|\calM|]}{OPT_{on}} & \geq
\frac{\sum_t p_t \cdot \E[\min(1,R_t)]}{\sum_{i,t} x_{i,t}} && \textrm{\Cref{per-t-min-bound} + \Cref{lem:LP}}
\\
&= \frac{\sum_t p_t \cdot \E[\min(1,R_t)]}{\sum_t p_t \cdot \E[R_t]} && \text{\Cref{cor:Fit}}\\
&\ge \sum_t \left( \frac{p_t \cdot \E[R_t]}{\sum_t p_t \cdot \E[R_t]} \right)  \cdot f(\boldsymbol{\gamma}_t) && \text{Lemma's Hypothesis}\\
&\ge f \left(\sum_t  \frac{p_t \cdot \E[R_t]}{\sum_t p_t \cdot \E[R_t]}  \cdot \boldsymbol{\gamma}_t \right) && \text{Jensen's inequality} \\
&= f(\boldsymbol{\gamma}). && \qedhere
\end{align*} 
\end{proof}

When applying \Cref{lem:convexavgacrosst}, we will naturally produce convex functions of certain ``weighted averages'' of our LP solution $\vec x$.

When applying the fractional bucketing bound, we will naturally categorize each edge $(i,t)$ as ``low-degree'' or ``high-degree'' based on how $y_{i,t}$ compares to $\theta$. 
\begin{Def}
    For $\theta\in [0,1]$, 
    we define the following weighted averages of $r_{i,t}\cdot y_{i,t}$, of $r_{i,t}\cdot (1-y_{i,t})$, and of $r_{i,t}$, all weighted~by~$x_{i,t}$. Some are further split for low- and high-degree edges for convenience in our future analysis:
    \begin{align*}
    \alpha & :=\frac{\sum_{i,t} r_{i,t}\cdot y_{i,t} \cdot x_{i,t}}{\sum_{i,t} x_{i,t}},\\
    \beta^{\le \theta} & := \frac{\sum_{i,t} r_{i,t } \cdot (1-y_{i,t}) \cdot x_{i,t} \cdot \mathbbm{1}[y_{i,t} 
 \le \theta]}{\sum_{i,t} x_{i,t}}, \\
 \beta^{> \theta}   & := \frac{\sum_{i,t} r_{i,t} \cdot (1-y_{i,t}) \cdot x_{i,t} \cdot \mathbbm{1}[y_{i,t} >\theta]}{\sum_{i,t} x_{i,t}}, \\
S^{\le \theta}  & := \frac{\sum_{i,t} r_{i,t } \cdot x_{i,t} \cdot \mathbbm{1}[y_{i,t} 
 \le \theta]}{\sum_{i,t} x_{i,t}}, \\
 S^{> \theta} &:= \frac{\sum_{i,t} r_{i,t}  \cdot x_{i,t} \cdot \mathbbm{1}[y_{i,t} >\theta]}{\sum_{i,t} x_{i,t}}.
\end{align*}
\end{Def}

Our bounds on the approximation ratio are decreasing and increasing in $\alpha$ and some convex function of $\beta^{>\theta},\beta^{\leq \theta}$, respectively. This motivates the following two lemmas.

 \begin{restatable}{lemma}{betaHbound}\label{lem:betaHbound}
     $\beta^{> \theta} \ge \frac{(S^{> \theta})^2}{2}$ for $\theta\in [0,1]$.
 \end{restatable}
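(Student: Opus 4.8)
The goal is the inequality $\beta^{>\theta} \ge \frac{(S^{>\theta})^2}{2}$, relating a weighted average of $r_{i,t}(1-y_{i,t})$ to the square of a weighted average of $r_{i,t}$, both restricted to high-degree edges $y_{i,t} > \theta$ and normalized by $\sum_{i,t} x_{i,t}$. The plan is to fix an offline node $i$ and analyze the contributions of its edges $(i,t)$ with $y_{i,t} > \theta$ in isolation, then recombine. Recall $x_{i,t} = p_t \cdot r_{i,t} \cdot (1 - y_{i,t})$, so the numerator of $\beta^{>\theta}$ has terms $r_{i,t}(1-y_{i,t}) x_{i,t} = p_t r_{i,t}^2 (1-y_{i,t})^2$, while the numerator of $S^{>\theta}$ has terms $r_{i,t} x_{i,t} = p_t r_{i,t}^2 (1-y_{i,t})$. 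Alternatively, and more cleanly, I would change variables: think of $x_{i,t}$ as an infinitesimal increment $dy$ along the ``degree axis'' for node $i$, so that $\sum_t x_{i,t} \cdot (\cdots) \approx \int (\cdots)\, dy$. Under this continuous relaxation, $\sum_t r_{i,t} x_{i,t} \mathbbm{1}[y_{i,t} > \theta]$ becomes $\int_\theta^{Y_i} \frac{1}{1 - y}\, dy$ (using $r_{i,t} = x_{i,t}/(p_t(1-y_{i,t}))$ and $x_{i,t} \approx dy$), and $\sum_t r_{i,t}(1-y_{i,t}) x_{i,t}\mathbbm{1}[y_{i,t}>\theta]$ becomes $\int_\theta^{Y_i} 1 \, dy = Y_i - \theta$, where $Y_i$ is the total fractional degree of $i$.

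The crux is then the per-node inequality: for each $i$, letting $a_i := \int_\theta^{Y_i} \frac{dy}{1-y} = \ln\frac{1-\theta}{1-Y_i}$ (the contribution to $S^{>\theta}$'s numerator) and $b_i := Y_i - \theta$ when $Y_i > \theta$, and $a_i = b_i = 0$ otherwise, one wants $b_i \ge \frac{1}{2} a_i^2$ after summing and dividing by $\sum_{i,t} x_{i,t}$. But summing $b_i$ and summing $a_i$ and then squaring requires care: $\sum_i b_i \ge \frac12 (\sum_i a_i)^2 / (\text{normalizer})$ does not follow termwise. The right move is to first establish the \emph{termwise/pointwise} bound at the level of the integrand, exploiting that the measure $dy$ is a probability-like measure after normalization. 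Concretely, after normalizing by $N := \sum_{i,t} x_{i,t} = \sum_i Y_i$, both $S^{>\theta}$ and $\beta^{>\theta}$ are expectations under the probability measure that picks edge $(i,t)$ with probability $x_{i,t}/N$. Then $S^{>\theta} = \E[r \mathbbm{1}[y > \theta]]$ and $\beta^{>\theta} = \E[r(1-y)\mathbbm{1}[y>\theta]]$, and I would apply the Cauchy–Schwarz / power-mean inequality: $S^{>\theta} = \E[r(1-y)^{1/2} \cdot (1-y)^{-1/2} \mathbbm{1}[y>\theta]] \le \sqrt{\E[r^2(1-y)\mathbbm{1}[y>\theta]]}\cdot\sqrt{\E[(1-y)^{-1}\mathbbm{1}[y>\theta]]}$ — but this brings in $r^2$, not $r$, so that particular split is wrong.

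Instead, the cleanest path: for a fixed $i$, the continuous substitution gives $S_i := \int_{\max(\theta,?)}^{Y_i}\frac{dy}{1-y}$ divided by... Let me restate. Define for node $i$ with $Y_i > \theta$: $s_i := \ln\frac{1-\theta}{1-Y_i}$ and $\beta_i := Y_i - \theta$. The elementary inequality I need is $\beta_i \ge \frac{s_i^2}{2}\cdot\frac{1}{?}$ — actually one checks directly that $Y_i - \theta \ge \frac12\left(\ln\frac{1-\theta}{1-Y_i}\right)^2$ is \emph{false} for $Y_i$ close to $1$. So the termwise approach must instead keep $\frac{1}{1-\theta}$ factors, or the intended inequality uses the ``discarding'' structure differently. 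The honest statement of the main obstacle: I expect the real proof fixes $i$, writes the contribution of $i$'s high-degree edges to the numerator of $S^{>\theta}$ as $\sigma_i := \sum_{t: y_{i,t}>\theta} r_{i,t} x_{i,t}$ and to $\beta^{>\theta}$ as $\beta_i := \sum_{t: y_{i,t} > \theta} r_{i,t}(1-y_{i,t}) x_{i,t}$, proves the per-node bound $\beta_i \ge \frac{\sigma_i^2}{2}$ (this is where the structure $x_{i,t} = p_t r_{i,t}(1-y_{i,t})$ and $y_{i,t}$ being a prefix sum, hence the increments telescoping, is used — likely via an Abel-summation / integral comparison argument and the convexity of $z \mapsto z^2$), and then sums over $i$ and uses $\sum_i \sigma_i = S^{>\theta}\cdot N$, $\sum_i \beta_i = \beta^{>\theta}\cdot N$ together with the Cauchy–Schwarz-type inequality $\sum_i \frac{\sigma_i^2}{2} \ge \frac{(\sum_i \sigma_i)^2}{2 \sum_i 1}$ — no, that has the wrong normalization too. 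The per-node bound must actually already be normalized, i.e. $\beta_i/N \ge \frac12(\sigma_i/N)^2 \cdot$(something), so that after summing, convexity of squaring under the probability weights $Y_i/N$ closes it. I would spend my effort nailing down exactly this per-node inequality and the correct normalization; that is the main obstacle, and everything else (the global-to-termwise reduction and the final Jensen step) is routine once it is in hand.
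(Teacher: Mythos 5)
There is a genuine gap here: you have correctly identified the architecture of the argument (reduce to a per-offline-node inequality, then recombine by convexity), but you never establish the per-node inequality, and your own continuous reformulation of it is wrong. Concretely, in your change of variables you replace $\sum_t r_{i,t}x_{i,t}\mathbbm{1}[y_{i,t}>\theta]$ by $\int_\theta^{Y_i}\frac{dy}{1-y}$, which implicitly sets the proposal probability to $r_{i,t}=\frac{x_{i,t}}{p_t(1-y_{i,t})}$ with $x_{i,t}\approx dy$ treated as order one; the right picture is to treat $r$ as a free $[0,1]$-valued density against the measure $dy$, so the adversary minimizing $\beta_i=\int r(y)(1-y)\,dy$ subject to $\int r(y)\,dy=\sigma_i$ and $r\le 1$ concentrates $r=1$ on the last $\sigma_i$ units of mass near $y=1$, giving the triangle area $\sigma_i^2/2$ — not a logarithm. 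This is why the inequality you derived is false near $Y_i=1$ and why you (correctly) abandon it, but you then leave both the per-node bound and the normalization as open questions rather than resolving them.

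The paper's proof fills exactly these holes. First, it works with the complementary quantity $\alpha^{>\theta}_i$ (the weighted average of $r_{i,t}y_{i,t}$) and the \emph{per-node} normalizations $\alpha_i^{>\theta},S_i^{>\theta}$ by $\sum_t x_{i,t}$, proving $\alpha_i^{>\theta}\le f(S_i^{>\theta})$ for the \emph{concave} $f(z)=z-z^2/2$; Jensen with weights $\frac{\sum_t x_{i,t}}{\sum_{i,t}x_{i,t}}$ then gives the global bound — this is the ``something'' in your normalization that you could not pin down (equivalently, the per-node bound is $\beta_i\ge \sigma_i^2/(2\sum_t x_{i,t})$, after which Cauchy--Schwarz closes it). Second, the per-node inequality itself is proven discretely, not by a continuous heuristic: one uses the degree bound $\sum_t x_{i,t}\le 1$ (\Cref{lem:degree-implicit}) to pass from $\sum_t x_{i,t}$ to $(\sum_t x_{i,t})^2$ in the denominator, writes $y_{i,t}=\sum_{t'}x_{i,t'}-\sum_{t'\ge t}x_{i,t'}$ (the Abel-summation step you anticipated), and uses $r_{i,t}\in[0,1]$ to absorb the cross terms, yielding $\sum_{t\in H}x_{i,t}r_{i,t}y_{i,t}\le\bigl(\sum_{t\in H}x_{i,t}r_{i,t}\bigr)\bigl(\sum_t x_{i,t}\bigr)-\frac12\bigl(\sum_{t\in H}x_{i,t}r_{i,t}\bigr)^2$. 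Since you explicitly flag the per-node inequality and its normalization as the unresolved ``main obstacle,'' the proposal is an accurate proof \emph{plan} but not a proof.
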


  \begin{restatable}{lemma}{betaLbound}\label{lem:betaLbound}
    $\beta^{\le \theta} \ge S^{\le \theta} \cdot \left( 1-\theta + \frac{S^{\le \theta}}{2} \right)$ for $\theta\in [0,1]$.
 \end{restatable}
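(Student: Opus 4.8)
\textbf{Proof proposal for \Cref{lem:betaLbound}.}

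The plan is to reduce the inequality to a pointwise (per-edge) statement that is amenable to a Jensen/convexity argument, exactly paralleling the structure suggested by \Cref{lem:convexavgacrosst}. First I would rewrite both sides in terms of the edge-level quantities. Define, for each edge $(i,t)$ with $y_{i,t} \le \theta$, the ``mass'' $m_{i,t} := x_{i,t} / \sum_{i',t'} x_{i',t'}$, so that $S^{\le\theta} = \sum_{(i,t): y_{i,t}\le\theta} m_{i,t}\cdot r_{i,t}$ and $\beta^{\le\theta} = \sum_{(i,t): y_{i,t}\le\theta} m_{i,t}\cdot r_{i,t}\cdot(1-y_{i,t})$. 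The key structural fact to extract is the relation between $r_{i,t}$, $x_{i,t}$ and $y_{i,t}$: since $r_{i,t} = x_{i,t}/(p_t(1-y_{i,t}))$ and $x_{i,t} = y_{i,t+1} - y_{i,t}$ in the unweighted reduction, we have (after absorbing $p_t \le 1$, which only helps) a handle on how $r_{i,t}\cdot y_{i,t}$ telescopes across the edges incident to a fixed offline node $i$ in arrival order.

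The main idea is that, for a \emph{fixed} offline node $i$, summing $r_{i,t}\cdot y_{i,t}\cdot p_t = x_{i,t}\cdot y_{i,t}/(1-y_{i,t})$ over the low-degree edges of $i$ (those with $y_{i,t}\le\theta$) can be compared to an integral: writing $y_{i,t}$ as a partition of the interval $[0,\theta']$ where $\theta'\le\theta$ is the total low-degree $x$-mass at $i$, the sum $\sum_t x_{i,t}\cdot \frac{y_{i,t}}{1-y_{i,t}}$ is at most (or, with the $1-y$ in the denominator working in our favor, we instead bound $\sum_t x_{i,t}(1 - y_{i,t}) = \sum_t x_{i,t} - \sum_t x_{i,t} y_{i,t}$). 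Concretely I would establish the per-node inequality
\[
\sum_{t: y_{i,t}\le\theta} x_{i,t}\cdot y_{i,t} \le \frac{1}{2}\Big(\sum_{t: y_{i,t}\le\theta} x_{i,t}\Big)^2 + \theta\cdot\Big(\text{correction}\Big),
\]
coming from $\sum_t (y_{i,t+1}-y_{i,t})\cdot y_{i,t} \le \int_0^{\sigma_i} y\,dy = \sigma_i^2/2$ where $\sigma_i$ is the low-degree mass at $i$, using that the $y_{i,t}$ are increasing and that $x_{i,t} = y_{i,t+1}-y_{i,t}$. Combined with $r_{i,t} \ge x_{i,t}$ (since $p_t(1-y_{i,t})\le 1$) and $1-y_{i,t} \ge 1-\theta$ on low-degree edges, this gives $\beta^{\le\theta} \ge (1-\theta)\cdot S^{\le\theta}$ plus an extra $(S^{\le\theta})^2/2$ term once we account for the variation of $y_{i,t}$ within its range; summing over $i$ and applying Jensen's inequality (convexity of $z\mapsto z^2$) to pass from the per-node masses $\sigma_i$ to the aggregate $S^{\le\theta}$ yields the claimed bound.

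The step I expect to be the main obstacle is getting the quadratic term $(S^{\le\theta})^2/2$ with the \emph{correct constant} rather than something weaker: the naive bounds ($r_{i,t}\ge x_{i,t}$, $1-y_{i,t}\ge 1-\theta$) only produce the linear term $(1-\theta)S^{\le\theta}$, and squeezing out the additional $\frac{1}{2}(S^{\le\theta})^2$ requires carefully exploiting that $r_{i,t}\cdot(1-y_{i,t}) = x_{i,t}/p_t \ge x_{i,t}$ while simultaneously tracking $\sum_t x_{i,t} y_{i,t}$ via the integral comparison $\sum_t x_{i,t} y_{i,t} \approx \sigma_i^2/2$ at each node, and then noting that this is precisely a convex (quadratic) function of the node masses $\sigma_i$ so that Jensen's inequality goes the right way (here one uses $\sum_i \sigma_i = S^{\le\theta}$ after renormalization — care is needed since $r$ and $x$ differ, so I would carry the $p_t$ factors explicitly and discard them only at the end where they help). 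Once the per-node computation and the convexity step are lined up correctly, the conclusion $\beta^{\le\theta}\ge S^{\le\theta}(1-\theta + S^{\le\theta}/2)$ follows; the analogous but simpler argument without the $(1-\theta)$ shift gives \Cref{lem:betaHbound}.
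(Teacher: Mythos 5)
Your high-level strategy is the same as the paper's: reduce to a per-offline-node inequality, and aggregate over nodes $i$ with weights $\sum_t x_{i,t}/\sum_{i,t}x_{i,t}$ using Jensen (the paper works with $\alpha^{\le\theta}_i = S^{\le\theta}_i - \beta^{\le\theta}_i$ and the concave $z\mapsto z\theta - z^2/2$, which is equivalent to your convex formulation for $\beta$). That part of your plan is sound, as is the normalization issue you gesture at: one does need the fractional degree constraint $\sum_t x_{i,t}\le 1$ (\Cref{lem:degree-implicit}) to divide by $(\sum_t x_{i,t})^2$ so that $S_i^{\le\theta}$ appears with the right homogeneity.

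However, the per-node quadratic bound --- which you yourself flag as ``the main obstacle'' --- is not established, and the integral comparison you propose does not give it. Your bound $\sum_{t\in L}x_{i,t}y_{i,t}\le \sigma_i^2/2$ (a lower Riemann sum of $\int_0^{\sigma_i}y\,dy$) is correct but is stated for the \emph{raw} low-degree mass $\sigma_i=\sum_{t\in L}x_{i,t}$, whereas the lemma requires a bound on $\sum_{t\in L}x_{i,t}\,r_{i,t}\,y_{i,t}$ in terms of the \emph{$r$-weighted} mass $\sum_{t\in L}x_{i,t}r_{i,t}$, and these can differ arbitrarily (e.g.\ $r_{i,t}=0$ on most of $i$'s mass). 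Bounding $r_{i,t}\le 1$ throughout collapses your estimate to $\sigma_i^2/2$, which is useless when $S_i^{\le\theta}\ll\sigma_i$; and the inequality $r_{i,t}\ge x_{i,t}$ you invoke points the wrong way for an upper bound on $\alpha_i^{\le\theta}$. The tight configuration (the paper's \Cref{fig:betabound}) has $r_{i,t}=1$ precisely on the last $S_i^{\le\theta}$-fraction of the interval $[\,\theta-S_i^{\le\theta},\theta\,]$, yielding $\int_{\theta-S_i^{\le\theta}}^{\theta}y\,dy=\theta S_i^{\le\theta}-(S_i^{\le\theta})^2/2$; to prove this is worst-case one must carry the $r$-weights through the summation by parts. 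The paper does this by expanding $y_{i,t}=\sum_{t'}x_{i,t'}-\sum_{t'\ge t}x_{i,t'}$, using $\sum_{t'\notin L}x_{i,t'}\le 1-\theta$ to extract the $\theta$ factor, and then replacing $x_{i,t'}$ by $x_{i,t'}r_{i,t'}$ in the cross terms (valid since $r_{i,t'}\le1$) so that they complete to $\frac12\bigl(\sum_{t\in L}x_{i,t}r_{i,t}\bigr)^2$. Without some version of this interleaving, your argument only delivers the linear term $(1-\theta)S^{\le\theta}$, as you acknowledge, so the proof as proposed is incomplete.
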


\begin{wrapfigure}[15]{R}{0.4\textwidth} 
\caption{Tight case of Lemmas \ref{lem:betaHbound}, \ref{lem:betaLbound}}
 \includegraphics[width=0.4\textwidth]{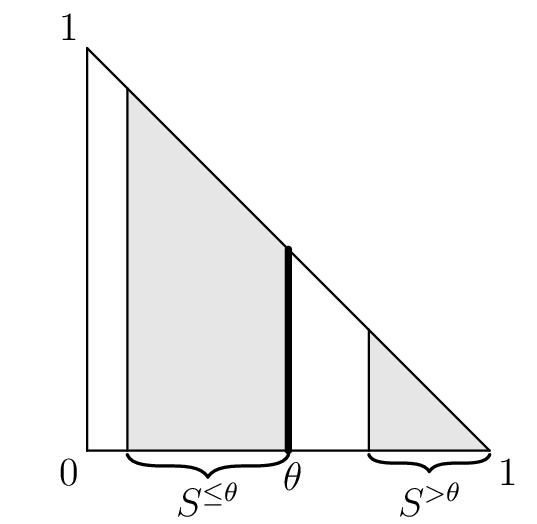}
\label{fig:betabound}
\end{wrapfigure}

 Both lemmas are proven in \Cref{app:vertexweighted}. Here, we provide a brief proof sketch. 
 Note first that the functions $\frac{x^2}{2}$ and $x \left( 1 - \theta + \frac{x}{2}\right)$ are convex; so it suffices to prove the lemma when the graph is restricted to a single offline node $i$. The worst case occurs when all $x_{i,t}$'s are infinitesimal and $\sum_t x_{i,t} = 1$ (we can ``split" a large $x_{i,t}$ into smaller $x_{i,t'}$'s without changing the value of $S^{>\theta}$ but increasing the value of $\beta^{>\theta}$). Additionally observe that the worst case for \Cref{lem:betaHbound} is when $r_{i,t}$ is 1 for the final $S^{> \theta}$ fraction of $i$'s mass, and 0 otherwise. In this case, the value of $\beta^{> \theta}$ is given by the area of the shaded triangle in \Cref{fig:betabound}. Similarly, the worst case for \Cref{lem:betaLbound} occurs when $r_{i,t}$ is 1 for the $S^{\le \theta}$ fraction of $i$'s mass from $\theta - S^{\le \theta}$ to $\theta$; in this case, the value of $\beta^{\le \theta}$ is given by the area of the shaded trapezoid in \Cref{fig:betabound}.

\subsection{Variance-based bounding }

The following lemma leverages the variance-based bound \Cref{lem:Emin1X-variance} to lower bound our approximation ratio.

\begin{lemma}\label{lem:variance-bound}
    The approximation ratio of \Cref{alg:proposals-vtx-weighted} is at least $f_{\textup{\textsf{var}}}(\alpha)$, where $f_{\textup{\textsf{var}}}(z) := 1 - \frac{1}{2}\sqrt{z}$. 
    Moreover, $\E[|\calM|]/\sum_{i,t} x_{i,t} \geq f_{\textup{\textsf{var}}}(\alpha)$.
\end{lemma}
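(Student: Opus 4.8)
The goal is to show $\E[|\calM|]/\sum_{i,t} x_{i,t} \ge f_{\textsf{var}}(\alpha) = 1 - \tfrac12\sqrt{\alpha}$, which in turn gives the approximation ratio bound via \Cref{lem:LP}. The plan is to apply \Cref{lem:convexavgacrosst} with the one-dimensional parameter $\boldsymbol{\gamma}_t$ being a suitable scalar so that the claimed per-$t$ inequality $\E[\min(1,R_t)] \ge f_{\textsf{var}}(\gamma_t)\cdot \E[R_t]$ holds, and then check that the weighted average of the $\gamma_t$'s equals $\alpha$. Recall $R_t = \sum_i r_{i,t}\cdot F_{i,t}$, with $\{F_{i,t}\}_i$ NCD (\Cref{lem:NCD}) and $\Pr[F_{i,t}]=1-y_{i,t}$ (\Cref{cor:Fit}). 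First I would record that $\E[R_t] = \sum_i r_{i,t}(1-y_{i,t}) = \sum_i x_{i,t}/p_t \le 1$ by Constraint \eqref{eqn:OnlineMatchingConstraint}, so the hypothesis $\E[X]\le 1$ of \Cref{lem:Emin1X-variance} is met.

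The key computational step is to bound $\Var(R_t)$. Since the $F_{i,t}$ are NCD, their pairwise covariances are $\le 0$, so $\Var(R_t) \le \sum_i r_{i,t}^2\,\Var(F_{i,t}) = \sum_i r_{i,t}^2 \, y_{i,t}(1-y_{i,t})$. Using $r_{i,t}\le 1$ and $y_{i,t}\le 1$, this is at most $\sum_i r_{i,t}(1-y_{i,t})\, y_{i,t} = \sum_i \tfrac{x_{i,t}}{p_t}\cdot y_{i,t}$. Now I would set $\gamma_t := \dfrac{\sum_i (x_{i,t}/p_t)\cdot y_{i,t}}{\sum_i x_{i,t}/p_t} = \dfrac{\sum_i r_{i,t}(1-y_{i,t})y_{i,t}}{\E[R_t]}$ (when $\E[R_t]=0$ the bound is trivial), so that $\Var(R_t) \le \gamma_t\cdot \E[R_t]$. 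Then \Cref{lem:Emin1X-variance} gives
\begin{align*}
\E[\min(1,R_t)] \ge \E[R_t] - \tfrac12\sqrt{\Var(R_t)\cdot \E[R_t]} \ge \E[R_t] - \tfrac12\sqrt{\gamma_t\,\E[R_t]\cdot \E[R_t]} = \E[R_t]\cdot\left(1 - \tfrac12\sqrt{\gamma_t}\right) = f_{\textsf{var}}(\gamma_t)\cdot \E[R_t].
\end{align*}

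Finally, I would invoke \Cref{lem:convexavgacrosst} with the convex function $f_{\textsf{var}}(z) = 1 - \tfrac12\sqrt z$ (convex since $-\sqrt z$ is convex) and the scalars $\gamma_t$. By that lemma the algorithm is $f_{\textsf{var}}(\gamma)$-approximate and $\E[|\calM|]/\sum_{i,t}x_{i,t}\ge f_{\textsf{var}}(\gamma)$, where
\begin{align*}
\gamma = \frac{\sum_t (\sum_i x_{i,t})\cdot \gamma_t}{\sum_t\sum_i x_{i,t}} = \frac{\sum_t \sum_i x_{i,t}\cdot y_{i,t} / ??}{\sum_t\sum_i x_{i,t}};
\end{align*}
carrying this out, $(\sum_i x_{i,t})\cdot\gamma_t = p_t\E[R_t]\cdot\gamma_t = p_t\sum_i r_{i,t}(1-y_{i,t})y_{i,t}$, and since $p_t r_{i,t}(1-y_{i,t}) = x_{i,t}$ this equals $\sum_i x_{i,t} y_{i,t}$; dividing by $\sum_{i,t}x_{i,t}$ recovers exactly $\alpha$ as defined. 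Hence $\gamma = \alpha$ and the lemma follows. The only mild subtlety — the "main obstacle," though it is minor — is the covariance-sign step: one must be careful that NCD (negative cylinder dependence) indeed implies nonpositive pairwise covariances, which it does since $\Cov(F_{i,t},F_{j,t}) = \Pr[F_{i,t}\wedge F_{j,t}] - \Pr[F_{i,t}]\Pr[F_{j,t}] \le 0$ by the first NCD inequality applied to $I=\{i,j\}$; and that we may freely assume $\E[R_t] > 0$ for each contributing $t$.
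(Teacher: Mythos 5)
Your overall strategy is exactly the paper's: bound $\Var(R_t)$ via NCD and sub-additivity, feed it into \Cref{lem:Emin1X-variance} to get a per-$t$ bound of the form $f_{\textup{\textsf{var}}}(\gamma_t)\cdot\E[R_t]$, and average with \Cref{lem:convexavgacrosst}. However, there is a concrete error in the step where you invoke $r_{i,t}\le 1$: you bound $\sum_i r_{i,t}^2\,y_{i,t}(1-y_{i,t}) \le \sum_i r_{i,t}(1-y_{i,t})y_{i,t} = \sum_i \frac{x_{i,t}}{p_t}y_{i,t}$, and your final averaged quantity is therefore $\tilde\alpha := \frac{\sum_{i,t} x_{i,t}\,y_{i,t}}{\sum_{i,t}x_{i,t}}$. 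This is \emph{not} $\alpha$ as defined in the paper: $\alpha = \frac{\sum_{i,t} r_{i,t}\cdot y_{i,t}\cdot x_{i,t}}{\sum_{i,t}x_{i,t}}$ retains a factor of $r_{i,t}$. Since $r_{i,t}\le 1$ you have $\tilde\alpha \ge \alpha$, and because $f_{\textup{\textsf{var}}}$ is decreasing, the bound you actually establish, $f_{\textup{\textsf{var}}}(\tilde\alpha)$, is weaker than (and does not imply) the claimed $f_{\textup{\textsf{var}}}(\alpha)$. This matters downstream: the paper later combines $f_{\textup{\textsf{var}}}(\alpha)$ with \Cref{lem:betaHbound,lem:betaLbound} via the identity $\alpha = S^{\le\theta}+S^{>\theta}-\beta^{\le\theta}-\beta^{>\theta}$, all of whose terms carry the $r_{i,t}$ weighting, so the extra factor cannot be discarded.

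The fix is one line and brings you back to the paper's computation: do not apply $r_{i,t}\le 1$ at all. Simply substitute $r_{i,t}(1-y_{i,t}) = x_{i,t}/p_t$ once in $r_{i,t}^2(1-y_{i,t})y_{i,t} = r_{i,t}\cdot\frac{x_{i,t}}{p_t}\cdot y_{i,t}$, set $\gamma_t := \frac{\sum_i r_{i,t}\,(x_{i,t}/p_t)\,y_{i,t}}{\sum_i x_{i,t}/p_t}$, and the weighted average over $t$ is then exactly $\alpha$. Everything else in your argument (the verification that $\E[R_t]\le 1$, the covariance-sign observation from NCD with $I=\{i,j\}$, the convexity of $f_{\textup{\textsf{var}}}$, and the application of \Cref{lem:convexavgacrosst}) is correct and matches the paper.
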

\begin{proof}
By \Cref{obs:matchmarginals} and \eqref{LP-PPSW} constraint \eqref{eqn:OnlineMatchingConstraint}, we have that $\E[R_t]=\sum_{i} x_{i,t}/p_t\leq 1$. So, by \Cref{lem:Emin1X-variance}, we have that $\E[\min(1,R_t)] \ge f_{\textup{\textsf{var}}} \left( \frac{\Var(R_t)}{\E[R_t]}\right)\cdot\E[R_t].$ Therefore, as $f_{\textup{\textsf{var}}}$ is convex, applying \Cref{lem:convexavgacrosst} to average across $t$, we find that \Cref{alg:proposals-core}'s approximation ratio is at least $$ f_{\textup{\textsf{var}}} \left( \frac{\sum_t p_t \cdot \E[R_t] \cdot \frac{\Var(r_{i,t} \cdot F_{i,t})}{\E[R_t]} }{\sum_t \sum_i x_{i,t}} \right) = f_{\textup{\textsf{var}}} \left( \frac{\sum_t p_t \cdot \Var(R_t)}{\sum_t \sum_i x_{i,t}} \right).$$
As $f_{\textup{\textsf{var}}}(\cdot)$ is monotone decreasing 
in its argument, it suffices for us to lower bound by $\alpha$ said argument in the RHS above.
For this, we first note that since NCD variables are pairwise negatively correlated, and hence their variance is sub-additive, \Cref{lem:NCD} implies that
$$\Var(R_{t}) = \Var\left(\sum_{i} r_{i,t} \cdot F_{i,t} \right) \leq \sum_{i} \Var( r_{i,t} \cdot F_{i,t} ) = \sum_{i} r^2_{i,t} \cdot \Var( F_{i,t}).$$
We therefore have that 
\begin{align*}
\frac{\sum_t p_t \cdot \Var(R_t)}{\sum_t \sum_i x_{i,t}} &\le \frac{\sum_t p_t \cdot \sum_i \Var(r_{i,t}) \cdot F_{i,t} }{\sum_t \sum_i x_{i,t}}  \\
&= \frac{\sum_t p_t \cdot \sum_i r_{i,t}^2 (1 - y_{i,t}) y_{i,t}}{\sum_t \sum_i x_{i,t}} && F_{i,t} \sim \Ber(1-y_{i,t}) \textrm{, by \Cref{cor:Fit}} \\
&= \frac{\sum_t p_t \cdot \sum_i r_{i,t} \cdot \frac{x_{i,t}}{p_t} \cdot y_{i,t}}{\sum_t \sum_i x_{i,t}}  && \text{Def. } r_{i,t}=\frac{x_{i,t}}{p_t(1-y_{i,t})} \\
&= \alpha. && \qedhere
\end{align*}
\end{proof}
 
We briefly note that on its own, \Cref{lem:variance-bound} already implies an approximation greater than $1-1/e$. Indeed, note that $\alpha = S^{\leq 1}-\beta^{\leq 1} \leq S^{\leq1} \left( 1-\frac{S^{\leq1}}{2} \right)\leq \nicefrac{1}{2} $ by \Cref{lem:betaLbound}, and and $f_{\textup{\textsf{var}}}$ is decreasing in $[0,1]$. Thus $$ f_{\textup{\textsf{var}}}(\alpha) \ge f_{\textup{\textsf{var}}}\left( \nicefrac{1}{2} \right) = 1 - \frac{1}{2 \sqrt{2}} \approx 0.646.$$ 
 
The above bound is subsumed by our $\edgeapprox$ ratio for the more general edge-weighted problem. 
In what follows, we show how to combine the above lemma with an averaged version of the fractional bucketing bound (\Cref{lem:fracbucketingbound}) to obtain a better approximation for the vertex-weighted problem.

\subsection{Fractional-bucketing-based bounding}

In this section, we will apply the fractional bucketing bound to the unweighted matching problem. To bound the terms corresponding to high-degree offline nodes, we will use the following consequence of Hölder's inequality.

\begin{lemma}\label{lem:holder's-consequence}
If $z_1, z_2, \ldots, z_n \ge 0$ with $\sum_i z_i = S$ and $\sum_i z_i^2 = C$, then for any $k\ge 1$ we have $$\sum_{i=1}^n z_i^k \ge \frac{C^{k-1}}{S^{k-2}}.$$
\end{lemma}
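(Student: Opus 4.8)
The statement is exactly the power-mean (equivalently, Hölder's) inequality written in a convenient form, so the plan is to reduce it to a single clean application of Hölder. First I would dispose of the degenerate case $S=0$ (then all $z_i=0$, so $C=0$ and both sides are $0$, interpreting $0/0$ suitably), and otherwise assume $S>0$. The key idea is to bound $C=\sum_i z_i^2$ by interpolating between the first and $k$-th power sums: write $z_i^2 = z_i^{2\lambda}\cdot z_i^{(2-2\lambda)}$ for a suitable $\lambda\in[0,1]$ and apply Hölder so that one factor produces $\sum_i z_i^k$ and the other produces $\sum_i z_i = S$.

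Concretely, I would use the two-term Hölder inequality $\sum_i a_i b_i \le (\sum_i a_i^p)^{1/p}(\sum_i b_i^q)^{1/q}$ with $1/p+1/q=1$, applied to $a_i = z_i^{2/p}$ and $b_i = z_i^{2/q}$ — but it is cleaner to phrase it as a convexity/power-mean statement. Set $p = k-1$ and $q = \tfrac{k-1}{k-2}$ (valid conjugate exponents once $k>2$; the cases $k=1,2$ are trivial or need the separate endpoint argument below). Then choosing $a_i,b_i$ so that $a_ib_i = z_i^2$, $a_i^p = z_i^k$, and $b_i^q = z_i$ forces $a_i = z_i^{k/(k-1)}$ and $b_i = z_i^{(k-2)/(k-1)}$, and indeed $a_i b_i = z_i^{(k + k-2)/(k-1)} = z_i^{(2k-2)/(k-1)} = z_i^2$. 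Hölder then gives
\[
C = \sum_i z_i^2 \;\le\; \Bigl(\sum_i z_i^k\Bigr)^{1/(k-1)}\Bigl(\sum_i z_i\Bigr)^{(k-2)/(k-1)} = \Bigl(\sum_i z_i^k\Bigr)^{1/(k-1)} S^{(k-2)/(k-1)}.
\]
Raising both sides to the power $k-1$ and rearranging yields $\sum_i z_i^k \ge C^{k-1}/S^{k-2}$, as claimed.

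It remains to handle the boundary exponents where the conjugate-exponent bookkeeping degenerates. For $k=1$ the claim reads $S \ge C^{0}/S^{-1} = S$, an equality. For $k=2$ it reads $C \ge C$, again an equality. For $1<k<2$ the exponent $p=k-1$ is in $(0,1)$ and Hölder's inequality reverses, but in that regime one instead uses the fact that $t\mapsto t^k$ is a concave-to-the-relevant-power interpolation — actually the cleanest uniform treatment is to invoke the power-mean inequality directly: for $z_i\ge 0$ not all zero and $r\ge s>0$, $\bigl(\tfrac{\sum z_i^r}{n}\bigr)^{1/r}\ge\bigl(\tfrac{\sum z_i^s}{n}\bigr)^{1/s}$ is not quite what we want since $n$ appears; the genuinely $n$-free statement is log-convexity of $r\mapsto \log\sum_i z_i^r$ (a standard consequence of Hölder), which gives $\sum_i z_i^2 \le (\sum_i z_i^1)^{(k-2)/(k-1)}(\sum_i z_i^k)^{1/(k-1)}$ for \emph{all} $k\ge 1$ simultaneously, since $2$ lies between $1$ and $k$ when $k\ge 2$ and $2\ge k\ge 1$ puts $k$ between $1$ and $2$ — in the latter case one writes $k$ as the convex combination of $1$ and $2$ instead, obtaining $\sum z_i^k \le (\sum z_i)^{2-k}(\sum z_i^2)^{k-1}$, i.e. $\sum z_i^k \le S^{2-k}C^{k-1}$, which is the \emph{same} inequality as $\sum z_i^k \ge C^{k-1}/S^{k-2}$ only when... wait — here I should be careful: for $1\le k\le 2$ log-convexity gives the inequality in the opposite direction, so the lemma as stated is really only being claimed (and used) for $k\ge 2$, or else is an equality at the endpoints. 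I would therefore state the proof for $k\ge 2$ via the displayed Hölder application above, and note the endpoint cases $k\in\{1,2\}$ hold with equality; the main (and only) obstacle is simply getting the conjugate exponents and the splitting of $z_i^2$ bookkeeping exactly right, which the computation above pins down.
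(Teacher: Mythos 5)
Your proof is correct and takes essentially the same route as the paper's: both interpolate $\sum_i z_i^2$ between $\sum_i z_i$ and $\sum_i z_i^k$ via H\"older (the paper uses the weighted form with exponents $r=1$, $s=k-2$; you use conjugate exponents $p=k-1$, $q=\tfrac{k-1}{k-2}$, which is the identical computation after raising to the power $k-1$). Your caveat at the end is also well taken and worth recording: for non-integer $1<k<2$, log-convexity of $r\mapsto\log\sum_i z_i^r$ gives the \emph{reverse} inequality, and indeed $z=(2,1)$, $k=\tfrac32$ violates the stated bound ($2\sqrt2+1\approx 3.83 < \sqrt{15}\approx 3.87$), so the lemma as written holds only for $k\ge 2$ plus the trivial equality at $k=1$; this is harmless here because the paper only invokes it for integer $k$ inside a Taylor expansion, but the restriction you impose is the correct one.
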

\begin{proof}
Hölder's inequality states that for any vectors $\vec{u}, \vec{v} \in \mathbb{R}^n$ and $r,s > 0$ we have $$\left( \sum_{i=1}^n |u_i|^r |v_i|^s \right)^{r+s} \le \left( \sum_{i=1}^n |u_i|^{r+s} \right)^r \left( \sum_{i=1}^n |v_i|^{r+s} \right)^s.$$
Taking $\vec{u} = \left(z_i^{1+\frac{1}{k-1}} \right)_{i=1}^n$, $\vec{v} = \left(z_i^{\frac{1}{k-1}} \right)_{i=1}^n$, $r = 1$ and $s = k-2$, we obtain the desired claim, as 
\begin{align*}
\left( \sum_{i=1}^n  z_i^2 \right)^{k-1} & \le \left( \sum_{i=1}^n z_i^{k} \right) \left( \sum_{i=1}^n z_i \right)^{k-2}.\qedhere
\end{align*}
\end{proof}

The preceding consequence of H\"older's inequality implies the following bound.

\begin{lemma}\label{lem:prodoneminusz}
    For any real numbers $z_1, z_2, \ldots, z_n \in [0,1]$ with $\sum_i z_i = S$ and $\sum_i z_i^2=C$, we have $$\prod_{i=1}^n \left( 1 - z_i \right) \le \textup{exp} \left( \frac{S^2}{C} \cdot \ln (1-C/S) \right).$$
\end{lemma}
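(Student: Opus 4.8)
The goal is to upper bound $\prod_i (1-z_i)$ given the two constraints $\sum_i z_i = S$ and $\sum_i z_i^2 = C$. The natural first step is to pass to logarithms: it suffices to show
\begin{align*}
\sum_{i=1}^n \ln(1-z_i) \le \frac{S^2}{C}\cdot \ln(1 - C/S).
\end{align*}
The plan is to bound each term $\ln(1-z_i)$ from above by a carefully chosen \emph{linear} function of $z_i$ and $z_i^2$, i.e., find constants $a,b$ (depending only on $S$ and $C$, not on $i$) with $\ln(1-z) \le -a z - b z^2$ for all $z\in[0,1]$, and such that equality is achievable. Summing such an inequality over $i$ and using the two constraints immediately gives $\sum_i \ln(1-z_i) \le -aS - bC$, and then one only needs to verify that the right choice of $a,b$ makes $-aS-bC$ equal to $\frac{S^2}{C}\ln(1-C/S)$.

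To find the right linear-in-$(z,z^2)$ upper bound, I would first guess the extremal configuration. Since $\prod(1-z_i)$ with fixed first and second moments is extremized (by a standard convexity/compression argument, or by a Lagrange-multiplier heuristic) when all nonzero $z_i$ are equal, suppose $m$ of the $z_i$ equal some common value $v$ and the rest are $0$. Then $mv = S$ and $mv^2 = C$, so $v = C/S$ and $m = S^2/C$, and $\prod(1-z_i) = (1-C/S)^{S^2/C} = \exp\!\big(\tfrac{S^2}{C}\ln(1-C/S)\big)$ — exactly the claimed bound, confirming it is tight. This tells me the tangent-type inequality I want is the one that is tight at $z = C/S$: I should look for the bound of the form $\ln(1-z) \le \alpha z + \beta z^2$ that touches $\ln(1-z)$ at $z = v := C/S$. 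However, a pure second-order tangent is not globally valid on $[0,1]$, so the cleaner route is to instead invoke \Cref{lem:prodoneminusz}'s stated consequence \Cref{lem:holder's-consequence} differently, or — more robustly — to use the following: by the weighted AM–GM / concavity argument, for fixed $\sum z_i$ and $\sum z_i^2$ the quantity $\sum \ln(1-z_i)$ is maximized at the two-value (here one-value-plus-zeros) configuration. Making \emph{that} compression argument rigorous is the main work.

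Concretely, the key step I expect to carry the proof is a \textbf{smoothing lemma}: if two coordinates $z_j, z_k$ are both in $(0,1)$ and not equal, one can move them toward each other or apart along the curve that preserves both $\sum z_i$ and $\sum z_i^2$ — but preserving two moments with two variables leaves no freedom, so instead one shows that replacing the pair $(z_j,z_k)$ by $(z_j', z_k')$ with $z_j' + z_k' $ and $(z_j')^2 + (z_k')^2$ unchanged is impossible to do nontrivially; hence the real smoothing must involve three coordinates at a time (two moments, three degrees of freedom, one direction of motion), pushing mass so that eventually at most one coordinate is strictly between $0$ and $1$... This is getting delicate. The \textbf{cleanest} and the approach I would actually commit to: show $\ln(1-z) \le \frac{S^2}{C}\ln(1-C/S)\cdot\frac{z^2}{C}\cdot\frac{C}{S^2}\cdot(\text{something})$ — i.e. directly establish the pointwise inequality
\begin{align*}
\ln(1-z) \;\le\; \frac{z^2}{C/S}\cdot \frac{\ln(1-C/S)}{C/S} \qquad \text{equivalently} \qquad \frac{\ln(1-z)}{z^2} \le \frac{\ln(1-v)}{v^2} \text{ for } z \in [0,v],\ v = C/S,
\end{align*}
using that $z\mapsto \ln(1-z)/z^2$ is monotone on $(0,1)$. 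Then summing $\ln(1-z_i) \le z_i^2 \cdot \frac{\ln(1-v)}{v^2}$ over $i$ gives $\sum_i \ln(1-z_i) \le C\cdot\frac{\ln(1-v)}{v^2} = C\cdot \frac{S^2}{C^2}\ln(1-C/S) = \frac{S^2}{C}\ln(1-C/S)$, as desired. The one thing to check for this to be valid is that each $z_i \le v = C/S$; this need not hold a priori, so the genuine obstacle is handling coordinates with $z_i > C/S$. I would resolve this either (a) by the compression/smoothing argument reducing to the all-equal case where every coordinate equals $v$, or (b) by splitting any large $z_i$ into several smaller equal pieces that preserve $\sum z_i$ while only \emph{increasing} $\sum z_i^2$ is the wrong direction — so instead one notes $\prod(1-z_i)$ with $\sum z_i$ fixed is maximized when the $z_i$ are spread out, and with $\sum z_i^2$ also constrained the extremal point is forced to the claimed one. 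I expect step (a), the rigorous compression argument establishing that the all-equal configuration is extremal, to be the crux; everything after it is the short computation above.
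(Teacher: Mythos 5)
Your proposal correctly identifies the extremal configuration ($S^2/C$ coordinates each equal to $C/S$) and verifies tightness, but the actual proof mechanism is missing. The pointwise bound $\ln(1-z_i) \le z_i^2 \cdot \frac{\ln(1-v)}{v^2}$ with $v = C/S$ only applies to coordinates with $z_i \le C/S$, and — as you note — coordinates with $z_i > C/S$ are unavoidable: since $C/S = \sum_i z_i^2 / \sum_i z_i$ is a weighted average of the $z_i$, we always have $C/S \le \max_i z_i$, with equality only in the extremal configuration itself. You defer the handling of these coordinates to an unexecuted ``compression/smoothing argument'' and explicitly call it ``the crux,'' so the core of the proof is absent. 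A further problem: the function $z \mapsto \ln(1-z)/z^2$ is \emph{not} monotone on $(0,1)$ (it tends to $-\infty$ at both endpoints and has an interior maximum near $z \approx 0.68$), so the monotonicity claim underpinning even the restricted pointwise bound is false as stated, though the inequality you need does hold on $[0,v]$ for $v$ below the turning point.

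The paper's proof sidesteps all of this. It Taylor-expands $\ln(1-z_i) = -\sum_{k\ge 1} z_i^k/k$ and lower-bounds every power sum at once via the Hölder consequence (\Cref{lem:holder's-consequence}): $\sum_i z_i^k \ge C^{k-1}/S^{k-2}$. Summing the resulting series gives $-\sum_k \frac{1}{k}\frac{C^{k-1}}{S^{k-2}} = \frac{S^2}{C}\ln(1-C/S)$ directly, with no smoothing, no case analysis on the size of individual $z_i$, and no appeal to an extremal configuration. In effect, \Cref{lem:holder's-consequence} is the rigorous, all-moments-simultaneously version of the two-moment tangent inequality you were searching for. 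If you want to salvage your route, you would need to prove the compression step (e.g., that for fixed first and second moments the product $\prod_i(1-z_i)$ is maximized when all nonzero coordinates are equal), which is a genuine piece of work in its own right; the paper's argument is both shorter and avoids the issue entirely.
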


\begin{proof}
    We bound via the Taylor expansion of $\ln(1-z)$:
\begin{align*}
   \prod_{i=1}^n \left( 1- z_i\right) 
    &=  \text{exp} \left( \sum_i  \ln(1 - z_i) \right)  \\
    &=    \text{exp} \left( - \sum_{k=1}^{\infty} \sum_i  \frac{1}{k} \cdot z_i^k \right)   && \ln(1-z) = \sum_{k=1}^\infty -\frac{1}{k} z^k\\
    &\le   \text{exp} \left( - \sum_{k=1}^\infty \frac{1}{k} \cdot \frac{C^{k-1}}{S^{k-2}} \right)  && \text{\Cref{lem:holder's-consequence}} \\
    &=    \text{exp} \left( \frac{S^2}{C} \cdot \ln (1-C/S) \right). && \qedhere
\end{align*}

\end{proof}

We now apply the above lemma to the fractional bucketing bound, additionally averaging across all $t$ to get one lower bound on the approximation ratio. 
\begin{lemma}\label{lem:conv}
    For any fixed $\theta$, let $\conv_{\theta}(x,y)$ denote a convex function, increasing in both arguments, which lower bounds  $$ 1 - g_{\theta}  \left(  x\right) \cdot  \left( 1 - \frac{y}{1 - x } \right)^{\frac{(1- x)^2}{y}}$$ in the domain $\{0 \le x \le 1, \;0 \le y \le (1-x)^2\}$.\footnote{When $y=0$, this function is undefined, so formally we define $f(x,0) := \lim_{y \rightarrow 0+} f(x,y) = 1 - g_{\theta}(x) \cdot \exp(-(1-x)).$ If $x=1$, we must have $y=0$, and we also use this definition.} Then, the approximation ratio of \Cref{alg:proposals-vtx-weighted} is lower bounded by  $\conv_{\theta} \left( \theta, \beta^{> \theta} \right).$
    Moreover, $\E[|\calM|]/\sum_{i,t} x_{i,t}\geq 
 \conv_{\theta} \left( \theta, \beta^{> \theta} \right)$.
\end{lemma}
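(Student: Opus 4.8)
The plan is to apply the fractional bucketing bound (\Cref{lem:fracbucketingbound}) to each online node $t$ with threshold $\theta$, convert the resulting product over high-degree neighbors into a function of the two moments $S^{>\theta}_t$ and $\beta^{>\theta}_t$ via \Cref{lem:prodoneminusz}, and then average across $t$ using \Cref{lem:convexavgacrosst}. Concretely, fix $t$ and recall $R_t = \sum_i r_{i,t}\cdot F_{i,t}$ with $\{F_{i,t}\}$ NCD (\Cref{lem:NCD}), $F_{i,t}\sim \Ber(1-y_{i,t})$ (\Cref{cor:Fit}), and $c_i = r_{i,t}$, $q_i = 1-y_{i,t}$ in the language of \Cref{sec-setup}. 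The set $S$ in \Cref{lem:fracbucketingbound} is exactly $\{i : 1-y_{i,t}\ge 1-\theta\} = \{i : y_{i,t}\le\theta\}$, the low-degree neighbors, so $\mu_S = \sum_{i: y_{i,t}\le\theta} r_{i,t}(1-y_{i,t})$. The complementary product $\prod_{i\notin S}(1-c_iq_i) = \prod_{i: y_{i,t}>\theta}(1 - r_{i,t}(1-y_{i,t}))$ ranges over high-degree neighbors, and here I set $z_i := r_{i,t}(1-y_{i,t})\in[0,1]$, so that $\sum_i z_i$ (over $i$ with $y_{i,t}>\theta$) is the per-$t$ analogue of $S^{>\theta}$ and $\sum_i z_i^2 \le \sum_i z_i^2 \cdot \text{(something)}$ — actually I need $\sum_i z_i^2$ to relate to $\beta^{>\theta}_t$; note $\beta^{>\theta}_t = \sum_{i: y_{i,t}>\theta} r_{i,t}(1-y_{i,t})$, so in fact $\beta^{>\theta}_t = \sum_i z_i$ and $S^{>\theta}_t$ involves $r_{i,t}$ without the $(1-y_{i,t})$ factor. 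I should double-check the bookkeeping: the natural pairing is $x := $ the low-degree $\mu$-mass and $y := $ the high-degree second-moment-like quantity, matching the signature $\conv_\theta(x,y)$ where the lemma plugs in $(\theta,\beta^{>\theta})$. So after \Cref{lem:fracbucketingbound} and \Cref{lem:prodoneminusz} applied with $S = S^{>\theta}_t$, $C = \beta^{>\theta}_t$ (interpreting the $z_i$ appropriately so these moments come out), I obtain
\[
\E[\min(1,R_t)] \ge 1 - g_\theta\!\left(\text{low-degree mass}_t\right)\cdot \exp\!\left(\frac{(S^{>\theta}_t)^2}{\beta^{>\theta}_t}\ln\!\Big(1-\frac{\beta^{>\theta}_t}{S^{>\theta}_t}\Big)\right).
\]

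Next I would bound the low-degree argument of $g_\theta$ from below. By \Cref{claim:gdecreasing} (cited in the edge-weighted section), $g_\theta$ is decreasing, so I want a \emph{lower} bound on the low-degree mass $\sum_{i:y_{i,t}\le\theta} r_{i,t}(1-y_{i,t})$; since each $y_{i,t}\le\theta$ gives $1-y_{i,t}\ge 1-\theta$, this mass is at least $(1-\theta)\cdot S^{\le\theta}_t$, but more usefully I can relate it via the same moment inequalities used for \Cref{lem:betaLbound}. The cleaner route: express everything after normalizing so that $\sum_i x_{i,t}/p_t = \E[R_t]$ factors out, exactly as in the proof of \Cref{lem:edgeweightedbound} where $g_{\hat\theta}$ and $\exp(-\cdot)$ were both homogenized by \Cref{scalingsumto1}. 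Then the per-$t$ bound becomes $\E[\min(1,R_t)] \ge \E[R_t]\cdot \Phi_\theta(\cdot)$ for an expression $\Phi_\theta$ depending only on the ratios, which I then lower bound by a convex, coordinate-wise increasing function $\conv_\theta$ of $(\theta, \beta^{>\theta}_t/\E[R_t])$ or directly of $(\theta,\beta^{>\theta}_t)$ after using that $\E[R_t]\le 1$. Finally, applying \Cref{lem:convexavgacrosst} with $f = \conv_\theta(\theta,\cdot)$ and $\boldsymbol\gamma_t = \beta^{>\theta}_t$ — whose $x_{i,t}$-weighted average across $t$ is precisely $\beta^{>\theta}$ by definition — yields the approximation ratio $\conv_\theta(\theta,\beta^{>\theta})$, and the same chain gives the stronger statement with denominator $\sum_{i,t}x_{i,t}$ since \Cref{lem:convexavgacrosst}'s proof only uses $\sum_t p_t\E[R_t] = \sum_{i,t}x_{i,t}$ in the denominator before invoking \Cref{lem:LP}.

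The main obstacle I anticipate is not any single inequality but rather establishing that a suitable $\conv_\theta$ — convex and increasing in \emph{both} arguments — actually exists and validly lower bounds $1 - g_\theta(x)\cdot(1-\frac{y}{1-x})^{(1-x)^2/y}$ on the stated domain, together with correctly matching the per-$t$ quantities to $(\theta,\beta^{>\theta})$ so that the weighted average telescopes through \Cref{lem:convexavgacrosst}. In particular I must be careful that the argument fed to $g_\theta$ is handled monotonically (using \Cref{claim:gdecreasing} in the right direction) and that the boundary/degenerate cases $y\to 0$ and $x\to 1$ are covered by the limiting definitions in the footnote; the existence of $\conv_\theta$ as described is essentially asserted by the lemma statement itself (it says "let $\conv_\theta$ denote a convex function $\ldots$ which lower bounds $\ldots$"), so the real content of the proof is the reduction chain \Cref{lem:fracbucketingbound} $\to$ \Cref{lem:prodoneminusz} $\to$ homogenization $\to$ \Cref{lem:convexavgacrosst}, with the convex-envelope construction treated as given (and verified numerically later, as in the edge-weighted case).
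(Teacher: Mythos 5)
Your overall pipeline is the same as the paper's: apply \Cref{lem:fracbucketingbound} per online node $t$ with threshold $\theta$, convert the high-degree product into a function of first and second moments via \Cref{lem:prodoneminusz}, homogenize with \Cref{scalingsumto1}, and average with \Cref{lem:convexavgacrosst}. (One bookkeeping slip along the way: $\beta^{>\theta}_t$ is \emph{not} $\sum_{i\in H_t} r_{i,t}(1-y_{i,t})$; writing $z_i := x_{i,t}/p_t = r_{i,t}(1-y_{i,t})$, it equals $\frac{\sum_{i\in H_t} z_i^2}{\sum_i z_i}$, i.e.\ a normalized second moment, which is exactly why \Cref{lem:prodoneminusz} is invoked with $S=\sum_{i\in H_t}z_i$ and $C=(\sum_i z_i)\cdot\beta^{>\theta}_t$. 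You flag this uncertainty yourself and land on the right structural form, so I don't count it as the main problem.)

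The genuine gap is in how you handle the \emph{first} argument of $\conv_\theta$. You propose to fix it to $\theta$ already at the per-$t$ level, applying \Cref{lem:convexavgacrosst} with the one-variable function $\conv_\theta(\theta,\cdot)$ and $\boldsymbol{\gamma}_t=\beta^{>\theta}_t$. This presupposes the per-$t$ inequality $\E[\min(1,R_t)]\ge \conv_\theta(\theta,\beta_t^{>\theta})\cdot\E[R_t]$, which is false in general: the fraction $\Gamma_t$ of $t$'s LP mass coming from low-degree neighbors can be arbitrarily small (e.g.\ $\Gamma_t=0$ for a node all of whose neighbors have $y_{i,t}>\theta$), in which case the fractional bucketing bound only gives $1-g_\theta(0)\cdot\prod_{i\in H_t}(1-z_i)=1-\prod_{i\in H_t}(1-z_i)$, strictly weaker than $\conv_\theta(\theta,\cdot)$ since $g_\theta(\theta)<1=g_\theta(0)$. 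Neither of your suggested fixes (lower-bounding the low-degree mass by $(1-\theta)S^{\le\theta}_t$, or "using $\E[R_t]\le 1$") closes this. The correct move is to keep the per-$t$ quantity $\Gamma_t$ as a second coordinate, apply \Cref{lem:convexavgacrosst} to the \emph{two}-argument function $\conv_\theta(\Gamma_t,\beta^{>\theta}_t)$, and only then replace the weighted average of $\Gamma_t$ by $\theta$, using the global inequality
\begin{equation*}
\frac{\sum_t p_t\cdot\E[R_t]\cdot\Gamma_t}{\sum_t p_t\cdot\E[R_t]}=\frac{\sum_{i}\sum_t x_{i,t}\cdot\mathbbm{1}[y_{i,t}\le\theta]}{\sum_{i,t}x_{i,t}}\ \ge\ \theta
\end{equation*}
(which holds offline-node by offline-node, since for each $i$ either all its mass is low-degree or the low-degree mass is at least $\theta\ge\theta\sum_t x_{i,t}$) together with monotonicity of $\conv_\theta$ in its first argument. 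This averaging-then-substitution step is precisely the reason the lemma insists that $\conv_\theta$ be increasing in \emph{both} arguments, and it is the one idea your write-up is missing.
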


\begin{proof}
For an online node $t$, let $L_t := \{i : y_{i,t} \le \theta \}$ and $H_t := [n]\setminus L_t= \{i : y_{i,t} > \theta\}$. Additionally, let $\gamma_t := \sum_{i \in L_t} \frac{x_{i,t}}{p_t}$ and $\delta_t := \sum_{i \in H_t} \frac{x_{i,t}}{p_t}$.  We next naturally define $$\beta_t^{> \theta} := \frac{\sum_i x_{i,t} \cdot r_{i,t} \cdot (1-y_{i,t}) \cdot \mathbbm{1}[y_{i,t} > \theta]}{\sum_i x_{i,t}} $$ to be the value of $\beta^{> \theta}$ when restricted to edges incident to $t$. We briefly note that $$ \sum_{i \in H_t} \left( \frac{x_{i,t}}{p_t} \right)^2 = \frac{\sum_i x_{i,t} \cdot \frac{x_{i,t}}{p_t} \cdot \mathbbm{1}[y_{i,t} > \theta]}{p_t  } = \E[R_t] \cdot \frac{\sum_i x_{i,t} \cdot \frac{x_{i,t}}{p_t} \cdot \mathbbm{1}[y_{i,t} > \theta]}{\sum_i x_{i,t} } = (\gamma_t + \delta_t) \cdot \beta^{> \theta}_t.$$ 
    Finally, let $\Gamma_t := \frac{\sum_{i \in L_t} x_{i,t}}{\sum_{i} x_{i,t}} = \frac{\gamma_t}{\E[R_t]}$. Note
    \begin{align*}
        \frac{\E[\min(1,R_t)]}{\E[R_t]} &\ge \frac{1 - g_\theta \left( \sum_{i \in L_t} \frac{x_{i,t}}{p_t} \right) \cdot \prod_{i \in H_t} \left( 1 - \frac{x_{i,t}}{p_t} \right) }{ \sum_i \frac{x_{i,t}}{p_t}} && \textrm{\Cref{lem:fracbucketingbound}} \\
        &\ge \frac{1 - g_\theta \left( \gamma_t \right) \cdot \left(  1 - \frac{(\gamma_t + \delta_t) \cdot \beta_t^{> \theta}}{\delta_t} \right)^{\frac{\delta_t^2}{(\gamma_t + \delta_t) \cdot \beta_t^{> \theta}}} }{ \gamma_t + \delta_t} && \text{\Cref{lem:prodoneminusz}} \\
         &\ge  1 - g_{\theta}  \left( \Gamma_t \right) \cdot  \left( 1 - \frac{\beta_t^{> \theta}}{1 - \Gamma_t } \right)^{\frac{(1-\Gamma_t)^2}{\beta_t^{> \theta}}}    && \text{scaling to $\gamma_t + \delta_t = 1$ (\Cref{scalingsumto1})} \\
         &\ge \conv_{\theta}(\Gamma_t, \beta_t^{> \theta}) && \text{Def. } \conv_{\theta}(\cdot)
    \end{align*}
To conclude, we observe that if $\mathcal{M}$ denotes the matching produced by \Cref{alg:proposals-vtx-weighted}, we have
\begin{align*}
\frac{\E[|\mathcal{M}|]}{\sum_{i,t} x_{i,t}} &\ge \frac{\sum_t p_t \cdot \E[\min(1,R_t)]}{\sum_t p_t \cdot \E[R_t]} && \textrm{\Cref{per-t-min-bound}} \\
&\ge \frac{\sum_t p_t \cdot \E[R_t] \cdot \conv_{\theta}(\Gamma_t, \beta_t^{> \theta}) }{\sum_t p_t \cdot \E[R_t]} \\
&\ge \conv_{\theta} \left( \frac{\sum_t p_t \cdot \E[R_t] \cdot \Gamma_t }{\sum_t p_t \cdot \E[R_t]}, \frac{\sum_t p_t \cdot \E[R_t] \cdot \beta_t^{>\theta}}{\sum_t p_t \cdot \E[R_t]}\right) && \text{\Cref{lem:convexavgacrosst}}  \\
&\ge \conv_{\theta} \left( \theta, \beta^{>\theta} \right).
\end{align*}
In the final inequality, we used the observations that $$\frac{\sum_t p_t \cdot \E[R_t] \cdot \Gamma_t }{\sum_t p_t \cdot \E[R_t]} = \frac{\sum_i \sum_t x_{i,t} \cdot \mathbbm{1}[y_{i,t} \le \theta]}{\sum_t \sum_i x_{i,t}} \ge \theta $$ and 
\begin{align*}
\frac{\sum_t p_t \cdot \E[R_t] \cdot \beta^{> \theta}_t }{\sum_t p_t \cdot \E[R_t]} &=  \beta^{>\theta}. \qedhere
\end{align*} 
\end{proof}

\subsection{Combining the variance bound with fractional bucketing}

In this section we provide a unified analysis applying both the variance bound and the fractional bucketing bound, yielding our main result of this section.

\begin{thm}\label{thm:vertex-weighted}
    \Cref{alg:proposals-vtx-weighted} is a polynomial-time $\vertexapprox$-approximate online algorithm for vertex-weighted (or unweighted) online stochastic bipartite matching.   
\end{thm}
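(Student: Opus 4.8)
The plan is to combine the two lower bounds on the approximation ratio established in this section, namely the variance-based bound $f_{\textup{\textsf{var}}}(\alpha) = 1 - \tfrac12\sqrt{\alpha}$ from \Cref{lem:variance-bound} and the fractional-bucketing bound $\conv_{\theta}(\theta, \beta^{>\theta})$ from \Cref{lem:conv}, for a suitably chosen threshold $\theta$ (likely some explicit constant found with computer assistance, as in the edge-weighted case). The key point is that these two bounds are in tension: the variance bound is strong precisely when $\alpha$ is small, and by the identity $\alpha = S^{\le 1} - \beta^{\le 1}$ together with Lemmas \ref{lem:betaHbound} and \ref{lem:betaLbound}, a small $\alpha$ forces the $\beta$-quantities (and in particular $\beta^{>\theta}$, or the mass concentrated on low-degree edges captured by $\beta^{\le\theta}$) to behave in a way that makes the fractional-bucketing bound strong. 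So the first step is to write $\alpha = \alpha^{\le \theta} + \alpha^{>\theta}$ (splitting the defining sum by whether $y_{i,t}\le\theta$), and then to express everything in terms of the five weighted averages $\alpha, \beta^{\le\theta}, \beta^{>\theta}, S^{\le\theta}, S^{>\theta}$, using the basic relations $\alpha^{\le\theta} + \beta^{\le\theta} = S^{\le\theta}$ and $\alpha^{>\theta} + \beta^{>\theta} = S^{>\theta}$ (each since $r_{i,t} y_{i,t} + r_{i,t}(1-y_{i,t}) = r_{i,t}$), plus $S^{\le\theta} + S^{>\theta} = \sum_{i,t} x_{i,t}/(\dots)$ normalized to... well, $S^{\le\theta}+S^{>\theta}$ is the average of $\sum_i r_{i,t}$, which need not be $1$, but is at most... in any case it is a free parameter in $[0, \text{something}]$.

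The second step is the reduction to a finite-dimensional optimization. By \Cref{lem:variance-bound} and \Cref{lem:conv}, the approximation ratio is at least $\max\bigl(f_{\textup{\textsf{var}}}(\alpha),\ \conv_{\theta}(\theta,\beta^{>\theta})\bigr)$. Using \Cref{lem:betaHbound} to lower-bound $\beta^{>\theta} \ge (S^{>\theta})^2/2$, and \Cref{lem:betaLbound} to relate $\beta^{\le\theta}$ and hence $\alpha^{\le\theta} = S^{\le\theta} - \beta^{\le\theta} \le S^{\le\theta}(1 - (1-\theta) - S^{\le\theta}/2) = S^{\le\theta}(\theta - S^{\le\theta}/2)$ to $S^{\le\theta}$, and the trivial $\alpha^{>\theta} \le S^{>\theta}$ (or better, $\alpha^{>\theta} = S^{>\theta} - \beta^{>\theta}$), one reduces to proving that $\max\bigl(1 - \tfrac12\sqrt{\alpha},\ \conv_{\theta}(\theta, (S^{>\theta})^2/2)\bigr) \ge \vertexapprox$ for all feasible $(S^{\le\theta}, S^{>\theta})$ in the relevant box (with $\alpha$ bounded above as a function of these). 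Since $f_{\textup{\textsf{var}}}$ is decreasing and $\conv_\theta$ is increasing in its second argument, the worst case pushes $\alpha$ as large as possible and $S^{>\theta}$ as small as possible subject to the constraints linking them — intuitively, if $S^{>\theta}$ is small then $\alpha^{>\theta}$ is small, so for $\alpha$ to be large we need $\alpha^{\le\theta}$ large, i.e.\ $S^{\le\theta}$ close to its extremal value, and \Cref{lem:betaLbound} caps how large $\alpha^{\le\theta}$ can get. This yields a two-variable (or after substitution, effectively one-variable) inequality to verify.

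The third and final step is to verify this remaining inequality numerically, exactly as done for \Cref{lem:edgeweightedbound}: pick the optimal constant $\theta$, evaluate the relevant function on a fine grid, and invoke a Lipschitz bound on $\conv_\theta$ and $f_{\textup{\textsf{var}}}$ (the latter is trivially Lipschitz away from $0$, and near $0$ it is close to $1$ so poses no problem) to conclude that the grid-based check certifies the bound up to negligible error. One should also confirm that $\conv_\theta$ can be taken to be an explicit convex, coordinatewise-increasing function (e.g.\ the convex hull / a piecewise-linear underestimator of the stated expression), so that \Cref{lem:conv} applies verbatim. The main obstacle I expect is the bookkeeping in step two: correctly tracking which of the quantities $\alpha^{\le\theta}, \alpha^{>\theta}, \beta^{\le\theta}, \beta^{>\theta}, S^{\le\theta}, S^{>\theta}$ are free and which are constrained, ensuring the chain of inequalities (especially the use of \Cref{lem:betaLbound} to bound $\alpha^{\le\theta}$ from above rather than $\beta^{\le\theta}$ from below) goes in the monotone-favorable direction for both of the competing bounds simultaneously, and then checking that the resulting low-dimensional optimization genuinely has minimum value at least $\vertexapprox$ — this is where a bad choice of $\theta$ or a too-crude intermediate bound would fail, so some care (and computer search over $\theta$) is needed to make the two bounds cross at exactly the right place.
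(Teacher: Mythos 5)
Your proposal matches the paper's proof essentially step for step: it takes the maximum of the variance bound (\Cref{lem:variance-bound}) and the fractional-bucketing bound (\Cref{lem:conv}), uses Lemmas \ref{lem:betaHbound} and \ref{lem:betaLbound} to bound $\beta^{>\theta}\ge (S^{>\theta})^2/2$ and $\alpha \le S^{\le\theta}+S^{>\theta}-S^{\le\theta}(1-\theta+S^{\le\theta}/2)-(S^{>\theta})^2/2$, exploits the opposite monotonicities to reduce to a two-variable min--max in $(S^{\le\theta},S^{>\theta})$, and certifies the result for $\theta=\nicefrac{1}{2}$ via a Lipschitz-plus-grid computer check with a linear underestimator for $\conv_\theta$ --- exactly as in the paper. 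The only (inessential) omission is the explicit invocation of \Cref{lem:weighted-to-unweighted} to reduce the vertex-weighted claim to the unweighted quantity $\E[|\calM|]/\sum_{i,t}x_{i,t}$, which the section has already set up.
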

\begin{proof}
    That the algorithm runs in polynomial time is immediate; the remainder of this proof is dedicated to bounding the approximation ratio.

    By \Cref{lem:weighted-to-unweighted}, it suffices to prove that $\E[|\calM|]/\sum_{i,t} x_{i,t} \geq \vertexapprox$ when the algorithm is run on any unweighted instance.
   Now, by Lemmas~\ref{lem:variance-bound} and \ref{lem:conv}, we have 
    \begin{align}
    \label{eqn:first-vtxweighted-step} \frac{\E[|\calM|]}{\sum_{i,t} x_{i,t}} 
    \geq \max\left\{1-\frac{1}{2}\sqrt{\alpha},\; \conv_{\theta}(\theta, \beta^{>\theta})\right\},
    \end{align}
    for any convex function $\conv_{\theta}(x,y)\leq 1 - g_{\theta}  \left(  x\right) \cdot  \left( 1 - \frac{y}{1 - x } \right)^{{(1- x)^2/y}}$ increasing in both arguments.
This can in turn be lower bounded as an expression in two arguments, $S^{\leq \theta}, S^{>\theta}$, by using Lemmas~\ref{lem:betaHbound} and \ref{lem:betaLbound}, which imply that $\beta^{>\theta}\geq \frac{(S^{>\theta})^2}{2}$ and also that 
$$\alpha = S^{\le \theta} + S^{> \theta} - \beta^{\le \theta} - \beta^{> \theta} \le S^{\le \theta} + S^{> \theta} - S^{\le \theta} \cdot \left( 1-\theta - \frac{S^{\le \theta}}{2} \right) -  \frac{(S^{> \theta})^2}{2}.$$
Using that $\conv_{\theta}(\cdot, \cdot)$ is monotone increasing in its arguments, while $1- \frac{1}{2}\sqrt{x}$ is decreasing,  we have that the approximation ratio of \Cref{alg:proposals-vtx-weighted} on vertex-weighted instances is at least
$$(\dagger)_{\theta} := \min_{S^{\le \theta}, S^{> \theta}} \max \left( 1 - \frac{1}{2} \sqrt{ S^{\le \theta} + S^{> \theta} - S^{\le \theta} \cdot \left( 1-\theta + \frac{S^{\le \theta}}{2} \right) -  \frac{(S^{> \theta})^2}{2} },  \conv_{\theta} \left( \theta, \frac{(S^{> \theta})^2}{2}\right)  \right).$$
In \Cref{app:vertexweighted} we fix $\theta= \nicefrac{1}{2}$ and provide a computer-assisted proof to justify a linear lower bound $\conv_{\theta} (\cdot,\cdot)$, using Lipschitzness of $1 - g_{\theta}  \left(  x\right) \cdot  \left( 1 - \frac{y}{1 - x } \right)^{{(1- x)^2/y}}$ and evaluating a fine grid of points to bound the error in this linear lower bound. 
In similar fashion, we lower bound $(\dag)_{\theta} \geq \vertexapprox$
for $\theta= \nicefrac{1}{2}$. The theorem follows.
\end{proof}

\begin{rem}
    Our choice of $\theta= \nicefrac{1}{2}$ and linear lower bound for $\conv_{\theta} \left( \theta, \frac{(S^{> \theta})^2}{2} \right)$ may seem~arbitrary. However, computer-assisted proof optimizing over our arguments, and taking $\conv_{\theta} \left(x,y \right)$ to be the lower convex envelope of $1-g_{\theta}(x)\cdot \left(1-\frac{y}{1-x}\right)^{(1-x)^2/y}$ (approximately evaluated based on a fine grid), shows that these choices are optimal for our arguments, up to a negligible error term.
\end{rem}

\section{Hardness}\label{sec:hardness}

In this section we prove the following hardness of approximation of the philosopher for the simplest, unweighted, version our problem.

\begin{thm}
It is $\mathsf{PSPACE}$-hard to approximate $OPT_{on}$ within some universal constant $\alpha < 1$, 
even for unweighted Bernoulli instances with arrival probabilities bounded away from zero. 
\end{thm}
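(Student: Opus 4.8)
The plan is to reduce from a known $\PSPACE$-hard problem—most naturally, the $\PSPACE$-hardness already established in \cite{papadimitriou2021online} for \emph{edge-weighted} (or weighted Bernoulli) online stochastic matching—and show that the weights can be ``simulated'' by unweighted gadgets, so that approximating $OPT_{on}$ on the unweighted instance within some constant recovers an approximation on the weighted instance. The high-level idea: replace a single offline node of weight $w$ (or an edge of weight $w$) by a bundle of $\Theta(w/\eps)$ parallel unweighted offline copies together with auxiliary online ``test'' nodes arriving with constant probability, arranged so that the optimal online policy on the gadget extracts expected value proportional to $w$ from the bundle, with the proportionality constant pinned down up to $(1\pm\eps)$. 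Because all arrival probabilities in the gadget are constants bounded away from $0$ and $1$, the resulting instance stays within the promised class. One then argues that an $\alpha$-approximation of $OPT_{on}$ on the unweighted instance, for $\alpha$ sufficiently close to $1$, yields a $(1-\eps)$-approximation of $OPT_{on}$ on the original weighted instance, contradicting the hardness of the latter for an appropriate universal constant.

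The key steps, in order, would be: (i) recall precisely the hardness instance of \cite{papadimitriou2021online}—identify which weights appear and that they are bounded (a constant number of distinct weight values, say), and that arrival probabilities are already bounded away from zero; (ii) design the unweighting gadget: for each weighted element, a cluster of unweighted offline nodes plus a short sequence of online nodes whose distributions force the online optimum to behave, on this cluster, like a scaled copy of the original weighted interaction—crucially the gadget must be ``decoupled'' so the online algorithm's decisions inside one gadget do not interfere with another except through the intended scaled value; (iii) prove a two-sided simulation bound: $OPT_{on}(\text{unweighted}) = c\cdot OPT_{on}(\text{weighted}) \pm \eps$ for an explicit constant $c$, by exhibiting a policy on the unweighted instance that mimics the weighted optimum (lower bound) and, conversely, showing any unweighted policy can be ``projected'' to a weighted policy losing at most $\eps$ (upper bound)—the MDP/LP structure and the linearity used in \Cref{lem:LP} help formalize the projection; (iv) set $\alpha = 1 - \delta$ for $\delta$ chosen in terms of the constant hardness gap from \cite{papadimitriou2021online} and the simulation error, and conclude.

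**The main obstacle** I expect is step (iii), specifically the \emph{upper bound} direction: showing that no clever unweighted online policy can exploit the gadget structure to beat the intended scaled weighted value. The worry is that the extra unweighted copies and test nodes give the online algorithm additional ``adaptivity handles''—it sees more realizations and can hedge in ways that have no analogue in the weighted instance. To control this one wants the gadget to be information-theoretically faithful: the only thing the online algorithm learns from the gadget's test arrivals is (a coarsening of) exactly the information the weighted optimum would condition on, so that by a data-processing / coupling argument the unweighted optimum is dominated by (scaled) weighted optimum plus the union bound loss across gadgets. Getting the gadget arrival probabilities to be simultaneously (a) constants bounded away from $0$, (b) faithful, and (c) efficiently describable will be the delicate balancing act; a secondary obstacle is ensuring the reduction is polynomial-time, i.e. that $w/\eps$ blow-up is polynomially bounded, which requires the weights in \cite{papadimitriou2021online} to be polynomially bounded (or rescaling them to be so without changing the hardness).
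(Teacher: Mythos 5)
Your plan reduces from the edge-weighted hardness of \cite{papadimitriou2021online} and tries to ``unweight'' it with gadgets; the paper instead bypasses the weighted matching problem entirely and reduces directly from bounded-degree \textsc{Stochastic-3-Sat} (also shown $\PSPACE$-hard to approximate in \cite{papadimitriou2021online}), building an unweighted instance $\calI_\phi$ with variable nodes arriving with probability $1$ or $\nicefrac{1}{2}$ and clause nodes arriving with a small \emph{constant} probability $p$, and then sandwiching $\opton(\calI_\phi)$ between $\frac{3n}{4}+p\cdot\textsc{Opt}-O(np^2)$ and $\frac{3n}{4}+p\cdot\textsc{Opt}$ using the bounded degree of $\phi$ to control the error term. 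So the routes are genuinely different, but yours has gaps that I do not think are repairable as stated.

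First, your step (i) rests on a premise that is likely false: the paper explicitly frames its contribution as showing the SAT-to-matching reduction ``works even without edge weights \emph{or small arrival probabilities},'' i.e.\ the original weighted hardness instance uses arrival probabilities that are \emph{not} bounded away from zero. Your gadgets would then have to fix both deficiencies at once, and you give no mechanism for boosting tiny arrival probabilities to constants. Second, and more fundamentally, the core gadget idea does not work in unweighted matching: a single online arrival contributes at most $1$ to the matching size no matter how many parallel offline copies it neighbors, so ``a bundle of $\Theta(w/\eps)$ unweighted offline copies'' cannot make one arrival worth $\approx w$. To simulate weight $w$ you would need $\Theta(w)$ \emph{additional online arrivals} whose matchability is conditioned on the original match occurring, and you give no such construction; moreover all that auxiliary machinery contributes a large additive baseline value obtainable by any policy, which dilutes the multiplicative hardness gap unless you separately prove the baseline is $O(\opton)$ of the original instance. (The paper's own reduction confronts exactly this dilution --- the signal $p\cdot\textsc{Opt}$ sits atop a baseline of $\frac{3n}{4}$ --- and survives only because $\textsc{Opt}\ge\frac{7}{8}C$ and $n\le 3C$ for 3-CNFs, an accounting step absent from your plan.) Your own flagged obstacle, the upper-bound/faithfulness direction of step (iii), is real but secondary to these two.
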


To prove our $\mathsf{PSPACE}$-hardness result, we make a connection with hardness of approximation for the bounded-degree stochastic SAT problem as in \cite{papadimitriou2021online}; our novel contribution is showing that such a reduction works even without edge weights or small arrival probabilities.

\begin{definition}[cf. \cite{papadimitriou1985games}]
    An instance of the $\textsc{Stochastic-3-Sat}$ problem consists of a 3-CNF $\phi$ with variables $x_1, x_2, \ldots, x_n$. At time $t = 1, 2, \ldots, n$, if $t$ is odd our algorithm $\mathcal{A}$ chooses whether to set $x_t$ to $\texttt{True}$ or $\texttt{False}$ while if $t$ is even, nature sets $x_t$ uniformly at random from $\{\texttt{True}, \texttt{False}\}$. The performance of our algorithm, $\mathcal{A}(\phi)$, is the expected number of satisfied clauses. 
\end{definition}

\begin{definition}[cf. \cite{papadimitriou2021online}]
    An instance of the $k\textsc{-Stochastic-3-Sat}$ problem consists of a \textsc{Stochastic-3-Sat} instance where each variable appears in at most $k$ clauses, and every randomly set variable $x_{2i}$ never appears negated in any clause of $\phi$. 
\end{definition}

We rely on the following hardness result. 

\begin{prop}[cf. \cite{papadimitriou2021online}] There exists a positive integer $k$ and absolute constant $\alpha < 1$ such that it is $\textup{\textsf{PSPACE}}$-hard to approximate  $k\textup{\textsc{-Stochastic-3-Sat}}$ within a factor $1-\alpha$. 
\end{prop}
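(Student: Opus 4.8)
The plan is to obtain $k\textsc{-Stochastic-3-Sat}$ from the ``$\PSPACE$ analogue of the PCP theorem'' through a chain of gap-preserving reductions. First I would invoke a \emph{constant-gap} hardness for general stochastic satisfiability: starting from the $\PSPACE$-completeness of \textsc{Stochastic-Sat} (Papadimitriou's games against nature \cite{papadimitriou1985games}) and applying the probabilistically-checkable-debate-system theorem of Condon, Feige and Lund --- equivalently, recasting the arithmetization/sum-check proof of $\mathsf{IP}=\PSPACE$ as a debate and then as a CSP --- one gets an absolute constant $\alpha_0>0$ and a polynomial-time map producing a (wide-clause) stochastic CNF with strictly alternating existential/random prefix such that the optimal strategy satisfies all clauses in the YES case and at most a $(1-\alpha_0)$ fraction (in expectation) in the NO case.

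Next I would bring this instance into the required normal form by three standard, gap-preserving CSP transformations. (i) \emph{Width three}: Tseitin-chain the wide clauses using fresh auxiliary variables, interleaving these as \emph{existential} moves and padding the intervening even time-steps with \emph{dummy} random variables occurring in no clause, so the strict odd/even alternation of \textsc{Stochastic-3-Sat} is preserved; falsifying a wide clause still forces falsifying one of its $O(1)$ replacements, and the clause count grows by an $O(1)$ factor. (ii) \emph{Bounded occurrence}: for any variable occurring $d>k_0$ times, replace its occurrences by copies $y_1,\dots,y_d$ and add equality clauses $(y_i\vee\neg y_j)\wedge(\neg y_i\vee y_j)$ along the edges of a constant-degree expander on $[d]$ --- by edge expansion, disagreeing on an $\eps$-fraction of copies falsifies $\Omega(\eps d)$ of these $O(d)$ added clauses, so the gap loses only a constant factor; for a random variable one instead keeps the single nature move and has the algorithm immediately set a cascade of existential copies tied to it (which it strictly prefers, as only the copies feed the clauses). (iii) \emph{Positive random variables}: replace each random variable by a positive-only random variable plus an existential ``interpreter'' set right afterward that carries the intended bit and its negation into the clauses via $O(1)$ gadget clauses, routing all negated literals through existential helpers; since each random variable now has $O(1)$ occurrences this costs only a constant factor, and the dummy randoms of (i) are positive-only vacuously. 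Composing everything yields some $k$ and $\alpha=\Omega(\alpha_0)$ as claimed; the full construction is the content of \cite{papadimitriou2021online}.

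The main difficulty is the first step: extracting a \emph{fixed, instance-independent} soundness gap for stochastic satisfiability is exactly the $\PSPACE$ version of the PCP theorem and is non-elementary, requiring interactive-proof/debate machinery. Everything afterwards is routine gap-preserving surgery; the two points that need care are maintaining the strict existential/random alternation of the quantifier prefix throughout (handled by the dummy-variable padding) and not reintroducing negated random literals (handled by the interpreter gadgets of step (iii)).
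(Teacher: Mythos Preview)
The paper does not prove this proposition at all: it is stated with the attribution ``cf.~\cite{papadimitriou2021online}'' and used as a black box, with no argument given. So there is no ``paper's own proof'' to compare your attempt against; you are reconstructing the content of the cited reference rather than anything in this paper.

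That said, your sketch is a reasonable outline of how such a result is obtained --- invoking probabilistically checkable debate systems (the $\PSPACE$ analogue of PCP) to get a constant gap, then pushing the instance into bounded-width, bounded-degree, positive-random-literal form via standard gadgetry --- and you correctly flag at the end that this is precisely what \cite{papadimitriou2021online} establishes. The one caveat is that your step (ii) for random variables (``keep the single nature move and have the algorithm immediately set a cascade of existential copies'') and step (iii) need to be checked carefully to preserve both the alternation pattern and the gap simultaneously; you wave at this but the details are where such reductions typically break. Since the present paper treats the proposition as imported, a one-line citation would have sufficed here.
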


Consider a $k\textsc{-Stochastic-3-Sat}$ instance $\phi$ with variables $\{x_1, x_2, \ldots, x_n\}$. Let $\mathcal{C}$ denote the clauses of $\phi$, and define $C := |\mathcal{C}|$. Additionally denote $\textsc{Opt} := OPT_{on}(\phi)$. We construct a corresponding online matching instance $\mathcal{I}_{\phi}$ which first has $n$ arrivals called ``variable nodes'' corresponding to $\{x_1, \ldots, x_n\}$. For odd $t \in [n]$, the corresponding variable node arrives with probability 1, and neighbors two offline nodes $\{T^t, F^t\}$. For even $t \in [n]$, the corresponding variable node arrives with probability $\nicefrac{1}{2}$, and neighbors exactly one offline node $\{F^t\}$.

The instance $\mathcal{I}_{\phi}$ then consists of stochastic ``clause nodes" all arriving with a common probability $p \le 0.1$. For each clause $C \in \mathcal{C}$, there is a corresponding stochastic arrival of degree at most 3. The neighborhood of the stochastic arrival corresponding to $C$ is constructed as follows: for each of the literals $l_i \in C$, if $l_i = x_i$ we add an edge to $F^i$ and if $l_i = \overline{x_i}$ we add an edge to $T_i$. (Note that this is always possible by our assumption that $\phi$ has no randomly set variable appear negated in any clause.) The stochastic arrivals are ordered in an arbitrary fashion (subject to all coming after the deterministic arrivals). The instance $\mathcal{I}_{\phi}$ is depicted in the following figure. 

\begin{figure}[h] 
    \centering
    \includegraphics[width=0.95\textwidth]{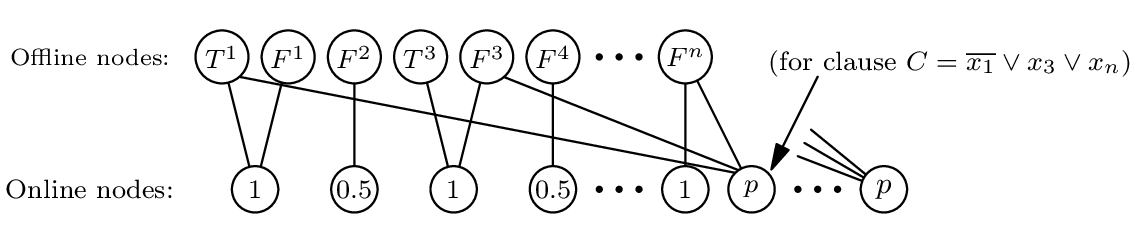}
    \caption{The instance $\calI_{\phi}$ for our $\textsf{PSPACE}$-hardness reduction }
    \vspace{0.2cm}
     \footnotesize{Bins are labeled by their corresponding literal, while balls are labeled by their arrival probability $p_t$.}
     \vspace{0.2cm}
    \label{fig:binreduction}
\end{figure}

There is a natural bijection between algorithms for the $\textsc{Stochastic-3-Sat}$ instance $\phi$, and matching algorithms for $\mathcal{I}_{\phi}$ which match every arriving variable node. As optimum online is one such algorithm, in \Cref{app:hardness} we are able to bound its performance on $\mathcal{I}_{\phi}$ in terms of $\textsc{Opt}$. Our gap in these bounds is due to the fact that multiple clause nodes could arrive, instead of one chosen uniformly at random. However, because offline nodes have low degrees, we show the amount of noise this introduces is of order $p^2$, while the signal that relates $\opton(\mathcal{I}_{\phi})$ to $\textsc{Opt}$ is of order $p$. Hence taking $p$ to be a sufficiently small constant suffices for our hardness of approximation result.

\bibliographystyle{alpha}
\bibliography{abb,a,ultimate}

\appendix

\section*{APPENDIX}

\section{Deferred Proofs of Section~\ref{sec:prelims}: \texorpdfstring{\eqref{LP-PPSW}}{The LP}}\label{app:prelims}

Another valid constraint for \eqref{LP-PPSW} is the \emph{offline degree constraint}, $\sum_{t} x_{i, t} \le 1$ for all $i\in \OFF$. However, this is subsumed by Constraint \eqref{eqn:PPSWConstraint}, as we now show, by 
generalizing a proof of \cite{torrico2022dynamic} for the special case $p_t=1/n$~for~all~$t$.

\begin{rem}\label{lem:degree-implicit} 
Constraint \eqref{eqn:PPSWConstraint}  implies that $\sum_{t'\leq t}x_{i,t'}\leq 1-\prod_{t'<t} (1-p_{t'}) \leq 1$ for all $i,t$.
\end{rem}
\begin{proof}
We generalize the proof of \cite{torrico2022dynamic} for the special case of $p_t=1/n$ for all $t$, as follows. 
By Constraint \eqref{eqn:PPSWConstraint}, $y_{i,t}=\sum_{t'<t}x_{i,t'}$ satisfies $$y_{i,t+1}=x_{i,t}+y_{i,t}\leq p_t(1-y_{i,t})+y_{i,t}.$$ Therefore,
$$1-y_{i,t+1}\geq (1-y_{i,t})(1-p_t).$$
Concatenating the above inequality for different $t$, we obtain that 
$$1-y_{i,t}\geq (1-y_{i,t-1})(1-p_{t-2})\geq (1-y_{i,t-2})(1-p_{t-2})(1-p_{t-3})\geq \dots \geq \prod_{t'< t} (1-p_{t'}).$$
Consequently, $y_{i,t} \leq 1-\prod_{t'<t} (1-p_{t'}),$ as claimed.
\end{proof}

\begin{rem}\label{rem:near-binary}
    The $nT$-dimensional \eqref{LP-PPSW} has $2nT+T$ constraints, $2nT$ of which are of the form $0\leq r_{i,t}$ or $r_{i,t}\leq 1$, for $r_{i,t}:=x_{i,t}/(1-\sum_{t'<t}x_{i,t'})$. Hence, in any basic feasible solution of \eqref{LP-PPSW} (having at least $nT$ tight constraints), 
    the vector $\vec{r}=(r_{i,t})_{i,t}$ 
    is ``near-binary'', having at most $T$ fractional entries; i.e., at most one $r_{i,t}$ is non-binary per time step, \emph{on average}.
\end{rem}

\section{Deferred Proofs of Section~\ref{sec:algo}: The Algorithm}\label{app:algo}

\obsmatchmarginals*
\begin{proof}
Define $O_{i,t}:=F_{i,t}-F_{i,t+1}$, i.e., the indicator for $i$ being matched or discarded at time $t$. Note $\sum_t O_{i,t}\leq 1$, by definition. Then, it suffices to show $O_{i,t}$ satisfies $\Pr[O_{i,t}]=x_{i,t}$, and hence $$\Pr[F_{i,t}]=1-\sum_{t'<t}\Pr[O_{i,t'}]=1-\sum_{t'<t}x_{i,t'} = 1-y_{i,t}.$$

We induct on $t$. Note that for $i$ to be occupied at time $t$ and not previously, it must hold that $i\in I_t \cap F_t$.
Taking total probability over histories $\calH$ implying $F_{i,t}$, 
and using Property \ref{level-set:marginals} of \Cref{lem:SR}, we have that
$$\Pr[i \in I_t \mid F_{i,t}] = \sum_{\calH \to F_{i,t}} \Pr[i\in I_t \mid \calH] \cdot \Pr[\calH \mid F_{i,t}] = \sum_{\calH \to F_{i,t}} r_{i,t} \cdot \Pr[\calH \mid F_{i,t}] = r_{i,t}.$$

Next, conditioned on $i \in I_t$, the probability that $t$ is occupied in timestep $t$ is precisely $p_t$: if $i = i^*_t$, then $i$ is matched if $t$ arrives, which occurs with probability $p_t$, and if $i \neq i^*_t$, then $i$ is discarded with probability $p_t$. The inductive step then follows, by definition of $r_{i,t}=\frac{x_{i,t}}{p_t\cdot \left(1-\sum_{t'<t}x_{i,t'}\right)}$. 
\begin{align*}
\Pr[O_{i,t}] & = 
\Pr[O_{i,t} \mid i\in I_t] \cdot \Pr[i\in I_t \mid F_{i,t}] \cdot \Pr[F_{i,t}] = 
p_t \cdot r_{i,t} \cdot \left( 1 - \sum_{t' < t} x_{i,t'} \right) = x_{i,t}.\qedhere
\end{align*}

\end{proof}

\ncd*
To prove \Cref{lem:NCD}, we use the following fact from \cite{braverman2022max}, for which we provide a simple probabilistic proof for completeness. (We note that this is a special case of \Cref{fact:independent-buckets}). 

\begin{fact}\label{fact:independent-coins}
If $X$ is a random subset of $[n]$, and $0 \le p_1, p_2, \ldots, p_n \le 1$, then $$\sum_{J \subseteq [n]} \left( \Pr[X = J] \cdot \prod_{i \in J} (1 - p_i) \right) = \sum_{J \subseteq [n]} \left( \Pr[X \subseteq J]\cdot \prod_{i \in J} (1 - p_i) \cdot \prod_{i \not\in J} p_i  \right).$$
\end{fact}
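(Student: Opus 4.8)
The statement to prove is \Cref{fact:independent-coins}: for a random subset $X\subseteq[n]$ and reals $p_1,\dots,p_n\in[0,1]$,
$$\sum_{J\subseteq[n]}\Pr[X=J]\cdot\prod_{i\in J}(1-p_i)=\sum_{J\subseteq[n]}\Pr[X\subseteq J]\cdot\prod_{i\in J}(1-p_i)\cdot\prod_{i\notin J}p_i.$$
This is exactly the same identity as \Cref{fact:probabilisticfact2} (with the roles of ``$c_i$'' and ``$1-p_i$'' swapped in naming), so the plan is to recycle the one-line probabilistic proof given there: introduce an independent auxiliary experiment and observe that both sides compute the probability of the same event.

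Concretely, I would first draw $X$ from its given distribution, and then, independently, for each $i\in[n]$ flip a biased coin $Y_i$ that comes up \emph{tails} with probability $p_i$ and \emph{heads} with probability $1-p_i$. Consider the event $\mathcal{E}$ that every element of $X$ flipped heads, i.e. $Y_i=\text{heads}$ for all $i\in X$. Conditioning on the realization $X=J$, the coins are independent of $X$, so $\Pr[\mathcal{E}\mid X=J]=\prod_{i\in J}(1-p_i)$; summing over $J$ gives the left-hand side as $\Pr[\mathcal{E}]$. For the right-hand side, instead condition on the set $H:=\{i:Y_i=\text{heads}\}$ of heads-coins, which is a fixed (non-random relative to $X$) random set with $\Pr[H=J]=\prod_{i\in J}(1-p_i)\cdot\prod_{i\notin J}p_i$; given $H=J$, the event $\mathcal{E}$ holds precisely when $X\subseteq J$, so $\Pr[\mathcal{E}\mid H=J]=\Pr[X\subseteq J]$ by independence. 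Summing over $J$ again yields $\Pr[\mathcal{E}]$, now written as the right-hand side, and the identity follows.

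There is no real obstacle here; the only thing to be careful about is the bookkeeping of which side conditions on $X$ and which conditions on the coin outcomes, and making sure the independence of the coins from $X$ is invoked in both directions. Since the paper already flags this as ``a special case of \Cref{fact:independent-buckets},'' an alternative (equally short) route would be to simply cite that more general fact; but the self-contained coin argument above is the cleaner presentation.

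\begin{proof}
Draw $X$ from its distribution and, independently, flip for each $i\in[n]$ a coin that is heads with probability $1-p_i$ (and tails with probability $p_i$). Let $\mathcal{E}$ be the event that every $i\in X$ flipped heads. Conditioning on $X=J$: by independence of the coins from $X$, $\Pr[\mathcal{E}\mid X=J]=\prod_{i\in J}(1-p_i)$, so $\Pr[\mathcal{E}]=\sum_{J\subseteq[n]}\Pr[X=J]\cdot\prod_{i\in J}(1-p_i)$, the left-hand side. On the other hand, let $H=\{i:i\text{ flipped heads}\}$; then $\Pr[H=J]=\prod_{i\in J}(1-p_i)\cdot\prod_{i\notin J}p_i$, and conditioning on $H=J$ gives $\Pr[\mathcal{E}\mid H=J]=\Pr[X\subseteq J]$ by independence, so $\Pr[\mathcal{E}]=\sum_{J\subseteq[n]}\Pr[X\subseteq J]\cdot\prod_{i\in J}(1-p_i)\cdot\prod_{i\notin J}p_i$, the right-hand side. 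Equating the two expressions for $\Pr[\mathcal{E}]$ proves the claim.
\end{proof}
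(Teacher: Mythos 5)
Your proof is correct and is essentially the paper's own argument: both introduce independent auxiliary coins and compute the probability of the event ``every element of $X$ landed on the favorable side'' in two ways, once conditioning on $X$ and once on the coin outcomes (the paper phrases the coins as heads with probability $p_i$ and the event as all-tails, which is the same up to relabeling). Your version just spells out the two conditionings more explicitly; no gap.
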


\begin{proof}
Consider having each element $i \in [n]$ independently toss a biased coin which lands heads with probability $p_i$, and then realizing $X$. Both the LHS and RHS compute the probability that every element in $X$ flipped tails. 
\end{proof}

\begin{proof}[Proof of \Cref{lem:NCD}]
We apply induction on $t$; assume these inequalities hold for $t$. Fix some $I \subseteq \OFF$. Note that all nodes in $I$ are free at time $t+1$ if and only if they were all free at time $t$, and every node in $I$ that proposed at time $t$ was neither matched (in \Cref{line:match-end}) nor discarded (in \Cref{line:loop-end}), i.e., every such node failed an independent $\Ber(p_t)$ coin flip. Therefore, by summing over every subset $J\subseteq I$ that could propose (i.e., letting $I_t\cap I=J$), we have, using \Cref{fact:independent-coins}, that
\begin{align}
    \Pr \left[ \bigwedge_{i \in I} F_{i,t+1} \right] &=  \Pr \left[ \bigwedge_{i \in I} F_{i,t} \right] \cdot \sum_{J \subseteq I} \Pr\left[I_t \cap I = J \bigmid \bigwedge_{i \in I} F_{i,t}\right] \cdot (1 - p_t)^{|J|}  \nonumber \\ 
     &= \Pr \left[ \bigwedge_{i \in I} F_{i,t} \right] \cdot \sum_{J \subseteq I}  \Pr \left[ I_t\cap I \subseteq J  \bigmid \bigwedge_{i \in I} F_{i,t} \right]  \cdot (1 - p_t)^{|J|} \cdot p_t^{|I| - |J|} . \label{eqn:midinductionncd}
\end{align}

Next, observe that for any fixed history $\mathcal{H}$ up to time $t$, we have that $\{ [i \in I_t \mid \mathcal{H} ]\}$ are negative cylinder dependent, by Property \ref{level-set:neg-corr} of \Cref{lem:SR}, and so
$$\Pr \left[ I_t\cap I\subseteq J  \mid \mathcal{H} \right] = \Pr\left[\bigwedge_{i\in I\setminus J} (i\notin I_t) \bigmid \mathcal{H}\right] \le \prod_{i \in I \setminus J} \Pr\left[i \notin I_t \bigmid \mathcal{H}\right] = \prod_{i \in I \setminus J} (1 - r_{i,t}),$$ 
where in the final equality we used Property \ref{level-set:marginals} of \Cref{lem:SR}. Therefore, by total probability, 
$$ \Pr \left[ I_t\cap I\subseteq J  \bigmid \bigwedge_{i \in I} F_{i,t} \right] \le \prod_{i \in I \setminus J} (1 - r_{i,t}).$$

The above lets us simplify \eqref{eqn:midinductionncd} to prove the desired negative upper cylinder dependence.
\begin{align*}
    \Pr \left[ \bigwedge_{i \in I} F_{i,t+1} \right] &\le \Pr \left[ \bigwedge_{i \in I} F_{i,t} \right] \cdot \sum_{J \subseteq I} \prod_{i \in I \setminus J} (1 - r_{i,t}) \cdot (1 - p_t)^{|J|} \cdot p_t^{|I| - |J|}  \\
    &= \Pr \left[ \bigwedge_{i \in I} F_{i,t} \right] \cdot \prod_{i \in I} \Big( (1 - p_t) + p_t \cdot (1 - r_{i,t}) \Big) && \textrm{multi-binomial theorem} \\
    &= \Pr \left[ \bigwedge_{i \in I} F_{i,t} \right] \cdot \prod_{i \in I} \Big( 1 - \frac{x_{i,t}}{1 - y_{i,t}} \Big) && \textrm{Def. $r_{i,t}$} \\
    &\le \prod_{i \in I} \Pr \left[ F_{i,t} \right] \cdot \prod_{i \in I} \Big(\frac{1-y_{i,t+1}}{1 - y_{i,t}} \Big) && \text{I.H. + Def. $y_{i,t}$} \\
    &= \prod_{i \in I} \Pr[F_{i,t+1}]. && \text{\Cref{cor:Fit}}
\end{align*}

The upper bound on the probability of $\bigwedge_{i \in I} \overline{F_{i,t}}$ similarly inducts on $t$; it proceeds very similarly with the exception that it requires upper bounding the probability that a group of nodes all propose (as opposed to a group all not proposing), which again we can upper bound using properties \ref{level-set:marginals} and \ref{level-set:neg-corr} of \Cref{lem:SR}. For the sake of completeness, the details are included below. 

We apply induction on $t$; assume this holds for any $t' \le t$. Fix some $I \subseteq \OFF$. Note for $\wedge_{i \in I} \overline{F_{i,t+1}}$ to hold, every free node in $I$ at time $t$ must be matched or discarded. This happens if and only if they all propose and each pass a $\Ber(p_t)$ coin flip. By total probability,

$$ \Pr \left[ \bigwedge_{i \in I} \overline{F_{i,t+1}} \right] = \sum_{J \subseteq I} \Pr \left[ F_t \cap I = J \right] \cdot \Pr[I_t \cap I \supseteq J \mid F_t \cap I = J] \cdot p_t^{|J|}.$$

For any fixed $J \subseteq I$, and fixed history $\mathcal{H}$ up to time $t$ which implies $F_t \cap I = J$, because our proposals are negative cylinder dependent (Property \ref{level-set:neg-corr} of \Cref{lem:SR}) we have that
$$  \Pr[I_t \cap I \supseteq J \mid \mathcal{H} ] = \Pr \left[  \bigwedge_{j \in J} j \in I_t \;\middle\vert\; \mathcal{H} \right] \le \prod_{j \in J} \Pr[ j \in I_t \mid \mathcal{H}] = \prod_{j \in J} r_{j,t}.$$ Note here we use the opposite direction of negative cylinder dependence from that used in the first part of the lemma. Hence, by total probability, we can see that 
$$\Pr[ I_t \cap I \supseteq J \mid F_t \cap I = J] \le \prod_{j \in J} r_{j,t}.$$
Similar to before, we obtain the second upper bound.
\begin{align*}
    & \Pr \left[ \bigwedge_{i \in I} \overline{F_{i,t+1}} \right]\\
    \leq & \sum_{J \subseteq I} \Pr \left[ F_t \cap I = J \right] \cdot \prod_{j \in J} (p_t \cdot r_{j,t}) \\
    =& \sum_{J \subseteq I} \Pr \left[ F_t \cap I \subseteq J \right] \cdot \prod_{j \in J} (p_t \cdot r_{j,t}) \cdot \prod_{j \in I \setminus J} (1 - p_t \cdot r_{j,t}) && \text{\Cref{fact:independent-coins}} \\
    =& \sum_{J \subseteq I} \Pr \left[ \bigwedge_{j \in I \setminus J} \overline{F_{j,t}}  \right] \cdot \prod_{j \in J} (p_t \cdot r_{j,t}) \cdot \prod_{j \in I \setminus J} (1 - p_t \cdot r_{j,t}) && \\
    \le& \sum_{J \subseteq I} \left(\prod_{j \in I \setminus J} \Pr[\overline{F_{j,t}}] \right) \cdot \prod_{j \in J} (p_t \cdot r_{j,t}) \cdot \prod_{j \in I \setminus J} (1 - p_t \cdot r_{j,t}) && \text{I.H.} \\
    =& \sum_{J \subseteq I} \left(\prod_{j \in I \setminus J} y_{j,t}  \cdot \left(1 - \frac{x_{j,t}}{1 - y_{j,t}} \right)  \right) \cdot \prod_{j \in J} \left( \frac{x_{j,t}}{1 - y_{j,t}} \right)  &&  \text{\Cref{obs:matchmarginals}} \\
    =& \prod_{j \in I} \left( y_{j,t} \cdot \left(1 - \frac{x_{j,t}}{1 - y_{j,t}} \right)  +  \left( \frac{x_{j,t}}{1 - y_{j,t}} \right) \right)  && \textrm{multi-binomial theorem}  \\
    =& \prod_{j \in I} \left( y_{j,t} + x_{j,t} \right) \\
    =& \prod_{j \in I} y_{j,t+1} \\
     =&  \prod_{j \in I} \Pr[\overline{F_{i,t+1}}]. && \textrm{\Cref{cor:Fit}} \qedhere
\end{align*}
\end{proof}

\section{Deferred Proofs of Section~\ref{sec:Emin1X}: Bounds on \texorpdfstring{$\E[\min(1,X)]$}{E[min(1,X)]}}\label{app:Emin1X}

\begin{rem}\label{claim:indprop1-1/e}
\Cref{lem:basic-independent-proposal-bound} implies that \Cref{alg:proposals-core} run on $\textup{OPT}(\textup{\ref{LP-PPSW}})$ has a $(1-1/e)$-approximation ratio, but no better.
\end{rem}

\begin{proof}
    Note that at every timestep $t$, we have 
    \begin{align*}
    \frac{\E[\min(1,R_t)]}{\E[R_t]} &\ge \frac{ 1-\prod_{i} \left(1-x_{i,t}/p_t \right)}{\sum_i x_{i,t}/p_t} \\
    &\ge \frac{ 1-\prod_{i } \exp(-x_{i,t}/p_t)}{\sum_i x_{i,t}/p_t} \\
    &= \frac{ 1- \exp(- \sum_i x_{i,t}/p_t)}{\sum_i x_{i,t}/p_t} \\
    &\ge 1-1/e. && \text{concavity of } 1-\exp(-x)
     \end{align*}
     So, by applying \Cref{lem:convexavgacrosst}, we have that \Cref{alg:proposals-core} is $(1-1/e)$-approximate. \Cref{lem:basic-independent-proposal-bound} cannot prove a better approximation ratio by itself; consider an instance with $n$ offline nodes, a single arrival at $t=1$ with probability 1 neighboring all offline nodes, and $x_{i,1} = 1/n$ for every $i \in [n]$.
\end{proof}

\subsection{The bucketing bound}

\indproposalsbucketed*

We first generalize our probabilistic facts about real numbers $0 \le c_1, c_2, \ldots c_n \le 1$ when they are grouped into a collection of ``buckets'' $\mathcal{B}$ partitioning $[n]$ such that $\sum_{i \in B} c_i \le 1$ for every bucket $B \in \mathcal{B}$. We note that our results in the paper body can be recaptured by taking singleton buckets. 

\begin{restatable}{fact}{unionboundbuckets} \label{fact:unionbound-buckets}
If $0 \le c_1, c_2, \ldots c_n \le 1$, then for any partition $\calB$ of $[n]$ into buckets such that $\sum_{i \in B} c_i \le 1$ for every $B \in \mathcal{B}$, we have that $$\min \left( 1, \,\sum_{i=1}^n c_i \right) \ge 1 - \prod_{B \in \calB} \left( 1 - \sum_{i \in B } c_i \right).$$
\end{restatable}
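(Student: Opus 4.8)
The plan is to reduce the bucketed statement to the singleton case, Fact~\ref{fact:unionbound}, which I may assume. For each bucket $B \in \calB$, set $C_B := \sum_{i \in B} c_i$; by the bucket hypothesis $C_B \in [0,1]$, so the finite list of numbers $(C_B)_{B \in \calB}$ is a valid input to Fact~\ref{fact:unionbound}. Applying that fact to these numbers gives $\min\bigl(1, \sum_{B \in \calB} C_B\bigr) \ge 1 - \prod_{B \in \calB}(1 - C_B)$. Since $\calB$ partitions $[n]$, we have $\sum_{B \in \calB} C_B = \sum_{i=1}^n c_i$; substituting yields exactly the claimed inequality.

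Alternatively — and this mirrors the probabilistic proof of Fact~\ref{fact:unionbound} recorded in the excerpt — I would argue directly. For each bucket $B$, flip an independent biased coin that is heads with probability $C_B = \sum_{i \in B} c_i$ (legitimate precisely because $C_B \le 1$). The event ``at least one coin comes up heads'' has probability $1 - \prod_{B \in \calB}(1 - C_B)$. On the other hand, by the union bound this probability is at most $\sum_{B \in \calB} C_B = \sum_{i=1}^n c_i$, and being a probability it is also at most $1$. Hence $1 - \prod_{B \in \calB}(1-C_B) \le \min\bigl(1, \sum_{i=1}^n c_i\bigr)$, as desired.

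There is essentially no obstacle here: the only points to verify are that $C_B \le 1$ for every $B$ (so that the reduction's input, or equivalently the coin probabilities, are legitimate), which is exactly the bucket hypothesis, and that the partition property makes the bucket sums reassemble into $\sum_{i=1}^n c_i$. I would also remark, as the excerpt does, that taking all buckets to be singletons recovers Fact~\ref{fact:unionbound} verbatim, so this is a strict generalization obtained at no extra cost.
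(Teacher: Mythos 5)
Your proof is correct and matches the paper's: the paper's argument picks at most one index per bucket (with probability $c_i$ each), which is exactly your coin with heads probability $C_B = \sum_{i\in B} c_i$, and your first variant (reducing to the singleton Fact~\ref{fact:unionbound} applied to the numbers $C_B$) is a clean, equivalent repackaging of the same idea. No gaps.
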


\begin{proof}
    For each bucket $B \in \mathcal{B}$, using that $\sum_{i \in B} c_i \le 1$, we independently pick at most one index $i \in B$ such that $\Pr[i \text{ realized}] = c_i$. 
    By independence across buckets, the RHS is exactly the probability that at least one index is picked, which is upper bounded by the LHS, 
    by the~union~bound. 
\end{proof}

\begin{restatable}{fact}{independentbuckets}\label{fact:independent-buckets}
Let $S$ be a random subset of $[n]$ and $0 \le c_1, c_2, \ldots, c_n \le 1$. Additionally, let $\calB$ be a partition of $[n]$ such that each bucket $B \in \calB$ satisfies $\sum_{i \in B} c_i \le 1$. Then, $$\sum_{J \subseteq [n]} \Pr[S = J] \cdot \prod_{B \in \calB} \left( 1 - \sum_{i \in B \cap J} c_i \right) = \sum_{\substack{J \subseteq [n]\\|B\setminus J|\leq 1 \;\forall B\in \calB}} \Pr[S \subseteq J]\cdot \prod_{i\not\in J} c_i\cdot \prod_{B\subseteq J} \left(1-\sum_{i\in B}  c_i\right).$$
\end{restatable}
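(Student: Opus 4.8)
The plan is to mirror the probabilistic proofs of Facts~\ref{fact:probabilisticfact2} and~\ref{fact:independent-coins}, but now with one ``coin'' per bucket rather than per element. Concretely, for each bucket $B \in \calB$, since $\sum_{i \in B} c_i \le 1$, I would introduce an independent random variable $Y_B$ taking value $i \in B$ with probability $c_i$ and a dummy value $\bot \notin [n]$ with the remaining probability $1 - \sum_{i\in B} c_i$, all chosen independently of the random set $S$. Set $T := \{Y_B : B\in\calB,\ Y_B \neq \bot\}$; by construction $|T\cap B|\le 1$ for every bucket. The claim is that \emph{both} sides of the asserted identity equal $\Pr[S\cap T=\emptyset]$, i.e.\ the probability that no ``selected'' index lands in $S$.

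For the left-hand side, condition on $S=J$. By independence of $S$ and the $Y_B$'s, $\Pr[S\cap T=\emptyset\mid S=J]=\Pr[T\cap J=\emptyset]$, and since the $Y_B$ are independent and $T$ meets each bucket at most once, this factorizes as $\prod_{B\in\calB}\Pr[Y_B\notin J]=\prod_{B\in\calB}\bigl(1-\sum_{i\in B\cap J}c_i\bigr)$. Averaging over $J$ with weights $\Pr[S=J]$ recovers the LHS exactly.

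For the right-hand side, instead condition on $T=K$. The sets $K$ occurring with positive probability are precisely those with $|K\cap B|\le 1$ for all $B$; writing $J=[n]\setminus K$ this is exactly the summation condition $|B\setminus J|\le 1$ for all $B$. For such a $K$, independence of the $Y_B$'s gives
\[
\Pr[T=K]\;=\!\!\prod_{B:\,B\cap K=\{i_B\}}\!\! c_{i_B}\ \cdot\!\!\prod_{B:\,B\cap K=\emptyset}\!\!\Bigl(1-\sum_{i\in B}c_i\Bigr)\;=\;\prod_{i\notin J}c_i\cdot\!\!\prod_{B\subseteq J}\!\!\Bigl(1-\sum_{i\in B}c_i\Bigr),
\]
where $i_B$ denotes the unique element of $B\cap K$ and we used that $B\cap K=\emptyset$ is equivalent to $B\subseteq J$ (together with the fact that every element of $K$ lies in exactly one bucket). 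Again by independence, $\Pr[S\cap T=\emptyset\mid T=K]=\Pr[S\cap K=\emptyset]=\Pr[S\subseteq J]$, so summing over the admissible $K$ (equivalently, the admissible $J$) reconstructs the RHS. Since both expressions equal $\Pr[S\cap T=\emptyset]$, the identity follows.

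The only care required is the bookkeeping: tracking the complementation bijection between admissible $K$ and admissible $J$, and checking that the two cases for a bucket ($B$ meets $K$ in a single element versus not at all) match the factors $\prod_{i\notin J}c_i$ and $\prod_{B\subseteq J}(1-\sum_{i\in B}c_i)$ appearing in the statement. There is no substantive obstacle beyond this; specializing to singleton buckets immediately recovers Fact~\ref{fact:independent-coins} (and Fact~\ref{fact:probabilisticfact2}).
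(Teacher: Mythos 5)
Your proof is correct and follows essentially the same route as the paper's: both introduce one independent "at most one active index" draw per bucket (possible since $\sum_{i\in B}c_i\le 1$) and show that the two sides each compute the probability that no active index lands in $S$, with the LHS obtained by conditioning on $S$ and the RHS by conditioning on the active set. Your write-up merely makes explicit the complementation bijection between the active set $K$ and the index set $J=[n]\setminus K$ that the paper leaves implicit.
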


\begin{proof}
Again, for each bucket $B \in \mathcal{B}$ independently, we  pick at most one index $i \in B$ such that $\Pr[i \text{ realized}] = c_i$ (using that $\sum_{i \in B} c_i \le 1$). 
The LHS and RHS both count the probability that every element in $S$ is inactive. 
In particular, the RHS expresses that $S$ is contained in the set of inactive elements $J$: either all elements in bucket $B$ belong to $J$ and $B$ therefore has no active element (w.p. $1-\sum_{i\in B}c_i$) or all elements in bucket $B$ except for a single active element $i$ (w.p. $c_i$), and these choices for active $B\cap J$ are independent across buckets $B\in \calB$.
\end{proof}

We are now ready to generalize our proof of \Cref{lem:basic-independent-proposal-bound} and thus prove \Cref{lem:bucketed-independent-proposal-bound}.
\begin{proof}
[Proof of \Cref{lem:bucketed-independent-proposal-bound}] 
Let $S:=\{i \mid X_i=1\}$ and $T:=\{J\subseteq \OFF \mid \,|B\setminus J|\leq 1\,\,\forall B\in \calB\}$. Then,
\begin{align}
\E[\min(1,X)] &  = \sum_{J \subseteq \OFF} \Pr[S = J] \cdot \min \left(1,\, \sum_{i \in J} c_i \right)  \nonumber \\
    &\ge \sum_{J \subseteq \OFF} \Pr[S = J] \cdot \left( 1 - \prod_{B \in \calB} \left( 1 - \sum_{i \in J \cap B} c_i \right) \right) && \text{\Cref{fact:unionbound-buckets}} \nonumber \\
    &=  1 - \sum_{J \subseteq \OFF} \Pr[S = J] \cdot  \prod_{B \in \calB} \left( 1 - \sum_{i \in J \cap B} c_i \right) \nonumber \\
    &= 1 - \sum_{J\in T} \left( \Pr[S \subseteq J] \cdot \prod_{i\not\in J}  c_i \cdot  \prod_{B\subseteq J} \left(1 - \sum_{i\in B} c_i\right) \right)  && \text{\Cref{fact:independent-buckets}} \nonumber \\
    & \geq 1 - \sum_{J\in T} \left(\prod_{i\not\in J} (1-q_i)\cdot c_i \right) \cdot \prod_{B\subseteq J} \left(1 - \sum_{i\in B} c_i \right)   &&  \{X_i\sim \Ber(q_i)\}\; \text{ NCD} \label{line:bucketingNCD} \\ 
    & =  1 - \sum_{\calB' \subseteq \calB} \left( \prod_{B \in \calB\setminus \calB'} \left( \sum_{i \in B} (1-q_i)\cdot c_i \right) \cdot \prod_{B \in \calB'} \left(1 - \sum_{i\in B} c_i  \right)\right) && \calB' := \{ B : B \subseteq J \}  \nonumber \\ 
    & =  1 - \prod_{B\in \calB} \left(1 - \sum_{i\in B} c_i  + \sum_{i \in B} c_i \cdot (1-q_{i}) \right)  \nonumber &&  \textrm{multi-binomial theorem} \\ 
    & =  1 - \prod_{B\in \calB} \left(1 - \sum_{i\in B} c_i\cdot q_i\right). \nonumber  &&  \qedhere 
    \end{align}
\end{proof}

\subsection{The fractional bucketing bound}

\amgmforbucketing*
\begin{proof}
    First, if $q_i = 1 - \theta$ for all $i$, and $w < 1$, then $\frac{\mu}{1-\theta} = w + |A|$, so $|A| = \lfloor \frac{\mu}{1-\theta} \rfloor$ and $w = \Big\{ \frac{\mu}{1-\theta} \Big\}.$ Hence, the inequality holds with equality.  

Next, we prove the inequality for $w = 1$ but arbitrary $\{q_i\}_{i\in A}$; note that by relabeling, it suffices to prove the inequality for $c=0$. Define $q := \sum_{i \in A} q_i/|A|$ and note that $q \ge 1-\theta$ and that $\frac{\ln(x)}{1-x}$ is increasing for $x \in (0,1)$. Thus, $\frac{\ln(1-q)}{q} \le \frac{\ln(\theta)}{1-\theta}$, implying that $(1-q)^{1-\theta} \le \theta^{q}$. We therefore have
\begin{align*}
     \prod_{i \in A} (1 - q_i) &\le (1-q)^{|A|} && \text{AM-GM} \\
    &\le  \theta^{\frac{|A|q}{1-\theta}} &&  (1-q)^{1-\theta} \le \theta^{q} \\
    &=  \theta^{ \frac{\mu}{1-\theta} } \\
    &= \theta^{ \lfloor \frac{\mu}{1-\theta} \rfloor } \cdot \theta^{\{ \frac{\mu}{1-\theta} \}} \\
    &\le \theta^{ \lfloor \frac{\mu}{1-\theta} \rfloor } \cdot \left(1 - (1-\theta) \Big\{ \frac{\mu}{1-\theta} \Big\}  \right), && x^a \le 1 - (1-x)a \text{ for } x,a \in [0,1]
\end{align*} 
which concludes the proof for $c = 0$, and hence for $c=1$.

    Suppose then that there exists some index $j \in A$ such that $q_j > 1 - \theta$ and $c\neq 1$. 
    We transform our instance $(\{q_i\}_{i \in A}, c)$ into a new instance $(\{q_i'\}_{i \in B}, c')$ which (I) preserves $\mu$, i.e., $\mu(c',\vec{q}')=\mu(c,\vec{q})$, thus preserving the RHS of the desired inequality, and (II) does not increase the LHS of the desired inequality,  and, crucially, (III) is of one of the preceding simpler forms for which the inequality was proven above. This then implies the desired inequality.
    $$(1 - c\cdot(1 - \theta)) \cdot \prod_{i \in A} (1 - q_i) \le (1 - c'(1 - \theta)) \cdot \prod_{i \in A} (1 - q_i')\leq \left( 1 - (1-\theta) \cdot \Big\{ \frac{\mu}{1-\theta} \Big\} \right) \cdot \theta^{\lfloor \frac{\mu}{1-\theta} \rfloor}.$$
    
    The transformation is as follows:
    set $q_j' = q_j - \epsilon$ for some $\epsilon = \min(q_j - (1-\theta), (1-c)(1-\theta))$. For all $i \in A \setminus \{j\}$, set $q_i' = q_i$, and set $c' = c + \frac{\epsilon}{1-\theta}.$ Note that $\mu$ is preserved under this transformation. Note additionally that by our choice of $\epsilon$, $q_j' \ge 1-\theta$ and $c' \le 1$, and at least one of these inequalities is tight. We will show $$(1 - c\cdot(1 - \theta)) \cdot \prod_{i \in A} (1 - q_i) \le (1 - c'(1 - \theta)) \cdot \prod_{i \in A} (1 - q_i').$$
    After this inequality is established, we can successively apply these transformations until (i) $c=1$, or (ii) every $q_i = 1 - \theta$, and appeal to our previous analysis for the relevant case. Our desired inequality is equivalent to 
    $$(1 - c\cdot(1 - \theta)) \cdot (1-q_j) \le \left( 1 - c\cdot(1-\theta) + \eps \right) \cdot \left( 1- q_j + \epsilon \right),$$ 
    which holds since $(1-a)\cdot (1-b) \leq (1-a-\eps)\cdot (1-b + \eps)$ for real $a,b$ satisfying $\eps\leq b-a$, and indeed in our case $a=c\cdot(1-\theta)$ and $b=q_j$ satisfy $\eps\leq q_j - (1-\theta) \leq q_j- c\cdot(1-\theta)$, as $c\leq 1$. 
\end{proof}

\subsection{The variance bound}
\varbound*
\begin{proof}
We lower and upper bound the LHS and RHS by considering an alternative random variable $X'$, defined as follows:
Define $p := \Pr[X \le 1]$; note that $0 < p < 1$: the lower bound holds since $\E[X]\leq 1$ and the upper bound holds WLOG, since otherwise $\E[\min(1,X)]=\E[X]$. Define the random variables $Y := [X \mid X \le 1]$ and $Z := [X \mid X > 1]$.\footnote{If $\Pr[X > 1] = 0$, the inequality clearly holds as $\E[\min(1,X)] = \E[X].$} Finally, let the random variable $X'$ take value $\E[Y]$ with probability $p$ and $\E[Z]$ with probability $1 - p$. Note first that $X$ and $X'$ share the same expectation and truncated expectation:
\begin{align*}
& \E[X] = p \cdot \E[X \mid X \le 1] + (1-p) \cdot \E[X \mid X > 1] =  \E[X']\\
& \E[\min(1, X)] = p \cdot \E[X \mid X \le 1] + (1-p) \cdot 1 =  \E[\min(1,X')].
\end{align*}
Additionally, $X'$ is less variable than $X$:
\begin{align*}
\Var(X) &= p \cdot \E[(X-\E[X])^2 \mid X \le 1] + (1- p) \cdot \E[(X - \E[X])^2 \mid X > 1] \\
&= p \cdot \E[ (Y - \E[X])^2 ] + (1- p) \cdot \E[ (Z - \E[X])^2] \\
&\ge p \cdot (\E[Y] - \E[X])^2 + (1- p) \cdot (\E[Z] - E[X])^2 &&\text{Jensen's inequality} \\
&= \Var(X').
\end{align*}
It therefore suffices to prove our desired inequality for the two-point distribution $X'$, which by the above would imply the same inequality for $X$, as follows.
$$\E[\min(1,X)] = \E[\min(1,X')]\geq 
\E[X'] - \frac{1}{2}\cdot \Std(X')\cdot \sqrt{\E[X']} \geq \E[X] - \frac{1}{2}\cdot \Std(X)\cdot \sqrt{\E[X]}.$$

We now prove the desired inequality for $X'$. For ease of notation, define $a := \E[Y]$ and $b := \E[Z]$. 
The desired inequality is therefore as follows:
\begin{align*}
    ap + (1-p) & \ge ap + (1-p)b - \frac{1}{2} \cdot \sqrt{ a^2p + b^2(1-p) - (ap + b(1-p))^2 } \cdot \sqrt{ap + b(1-p)},
\end{align*}
which after some simplification is equivalent to
$$(1-p)(1-b) + \frac{1}{2} \cdot (b-a) \cdot \sqrt{p-p^2} \cdot \sqrt{ap + b(1-p)} \ge 0.$$ 
Recalling that $p<1$ and dividing by $1-p$ results in the equivalent inequality 
$$
1-b + \frac{1}{2} \cdot (b-a) \cdot \sqrt{p} \cdot \sqrt{a\cdot\frac{p}{1-p} + b} \ge 0.$$
We note that the LHS is increasing in $p$, since $b-a \ge 0$ and both $\sqrt{p}$ and $\frac{p}{1-p}$ are increasing for $p \in [0,1)$. Moreover, since $\E[X'] = ap + b(1-p) \le 1$, we have that $p \ge \frac{1-b}{a-b}$. Thus, it suffices to confirm the inequality for $p=(1-b)/(a-b)$, for which this inequality becomes
$$1-b + \frac{1}{2}\cdot (b-a) \sqrt{\frac{1-b}{a-1}}\geq 0.$$ 
For $x := 1 -a \ge 0$ and $y:= b - 1 \ge 0$, dividing through by $\sqrt{\frac{1-b}{a-1}}$ and re-arranging, this is equivalent to $$\frac{1}{2}\cdot(x+y) \ge \sqrt{xy},$$ which holds by the AM-GM inequality. The lemma follows. 
\end{proof}

\subsection{Incomparability of bounds}\label{sec:incomparable}

Our main lower bounds on $\E[\min(1,X)]$ are given by Lemmas \ref{lem:fracbucketingbound} and \ref{lem:Emin1X-variance}, restated below for ease of reference.
We briefly provide examples showing that these two are incomparable, with either one dominating the other for some instances.

\fracbucketing*
\varbound*

\begin{example}[Variance may dominate bucketing] 
Consider an instance with $n=1/\epsilon$ and $c_i=2\epsilon$ and $q_i=1/2$ for all $i\in [n]$. 
For $\theta=1/2$, the fractional bucketing bound \Cref{lem:fracbucketingbound} asserts that  $\E[\min(1,X)]\geq 1-(1/2)^2 = 3/4$. (By inspection, one sees that this choice of $\theta$ maximizes this bound for the instance.)
On the other hand, for this example, by sub-additivity of variance of NCD variables we have that $\Var(X)\leq \sum_i  \Var(c_i \cdot X_i)\leq c_i^2 \cdot q_i = 2\eps.$
Consequently, the variance bound for this instances asserts that 
$\E[\min(1,X)] \geq 1-\sqrt{2\epsilon}.$
\end{example}
\begin{example}[Bucketing may dominate variance]\label{ex:bucketing}
If $X = \sum_{i=1}^n X_i$ for independent $X_i \sim \text{Ber}\left( \frac{1}{n} \right)$, note that as $n \rightarrow \infty$ we have $\E[\min(1,X)] \rightarrow 1-1/e$. The fractional bucketing bound is tight when taking $\theta \in [0, 1-1/n]$. However, $\Var(X) \rightarrow 1$, so the variance bound is loose, giving only a lower bound $\rightarrow 1/2$. 
\end{example}

\section{Deferred Proofs of Section~\ref{sec:edge-weighted}: Edge-Weighted Matching}\label{app:edgeweighted}

\begin{restatable}{claim}{gdecreasing}\label{claim:gdecreasing}
    For any fixed value of $\theta \in (0,1)$, the function $$g_{\theta}(\mu_L) := \left( 1 - (1-\theta) \cdot \Big\{ \frac{\mu_L}{1-\theta} \Big\} \right) \cdot \theta^{\lfloor \frac{\mu_L}{1-\theta} \rfloor}$$ is a continuous, monotone decreasing function of $\mu_L \in (0,1)$. 
\end{restatable}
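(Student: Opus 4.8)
The plan is to reduce $g_\theta$ to a piecewise-affine function and check continuity across its breakpoints. First I would write $N = N(\mu_L) := \lfloor \tfrac{\mu_L}{1-\theta}\rfloor$, so that on the interval $I_N := \big[\,N(1-\theta),\,(N+1)(1-\theta)\,\big)$ we have $\big\{\tfrac{\mu_L}{1-\theta}\big\} = \tfrac{\mu_L}{1-\theta} - N$, hence $(1-\theta)\big\{\tfrac{\mu_L}{1-\theta}\big\} = \mu_L - N(1-\theta)$, and therefore
\[
g_\theta(\mu_L) \;=\; \theta^{N}\Big(1 - \big(\mu_L - N(1-\theta)\big)\Big) \;=\; \theta^{N}\big(1 + N(1-\theta) - \mu_L\big),
\]
which is affine in $\mu_L$ with slope $-\theta^{N} < 0$. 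In particular $g_\theta$ is continuous and strictly decreasing on each $I_N$.

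Next I would verify continuity at each breakpoint $\mu_L = k(1-\theta)$ with $k \ge 1$ (these are the only points where $N$ jumps). Using the formula on $I_{k-1}$, the left limit is $\lim_{\mu_L \uparrow k(1-\theta)} g_\theta(\mu_L) = \theta^{k-1}\big(1 + (k-1)(1-\theta) - k(1-\theta)\big) = \theta^{k-1}\big(1 - (1-\theta)\big) = \theta^{k}$, while evaluating the formula on $I_k$ at $\mu_L = k(1-\theta)$ gives $g_\theta\big(k(1-\theta)\big) = \theta^{k}\big(1 + k(1-\theta) - k(1-\theta)\big) = \theta^{k}$. The two values agree, so $g_\theta$ is continuous at every breakpoint, and thus continuous on all of $(0,\infty)$, in particular on $(0,1)$.

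Finally, I would invoke the standard fact that a function which is continuous on an interval, differentiable off a countable set, and has strictly negative derivative wherever it is differentiable, is strictly decreasing; applying this with the derivative $-\theta^{N(\mu_L)} < 0$ computed above yields the claim. There is no genuine obstacle; the only point requiring care is correctly matching the one-sided limits at the breakpoints, which is exactly the identity $\big(1 - (1-\theta)\cdot 1\big)\cdot\theta^{k-1} = \theta^{k} = \big(1 - (1-\theta)\cdot 0\big)\cdot\theta^{k}$.
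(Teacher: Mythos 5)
Your proof is correct and follows essentially the same route as the paper's: check continuity at the breakpoints $k(1-\theta)$ by matching the one-sided limits (both equal $\theta^k$), and observe monotone decrease on each interval between consecutive breakpoints. The only difference is that you make the ``by inspection'' step explicit by writing out the affine form $\theta^{N}\bigl(1+N(1-\theta)-\mu_L\bigr)$ with slope $-\theta^{N}$, which is a welcome clarification but not a different argument.
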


\begin{proof}
    For any $\mu_L$ that is not a multiple of $(1-\theta)$, continuity is clear; to check continuity at $k(1-\theta)$ for a positive integer $k$, observe that 
    $$\lim_{\mu_L \rightarrow k(1-\theta)^{-}} g_{\theta}(\mu_L) = \lim_{\mu_L \rightarrow k(1-\theta)^{-}} \left( 1 - (1-\theta) \cdot \left( \frac{\mu_L}{1-\theta} - (k-1) \right) \right) \cdot \theta^{k-1} = \theta^k,$$ while additionally we clearly have
    $$g_{\theta}(k(1-\theta)) = \lim_{\mu_L \rightarrow k(1-\theta)^+} g_{\theta}(\mu_L)= \theta^{k}.$$ 
    Given continuity, to show that $g_{\theta}(\cdot)$ is monotone decreasing it suffices to show that it is decreasing on each interval $((k-1)(1-\theta), k(1-\theta)]$ for positive integers $k$; this can be observed by inspection. 
\end{proof}

\begin{claim}\label{scalingsumto1} For constants $A \in [0,1]$, $B > 0$, and $C \ge 0$, let 
$$ f(x) := \frac{1 - g_\theta \left( C \cdot x\right) \cdot A^{x} }{ B \cdot x}.$$ Then, $f(\cdot)$ is a non-increasing function for $x > 0$.  
\end{claim}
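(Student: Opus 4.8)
Since multiplying by the positive constant $B$ does not affect monotonicity, it suffices to show that $h(x) := 1 - g_\theta(Cx)\cdot A^x$ has the property that $h(x)/x$ is non-increasing on $(0,\infty)$. The plan is to prove that $h$ is concave on $[0,\infty)$ with $h(0)=0$, from which the conclusion is immediate: for $0 < x_1 < x_2$, writing $x_1 = \tfrac{x_1}{x_2}\,x_2 + \bigl(1-\tfrac{x_1}{x_2}\bigr)\cdot 0$ and applying concavity gives $h(x_1) \ge \tfrac{x_1}{x_2}h(x_2) + \bigl(1-\tfrac{x_1}{x_2}\bigr)h(0) = \tfrac{x_1}{x_2}h(x_2)$, i.e. $h(x_1)/x_1 \ge h(x_2)/x_2$, and dividing by $B>0$ finishes. (The degenerate case $A=0$ is trivial, since then $h(x)=1$ for $x>0$ and $1/(Bx)$ is non-increasing; so assume $A\in(0,1]$.)

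First I would record two structural facts. (i) The function $g_\theta$ is non-increasing and convex on $[0,\infty)$: on each interval $\bigl[k(1-\theta),(k+1)(1-\theta)\bigr)$ it agrees with the affine map $y\mapsto (1-y+k(1-\theta))\,\theta^k$ of slope $-\theta^k$, it is continuous at the breakpoints (both one-sided limits at $y=k(1-\theta)$ equal $\theta^k$, exactly as in the proof of \Cref{claim:gdecreasing}), and the slopes $-\theta^k$ are non-decreasing in $k$; hence $g_\theta$ is continuous, piecewise-linear, decreasing, with non-decreasing slopes, i.e. convex and non-increasing. (ii) For $A\in(0,1]$ and $C\ge 0$, both $x\mapsto g_\theta(Cx)$ and $x\mapsto A^x$ are non-negative, non-increasing and convex on $[0,\infty)$ — the former because $g_\theta$ is convex non-increasing precomposed with the affine non-decreasing map $x\mapsto Cx$, and the latter because $A^x=e^{x\ln A}$ with $\ln A\le 0$.

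The key step is concavity of $h$, equivalently convexity of $\varphi(x):=g_\theta(Cx)\,A^x$. I would prove this directly from (i)--(ii) by a two-point argument, which avoids any issue with $g_\theta$ being merely piecewise-linear rather than smooth. Fix $x,y\ge 0$ and $\lambda\in[0,1]$, and set $a_1=g_\theta(Cx)$, $a_2=g_\theta(Cy)$, $b_1=A^x$, $b_2=A^y$, all positive. Convexity of $g_\theta(C\,\cdot\,)$ and of $A^{(\cdot)}$, together with positivity, give $\varphi(\lambda x+(1-\lambda)y)\le (\lambda a_1+(1-\lambda)a_2)(\lambda b_1+(1-\lambda)b_2)$, and a one-line expansion yields
$$\lambda a_1 b_1+(1-\lambda)a_2 b_2 - (\lambda a_1+(1-\lambda)a_2)(\lambda b_1+(1-\lambda)b_2) = \lambda(1-\lambda)(a_1-a_2)(b_1-b_2)\ \ge\ 0,$$
the final inequality holding because $g_\theta$ and $A^{(\cdot)}$ are both non-increasing, so $a_1-a_2$ and $b_1-b_2$ carry the same sign. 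Thus $\varphi$ is convex, so $h=1-\varphi$ is concave on $(0,\infty)$; it extends continuously to $x=0$ (since $A\in(0,1]$), hence is concave on $[0,\infty)$, and $h(0)=1-g_\theta(0)\cdot 1=0$ because $g_\theta(0)=1$. Combined with the reduction in the first paragraph, this shows $f(x)=h(x)/(Bx)$ is non-increasing for $x>0$. I expect no real obstacle beyond this concavity step; the only mild care needed is the non-smoothness of $g_\theta$, which the two-point formulation handles cleanly.
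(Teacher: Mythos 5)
Your proof is correct, but it takes a genuinely different route from the paper's. The paper differentiates $f$ directly on the intervals where it is smooth, reduces the resulting inequality to the extremal case $A=1$ by showing the relevant expression is decreasing in $A$, and then verifies the explicit discrete inequality $(1+(1-\theta)k)\,\theta^k\le 1$ for non-negative integers $k$. You instead observe that $h(x):=1-g_\theta(Cx)A^x$ is concave with $h(0)=0$, so $h(x)/x$ is non-increasing by the standard chord-slope argument; the concavity reduces to convexity of the product $g_\theta(Cx)\cdot A^x$, which you establish via the classical fact that a product of two non-negative, convex, similarly-monotone (here both non-increasing) functions is convex, proved by the clean two-point expansion $\lambda(1-\lambda)(a_1-a_2)(b_1-b_2)\ge 0$. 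All the ingredients check out: $g_\theta$ is indeed continuous, piecewise-affine with slopes $-\theta^k$ non-decreasing in $k$, hence convex and non-increasing, and $g_\theta(0)=1$ gives $h(0)=0$; your two-point formulation also sidesteps the non-differentiability at the breakpoints that the paper has to handle by restricting to points where $f'$ exists. What the paper's computation buys in exchange is an explicit derivative formula for $g_\theta(C\cdot)$ that it reuses in \Cref{claim:lipschitz} to establish Lipschitz bounds; your argument is shorter and more conceptual but does not produce that by-product.
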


\begin{proof}

By \Cref{claim:gdecreasing}, we have that $f(\cdot)$ is continuous; furthermore we can easily observe that $f(\cdot)$ is differentiable for all $x$ that are not integer multiples of $\frac{1-\theta}{C}$.\footnote{We assume $C >0$ as the lemma is clear if $C=0$.} Thus it suffices to show $f'(x) \le 0$ for all $x > 0$ where this derivative is defined. Indeed, this is equivalent to observing that $$(B \cdot x) \left( - g_{\theta}(C \cdot x) \cdot A^x \cdot \ln(A) - \frac{d}{dx} \Big[ g_{\theta}(C \cdot x) \Big] \cdot A^x \right) - \left( 1 - g_{\theta}(C \cdot x) \cdot A^x \right) \cdot B \le 0,$$ which, after division by $-B \cdot A^x < 0$,  simplifies to $$ x  \cdot  g_{\theta}(C \cdot x)  \cdot \ln(A) + x \cdot  \frac{d}{dx} \Big[ g_{\theta}(C \cdot x) \Big] + \left( A^{-x} - g_{\theta}(C \cdot x)  \right)  \ge 0.$$ We next compute 
\begin{align*}
     \frac{d}{dx} \Big[ g_{\theta}(C \cdot x) \Big] &= \frac{d}{dx} \left[ \left( 1 - (1-\theta) \cdot \Big\{ \frac{C \cdot x}{1-\theta} \Big\} \right) \cdot \theta^{\lfloor \frac{C \cdot x}{1-\theta} \rfloor} \right] =   (-C) \cdot \theta^{\lfloor \frac{C \cdot x}{1-\theta} \rfloor}  
\end{align*}
(where this derivative is defined). Substituting, it suffices to show $$ x  \cdot  g_{\theta}(C \cdot x)  \cdot \ln(A) - C \cdot x \cdot \theta^{\lfloor \frac{C \cdot x}{1-\theta} \rfloor} + \left( A^{-x} - g_{\theta}(C \cdot x)  \right)  \ge 0.$$ 

We claim the LHS is decreasing in $A$ when holding all other parameters constant. Indeed, the derivative with respect to $A$ is given by $$ x \cdot g_{\theta}(C \cdot x) \cdot \frac{1}{A} - x \cdot A^{-x-1} \le 0,$$ where the inequality follows by noting $g_{\theta}(C \cdot x) \le 1 \le A^{-x}.$

    Hence it suffices to prove the desired inequality for $A = 1$, i.e., $$g_{\theta}(C \cdot x) \le 1 - C \cdot x \cdot  \theta^{\lfloor \frac{C \cdot x}{1-\theta} \rfloor}.$$
By the definition of $g_{\theta}( \cdot )$ this inequality is equivalent to $$\left( 1 + C \cdot x - (1-\theta) \cdot \left\{ \frac{C \cdot x}{1 - \theta} \right\} \right) \cdot \theta^{\lfloor \frac{C \cdot x}{1-\theta} \rfloor} \le 1.$$ After substituting $\left\{ \frac{C \cdot x}{1 - \theta} \right\} = \frac{C \cdot x}{1-\theta} - \lfloor \frac{C \cdot x}{1-\theta} \rfloor$, this reduces to showing $\left( 1 + (1-\theta) \cdot k \right) \cdot \theta^k \le 1$ for non-negative integers $k$, which is clear for $\theta \in [0,1]$. 
\end{proof}

\begin{restatable}{claim}{lipschitz}\label{claim:lipschitz}
    Define $$B(x) := 1 - \exp(-(1+\delta)(1-x)) \cdot g_{\hat{\theta}}((1-\eps) x).$$ Then, $B(\cdot)$ is 3-Lipschitz for $x \ge 0$, i.e., for any $x, y \ge 0$ we have $|B(x) - B(y)| \le 3 \cdot|x-y|$.
\end{restatable}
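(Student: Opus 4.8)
The plan is to bound $|B'(x)|$ wherever it exists by a constant well below $3$, and then upgrade this pointwise derivative bound to the Lipschitz estimate using that $B$ is continuous. Write $B(x) = 1 - A(x)\,G(x)$, where $A(x) := \exp\!\big(-(1+\delta)(1-x)\big)$ and $G(x) := g_{\hat\theta}\big((1-\eps)x\big)$, with $\hat\theta = \theta(1-\eps)$ and $\theta = \delta/(\delta+\eps)$. Both $A$ and $g_{\hat\theta}$ are continuous on $[0,\infty)$ --- the latter by the proof of \Cref{claim:gdecreasing}, which checks continuity of $g_{\hat\theta}$ at every positive multiple of $1-\hat\theta$ (and it is plainly continuous elsewhere) --- so $B$ is continuous, and it is differentiable except on the locally finite set of $x$ with $(1-\eps)x/(1-\hat\theta)\in\Z_{>0}$. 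Off this set, $A'(x) = (1+\delta)A(x)$ and, by the derivative computation in the proof of \Cref{scalingsumto1} (taken with $C=1-\eps$), $\frac{d}{dx}g_{\hat\theta}\big((1-\eps)x\big) = -(1-\eps)\,\hat\theta^{\lfloor (1-\eps)x/(1-\hat\theta)\rfloor}$, so
\[
B'(x) = -(1+\delta)\,A(x)\,G(x) + (1-\eps)\,A(x)\,\hat\theta^{\lfloor (1-\eps)x/(1-\hat\theta)\rfloor}.
\]

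Since the factor $1-(1-\hat\theta)\{u/(1-\hat\theta)\}$ in the definition of $g_{\hat\theta}$ lies in $(\hat\theta,1]$, we have $0 < G(x) \le \hat\theta^{\lfloor (1-\eps)x/(1-\hat\theta)\rfloor}$; together with $1-\eps \le 1+\delta$ this gives, by the triangle inequality, $|B'(x)| \le (2+\delta-\eps)\,A(x)\,\hat\theta^{\lfloor (1-\eps)x/(1-\hat\theta)\rfloor}$. It therefore suffices to prove $A(x)\,\hat\theta^{\lfloor (1-\eps)x/(1-\hat\theta)\rfloor} \le 1$ for all $x\ge 0$, which yields $|B'(x)| \le 2+\delta-\eps = 2.07 < 3$. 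For $x\in[0,1]$ this is immediate: $A(x)\le 1$ and $\hat\theta^{(\cdot)}\le 1$ since $\hat\theta\in(0,1)$ and the exponent is a nonnegative integer. For $x>1$, use $\lfloor z\rfloor\ge z-1$ and $\hat\theta\in(0,1)$ to bound $\hat\theta^{\lfloor (1-\eps)x/(1-\hat\theta)\rfloor} \le \hat\theta^{(1-\eps)x/(1-\hat\theta)-1}$, so that $A(x)\,\hat\theta^{\lfloor\cdots\rfloor} \le e^{\lambda(x)}$ for the \emph{affine} function
\[
\lambda(x) := (1+\delta)(x-1) + \Big(\tfrac{(1-\eps)x}{1-\hat\theta}-1\Big)\ln\hat\theta .
\]

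Thus the whole statement reduces to checking $\lambda(x)\le 0$ on $[0,\infty)$, for which it is enough (by affineness) that $\lambda'(x) = (1+\delta) + (1-\eps)\ln\hat\theta/(1-\hat\theta) \le 0$ and $\lambda(0) = -(1+\delta)-\ln\hat\theta < 0$; then $\lambda(x)\le\lambda(0) = \ln\!\big(e^{-(1+\delta)}/\hat\theta\big) < 0$. Both inequalities hold for $\eps=0.11,\ \delta=0.18$ (so $\hat\theta = 0.18\cdot 0.89/0.29 \approx 0.552$), and this numerical verification --- in particular the first one, which is tight to within a factor $\approx 10^{-4}$ --- is the one genuinely delicate point of the argument, forcing one to track the chosen parameters rather than reason abstractly; it holds (with the same slack as elsewhere) for the parameters used in \Cref{thm:edge-weighted}. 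Finally, having shown $|B'|\le 2.07<3$ off a locally finite set while $B$ is continuous on $[0,\infty)$, the mean value theorem on each interval between consecutive non-differentiability points, summed via the triangle inequality, gives $|B(x)-B(y)|\le 3|x-y|$ for all $x,y\ge 0$, as claimed.
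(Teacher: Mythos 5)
Your proof is correct, and its skeleton matches the paper's: continuity via \Cref{claim:gdecreasing}, a pointwise bound on $|B'|$ away from the locally finite set of non-differentiability points (using the derivative of $g_{\hat\theta}$ computed in \Cref{scalingsumto1}), and a piecewise mean-value-theorem step to conclude. The substantive difference is scope. The paper's proof applies the product rule only on $[0,1]$, where $\exp(-(1+\delta)(1-x))\le 1$ and $g_{\hat\theta}((1-\eps)x)\le 1$, obtaining $|B'|\le |a|\,|b'|+|a'|\,|b|\le 1+2=3$; it never addresses $x>1$, even though the claim is stated for all $x\ge 0$. (In the only application, \Cref{lem:edgeweightedbound}, the argument is $x_L/(x_L+x_H)\in[0,1]$, so the restriction is harmless there.) You prove the literal statement: for $x>1$ the exponential factor exceeds $1$ and grows, and your bound $A(x)\,\hat\theta^{\lfloor(1-\eps)x/(1-\hat\theta)\rfloor}\le e^{\lambda(x)}\le 1$ hinges on the sign of $\lambda'=(1+\delta)+(1-\eps)\ln\hat\theta/(1-\hat\theta)$, which for $\eps=0.11$, $\delta=0.18$ equals roughly $-6\times 10^{-5}$ --- negative, but only just. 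You are right to single this out as the delicate point: if that quantity were positive, the envelope of $\exp(-(1+\delta)(1-x))\cdot g_{\hat\theta}((1-\eps)x)$ would grow geometrically and $B$ would fail to be Lipschitz on $[0,\infty)$ with \emph{any} constant, so the statement as literally written genuinely depends on this near-cancellation (or else should be restricted to $[0,1]$, as the paper's own proof implicitly does). As a bonus, your constant $2+\delta-\eps=2.07$ is slightly sharper than the paper's $3$.
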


\begin{proof}
    By \Cref{claim:gdecreasing}, we have that $B(x)$ is a continuous function. It is also straightforward to see that for fixed $\eps, \delta > 0$ we have that $B(\cdot)$ is differentiable on $[0,1]$ for all but finitely many points $x$, when $(1-\eps)x$ is an integer multiple of $1-\hat{\theta}$. Thus, it suffices to show that $|B'(x)| \le 4$ for all $x \in [0,1]$ where $B(\cdot)$ is differentiable. 

    For convenience of notation, let $a(x) := \exp(-(1+\delta) x)$ and $b(x) := g_{\hat{\theta}}((1-\eps) x)$. Observe that for $x \in [0,1]$, we have $0 \le a(1-x), b(x) \le 1$ and $|a'(x)| \le (1+\delta) \le 2$. Additionally, as in \Cref{scalingsumto1}, we have $|b'(x)| \le 1$. Thus
    \begin{align*}
        |B'(x)| = |a(x)| \cdot |b'(x)| + |a'(x)| \cdot |b(x)| \le 3
    \end{align*}we have that $B(x) = 1 - f(x) \cdot b(x)$ is 3-Lipschitz.
\end{proof}
\section{Deferred Proofs of Section~\ref{sec:vertexweighted}: Vertex-Weighted Matching } \label{app:vertexweighted}

\betaHbound*
\begin{proof}
It will be convenient to define the contribution of high-degree edges to $\alpha$ by 
\begin{align*}
    \alpha^{> \theta} & := \frac{\sum_{i,t} r_{i,t } \cdot y_{i,t} \cdot x_{i,t} \cdot \mathbbm{1}[y_{i,t} 
 > \theta]}{\sum_{i,t} x_{i,t}}. \end{align*}

 Clearly it suffices to show $\alpha^{> \theta} \le S^{> \theta} - \frac{(S^{> \theta})^2}{2}$. We will show this for every offline node $i$; by Jensen's inequality this implies the global result. Formally, for every offline node $i$ define 
 \begin{align*}\alpha^{> \theta}_i & := \frac{\sum_t r_{i,t} \cdot x_{i,t}  \cdot y_{i,t} \cdot \mathbbm{1}[y_{i,t} > \theta]}{\sum_t x_{i,t}} \\ S^{> \theta}_i &:= \frac{\sum_t r_{i,t} \cdot x_{i,t} \cdot \mathbbm{1}[y_{i,t} > \theta]}{\sum_t x_{i,t}}.
 \end{align*}
 Suppose $\alpha^{> \theta}_i \le f(S_i^{> \theta})$ for $f(z) := z - \frac{z^2}{2}$. Then, by Jensen's inequality and the concavity of $f(\cdot)$,
$$\alpha^{> \theta} = \sum_i \alpha_i^{> \theta} \cdot \frac{ \sum_t x_{i,t} }{\sum_{i,t} x_{i,t}} \le \sum_i f(S^{> \theta}_i)  \cdot \frac{ \sum_t x_{i,t} }{\sum_{i,t} x_{i,t}} \le f \left( \sum_i S_i^{> \theta} \cdot  \frac{ \sum_t x_{i,t} }{\sum_{i,t} x_{i,t}}  \right) =S ^{>\theta} - \frac{(S^{> \theta})^2}{2}.$$ Thus it remains to show  $\alpha_i^{> \theta} \leq S_i^{> \theta} - \left( S_i^{> \theta} \right)^2/2$ for any fixed offline node $i$. For convenience of notation, after fixing $i$ we let $H$ denote the set of all online arrivals $t$ with $y_{i,t} > \theta$, and drop the ``$\mathbbm{1}[y_{i,t}> \theta$]'' indicators. Note that by the fractional degree constraint  (\Cref{lem:degree-implicit}), we have
\begin{align}\label{eqn:alpha_i-bound}
\alpha_i^{> \theta} & = \frac{\sum_{t \in H} x_{i,t} \cdot r_{i,t} \cdot y_{i,t}}{\sum_{t} x_{i,t}} \le \frac{\sum_{t  \in H} x_{i,t} \cdot r_{i,t} \cdot y_{i,t}}{ \left( \sum_t x_{i,t} \right)^2}.
\end{align}
We therefore turn to upper bounding the RHS of \Cref{eqn:alpha_i-bound}. Equivalently (and to avoid notational clutter), we 
upper bound $(\star) := \sum_{t \in H} x_{i,t} \cdot r_{i,t} \cdot y_{i,t}$ as follows:
\begin{align}
(\star) &= \sum_{t \in H} x_{i,t} \cdot r_{i,t} \cdot \left( \sum_{t'} x_{i,t'} - \sum_{t' \ge t} x_{i,t'} \right)  \nonumber \\
&= \left(\sum_{t \in H} x_{i,t} \cdot r_{i,t} \right) \left( \sum_{t} x_{i,t} \right) - \sum_{t \in H} x_{i,t}^2 \cdot r_{i,t} - \sum_{t \in H, t' \in H, t' > t} x_{i,t} \cdot r_{i,t} \cdot x_{i,t'} \nonumber \\
&\le \left(\sum_{t \in H} x_{i,t} \cdot r_{i,t} \right) \left( \sum_{t} x_{i,t} \right)  - \sum_{t \in H} x_{i,t}^2 \cdot r_{i,t}^2 - \sum_{t \in H, t' \in H, t' > t} x_{i,t} \cdot r_{i,t} \cdot x_{i,t'} \cdot r_{i,t'} \label{eqn:6.5firstineq} \\
&\le \left(\sum_{t \in H} x_{i,t} \cdot r_{i,t} \right) \left( \sum_{t} x_{i,t} \right)  - \frac{1}{2} \cdot \sum_{t \in H} x_{i,t}^2 \cdot r_{i,t}^2 - \sum_{t \in H, t' \in H, t' > t} x_{i,t} \cdot r_{i,t} \cdot x_{i,t'} \cdot r_{i,t'}  \label{eqn:6.5secondineq} \\
&=  \left(\sum_{t \in H} x_{i,t} \cdot r_{i,t} \right) \left( \sum_{t} x_{i,t} \right) - \frac{1}{2} \cdot \left( \sum_{t \in H} x_{i,t} \cdot r_{i,t} \right)^2. \nonumber
\end{align}
Here, \eqref{eqn:6.5firstineq} follows by noting $r_{i,t}, r_{i,t'} \in [0,1]$ and $x_{i,t}, x_{i,t'} \ge 0$, and \eqref{eqn:6.5secondineq} simply notes that $x_{i,t}^2 \cdot r_{i,t}^2 \ge 0$. 
Dividing both sides by $\left( \sum_t x_{i,t} \right)^2$ and combining with \eqref{eqn:alpha_i-bound}, we obtain the desired inequality, 
\begin{align*}
\alpha_i^{> \theta} & \leq \frac{ \sum_{t \in H} x_{i,t} \cdot r_{i,t} \cdot y_{i,t}} {\left(\sum_t x_{i,t}\right)^2 } = \frac{(\star)} {\left(\sum_t x_{i,t}\right)^2 }\le S_i^{> \theta} - \frac{(S_i^{> \theta})^2}{2}. \qedhere 
\end{align*}
\end{proof}

\betaLbound*
 \begin{proof}
     Similar to the previous proof, we define 
\begin{align*}
    \alpha^{\le \theta} & := \frac{\sum_{i,t} r_{i,t } \cdot y_{i,t} \cdot x_{i,t} \cdot \mathbbm{1}[y_{i,t} 
 \le \theta]}{\sum_{i,t} x_{i,t}} \\
 S_i^{\le \theta} & := \frac{\sum_{t} r_{i,t } \cdot x_{i,t} \cdot \mathbbm{1}[y_{i,t} 
 \le \theta]}{\sum_{t} x_{i,t}}.\end{align*} By the same logic it suffices to show $\alpha_i^{\le \theta} \le S_i^{\le \theta} \cdot \left( \theta - \frac{S^{\le \theta}}{2} \right)$. For convenience of notation, we again fix $i$ and let $L$ denote the online arrivals $t$ with $y_{i,t} \le \theta$, allowing us to drop the ``$\mathbbm{1}[y_{i,t} \le \theta]$'' indicators. 

Again by the fractional degree constraint (\Cref{lem:degree-implicit}), we have
\begin{align}\label{eqn:alpha_i-bound-below}
\alpha_i^{\le \theta} & = \frac{\sum_{t \in L} x_{i,t} \cdot r_{i,t} \cdot y_{i,t}}{\sum_t x_{i,t}} \le \frac{\sum_{t \in L} x_{i,t} \cdot r_{i,t} \cdot y_{i,t}}{ \left( \sum_t x_{i,t} \right)^2}.
\end{align}
As before we upper bound $(\star) := \sum_t x_{i,t} \cdot r_{i,t} \cdot y_{i,t}$. The proof precedes similarly to that of \Cref{lem:betaHbound}, with the additional fact that $\sum_{t' \notin L} x_{i, t'} \le 1 - \theta$ by the definition of $L$. 
\begin{align}
(\star) &= \sum_{t \in L} x_{i,t} \cdot r_{i,t} \cdot \left( \sum_{t'} x_{i,t'} - \sum_{t' \ge t} x_{i,t'} \right) \nonumber \\
&= \left(\sum_{t \in L} x_{i,t} \cdot r_{i,t} \right) \left( \sum_{t} x_{i,t} \right) - \sum_{t \in L} x_{i,t}^2 \cdot r_{i,t} - \sum_{t \in L, t' > t} x_{i,t} \cdot r_{i,t} \cdot x_{i,t'} \nonumber \\
&= \left(\sum_{t \in L} x_{i,t} \cdot r_{i,t} \right) \left( \sum_{t} x_{i,t} \right) - \sum_{t \in L} x_{i,t}^2 \cdot r_{i,t} - \sum_{t \in L, t' \in L, t' > t} x_{i,t} \cdot r_{i,t} \cdot x_{i,t'} - \sum_{t \in L, t' \notin L } x_{i,t} \cdot r_{i,t} \cdot x_{i,t'} \nonumber \\
&\le \left(\sum_{t \in L} x_{i,t} \cdot r_{i,t} \right) \left( \sum_{t} x_{i,t} \right) - \sum_{t \in L} x_{i,t}^2 \cdot r_{i,t} - \sum_{t \in L, t' \in L, t' > t} x_{i,t} \cdot r_{i,t} \cdot x_{i,t'} - (1 - \theta) \cdot \sum_{t \in L  } x_{i,t} \cdot r_{i,t} \label{6.6firstineq}   \\
&= \left(\sum_{t \in L} x_{i,t} \cdot r_{i,t} \right) \left( \sum_{t} x_{i,t} -1 + \theta \right) - \sum_{t \in L} x_{i,t}^2 \cdot r_{i,t} - \sum_{t \in L, t' \in L, t' > t} x_{i,t} \cdot r_{i,t} \cdot x_{i,t'} \nonumber   \\
&\le \left(\sum_{t \in L} x_{i,t} \cdot r_{i,t} \right) \left( \sum_{t} x_{i,t} \right) \cdot \theta - \sum_{t \in L} x_{i,t}^2 \cdot r_{i,t} - \sum_{t \in L, t' \in L, t' > t} x_{i,t} \cdot r_{i,t} \cdot x_{i,t'}  \label{6.6secondineq}  \\
&\le \left(\sum_{t \in L} x_{i,t} \cdot r_{i,t} \right) \left( \sum_{t} x_{i,t} \right) \cdot \theta  - \sum_{t \in L} x_{i,t}^2 \cdot r_{i,t}^2 - \sum_{t \in L, t' \in L, t' > t} x_{i,t} \cdot r_{i,t} \cdot x_{i,t'} \cdot r_{i,t'} \label{6.6thirdineq}   \\
&\le \left(\sum_{t \in L} x_{i,t} \cdot r_{i,t} \right) \left( \sum_{t} x_{i,t} \right) \cdot \theta  - \frac{1}{2} \cdot \sum_{t \in L} x_{i,t}^2 \cdot r_{i,t}^2 - \sum_{t \in L, t' \in L, t' > t} x_{i,t} \cdot r_{i,t} \cdot x_{i,t'} \cdot r_{i,t'} \label{6.6fourthineq}  \\
&=  \left(\sum_{t \in L} x_{i,t} \cdot r_{i,t} \right) \left( \sum_{t} x_{i,t} \right) \cdot \theta  - \frac{1}{2} \cdot \left( \sum_{t \in L} x_{i,t} \cdot r_{i,t} \right)^2. \nonumber
\end{align}
Above, \eqref{6.6firstineq} relied on $\sum_{t' \notin L} x_{i,t'} \le 1 - \theta$, \eqref{6.6secondineq} relied on $\sum_t x_{i,t} \le 1$. As in the proof of \Cref{lem:betaHbound}, \eqref{6.6thirdineq} and \eqref{6.6fourthineq} follow from observing $x_{i,t}, x_{i,t'}, r_{i,t}, r_{i,t'} \in [0,1]$.

Dividing both sides by $\left( \sum_t x_{i,t} \right)^2$ and combining with \eqref{eqn:alpha_i-bound-below}, we obtain the desired inequality, 
\begin{align*}
\alpha_i^{\le \theta} & \leq \frac{ \sum_{t \in L} x_{i,t} \cdot r_{i,t} \cdot y_{i,t}} {\left(\sum_t x_{i,t}\right)^2 } = \frac{(\star)} {\left(\sum_t x_{i,t}\right)^2 }\le S_i^{\le \theta} \cdot \theta - \frac{(S_i^{\le \theta})^2}{2}. \qedhere 
\end{align*}
\end{proof}

\begin{restatable}{lemma}{linearfunctionlowerbound}\label{lem:linearfunctionlowerbound} For $0 \le x < 1$ and $0 < y \le (1-x)^2$ we have
     $$ 0.613 + 0.122x + 0.21y \le 1 - g_{\nicefrac{1}{2}}  \left(  x\right) \cdot  \left( 1 - \frac{y}{1 - x } \right)^{\frac{(1- x)^2}{y}}.$$
\end{restatable}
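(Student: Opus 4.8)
The plan is to treat this as an elementary but thin-margin two-variable inequality and certify it with a Lipschitz-plus-grid argument, in the style of the computer-assisted steps already used for \Cref{lem:edgeweightedbound} and \Cref{claim:lipschitz}. Write $F(x,y):=1-g_{\nicefrac{1}{2}}(x)\cdot\psi(x,y)$ with $\psi(x,y):=\left(1-\tfrac{y}{1-x}\right)^{(1-x)^2/y}$, so the claim is $F(x,y)\ge 0.613+0.122x+0.21y$ on $D:=\{0\le x<1,\ 0<y\le(1-x)^2\}$. First I would extend $F$ continuously to the compact set $\overline{D}:=\{0\le x\le 1,\ 0\le y\le(1-x)^2\}$, putting $\psi(x,0):=e^{-(1-x)}$ as in the footnote to \Cref{lem:conv}; this is legitimate since $g_{\nicefrac{1}{2}}$ is continuous (\Cref{claim:gdecreasing}) and $\left(1-\tfrac{y}{1-x}\right)^{(1-x)^2/y}\to e^{-(1-x)}$ as $y\to 0^+$, and it suffices to prove the inequality on $\overline{D}$. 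I would also record the explicit form $g_{\nicefrac{1}{2}}(x)=1-x$ for $x\in[0,\nicefrac{1}{2}]$ and $g_{\nicefrac{1}{2}}(x)=\tfrac{3}{4}-\tfrac{x}{2}$ for $x\in[\nicefrac{1}{2},1]$, so that $g_{\nicefrac{1}{2}}$ is nonnegative, piecewise linear, and $1$-Lipschitz.

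The key step is to show $F$ is Lipschitz on $\overline{D}$ with an explicit small constant. Set $b:=1-\tfrac{y}{1-x}\in[0,1]$ and $e:=\tfrac{(1-x)^2}{y}$, so that the domain constraint $y\le(1-x)^2$ is precisely $e\ge 1$ and $\psi=b^{e}$. A direct differentiation gives $\partial_y\psi=-\tfrac{e\,b^{e}}{y}\big(\ln b+\tfrac{1-b}{b}\big)$; since for $c:=1-b\in[0,1)$ one has $\ln(1-c)+\tfrac{c}{1-c}=\sum_{k\ge 2}\tfrac{k-1}{k}c^{k}\in\big[0,\tfrac{c^{2}}{1-c}\big]$ and $\tfrac{e}{y}(1-b)^{2}=1$ identically on $\overline{D}$, this yields $|\partial_y\psi|\le b^{e-1}\le 1$. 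Likewise $\partial_x\psi=-\psi\big(\tfrac{2\ln b}{1-b}+\tfrac{1}{b}\big)$, whose two summands have opposite signs; bounding $-\ln(1-c)\le\tfrac{c}{1-c}$ gives $|\partial_x\psi|\le\max\!\big(2b^{e-1},\,b^{e-1}\big)\le 2$, the point being that the individual blow-ups of the $\ln b$ and $1/b$ terms near the corner $(x,y)=(0,1)$ (where $b\to0$) cancel once multiplied by the vanishing factor $b^{e}$. Hence $|\partial_x F|\le|g_{\nicefrac{1}{2}}'|\cdot\psi+g_{\nicefrac{1}{2}}\cdot|\partial_x\psi|\le 3$ and $|\partial_y F|\le|\partial_y\psi|\le 1$ wherever these derivatives exist (the kink of $g_{\nicefrac{1}{2}}$ at $x=\nicefrac{1}{2}$ is harmless, $g_{\nicefrac{1}{2}}$ remaining $1$-Lipschitz). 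So $F$—and therefore $F$ minus the obviously Lipschitz linear target $L(x,y):=0.613+0.122x+0.21y$—is $K$-Lipschitz on $\overline{D}$ for an explicit single-digit $K$.

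It then remains to discretize: evaluate $F-L$ on an equally-spaced grid of mesh $h$ covering $\overline{D}$ (discarding the points with $y>(1-x)^{2}$), and verify numerically that at every grid point $F-L$ exceeds $Kh$; by the Lipschitz bound this propagates to $F-L\ge 0$ on all of $\overline{D}$, and continuity passes this back to the half-open domain $D$, which is \Cref{lem:linearfunctionlowerbound}. The step I expect to be the main, if routine, obstacle is getting the numbers to line up: the slack being certified is genuinely small—of order a few times $10^{-3}$, attained near $x\approx 0.9$, $y\to 0^{+}$, which reflects (cf.\ the remark following \Cref{thm:vertex-weighted}) that this linear bound is essentially the best possible for the downstream use—so $h$ must be chosen small enough and the Lipschitz error accounted for honestly, exactly as in the $10^{4}$-point check behind \eqref{eqn:optimization-edge-weighted}. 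The one spot in the Lipschitz estimate that needs a little care is the degenerate corner $(x,y)=(0,1)$, where the base $b$ of the power tends to $0$; if one prefers to avoid the cancellation argument there, one can instead excise a small neighborhood of that corner and treat it directly via the exact inequality $\psi\le 1-\tfrac{y}{1-x}$ (valid since $e\ge 1$), which on that neighborhood—where $g_{\nicefrac{1}{2}}(x)=1-x$—gives $F\ge x+y$, comfortably above $L$.
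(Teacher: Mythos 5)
Your proposal is correct and follows essentially the same route as the paper's proof: establish an explicit Lipschitz constant for $1-g_{\nicefrac{1}{2}}(x)\cdot\left(1-\frac{y}{1-x}\right)^{(1-x)^2/y}$ and then certify the inequality by evaluating on a fine grid with the Lipschitz error accounted for. The only difference is cosmetic: you bound the partials of $\psi$ by logarithmic differentiation together with the series bound $0\le\ln(1-c)+\frac{c}{1-c}\le\frac{c^2}{1-c}$ and the identity $\frac{e(1-b)^2}{y}=1$, whereas the paper works with the auxiliary functions $r_C(z)=(1-z)^{C/z^2}$ and $s_C(z)=(1-z)^{C/z}$ and the monotonicity of $r_C(z)/(1-z)$; both yield a single-digit Lipschitz constant sufficient for the grid check.
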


\begin{proof}
We first show the function $$f(x,y) := \left( 1 - \frac{y}{1-x} \right)^{\frac{(1-x)^2}{y}}$$ is 1-Lipschitz with respect to the $\ell_1$-norm on the domain $\{0 \le x < 1, 0 < y \le (1-x)^2\}$.\footnote{I.e., for $(x_1, y_1)$ and $(x_2, y_2)$ in our domain, we have $\left| \frac{y_1}{1-x_1} - \frac{y_2}{1-x_2} \right| \le \| (x_1,y_1) - (x_2,y_2) \|_1 =  |x_1 - x_2| + |y_1 - y_2|$.}

For this it suffices to show that $\left| \frac{\partial}{\partial x} f(x,y) \right|$ and $\left| \frac{\partial}{\partial y} f(x,y) \right|$ are both bounded by 1 on our domain. We start by bounding the partial derivative with respect to $x$.

For ease of notation, it will be convenient to define the function $r_C(z) := \left( 1 - z \right)^{C / z^2}$ for a constant $C \in [0,1]$ and bound its derivative, as $f(x,y) = r_{y} \left( \frac{y}{1-x} \right)$.

We claim $\frac{r_C(z)}{ 1-z }$ is non-decreasing on its domain. Indeed, differentiating this is equivalent to $$\frac{(1-z) \cdot r_C'(z) + r_C(z)}{(1-z)^2} \ge 0 \quad \Leftrightarrow \quad r_C'(z) \ge -\frac{r_C(z)}{1-z}.$$

We compute 
\begin{align*}
    r_C'(z) = C \cdot (1-z)^{C/z^2 - 1} \cdot \frac{2(z-1) \cdot \log(1-z) - z}{z^3}.
\end{align*}

The following fact will be useful. 
\begin{fact}\label{fact:appendixlogbounding}
    For $x \in [0,1)$, $x \cdot (1-x) \le (x-1) \cdot \log(1-x) \le x.$
\end{fact}
The lower bound is classic; the upper bound is follows by noting that $x + (1-x) \log(1-x) \ge 0$ as the derivative of the LHS is $- \log(1-x) \ge 0$, and evaluating the LHS at $x=0$ produces $0$. This fact allows us to bound 
$$C \cdot (1-z)^{C/z^2 - 1} \cdot \frac{1-2z}{z^2} \le r_C'(z) \le C \cdot (1-z)^{C/z^2 - 1} \cdot \frac{1}{z^2}$$ which can be equivalently written as 
$$C \cdot \frac{r_C(z)}{1-z} \cdot \frac{1-2z}{z^2} \le r_C'(z) \le C \cdot \frac{r_C(z)}{1-z} \cdot \frac{1}{z^2}$$

Looking closely at the lower bound we see $$r_{C}'(z) \ge  C \cdot \frac{r_C(z)}{1-z} \cdot \frac{1-2z}{z^2} \ge - \frac{r_C(z)}{1-z} $$ which suffices to argue that $\frac{r_C(z)}{1-z}$ is non-decreasing on its domain. Note then that $$| r_C'(z) | \le C \cdot \frac{r_C(z)}{1-z} \cdot \frac{1}{z^2} \le C \cdot \frac{r_C(z)}{1-z} \cdot \frac{ \max(1, |1-2z|)}{z^2} \le C \cdot \frac{r_C(z)}{1-z} \cdot \frac{1}{z^2}.$$

We are now ready to bound our original partial derivative as follows:
\begin{align*}
    \left| \frac{\partial}{\partial x} f(x,y) \right| &= \left| \frac{d}{dx} r_y \left( \frac{y}{1-x} \right) \right| \\
    &=  \left| \frac{y}{(1-x)^2} \cdot r_y' \left( \frac{y}{1-x} \right) \right| \\
    &\le \frac{y}{(1-x)^2} \cdot y \cdot \frac{r_y \left( \frac{y}{1-x} \right) }{1 - \frac{y}{1-x}} \cdot \frac{(1-x)^2}{y^2} \\
    &=  \frac{r_y \left( \frac{y}{1-x} \right) }{1 - \frac{y}{1-x}} \\
    &\le  \frac{r_y \left( \frac{y}{\sqrt{y}} \right) }{1 - \frac{y}{\sqrt{y}}} && \frac{r_C(z)}{1-z} \text{ non-decreasing, } x \le 1 - \sqrt{y} \\
    &= \frac{(1-\sqrt{y})^{\frac{y}{(\sqrt{y})^2}}}{1-\sqrt{y}} = 1.
\end{align*}

We now turn to bounding the partial derivative with respect to $y$, which proceeds similarly (and is actually slightly simpler). For any constant $C \in [0,1]$ we define $s_{C}(z) := \left( 1 - z \right)^{C/z} = \exp \left( \frac{C}{z} \cdot \ln (1-z) \right)$. We first show $\frac{s_C(z)}{1-z}$ is non-decreasing; indeed, after differentiating this reduces to (as before)  $$s_C'(z) \ge -\frac{s_C(z)}{1-z}.$$ We then compute
\begin{align*}
    s_C'(z) = C \cdot (1-z)^{\frac{C}{z} - 1} \cdot \frac{(z-1) \cdot \ln(1-z) -z}{z^2} = C \cdot \frac{s_C(z)}{1-z} \cdot  \frac{(z-1) \cdot \ln(1-z) -z}{z^2}.
\end{align*}
Observe $ \frac{(z-1) \cdot \ln(1-z) -z}{z^2} \in [-1, 0]$ by \Cref{fact:appendixlogbounding}. This suffices to note that $$0 \ge s_C'(z) \ge -C \cdot \frac{s_C(z)}{1-z} \ge - \frac{s_C(z)}{1-z}.$$ So, $\frac{s_C(z)}{1-z}$ is non-decreasing. Now observe 
\begin{align*}
    \left| \frac{\partial}{\partial y} f(x,y) \right| &= \left| \frac{d}{dy} s_{1-x} \left( \frac{y}{1-x} \right) \right| \\
    &\le \frac{1}{1-x} \cdot \left| s_{1-x}' \left( \frac{y}{1-x} \right) \right| \\
    &\le \frac{1}{1-x} \cdot (1-x) \cdot \frac{s_{1-x} \left(  \frac{y}{1-x} \right) }{1- \frac{y}{1-x}} \\
    &\le \frac{s_{1-x} \left( 1-x \right) }{1-(1-x)} &&  \frac{s_C(z)}{1-z} \text{ non-decreasing}, \frac{y}{1-x} \le 1-x  \\
    &= 1.
\end{align*}

This implies that $f(x,y)$ is 1-Lipschitz in its arguments (with respect to the $\ell_1$-norm), and the rest of the proof is straightforward. As in the proof of \Cref{scalingsumto1} we can compute $g_{\nicefrac{1}{2}}'(x) = - \left( \frac{1}{2} \right)^{\lfloor 2x \rfloor}$ for $x \in [0,1] \setminus \{0, \nicefrac{1}{2}, 1\}.$ As $g_{\nicefrac{1}{2}}(x)$ is continuous, we can observe from this that it is $1$-Lipschitz. Note additionally that $g_{\nicefrac{1}{2}}(x)$ takes values in $[0,1]$ for $x \in [0,1]$, and $f(x,y)$ takes values in $[0,1]$ for $x, y$ in its domain $\{ 0 \le x \le 1, 0 < y \le (1-x)^2 \}.$ This implies that the product $g_{\nicefrac{1}{2}}(x) \cdot f(x,y)$ is 2-Lipschitz. Thus we can (loosely) bound that $$1 - g_{\nicefrac{1}{2}}(x) \cdot \left( 1 - \frac{y}{1-x} \right)^{\frac{(1-x)^2}{y}} - \acoeff - \bcoeff \cdot x - \ccoeff \cdot y $$ is $3$-Lipschitz. We would like to show that the above expression is non-negative; by our Lipschitz condition it suffices to evaluate it at a grid with spacing $10^{-4}$ (for both $x$ and $y$) and show it is always at least $3 \cdot 10^{-4}$. This is straightforward to check computationally.\footnote{See code at \url{https://tinyurl.com/mts6u5ua}.}
\end{proof}

\begin{restatable}{lemma}{finalvertexweightedbound}\label{lem:finalvertexweightedbound}
    Let $$B(x,y) :=  \max \left( 1 - \frac{1}{2} \sqrt{ x +y - x\cdot \left( \frac{1}{2} + \frac{x}{2} \right) -  \frac{y^2}{2} },  \acoeff + \bcoeff \cdot 0.5 + \ccoeff \cdot \frac{y^2}{2}  \right).$$ Then $(\dagger)_{\nicefrac{1}{2}} \ge \min_{x, y \in [0, 0.5]^2} B(x,y) \ge \vertexapprox.$ 
\end{restatable}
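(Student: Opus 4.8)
The plan is to establish the two inequalities in \Cref{lem:finalvertexweightedbound} separately. The first inequality, $(\dagger)_{\nicefrac12}\ge\min_{x,y\in[0,0.5]^2}B(x,y)$, is essentially bookkeeping: recall that
$$(\dagger)_{\theta}=\min_{S^{\le\theta},S^{>\theta}}\max\Bigl(1-\tfrac12\sqrt{S^{\le\theta}+S^{>\theta}-S^{\le\theta}(1-\theta+\tfrac{S^{\le\theta}}{2})-\tfrac{(S^{>\theta})^2}{2}},\ \conv_{\theta}\bigl(\theta,\tfrac{(S^{>\theta})^2}{2}\bigr)\Bigr).$$
First I would observe that for $\theta=\nicefrac12$ we have $1-\theta+\tfrac{S^{\le\theta}}{2}=\tfrac12+\tfrac{S^{\le\theta}}{2}$, so the argument of the square root is exactly $x+y-x(\tfrac12+\tfrac x2)-\tfrac{y^2}{2}$ with $x=S^{\le 1/2}$, $y=S^{>1/2}$. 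Next, \Cref{lem:linearfunctionlowerbound} provides a valid choice $\conv_{\nicefrac12}(x,y)=\acoeff+\bcoeff\cdot x+\ccoeff\cdot y$ (this is convex and increasing in both arguments, as required by \Cref{lem:conv} and the proof of \Cref{thm:vertex-weighted}), so $\conv_{\nicefrac12}(\tfrac12,\tfrac{y^2}{2})=\acoeff+\bcoeff\cdot 0.5+\ccoeff\cdot\tfrac{y^2}{2}$, matching the second term of $B$. Finally, I must check the feasible region: every $S^{\le\theta},S^{>\theta}$ arising from a valid LP solution satisfies $0\le S^{\le 1/2},S^{>1/2}\le 1/2$ — the upper bound of $1/2$ holds because, as noted right after \Cref{lem:variance-bound}, $\alpha=S^{\le1}-\beta^{\le1}\le 1/2$ and more directly because each $S$ term is a $\vec x$-weighted average of $r_{i,t}$ values over edges with $y_{i,t}$ on one side of $\theta$, and $r_{i,t}(1-y_{i,t})=x_{i,t}/p_t$ combined with $\sum_i x_{i,t}\le p_t$ and the degree constraint bounds each by $1-\theta=1/2$. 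Hence $(\dagger)_{\nicefrac12}$ is a minimum of $B(x,y)$ over a subset of $[0,0.5]^2$, which is at least the minimum over all of $[0,0.5]^2$.

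The second inequality, $\min_{x,y\in[0,0.5]^2}B(x,y)\ge\vertexapprox$, is the computer-assisted step. The plan is to argue Lipschitzness of $B$ and then verify the bound on a fine grid. For the first term inside the max, the map $(x,y)\mapsto x+y-x(\tfrac12+\tfrac x2)-\tfrac{y^2}{2}$ has gradient $(\tfrac12-x,\ 1-y)$ which is bounded in norm on $[0,0.5]^2$, and $z\mapsto 1-\tfrac12\sqrt z$ is Lipschitz away from $z=0$ — but near $z=0$ the value is close to $1\gg\vertexapprox$, so one can split into the region where the radicand is, say, $\ge c_0$ for a small constant $c_0$ (where the whole first term is Lipschitz) and the region where it is $<c_0$ (where the first term alone already exceeds $\vertexapprox$, so the max certainly does). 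The second term $\acoeff+\bcoeff\cdot 0.5+\ccoeff\cdot\tfrac{y^2}{2}$ is a smooth bounded-derivative function of $y$. So $B$ is Lipschitz (with an explicit constant, say $\le 3$) on the relevant region, and evaluating it at a grid of spacing $10^{-4}$ and checking it exceeds $\vertexapprox+3\cdot10^{-4}$ everywhere suffices; this matches the methodology already used for \Cref{lem:edgeweightedbound} and \Cref{lem:linearfunctionlowerbound}, and I would cite the accompanying code.

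The main obstacle, such as it is, is handling the non-smoothness of $1-\tfrac12\sqrt z$ at $z=0$ cleanly rather than hand-waving it; the split-into-two-regions argument above resolves this, using that $\vertexapprox<1$ with room to spare. A secondary point requiring care is confirming that the choice of $\conv_{\theta}$ from \Cref{lem:linearfunctionlowerbound} genuinely satisfies the hypotheses invoked in \Cref{lem:conv} (convex, increasing in both coordinates, and a valid lower bound on $1-g_\theta(x)(1-\tfrac{y}{1-x})^{(1-x)^2/y}$ on the stated domain) — but this is exactly what \Cref{lem:linearfunctionlowerbound} asserts, and linearity gives convexity while the positivity of the coefficients $\bcoeff,\ccoeff$ gives monotonicity. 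With these pieces in place, chaining the two displayed inequalities yields $(\dagger)_{\nicefrac12}\ge\vertexapprox$, completing the proof of \Cref{thm:vertex-weighted} via the reduction already carried out there.
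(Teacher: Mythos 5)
Your treatment of the second inequality, $\min_{x,y\in[0,0.5]^2}B(x,y)\ge\vertexapprox$, matches the paper's proof: the paper also splits off the region where the radicand is small (concretely, $y\le 0.25$, where $x-x(\tfrac12+\tfrac x2)\le 0.125$ forces the first max-term above $0.7$), establishes $1$-Lipschitzness of $B$ on the remaining region $y\in[0.25,0.5]$, and grid-checks with spacing $10^{-4}$. That part of your plan is fine, as is your verification that \Cref{lem:linearfunctionlowerbound} supplies a valid $\conv_{\nicefrac12}$.

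The gap is in your justification of the first inequality. Your claim that $S^{\le \nicefrac12}\le \nicefrac12$ is false: take a single offline node and a single online node with $p_1=1$ and $x_{1,1}=1$ (LP-feasible); then $y_{1,1}=0$, $r_{1,1}=1$, and $S^{\le\nicefrac12}=1$. The remark after \Cref{lem:variance-bound} that you cite bounds $\alpha$, not $S^{\le\theta}$, and the identity $r_{i,t}(1-y_{i,t})=x_{i,t}/p_t$ together with $\sum_i x_{i,t}\le p_t$ does not yield the bound either. (The bound $S^{>\theta}\le 1-\theta$ \emph{is} true, but the clean argument is per offline node: $\sum_{t:\,y_{i,t}>\theta}x_{i,t}\le 1-\theta$ and $r_{i,t}\le 1$ give $S_i^{>\theta}\le \max(0,1-\theta/\sum_t x_{i,t})\le 1-\theta$, and $S^{>\theta}$ is a weighted average of the $S_i^{>\theta}$.) The correct reason the restriction to $x\le\nicefrac12$ is harmless is different: for fixed $y$, the radicand equals $\tfrac{x(1-x)}{2}+y-\tfrac{y^2}{2}$, which is symmetric about $x=\nicefrac12$ and maximized there, so the first max-term is \emph{minimized} at $x=\nicefrac12$; since the second max-term is independent of $x$, the minimum of the max over $x\in[0,1]$ is attained at $x=\nicefrac12\in[0,0.5]$, and hence $(\dagger)_{\nicefrac12}\ge\min_{x,y\in[0,0.5]^2}B(x,y)$ even though $S^{\le\nicefrac12}$ may exceed $\nicefrac12$. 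With that substitution your argument goes through.
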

\begin{proof}
    It suffices to prove a Lipschitzness condition for $B(x,y)$ and search over a suitably fine-grid, but care must be taken because the function $\sqrt{z}$ is not Lipschitz as $z \rightarrow 0$. However, this is easy to get around by restricting our domain; in particular we claim that it suffices to search over $y \in [0.25, 0.5]$. To see why, note that $x - x \cdot \left( \frac{1}{2} + \frac{x}{2} \right) \in [0, 0.125]$ for $x \in [0, 0.5]$. So, if $y \le 0.25$, we have $$ B(x,y) \ge 1 - \frac{1}{2} \sqrt{ x +y - x\cdot \left( \frac{1}{2} + \frac{x}{2} \right) -  \frac{y^2}{2} } \ge 1 - \frac{1}{2} \sqrt{ 0.125 + y - \frac{y^2}{2} } \ge 0.7.$$

    Hence we restrict our domain to $x \in [0, 0.5]$ and $y \in [0.25, 0.5]$, on which we claim our function is 1-Lipschitz (with respect to the $\ell_1$-norm). Here, we can first loosely observe that $ \acoeff + \bcoeff \cdot 0.5 + \ccoeff \cdot \frac{y^2}{2}  $ is clearly $1$-Lipschitz for $y \in [0.25, 0.5]$. We next observe that $x +y - x\cdot \left( \frac{1}{2} + \frac{x}{2} \right) -  \frac{y^2}{2}$ is 1-Lipschitz in $x$ when $y$ is fixed and 1-Lipschitz in $y$ when $x$ is fixed. Hence it is 1-Lipschitz in $(x,y)$ with respect to the $l_1$-norm. The function $1 - \frac{1}{2}\sqrt{z}$ is 1-Lipschitz in $z$ as long as $z \ge 0.1$; by our condition that $y \ge 0.25$ this is sufficient to see that $1 - \frac{1}{2} \sqrt{x +y - x\cdot \left( \frac{1}{2} + \frac{x}{2} \right) -  \frac{y^2}{2}}$ is 1-Lipschitz. Finally, $\max(x,y)$ is 1-Lipschitz, which implies our claim. 

    Thus minimizing $B(x,y)$ on a grid with step size $10^{-4}$ has error at most $10^{-4}$, for $y \in [0.25, 0.5]$. Linked code demonstrates that the minimum on such a grid is at least $0.685 + 10^{-4}$.\footnote{See code at \url{https://tinyurl.com/467f3zde}.} 
\end{proof}

\section{Deferred Proofs of Section~\ref{sec:hardness}: Hardness}\label{app:hardness}
We first formalize the bijection between algorithms for the $\textsc{Stochastic-3-Sat}$ instance $\phi$, and matching algorithms for $\mathcal{I}_{\phi}$ which match every arriving variable node. In particular, setting an odd-indexed variable $x_{2i+1}$ to \texttt{True} corresponds to matching the $(2i+1)$\textsuperscript{th} arrival to $T^i$, while setting $x_{2i+1}$ to \texttt{False} corresponds to matching the $(2i+1)$\textsuperscript{th} arrival to $F^i$. Nature setting  $x_{2i}$ to \texttt{True} corresponds to the $2i$\textsuperscript{th} online node not arriving, while nature setting $x_{2i}$ to \texttt{False} corresponds to matching the $(2i)$\textsuperscript{th} arrival to $F^i$. In this way, if a matching algorithm for $\mathcal{I}_{\phi}$ matches all arriving nodes, we may refer to it as ``satisfying'' certain clauses. 

The following observation holds by a simple exchange argument. 

\begin{obs}\label{obs:matchallvarnodespspace}
    The optimal online algorithm for $\calI_{\phi}$ matches all arriving variable nodes.  
\end{obs}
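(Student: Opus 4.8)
The plan is to prove this by the standard exchange argument, resting on the elementary fact that in any online matching instance a single offline node is worth at most one unit of matched weight. Concretely, I would first record: for any online stochastic bipartite matching instance $G$ and any offline node $i$, $\opton(G)\le \opton(G-i)+1$, where $G-i$ deletes $i$. To see this, take an optimal online policy $\pi$ for $G$ and simulate it on $G-i$ via a policy $\pi'$ that copies $\pi$ except that whenever $\pi$ would match an arriving node to $i$, $\pi'$ instead leaves that node unmatched (while internally pretending the match to $i$ occurred, so as to keep mimicking $\pi$). Since $i$ is matched at most once along any realization, $\pi'$ forgoes at most one matched edge relative to $\pi$ on every realization; hence $\opton(G-i)\ge \mathrm{val}(\pi')\ge \opton(G)-1$. (Only an upper bound of $1$ on an edge's weight is used; in our unweighted reduction each forgone edge costs exactly $1$.)

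Next I would pass to the MDP view of $\opton(\calI_\phi)$: the state just before the arrival of a variable node $t$ is simply the set $S$ of currently free offline nodes (together with the known identity of $t$), and the optimal continuation value from there is a well-defined quantity $V(S)$. The structural observation that makes the exchange go through is that every offline node $T^t$ or $F^t$ is incident only to the variable node $t$ and to clause nodes, and all clause nodes arrive strictly after all variable nodes; consequently, when an arriving variable node $t$ is processed, its neighbor(s) --- $\{T^t,F^t\}$ if $t$ is odd, or $\{F^t\}$ if $t$ is even --- are still free, i.e., contained in $S$. Matching $t$ to $F^t$ then yields continuation value $1+V(S\setminus\{F^t\})$, whereas leaving $t$ unmatched yields $V(S)$; by the fact above, $V(S)\le V(S\setminus\{F^t\})+1$, so matching $t$ is weakly optimal. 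Applying this at each variable arrival, processed in arrival order so that the neighbor of the current variable node is unaffected by matches made at earlier variable arrivals, produces an optimal online policy matching every arriving variable node, which is exactly the assertion of the observation.

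I expect no serious obstacle here: the argument is routine, and the only points requiring a little care are (i) checking that the simulated policy $\pi'$ in the first paragraph is a legitimate online policy for $G-i$ and that its only loss relative to $\pi$ is the single match to $i$, and (ii) arguing that the single-arrival exchange composes across all variable arrivals simultaneously (handled by the induction in arrival order above). I would also state the conclusion as the existence of an optimal online policy matching all arriving variable nodes --- the form in which the subsequent reduction uses it --- since for a randomized optimal policy this need not hold verbatim for every optimal choice.
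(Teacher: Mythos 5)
Your proof is correct and is precisely the ``simple exchange argument'' that the paper invokes without elaboration: the two points you isolate --- that a variable node's offline neighbors are necessarily still free when it arrives (since clause nodes come later and no other variable node touches them), and that freeing one offline node can raise the optimal continuation value by at most $1$ --- are exactly what makes the exchange go through. Your closing caveat, that the conclusion should be read as the existence of an optimal online policy matching all arriving variable nodes, is also the form in which the subsequent reduction uses the observation.
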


The following observations follows by noting that we use that no variable $x_{2i}$ appears negated in any clause of $\phi$.

\begin{obs}\label{obs:matchcppspace}
Say an online algorithm for $\mathcal{I}_{\phi}$ matches all of the first $n$ variable nodes that arrive. Fix a clause $C \in \mathcal{C}$. \begin{enumerate}[label=(\alph*)]
\item If $C$ was not satisfied, we cannot match the corresponding clause node.  
\item If $C$ was satisfied, we can guarantee matching the corresponding clause node with probability $p$. (For example, we can discard all arriving stochastic nodes for clauses other than $C$.)

\end{enumerate}
\end{obs}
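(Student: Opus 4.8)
The plan is to reduce everything to bookkeeping about which offline nodes are free at the end of the ``variable phase'' (the first $n$ arrivals), and then read off both parts. First I would make precise the correspondence already sketched in the appendix: under the hypothesis that every arriving variable node is matched, each $x_i$ receives a truth value --- for odd $i$, matching node $i$ to $T^i$ records $x_i=\texttt{True}$ and to $F^i$ records $x_i=\texttt{False}$; for even $i$, node $i$ arriving (it does so with probability $\tfrac{1}{2}$) and being matched to $F^i$ records $x_i=\texttt{False}$, while node $i$ not arriving records $x_i=\texttt{True}$. The one fact I would extract from this, by inspecting the cases, is: at the end of the variable phase, $F^i$ is occupied iff $x_i=\texttt{False}$, and (for odd $i$ --- the only $i$ for which $T^i$ exists) $T^i$ is occupied iff $x_i=\texttt{True}$. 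This uses that variable node $j$ is adjacent only to $T^j,F^j$, so no other variable node can alter the status of $F^i$ or $T^i$, and that matchings are irrevocable, so statuses only go from free to occupied.

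For part (a), I would take $C$ unsatisfied and recall that the clause node $v_C$ is adjacent to $F^i$ for each positive literal $x_i\in C$ and to $T^i$ for each negative literal $\overline{x_i}\in C$ (the negative case forces $i$ odd, since the randomly-set even-indexed variables never appear negated, so the required offline node exists). Falsification of $C$ means $x_i=\texttt{False}$ for each positive literal and $x_i=\texttt{True}$ for each negative literal of $C$, so by the bookkeeping fact every neighbor of $v_C$ is already occupied at the end of the variable phase, hence still occupied when $v_C$ arrives; thus $v_C$ cannot be matched.

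For part (b), I would take $C$ satisfied, fix a literal of $C$ that evaluates to \texttt{True}, and note that --- again by the bookkeeping fact --- the corresponding offline node ($F^j$ if the true literal is $x_j$, or $T^j$ if it is $\overline{x_j}$, with $j$ odd) is free at the end of the variable phase; call it $o$. Then I would exhibit the explicit policy suggested by the observation: during the clause phase, discard every arriving clause node except $v_C$, and if $v_C$ arrives, match it to $o$. Since no other clause node is ever matched and no variable node touches $o$ after the variable phase, $o$ stays free until $v_C$ arrives; so $v_C$ is matched with probability exactly $p$ (it arrives with probability $p$, and is matched whenever it does).

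I do not anticipate a real obstacle here; the content is the routine case check behind the bookkeeping fact. The only place requiring care is the even-indexed variables, where ``$\texttt{True}$'' is encoded by a \emph{non-arrival} rather than by a match --- this is precisely where the hypothesis that even-indexed variables never appear negated is used, so that their clause-node edges point only at $F^i$, which is free exactly when $x_i=\texttt{True}$, i.e.\ exactly when node $i$ did not arrive.
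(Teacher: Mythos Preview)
Your proposal is correct and follows exactly the argument the paper intends; the paper itself offers only a one-line justification (``follows by noting that we use that no variable $x_{2i}$ appears negated in any clause of $\phi$''), and your write-up simply unpacks this into the natural case analysis, correctly isolating the even-index subtlety as the one place where that hypothesis is needed.
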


We emphasize that this claim only holds for individual clauses $C$ and not all clauses simultaneously. We now bound the performance of the optimal online algorithm on $\calI_{\phi}$ in terms of $\textsc{Opt}$ to complete the reduction. 

\begin{claim} \label{claim:optonatmostpspace}
Any online algorithm for the matching instance $\calI_{\phi}$ matches at most $ \frac{3n}{4} + p \cdot \textsc{Opt}$ nodes (in expectation). 
\end{claim}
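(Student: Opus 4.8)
The plan is to bound the expected number of matched nodes by splitting $\calI_\phi$'s matches into two groups: matches to variable nodes, and matches to clause nodes. First I would handle the variable nodes. There are $n$ variable nodes, $n/2$ of which (the odd-indexed ones) arrive with probability $1$, and $n/2$ of which (the even-indexed ones) arrive with probability $\nicefrac12$; so regardless of the algorithm, the expected number of variable-node matches is at most $\frac{n}{2} + \frac12 \cdot \frac{n}{2} = \frac{3n}{4}$. (Here I am tacitly assuming $n$ is even, or absorbing a $\pm 1$ rounding term; one can also just write $\leq \frac{3n}{4}$ as an upper bound on $\lceil n/2 \rceil + \frac12 \lfloor n/2 \rfloor$ for the relevant parity, or note the reduction can pad $\phi$ to make $n$ even.) So it remains to show the expected number of clause-node matches is at most $p \cdot \textsc{Opt}$.

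Fix an arbitrary online algorithm $\calA$ for $\calI_\phi$. For the clause-node bound, the key step is to use linearity of expectation over clauses: the expected number of clause-node matches equals $\sum_{C \in \mathcal C} \Pr[\calA \text{ matches the node for } C]$. The plan is to show that for each individual clause $C$, $\Pr[\calA \text{ matches the node for } C] \le p \cdot \Pr[C \text{ is satisfied by the induced assignment}]$, where the ``induced assignment'' is read off from how $\calA$ matches (or fails to match) the variable nodes — using the bijection described before the statement, with the convention that if $\calA$ declines to match some arriving variable node we may fix its corresponding variable arbitrarily (this only weakens our upper bound). Indeed, by Observation~\ref{obs:matchcppspace}(a), if $C$ is not satisfied then its offline neighbors $\{T^i \text{ or } F^i : l_i \in C\}$ are all matched to variable nodes by the time the clause node for $C$ arrives, so it cannot be matched; hence $\Pr[\calA\text{ matches node for }C \mid C \text{ unsatisfied}] = 0$. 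Meanwhile, the clause node for $C$ arrives with probability $p$, so trivially $\Pr[\calA \text{ matches node for } C \mid C \text{ satisfied}] \le p$. Combining and summing over $C$ gives that the expected number of clause-node matches is at most $p \cdot \E[\#\{C : C \text{ satisfied}\}]$, and this last expectation is at most $\textsc{Opt} = \opton(\phi)$ since the induced assignment's decisions on odd variables are made online (without foresight of nature's future coin flips) and nature's behavior in $\phi$ matches the arrival behavior of even variable nodes in $\calI_\phi$, so $\calA$ induces a valid online strategy for the \textsc{Stochastic-3-Sat} instance.

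The one subtlety — and the main point requiring care rather than difficulty — is confirming that the induced \textsc{Stochastic-3-Sat} strategy is genuinely online and valid: $\calA$'s choice of whether to match the $(2i+1)$-th arrival to $T^i$ or $F^i$ depends only on the realized arrivals so far (i.e., on nature's coin flips for $x_2, x_4, \ldots, x_{2i}$), which is exactly the information available to a \textsc{Stochastic-3-Sat} player when setting $x_{2i+1}$; and the arrival/non-arrival of the $(2i)$-th variable node is an unbiased coin independent of everything prior, matching nature's uniform choice for $x_{2i}$. This is where the hypotheses that even-indexed variable nodes arrive with probability exactly $\nicefrac12$ and that clause nodes come after all variable nodes are used. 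Hence the expected number of satisfied clauses under the induced strategy is at most $\opton(\phi) = \textsc{Opt}$, and adding the $\frac{3n}{4}$ bound on variable-node matches completes the proof.
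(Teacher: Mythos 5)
Your decomposition (variable-node matches bounded by $\tfrac{3n}{4}$, clause-node matches bounded per clause by $p\cdot\Pr[C\text{ satisfied}]$ via \Cref{obs:matchcppspace}(a) and the induced online \textsc{Stochastic-3-Sat} strategy) is exactly the paper's argument, and the part you flag as the ``one subtlety'' --- that the induced strategy is a genuine online strategy for $\phi$ --- is handled correctly.

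There is, however, one genuine gap: your parenthetical claim that if $\calA$ declines to match an arriving variable node we may ``fix its corresponding variable arbitrarily'' and ``this only weakens our upper bound.'' It weakens it in the wrong direction. \Cref{obs:matchcppspace}(a) is stated (and is only true) for algorithms that match \emph{all} arriving variable nodes: its conclusion that an unsatisfied clause's node cannot be matched rests on all of that clause's offline neighbors already being occupied. If $\calA$ declines the $(2i+1)$-th arrival, both $T^i$ and $F^i$ remain free, so a clause node whose clause is ``unsatisfied'' under your arbitrary fixing can still be matched to one of them; the inequality $\Pr[\calA\text{ matches the node for }C]\le p\cdot\Pr[C\text{ satisfied}]$ then fails. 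The compensation for this is that such an algorithm also falls short of the $\tfrac{3n}{4}$ variable-node count, but your proof bounds the two contributions separately and so cannot exploit that trade-off. The paper closes this exact hole with \Cref{obs:matchallvarnodespspace}: a one-line exchange argument (matching an arriving variable node to some $T^i$ or $F^i$ costs at most one future clause-node match, since that offline node can absorb at most one clause node anyway) shows every algorithm is dominated by one matching all arriving variable nodes, so it suffices to prove the bound for that class --- where your per-clause argument is then valid verbatim. With that restriction inserted at the start, your proof is complete.
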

\begin{proof}
    By \Cref{obs:matchallvarnodespspace}, we can restrict our attention to algorithm $\mathcal{A}$ which match all arriving variable nodes. The expected number of variable nodes that arrive is precisely $\frac{3n}{4}$; moreover, if $\mathcal{A}$ satisfies $k$ clauses in expectation, by \Cref{obs:matchcppspace} it matches at most $k \cdot p$ stochastic nodes. 
\end{proof}

We also have the following lower bound.

\begin{restatable}{claim}{optonatleastpspace} \label{claim:optonatleastpspace}
There exists an online algorithm on $\calI_{\phi}$ which matches at least $ \frac{3n}{4} + p \cdot \textsc{Opt} - n \cdot C_k \cdot p^2 $ nodes (in expectation) for a constant $C_k$. 
\end{restatable}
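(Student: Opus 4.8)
The plan is to produce a concrete online algorithm $\mathcal{B}$ for $\calI_\phi$ by lifting the optimal online algorithm $\mathcal{A}^*$ for the $\textsc{Stochastic-3-Sat}$ instance $\phi$ through the bijection described above. On the variable nodes, $\mathcal{B}$ plays exactly the matching prescribed by $\mathcal{A}^*$, reading the arrival bit of each even variable node $2i$ as nature's choice for $x_{2i}$; on each arriving clause node of a clause $C$, $\mathcal{B}$ matches it greedily to an arbitrary currently-free neighbor if one exists, and leaves it unmatched otherwise. This is a legitimate online algorithm, and since $\mathcal{B}$ matches every arriving variable node, and the expected number of arriving variable nodes is $\tfrac{3n}{4}$ (exactly as in the proof of \Cref{claim:optonatmostpspace}), it remains only to lower bound the expected number of clause nodes that $\mathcal{B}$ matches.

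For this, let $\mathcal{S}$ denote the (random) set of clauses satisfied by the assignment that $\mathcal{A}^*$ produces during the variable phase; by the bijection and optimality of $\mathcal{A}^*$, $\E[|\mathcal{S}|] = \textsc{Opt}$. The key observation is that the clause nodes' arrival indicators are independent of everything that happens during the variable phase. I would therefore condition, for each clause $C$, on the event $\{C \in \mathcal{S}\} \cap \{C \text{ arrives}\}$, which has probability $p \cdot \Pr[C \in \mathcal{S}]$, and argue that on this event $\mathcal{B}$ matches $C$'s node with probability at least $1 - (k-1)p$. Indeed, on this event $C$ has some free neighbor $v$ immediately after the variable phase (a vertex corresponding to a true literal of $C$, cf.\ \Cref{obs:matchcppspace}), and $v$ can fail to still be free when $C$'s node arrives only if one of the \emph{other} clause nodes incident to $v$ arrived earlier and was matched to $v$; since the variable underlying $v$ appears in at most $k$ clauses, there are at most $k-1$ such clause nodes, each arriving independently with probability $p$, so a union bound gives the claim. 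Summing over $C$, the expected number of matched clause nodes is at least $\sum_C p \cdot \Pr[C \in \mathcal{S}] \cdot (1 - (k-1)p) = p\,(1-(k-1)p)\cdot\textsc{Opt}$.

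To conclude, I would invoke the bounded-degree hypothesis once more: since each of the $n$ variables occurs in at most $k$ clauses and every clause contains at least one variable, $|\mathcal{C}| \le nk$, hence $\textsc{Opt} \le nk$ and $p(1-(k-1)p)\,\textsc{Opt} \ge p \cdot \textsc{Opt} - (k-1)p^2 \cdot nk \ge p \cdot \textsc{Opt} - n \cdot C_k \cdot p^2$ with $C_k := k^2$. Adding the $\tfrac{3n}{4}$ contributed by the variable nodes yields the claimed bound. I expect the only delicate point to be the second paragraph: \Cref{obs:matchcppspace}(b) only lets us match a \emph{single} clause node with probability $p$, so the argument must use the bounded-degree structure of $\phi$ to show that the simultaneous interference among the matched clause nodes costs only $O(p^2)$ per satisfied clause — small enough not to swamp the order-$p$ signal relating $\opton(\calI_\phi)$ to $\textsc{Opt}$.
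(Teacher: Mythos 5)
Your proposal is correct and follows essentially the same route as the paper: lift the optimal $\textsc{Stochastic-3-Sat}$ strategy to the variable phase of $\calI_\phi$, match arriving clause nodes greedily, and use the bounded degree of $\phi$ to show that interference among clause nodes costs only $O_k(n p^2)$, which is dominated by the order-$p$ signal. The only (immaterial) difference is bookkeeping of the error term: you bound the loss per satisfied clause via a union bound over the at most $k-1$ competing clause nodes incident to its free neighbor and then invoke $|\mathcal{C}| \le nk$, whereas the paper counts the excess clause-edge arrivals $\E[\max(X_i-1,0)]$ per offline node; both yield a constant $C_k$ depending only on $k$.
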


\begin{proof}
    Consider the matching algorithm corresponding to optimum online algorithm for $\phi$; this algorithm matches all arriving variable nodes ($3n/4$ in expectation) and satisfies $\textsc{Opt}$ clauses in expectation. For each $C \in \mathcal{C}$, let $A_C$ denote an edge in $\mathcal{I}_{\phi}$ which would be greedily matched conditioned on no stochastic nodes corresponding to other clauses arriving. Define $A := \cup_{C \in \mathcal{C}} A_{C}$ and note that by \Cref{obs:matchcppspace} we have $\E[|A|] = p \cdot \textsc{Opt}$. As $A$ might not be a matching, we define $B$ to be the set of all arriving edges from clause nodes that are not the first arriving clause-edge to their offline node. Our algorithm's performance is then lower bounded by $$\frac{3n}{4} + \E[|A|] - \E[|B|].$$  
    To upper bound $\E[|B|]$ we will use our assumption that $\phi$ is bounded degree. In particular this implies that a fixed offline node $i$ in $\mathcal{I}_{\phi}$ has at most $k$ edges from clause nodes that may arrive. If $X_i$ denotes the (random) number of edges incident to $i$ from clause nodes that arrive, we note that 
    \begin{align*}
    \E[|B|] &= \sum_{i=1}^{2n} \E[\max(X_i - 1, 0)]  \le 2n \cdot \sum_{i=2}^k i \cdot  \binom{k}{i} \cdot p^i \cdot (1-p)^i 
    \le  (k-1) \cdot k \cdot \binom{k}{k/2} \cdot p^2 
    \le C_k \cdot p^2
    \end{align*}
    where $C_k = k^2 \cdot 2^k $ is a constant.  Hence, the expected gain of our algorithm is at least 
    \begin{align*}
    &\frac{3n}{4} + p \cdot \textsc{Opt} - n \cdot p^2 \cdot C_k . \qedhere 
    \end{align*}
\end{proof}

    To conclude, say we have an algorithm $\mathcal{A}$ for online stochastic matching that is $\beta$-approximate for $\beta < 1$. Note that, by \Cref{claim:optonatmostpspace} and \Cref{claim:optonatleastpspace}, we have that $$\frac{\mathcal{A}(\phi) - 3n/4}{p} \in \left[ \frac{\beta(3n/4 + p \cdot \textsc{Opt} - n \cdot p^2 \cdot C_k) - 3n/4}{p},  \textsc{Opt} \right].$$ The lower bound of this interval simplifies to 
    $\textsc{Opt} \cdot \beta + n \cdot \frac{\left( \frac{3}{4} \left( \beta - 1 \right) - \beta \cdot p^2 \cdot C_k \right)}{p} . $ Note that $\textsc{Opt} \ge \frac{7}{8} \cdot C$ (achieved by setting all variables uniformly at random) and $n \le 3C \le \frac{24}{7} \cdot \textsc{Opt}$. So, the lower bound of the interval is at least $\textsc{Opt} \cdot \beta + \frac{24}{7} \cdot \textsc{Opt} \cdot \frac{\left( \frac{3}{4} \left( \beta - 1 \right) - \beta \cdot p^2 \cdot C_k \right)}{p} . $ Thus, if there exists some $p$ and $\beta$  such that $\beta + \frac{24}{7}   \cdot \frac{\left( \frac{3}{4} \left( \beta - 1 \right) - \beta \cdot p^2 \cdot C_k \right)}{p} \ge \alpha$ then it is $\mathsf{PSPACE}$-hard to approximate the optimal online algorithm for instances of the form $\mathcal{I}_{\phi}$. Take $p = (1 - \sqrt{\alpha}) \cdot \frac{1}{\frac{24}{7} \cdot C_k}$; this is then equivalent to $\beta + K \cdot (\beta - 1) - \beta \cdot (1 - \sqrt{\alpha}) \ge \alpha$ for the universal constant $K = \frac{18}{7p}$. It suffices to take $\beta = \frac{\alpha + K}{\sqrt{\alpha} + K}$ which is some constant smaller than 1.

\section{Generalizing to non-Bernoulli arrivals}\label{app:general}

For simplicity, our previous algorithm and analysis dealt with only the case of ``Bernoulli arrivals,'' assuming that each online node $t$ arrived with some fixed neighborhood with probability $p_t$, and with probability $1-p_t$ didn't neighbor anyone. In this section, we show that our main edge-weighted result (and necessary analysis) extend to the more general non-Bernoulli case. In particular, we now assume that online node $t$ realizes \emph{type} $j$ with probability $p_{j,t}$, where the type $j$ specifies weights $\{w_{i,j,t}\}_i$ to offline nodes. 

As in \cite{papadimitriou2021online, braverman2022max,naor2023online} we can generalize our LP relaxation as follows:

\noindent \textbf{LP Relaxation for General Arrivals.}
\begin{align} \tag{LP-OPTon-Gen} \label{LP-PPSW-Gen}
    \nonumber \max & \sum_{(i,j)\in E} \sum_{t} w_{i, j, t} \cdot x_{i,j,t} \\
    \textrm{s.t.} 
    \enspace 
    &\sum_{i} x_{i, j, t} \le p_{j,t} && \text{for all }j,t \label{eqn:OnlineMatchingConstraintgen}\\
    & 0\leq x_{i, j, t} \le p_{j,t} \cdot \left( 1 - \sum_{t' < t}\sum_{j'} x_{i, j', t'} \right) && \text{for all } i,j,t \label{eqn:PPSWConstraintgen}
 \end{align}

We consider the following algorithm. Here, we do not apply independent discarding, and hence do not obtain negative cylinder dependence of offline nodes. However, we do obtain an upper bound on the probability all nodes are occupied, which suffices to apply the fractional bucketing bound, as we show below. 

\begin{algorithm}[H]
	\caption{Online Correlated Proposals (Generalized)}
	\label{alg:proposals-core-generalized}
	\begin{algorithmic}[1]
 \Statex \textbf{Input:} A vector $\vec x$ satisfying \eqref{LP-PPSW-Gen}~constraint~\eqref{eqn:PPSWConstraintgen} 
 \medskip
\State $\calM\gets \emptyset,\;  F_1\gets [n]$
		\Comment{$\calM$ is the output matching}
		\ForAll{times $t$} \label{line:loop-start-gen}
        \State $F_{t+1}\gets F_t$ \Comment{initially, no free node $i\in F_t$ is matched before time $t+1$}
        \State{$j \gets$ arrival at time $t$}
		\ForAll{offline nodes $i$}
		\State $r_{i,j,t} \gets \frac{x_{i,j,t}}{p_{j,t}\cdot \left(1-\sum_{j, t'<t} x_{i,j,t'}\right)}$
        \EndFor
        \State Let $\vec v_j$ be the vector $(r_{i,j,t})_{i \in F_t}$ indexed by $i$ sorted in decreasing order of $w_{i,j,t}$
        \State Let $I_{j,t} \gets \textsf{PS}(F_t, \vec v_j)$ \label{line:calltoSRgen}
    
        \State Pick some $i^* \in \arg\max_{i\in I_{j,t}} \{w_{i,j,t}\}$  
         \State Add $(i^*, t)$ to the matching $\calM$ and set $F_{t+1}\gets F_{t+1}\setminus \{ i^* \}$  \label{line:matchtgen}
     
        \EndFor
        
	\end{algorithmic}
\end{algorithm}

We first claim that this algorithm is well-defined. Indeed, by Constraint (\ref{eqn:PPSWConstraintgen}) of our generalized LP we have that each $r_{i,j,t} \le 1$, so the call to $\textsf{PS}(\cdot, \cdot)$ in Line~\ref{line:calltoSRgen} is well-defined. Additionally, if in \Cref{line:matchtgen} we match an arrival of type $j$ at $t$ to $i^*$, we know $i^*$ is free at time $t$, because it is in $I_{j,t}$. 

\paragraph{Notation.} Generalizing our notation for the Bernoulli algorithm, we  define $y_{i,t} := \sum_{t' < t} \sum_j x_{i,j,t'}$. 

\medskip

We now prove our upper bound on the probability a subset of offline nodes $I$ are all occupied, proceeding as in \cite{braverman2022max}. 
\begin{lem}\label{lem:NCDcoupling}
    For any $t$ and subset of offline nodes $I$ we have $$\Pr \left[ \bigwedge_{i \in I} \overline{F_{i,t}} \right] \le \prod_{i \in I} y_{i,t}.$$
\end{lem}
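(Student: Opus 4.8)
The plan is to prove the bound by induction on $t$, closely following the proof of the second half of \Cref{lem:NCD} in \Cref{app:algo}, but adapting it to the fact that \Cref{alg:proposals-core-generalized} matches (and hence occupies) \emph{at most one} offline node per time step, since there is no independent discarding. For the base case $t=1$ we have $F_1=[n]$, so $\Pr[\bigwedge_{i\in I}\overline{F_{i,1}}]=0=\prod_{i\in I}y_{i,1}$ whenever $I\neq\emptyset$ (and the claim is trivial for $I=\emptyset$). For the inductive step, assume the claim at time $t$ for every subset, fix $I\subseteq\OFF$, and abbreviate $q_i:=y_{i,t}$.

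First, I would decompose on $J:=F_t\cap I$, the set of nodes of $I$ still free at time $t$. For all of $I$ to be occupied at time $t+1$, every node of $J$ must be matched during step $t$; since at most one offline node is matched at step $t$, only $|J|\le 1$ contributes, so
\[
\Pr\Big[\bigwedge_{i\in I}\overline{F_{i,t+1}}\Big]=\Pr[F_t\cap I=\emptyset]+\sum_{i_0\in I}\Pr[F_t\cap I=\{i_0\}]\cdot\Pr[i_0\text{ matched at }t\mid F_t\cap I=\{i_0\}].
\]
Since $i_0$ being matched at time $t$ forces $i_0\in I_{j,t}$, I would bound the last conditional probability by $\Pr[i_0\in I_{j,t}\mid F_t\cap I=\{i_0\}]$ and evaluate it as in the proof of \Cref{obs:matchmarginals}: conditioning additionally on the arrival type $j$ and on $F_t$, Property \ref{level-set:marginals} of pivotal sampling gives $\Pr[i_0\in I_{j,t}\mid F_t,j]=r_{i_0,j,t}$ (using $i_0\in F_t$), and since $r_{i_0,j,t}$ depends only on $j$, which is independent of the history before time $t$ and hence of the event $\{F_t\cap I=\{i_0\}\}$, averaging over $j$ yields $\sum_j p_{j,t}\,r_{i_0,j,t}=\frac{\sum_j x_{i_0,j,t}}{1-y_{i_0,t}}=\frac{y_{i_0,t+1}-y_{i_0,t}}{1-y_{i_0,t}}=:\lambda_{i_0}$. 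Here $\lambda_{i_0}\in[0,1]$: non-negativity is clear, and $\lambda_{i_0}\le 1$ follows from \eqref{eqn:PPSWConstraintgen} together with $\sum_j p_{j,t}\le 1$ (the degenerate case $y_{i_0,t}=1$, where $\sum_j x_{i_0,j,t}=0$, is handled by setting $\lambda_{i_0}:=1$).

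The crux is then to \emph{add back} the (non-negative) missing $|J|\ge 2$ terms, obtaining the upper bound $\Pr[\bigwedge_{i\in I}\overline{F_{i,t+1}}]\le\sum_{J\subseteq I}\Pr[F_t\cap I=J]\prod_{j\in J}\lambda_j$. Applying \Cref{fact:independent-coins} with ground set $I$, random set $X=F_t\cap I$, and $p_j:=1-\lambda_j$ rewrites the right-hand side as $\sum_{J\subseteq I}\Pr[F_t\cap I\subseteq J]\prod_{j\in J}\lambda_j\prod_{j\in I\setminus J}(1-\lambda_j)$. Since $\{F_t\cap I\subseteq J\}=\{\bigwedge_{j\in I\setminus J}\overline{F_{j,t}}\}$, the inductive hypothesis gives $\Pr[F_t\cap I\subseteq J]\le\prod_{j\in I\setminus J}y_{j,t}$, and the multi-binomial theorem collapses the sum to $\prod_{j\in I}\big(\lambda_j+y_{j,t}(1-\lambda_j)\big)=\prod_{j\in I}\big(y_{j,t}+\lambda_j(1-y_{j,t})\big)=\prod_{j\in I}y_{j,t+1}$, where the last equality uses $\lambda_j(1-y_{j,t})=\sum_k x_{j,k,t}=y_{j,t+1}-y_{j,t}$. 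This closes the induction.

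I expect the one genuinely delicate point to be the conditioning computation $\Pr[i_0\in I_{j,t}\mid F_t\cap I=\{i_0\}]=\lambda_{i_0}$: one must invoke pivotal sampling's marginal guarantee conditional on the \emph{full} state at time $t$ (the free set $F_t$ and the realized arrival) and separately use that the arrival type at time $t$ is drawn independently of everything before it, exactly as in \Cref{obs:matchmarginals}. The remaining steps are the same bookkeeping as in \Cref{lem:NCD}, with the ``add back non-negative terms'' move playing the role previously played by the independent $\Ber(p_t)$ discards.
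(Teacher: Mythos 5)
Your proof is correct and follows essentially the same route as the paper's: the same decomposition over $F_t\cap I$ using that at most one offline node is matched per step, adding back the non-negative $|J|\ge 2$ terms, applying \Cref{fact:independent-coins} (the paper invokes its bucketed generalization \Cref{fact:independent-buckets}, which coincides with it for singleton buckets), the inductive hypothesis, and the multi-binomial collapse. Your extra care with the conditioning for $\Pr[i_0\in I_{j,t}\mid F_t\cap I=\{i_0\}]$ and the verification that $\lambda_{i_0}\in[0,1]$ are details the paper leaves implicit, but nothing differs in substance.
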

\begin{proof}
We proceed by induction on $t$, with the base case being trivial.
For the inductive step, using that at most one node in $I$ is matched to $t$, we have the following.
\begin{align*}
    \Pr \left[ \bigwedge_{i \in I} \overline{F_{i,t+1}} \right] &\le \Pr[F_t \cap I = \emptyset ] + \sum_{i \in I} \Pr[F_t \cap I = \{i\}  ] \cdot \sum_j p_{j,t} \cdot r_{i,j,t} && \text{Property~\ref{level-set:marginals}} \\
    &\le \sum_{H \subseteq I} \Pr \left[F_t \cap I = H \right] \cdot \prod_{i \in H} \left( \sum_{j} \frac{x_{i, j, t}}{1 - y_{i,t}} \right) \\
    &= \sum_{H \subseteq I} \Pr[F_t \cap I \subseteq H] \cdot \prod_{i \in H} \left(   \sum_{j} \frac{x_{i, j, t}}{1 - y_{i,t}} \right) \cdot \prod_{i \in I \setminus H} \left(  1 - \sum_{j} \frac{x_{i, j, t}}{1 - y_{i,t}} \right) && \text{\Cref{fact:independent-buckets}} \\
    &\le \sum_{H \subseteq I} \prod_{i \in I \setminus H} \left[ y_{i,t} \left( 1 - \sum_{j} \frac{x_{i, j, t}}{1 - y_{i,t}} \right) \right] \prod_{i \in H}  \left(   \sum_{j} \frac{x_{i, j, t}}{1 - y_{i,t}} \right) && \text{I.H.} \\
    &= \prod_{i \in I} \left(  y_{i,t} \left( 1 - \sum_{j} \frac{x_{i, j, t}}{1 - y_{i,t}} \right) + \sum_j \frac{x_{i,j,t}}{1-y_{i,t}} \right) && \hspace{-3em} \textrm{Multi-binomial Thm} \\
    &= \prod_{i \in I} y_{i,t}+\sum_j x_{i,j,t} 
    \\
    &= \prod_{i \in I} y_{i,t+1}. && \qedhere
\end{align*}
\end{proof}

The analysis of our algorithm for the Bernoulli case can be extended syntactically to general distributions, by incorporating an additional index. In particular we define an \emph{extended type} to be an ordered pair $(j,t)$. Scaling as in \Cref{sec:scaling} generalizes as follows.

\begin{Def}\label{def:scaling-gen-app}
For non-decreasing function $f: [0,1] \mapsto \mathbb{R}_{\ge 0}$ with $\int_0^1 f(z) \, dz = 1$, define for~$i,j,t$:
\begin{align}
\hat{x}_{i,j,t} & := \int_{y_{i,t}}^{y_{i,t}+x_{i,j,t}} f(z) \, dz,
\\
\hat{y}_{i,t} & := \sum_{t' < t} \sum_j \hat{x}_{i,j,t'} \leq \int_{0}^{y_{i,t}} f(z) \, dz. \label{eqn:yithat-gen}
\end{align} 
\end{Def}

Note the inequality in \Cref{eqn:yithat-gen} (which is an equality for the Bernoulli problem) follows from monotonicity of $f(\cdot)$.

Again, the transformation $\vec{x}\mapsto \vec{\hat{x}}$ preserves Constraint \eqref{eqn:PPSWConstraintgen}, i.e.,  $\hat{r}_{i,j,t}:=\frac{\hat{x}_{i,j,t}}{p_t(1-\hat{y}_{i,t})}$ is in $[0,1]$ if $r_{i,j,t}\in [0,1]$. Non-negativity is trivial, while the upper bound is proven in the following. 
\begin{claim}
    If $\vec{x}$ satisfies Constraint \eqref{eqn:PPSWConstraintgen}, then $\hat{x}_{i,j,t} \le p_{j,t} \cdot \left( 1 - \hat{y}_{i,t} \right)$ for all $i,j,t$.
\end{claim}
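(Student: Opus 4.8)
The plan is to mimic exactly the proof of \Cref{hatxwelldefined} from the Bernoulli case, inserting the extra type index $j$ and using the key inequality \eqref{eqn:yithat-gen} (which only holds with ``$\le$'' here, by monotonicity of $f$) in place of the equality used before. Recall $\hat{x}_{i,j,t} := \int_{y_{i,t}}^{y_{i,t}+x_{i,j,t}} f(z)\,dz$, where $y_{i,t} = \sum_{t'<t}\sum_{j'} x_{i,j',t'}$. The strategy is a chain of three inequalities: first bound the integral $\int_{y_{i,t}}^{y_{i,t}+x_{i,j,t}} f(z)\,dz$ by $\frac{x_{i,j,t}}{1-y_{i,t}}\cdot \int_{y_{i,t}}^1 f(z)\,dz$, using that $f$ is non-decreasing (the average of $f$ over the short interval $[y_{i,t}, y_{i,t}+x_{i,j,t}]$ is at most its average over the longer interval $[y_{i,t},1]$, since the length ratio is $x_{i,j,t}/(1-y_{i,t})$ and $f$ is monotone). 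Second, apply Constraint \eqref{eqn:PPSWConstraintgen}, which gives $\frac{x_{i,j,t}}{1-y_{i,t}} \le p_{j,t}$, together with $\int_{y_{i,t}}^1 f(z)\,dz = 1 - \int_0^{y_{i,t}} f(z)\,dz$ (using $\int_0^1 f = 1$). Third, invoke \eqref{eqn:yithat-gen}, namely $\hat{y}_{i,t} \le \int_0^{y_{i,t}} f(z)\,dz$, so that $1 - \int_0^{y_{i,t}} f(z)\,dz \le 1 - \hat{y}_{i,t}$. Concatenating yields $\hat{x}_{i,j,t} \le p_{j,t}\cdot(1-\hat{y}_{i,t})$.

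Concretely, I would write:
\begin{proof}
Using the definition of $\hat{x}_{i,j,t}$, monotonicity of $f(\cdot)$, Constraint \eqref{eqn:PPSWConstraintgen} together with $\int_0^1 f(z)\,dz = 1$, and finally \eqref{eqn:yithat-gen}, we obtain
\begin{align*}
    \hat{x}_{i,j,t} = \int_{y_{i,t}}^{y_{i,t}+x_{i,j,t}} f(z)\,dz \le \frac{x_{i,j,t}}{1-y_{i,t}}\cdot \int_{y_{i,t}}^1 f(z)\,dz \le p_{j,t}\cdot\left(1 - \int_0^{y_{i,t}} f(z)\,dz\right) \le p_{j,t}\cdot(1-\hat{y}_{i,t}).
\end{align*}
\end{proof}
The only genuinely new ingredient compared to the Bernoulli proof is that the final step is an inequality rather than an equality, because in the general case an online node at time $t$ can realize any of several types, so the total mass $\sum_j x_{i,j,t}$ spread at time $t$ is what matters for $y_{i,t}$, whereas $\hat y_{i,t}$ only sums the rescaled masses; monotonicity of $f$ makes the rescaling of a union of disjoint sub-intervals no larger than rescaling the whole interval $[0,y_{i,t}]$ at once, giving \eqref{eqn:yithat-gen}.

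I do not anticipate any real obstacle here: the statement is a routine adaptation, and the only point requiring a moment's care is making sure the direction of the inequality in \eqref{eqn:yithat-gen} is the one we need (it is — we want an upper bound on $\hat x_{i,j,t}$, and $\hat y_{i,t}$ appears with a minus sign, so an upper bound on $\hat y_{i,t}$... actually we need a \emph{lower} bound on $1 - \hat y_{i,t}$, i.e. an upper bound on $\hat y_{i,t}$, which is exactly \eqref{eqn:yithat-gen}). If anything, the main thing to double-check is that the first inequality's use of monotonicity is stated cleanly: for non-decreasing $f$ and $0 \le a \le a+\ell \le 1$, one has $\frac{1}{\ell}\int_a^{a+\ell} f \le \frac{1}{1-a}\int_a^1 f$, which is the standard fact that the running average of a monotone function over $[a, a+\ell]$ is dominated by its average over the larger right-interval $[a,1]$; applied with $a = y_{i,t}$, $\ell = x_{i,j,t}$ (valid since $x_{i,j,t} \le 1 - y_{i,t}$ by the constraint) this gives the displayed bound.
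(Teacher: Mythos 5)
Your proof is correct and essentially identical to the paper's: the same three-step chain (monotonicity of $f$ to bound the integral, Constraint \eqref{eqn:PPSWConstraintgen} together with $\int_0^1 f(z)\,dz=1$, and finally \eqref{eqn:yithat-gen}), including the key observation that the last step is now an inequality rather than the equality of the Bernoulli case. One small imprecision in your closing remark only: the sub-intervals $[y_{i,t'},\,y_{i,t'}+x_{i,j,t'}]$ for different types $j$ at the same time $t'$ all share the left endpoint $y_{i,t'}$, so they overlap rather than form a disjoint union; \eqref{eqn:yithat-gen} still follows from monotonicity of $f$, since sliding these overlapping pieces rightward into consecutive positions inside $[0,y_{i,t}]$ can only increase the integral.
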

\begin{proof}
    Using the definition of $\hat{x}_{i,j,t}$, monotonicity of $f(\cdot)$, Constraint \eqref{eqn:PPSWConstraintgen} and $\int_{0}^1 f(z) \; dz = 1$, and finally that $\hat{y}_{i,j,t}\leq \int_{0}^{y_{i,t}} f(z)\; dz$, we obtain our desired bound. (The only change compared to the proof of \Cref{hatxwelldefined} is the final inequality, which is an equality for the former proof.)
\begin{align*}
    \hat{x}_{i,j,t} &= \int_{y_{i,t}}^{y_{i,t} + x_{i,j,t}} f(z) \, dz  \le \frac{x_{i,j,t}}{1-y_{i,t}} \cdot \int_{y_{i,t}}^1 f(z) \, dz 
    \le p_{j,t} \cdot \left( 1 - \int_0^{y_{i,t}} f(z) \, dz \right) \leq p_{j,t} \cdot (1 - \hat{y}_{i,t} ).  \qedhere
    \end{align*}
\end{proof}

\begin{lem}
    For any fixed $w \ge 0$ define $R_{j,t,w} := \sum_{i: w_{i,j,t} \ge w} r_{i,j,t} \cdot F_{i,t}.$ Then, if $\mathcal{M}$ denotes the matching produced by running \Cref{alg:proposals-core-generalized} on $\{\hat{x}_{i, j,t}\}$, we have $$\E[w(\mathcal{M})] \ge \sum_{j,t} p_{j,t} \cdot \int_0^{\infty} \E[\min(1,\hat{R}_{j,t,w})] \, dw. $$
\end{lem}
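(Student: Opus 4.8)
The statement is the direct generalization of \Cref{per-t-min-bound} (combined with the edge-weighted integral over thresholds, as in the proof of \Cref{thm:edge-weighted}) to the non-Bernoulli algorithm. The plan is to mimic that argument: fix a threshold $w \ge 0$, condition on the realized type $j$ at time $t$, and compute the probability that $t$ is matched to an offline neighbor of weight at least $w$. By \Cref{line:calltoSRgen} and \Cref{line:matchtgen} of \Cref{alg:proposals-core-generalized}, $t$ is matched to a highest-weight proposer in $I_{j,t}$; hence $t$ is matched to some $i$ with $w_{i,j,t}\ge w$ precisely when $\textsf{PS}(F_t,\vec v_j)$ contains at least one node among $\{i \in F_t : w_{i,j,t}\ge w\}$, which, since $\vec v_j$ is sorted in decreasing order of $w_{i,j,t}$, is exactly a prefix of $\vec v_j$. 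By Property~\ref{level-set:prefix} of pivotal sampling (\Cref{lem:SR}), conditioned on the free set $F_t$ and the type $j$, this probability equals $\min\bigl(1,\sum_{i\in F_t:\,w_{i,j,t}\ge w} r_{i,j,t}\bigr) = \min(1, \hat R_{j,t,w})$ (here the algorithm is run on $\vec{\hat x}$, so the relevant ratios are $\hat r_{i,j,t}$ and $\hat R_{j,t,w}$). Taking expectation over $F_t$ gives, for each type $j$ and time $t$,
\[
\Pr[\calM\ni (i,t):\ w_{i,j,t}\ge w \mid \text{type }j\text{ at }t] = \E[\min(1,\hat R_{j,t,w})].
\]

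Next I would integrate over $w$ and sum over the realized types. For a fixed time $t$, the contribution to $\E[w(\calM)]$ of the match made at $t$ is $\int_0^\infty \Pr[\calM\ni(i,t):\ w_{i,j,t}\ge w]\,dw$ by the layer-cake formula; conditioning on the type $j$ (which occurs with probability $p_{j,t}$, independently of the free set $F_t$ since types of distinct online nodes are independent and $F_t$ depends only on times $<t$), this equals $\sum_j p_{j,t}\int_0^\infty \E[\min(1,\hat R_{j,t,w})]\,dw$. Summing over $t$ and using linearity of expectation over the (at most one) edge added at each time step yields
\[
\E[w(\calM)] = \sum_{t}\sum_j p_{j,t}\int_0^\infty \E[\min(1,\hat R_{j,t,w})]\,dw,
\]
which is in fact an equality, and in particular implies the claimed inequality.

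The only genuine subtlety — and the step I would be most careful about — is the conditioning on $F_t$ and on the arrived type $j$ when invoking Property~\ref{level-set:prefix}. One must check that the arrival of type $j$ at time $t$ is independent of $F_t$ (true, because $F_t$ is a function of the arrivals and pivotal-sampling coin flips at times $t'<t$ only, and arrivals at different online nodes are independent), and that, conditioned on both $F_t$ and $j$, the call $\textsf{PS}(F_t,\vec v_j)$ is a fresh invocation of pivotal sampling to which \Cref{lem:SR} applies verbatim. Given this, the prefix property applies to the prefix $\{i\in F_t: w_{i,j,t}\ge w\}$ of $\vec v_j$ exactly as in the Bernoulli proof of \Cref{per-t-min-bound}. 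Everything else — the layer-cake identity, exchanging sum and integral (all terms nonnegative), and the fact that at most one edge is added per time step — is routine. Note that, unlike in the Bernoulli case, we no longer do independent discarding, so we do not claim any negative-correlation property here; but the lemma as stated does not need it, as it only concerns the single match made at time $t$.
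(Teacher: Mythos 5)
Your proof is correct and follows exactly the paper's (one-line) argument: the paper likewise derives this from Property~\ref{level-set:prefix} of the pivotal sampling call in \Cref{line:calltoSRgen}, with the conditioning on the realized type, the layer-cake integration over $w$, and the summation over $j,t$ left implicit. Your expanded write-up, including the check that the type realization at time $t$ is independent of $F_t$, fills in precisely the routine details the paper omits.
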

\begin{proof}[Proof]
This follows by Property~\ref{level-set:prefix} of the invocation of $\mathsf{PS}(\cdot, \cdot)$ in Line~\ref{line:calltoSRgen}. 
\end{proof}
Thus, following \Cref{sec:edge-weighted}, it suffices to show $$\E[\min(1,\hat{R}_{j,t,w})] \ge \edgeapprox \cdot \sum_{i: w_{i,t} \ge w} \frac{x_{i,j,t}}{p_{j,t}}  .$$
Now, this bound follows by a rather syntactic generalization from the analysis of that section, provided we establish the desired bounds on $\E[\min(1,X)]$ for $X$ not necessarily NCD.

It remains to note that our tail expectation bounds, i.e., our lower bound for $\E[\min(1,X)]$ of \Cref{sec:Emin1X}, are still relevant when analyzing this algorithm, even though the $\{F_{i,t}\}_i$ are not NCD in this generalization. However, our upper bound on $\Pr[\wedge_{i \in I} \overline{F_{i,t}}]\leq\prod_{i\in I} y_{i,t}$ of \Cref{lem:NCDcoupling} is still sufficient for the (fractional) bucketing bound to hold.
\begin{lem}
    Let $X:=\sum_{i=1}^n c_i\cdot X_i$, with $c_i\in [0,1]$ and $X_i\sim \textup{Ber}(q'_i)$ for all $i\in [n]$ and with $\Pr[\bigwedge_{i\in I} \overline{X_i}]\leq \prod_{i\in I} (1-q_i)$ for all $I\subset [n]$. Then,
    \begin{align}\E[\min(1,X)] & \geq 1-\prod_i(1-c_i\cdot q_i), \label{eqn:indcoin-generalization}\\
    \E[\min(1,X)] & \ge 1 - \left( 1 - (1-\theta) \cdot \Big\{ \frac{\mu_S}{1-\theta} \Big\} \right) \cdot \theta^{\lfloor \frac{\mu_S}{1-\theta} \rfloor} \cdot \prod_{i \notin S} \left( 1 - c_i \cdot q_i \right). \label{eqn:fracbucket-generalization}
    \end{align}
\end{lem}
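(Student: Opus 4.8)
The plan is to derive both inequalities from the proofs in \Cref{sec:Emin1X}, observing that the only role played there by negative cylinder dependence is to furnish exactly the one-sided product bound assumed here. For \eqref{eqn:indcoin-generalization} I would replay the proof of \Cref{lem:basic-independent-proposal-bound} line for line: set $S := \{i : X_i = 1\}$, expand $\E[\min(1,X)] = \sum_J \Pr[S=J]\cdot\min(1,\sum_{i\in J}c_i)$, apply \Cref{fact:unionbound} and then \Cref{fact:probabilisticfact2} to reach $1 - \sum_{J\subseteq[n]} \Pr[S\subseteq J]\cdot\prod_{i\notin J}c_i\cdot\prod_{i\in J}(1-c_i)$. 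The single appeal to NCD in that proof is the inequality $\Pr[S\subseteq J] = \Pr[\bigwedge_{i\notin J}\overline{X_i}] \le \prod_{i\notin J}(1-q_i)$, which is now precisely the lemma's hypothesis; the multi-binomial theorem then collapses the sum to $1 - \prod_i(1-c_iq_i)$, as before. (The hypothesis applied to singletons forces $1-q_i' \le 1-q_i$, i.e.\ $q_i\le q_i'\le 1$; beyond ensuring $q_i\in[0,1]$ and $X_i\in\{0,1\}$, the marginals $q_i'$ play no further role.)

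For \eqref{eqn:fracbucket-generalization} I would follow the proof of \Cref{lem:fracbucketingbound}, whose three ingredients are generalized pivotal sampling (\Cref{lem:pipage}), the independent coin bound, and the real-analysis inequality \Cref{claim:amgmbasedineq}. The last two port immediately: \Cref{claim:amgmbasedineq} is a statement about real numbers, and the independent coin bound is the inequality \eqref{eqn:indcoin-generalization} just proved. The main obstacle is pivotal sampling: \Cref{lem:pipage} shifts the coefficients $c_j,c_k$ by $\pm\rho\,\E[X_k]$ and $\mp\rho\,\E[X_j]$, which preserves $\sum_i c_i\E[X_i]$ but not $\mu_S := \sum_{i\in S}c_iq_i$ once $q_i\neq\E[X_i]=q_i'$ — and in the intended application of this lemma (via \Cref{lem:NCDcoupling}, where $1-q_i=\hat y_{i,t}$ can strictly exceed $\Pr[\overline{F_{i,t}}]$) these genuinely differ. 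I would get around this by noting that the proof of \Cref{lem:pipage} uses only the pointwise identity $X = \sigma X^+ + (1-\sigma)X^-$ with $\sigma=\rho^-/(\rho^++\rho^-)$, which holds verbatim for any common multiplier $m_j,m_k$ used in the two paired coefficient shifts, not specifically $\E[X_j],\E[X_k]$. Choosing $m_j=q_j,\ m_k=q_k$ instead gives a variant step that still yields $\E[\min(1,X)] \ge \min\big(\E[\min(1,X^+)],\E[\min(1,X^-)]\big)$ by concavity of $\min(1,\cdot)$, changes no $c_i$ with $i\notin S$, and now has net change $\pm\rho(q_kq_j-q_jq_k)=0$ in $\mu_S$ — so it preserves $\mu_S$ while never referencing the marginals.

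With that variant in hand the rest is a transcription of the existing argument. Taking $S:=\{i:q_i\ge 1-\theta\}$ and $\mu_S:=\sum_{i\in S}c_iq_i$ (and assuming $\theta<1$, so $q_i>0$ on $S$; the case $\theta=1$ follows from \eqref{eqn:indcoin-generalization}), I iterate the modified step over pairs of indices in $S$ with fractional coefficients — each iteration binarizes at least one such coefficient and binarizes-to-fractional none, so it terminates — reaching a vector $\vec c\,'$ that agrees with $\vec c$ off $S$, has at most one fractional entry on $S$, satisfies $\sum_{i\in S}c_i'q_i=\mu_S$, and has $\E[\min(1,X)]\ge\E[\min(1,X')]$ for $X':=\sum_i c_i'X_i$. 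Since $X'$ reuses the same $X_i$, the hypothesis carries over, so \eqref{eqn:indcoin-generalization} gives $\E[\min(1,X')]\ge 1 - \big(\prod_{i\in S}(1-c_i'q_i)\big)\prod_{i\notin S}(1-c_iq_i)$; and \Cref{claim:amgmbasedineq} (with the single fractional term rewritten exactly as in the footnote to the proof of \Cref{lem:fracbucketingbound}) bounds $\prod_{i\in S}(1-c_i'q_i)\le \big(1-(1-\theta)\{\mu_S/(1-\theta)\}\big)\theta^{\lfloor\mu_S/(1-\theta)\rfloor}$. Combining these two displays yields \eqref{eqn:fracbucket-generalization}. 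I expect all the work to be in phrasing the modified pivotal-sampling step cleanly; everything downstream is routine.
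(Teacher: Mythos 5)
Your proof is correct, and for \eqref{eqn:indcoin-generalization} it is exactly the paper's argument: the only use of NCD in the proof of \Cref{lem:basic-independent-proposal-bound} is the step \eqref{eqn:crucial-NCD}, $\Pr[S\subseteq J]\le\prod_{i\notin J}(1-q_i)$, which is now the hypothesis verbatim.

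For \eqref{eqn:fracbucket-generalization} you go beyond the paper in a way worth flagging. The paper's proof simply asserts that \Cref{lem:pipage} "holds regardless of the correlations" and that the rest follows from \eqref{eqn:indcoin-generalization}. That is true of \Cref{lem:pipage} as a statement, but you correctly observe that applying it verbatim transfers coefficient mass in proportion to the true marginals $\E[X_i]=q_i'$, and hence preserves $\sum_{i\in S}c_i q_i'$ rather than $\mu_S=\sum_{i\in S}c_i q_i$; in the intended application via \Cref{lem:NCDcoupling} these can genuinely differ, so the downstream invocation of \Cref{claim:amgmbasedineq} would be made at the wrong argument (and since $g_\theta$ is decreasing, possibly in the unfavorable direction). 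Your repair --- replacing the multipliers $\E[X_j],\E[X_k]$ by $q_j,q_k$, noting that the proof of \Cref{lem:pipage} uses only the pointwise identity $X=\sigma X^++(1-\sigma)X^-$ and concavity, and that the modified step now preserves $\mu_S$ exactly while never touching $c_i$ for $i\notin S$ --- is clean and correct (the positivity $q_i\ge 1-\theta>0$ on $S$ guarantees finite $\rho^{\pm}$ and termination, and your separate treatment of $\theta=1$ is fine since the claimed bound then degenerates to \eqref{eqn:indcoin-generalization}). The remainder of your argument, including the rewriting of the single fractional term before applying \Cref{claim:amgmbasedineq}, matches the paper's proof of \Cref{lem:fracbucketingbound}. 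In short: same route as the paper, plus a necessary patch that the paper's two-sentence proof glosses over.
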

\begin{proof}[Proof]
For \Cref{eqn:indcoin-generalization}, we can directly confirm that the proof of \Cref{lem:basic-independent-proposal-bound} still goes through in this more general setting; 
the only change to the proof is the justification of \eqref{eqn:crucial-NCD} by the current lemma's assumption that $\Pr[ \wedge_{i \in I} \overline{X_i}] \le 1-q_i$. (NCD is a sufficient condition for this condition, but is of coure not necessary for it.)

For \Cref{eqn:fracbucket-generalization}, we can likewise confirm that the proof of \Cref{lem:fracbucketingbound} goes through, as the generalized pivotal sampling step of \Cref{lem:pipage}
holds regardless of the correlations between $\{X_i\}$, after which the argument in \Cref{eqn:fracbucket-generalization} only requires an application of \Cref{eqn:indcoin-generalization}, which we already established holds for $\{X_i\}$ as in the current lemma.
\end{proof}

Finally, the proof of \Cref{lem:edgeweightedbound} can be repeated for each $j$: we split the flow into flow from low-degree and high-degree neighbors, $x_{L, j} := \sum_{i : w_{i,t} \ge w, y_{i,t} \le \theta} \frac{x_{i,j,t}}{p_{j,t}}$ and $x_{H, j} := \sum_{i : w_{i,t} \ge w, y_{i,t} > \theta} \frac{x_{i,j,t}}{p_{j,t}}$, 
and then leveraging \Cref{eqn:fracbucket-generalization}, we obtain that for $\eps=0.11$ and $\delta=0.18$ in the step function $f(\cdot)$ as in \Cref{def:step}, our algorithm satisfies $$\E[\min(1,\hat{R}_{j,t,w})] \ge \edgeapprox \cdot \sum_{i: w_{i,t} \ge w} \frac{x_{i,j,t}}{p_{j,t}}.$$
By the same arguments as in \Cref{thm:edge-weighted}, this then implies our result for general distributions.

\begin{thm}\label{thm:edge-weighted-generalization}
    \Cref{alg:proposals-core-generalized} with input $\hat{\vec{x}}$, computed as in Definitions \ref{def:scaling-gen-app} and \ref{def:step}  (with $\eps=0.11$ and $\delta=0.18$) from $\vec{x}$ an optimal solution to \eqref{LP-PPSW-Gen}, is a polynomial-time $\edgeapprox$-approximate algorithm for edge-weighted stochastic online matching.
\end{thm}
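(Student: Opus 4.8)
The plan is to follow the proof of \Cref{thm:edge-weighted} essentially line for line, carrying along the extra ``type'' index and swapping each Bernoulli-specific ingredient for the non-Bernoulli analogue developed above. Polynomial-time solvability is immediate: \eqref{LP-PPSW-Gen} is a linear program of size polynomial in the input, and \Cref{alg:proposals-core-generalized} makes a single pivotal-sampling call per online step. For the approximation guarantee I would fix an optimal solution $\vec x$ to \eqref{LP-PPSW-Gen} and its rescaling $\hat{\vec x}$ from Definitions~\ref{def:scaling-gen-app} and~\ref{def:step} (with $\eps=0.11$, $\delta=0.18$); the preceding claim shows $\hat{\vec x}$ still satisfies Constraint~\eqref{eqn:PPSWConstraintgen}, so it is a legal input to \Cref{alg:proposals-core-generalized}.

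Writing $\calM$ for the matching produced, the chain of inequalities I would assemble is: first, from Property~\ref{level-set:prefix} of pivotal sampling and the fact that each arriving type-$j$ node at time $t$ is matched to a highest-weight proposer,
$$\sum_{j,t} p_{j,t}\int_0^\infty \E[\min(1,\hat R_{j,t,w})]\,dw \;\le\; \E[w(\calM)];$$
second, the per-$(j,t)$ estimate
$$\E[\min(1,\hat R_{j,t,w})] \;\ge\; \edgeapprox\cdot\sum_{i:\,w_{i,j,t}\ge w}\frac{x_{i,j,t}}{p_{j,t}},$$
obtained by rerunning \Cref{lem:edgeweightedbound} for each fixed type; and finally Fubini, $\int_0^\infty\sum_{i:\,w_{i,j,t}\ge w}\frac{x_{i,j,t}}{p_{j,t}}\,dw=\sum_i\frac{x_{i,j,t}}{p_{j,t}}\,w_{i,j,t}$, which telescopes the lower bound into $\edgeapprox\cdot\sum_{(i,j)\in E}\sum_t w_{i,j,t}\,x_{i,j,t}$, namely $\edgeapprox$ times the optimal value of \eqref{LP-PPSW-Gen}. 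The generalization of \Cref{lem:LP} (identical proof, carrying the type index) shows this value is at least $OPT_{on}$, which finishes the argument.

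The only place where something genuinely needs re-checking --- and hence the main obstacle --- is the type-by-type rerun of \Cref{lem:edgeweightedbound}. There I would split the $x$-flow at a type-$(j,t)$ arrival above threshold $w$ into a low-degree part $x_{L,j}:=\sum_{i:\,w_{i,j,t}\ge w,\,y_{i,t}\le\theta}x_{i,j,t}/p_{j,t}$ and a high-degree part $x_{H,j}$, verify $\hat x_{L,j}\ge(1-\eps)\,x_{L,j}$ and $\hat x_{H,j}=(1+\delta)\,x_{H,j}$ by monotonicity of the step function $f(\cdot)$, and then invoke the fractional bucketing bound in the form of \Cref{eqn:fracbucket-generalization} with threshold $\hat\theta=(1-\eps)\theta$. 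The subtle point to flag is that in \Cref{def:scaling-gen-app} one only has $\hat y_{i,t}\le\int_0^{y_{i,t}}f(z)\,dz$ (an inequality, not the equality of the Bernoulli case), which is still enough: for a low-degree $i$ it gives $\hat y_{i,t}\le\int_0^{\theta}f=\hat\theta$, so every low-degree edge lands in the set $S$ over which the fractional bucketing part of \Cref{eqn:fracbucket-generalization} is applied, while the high-degree terms $1-\hat x_{i,j,t}/p_{j,t}$ are upper bounded by $\exp(-\hat x_{i,j,t}/p_{j,t})$ exactly as before. Crucially \Cref{eqn:fracbucket-generalization} was proved using only \Cref{lem:NCDcoupling} (the one-sided bound $\Pr[\bigwedge_{i\in I}\overline{F_{i,t}}]\le\prod_{i\in I}y_{i,t}$), so the absence of full NCD is harmless. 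The remaining facts invoked in \Cref{lem:edgeweightedbound} --- Claims~\ref{claim:gdecreasing} and~\ref{scalingsumto1}, and the numerical optimization \eqref{eqn:optimization-edge-weighted} certifying $k_{\eps,\delta}(z)\ge\edgeapprox$ for all $z\in[0,1]$ --- are purely real-analytic and carry over verbatim. Everything else is bookkeeping of the extra type index.
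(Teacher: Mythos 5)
Your proposal is correct and follows essentially the same route as the paper's Appendix G: reduce to a per-type rerun of \Cref{lem:edgeweightedbound}, observe that the generalized rescaling still satisfies Constraint~\eqref{eqn:PPSWConstraintgen} despite $\hat y_{i,t}$ now only being bounded above by $\int_0^{y_{i,t}}f$, and note that the fractional bucketing bound survives because it only needs the one-sided occupancy bound of \Cref{lem:NCDcoupling} rather than full NCD. You correctly flag exactly the two subtleties the paper itself isolates, so nothing further is needed.
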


\section{A Truthful Mechanism for Matching Markets} \label{app-pricing}

In this section, we show our algorithm has implications for Bayesian mechanism design in matching markets. Consider a setting $n$ offline items $I$ and $T$ buyers who have unit-demand valuations over these items.\footnote{We recall a \emph{unit-demand valuation} $v(\cdot)$ is one such that $v(S) = \max_{T \subseteq S, |T| \le 1} v(T)$.} Each buyer $t \in [T]$ samples his valuation function $v_t(\cdot)$ from a known distribution $\mathcal{D}_t$; at time $t$ we must irrevocably decide which item to assign to $t$. Our goal is to maximize social welfare, defined as the sum of the valuations of all agents. We refer to this as an \emph{online Bayesian matching market}. 

Note that if the realization of each valuation is revealed to us, this is equivalent to the setting considered earlier, and \Cref{alg:proposals-core-generalized} will in polynomial-time achieve a $\edgeapprox$-approximation to the social welfare achievable by the optimum online algorithm. In this section, we argue that this guarantee extends to the case where buyers are strategic agents, and must not truthfully reveal their valuation. There is an extensive line of work studying this setting, and generalizations thereof, showing that we can achieve a (tight) $\nicefrac{1}{2}$-approximation to the social welfare of the allocation of optimum offline. This work is the first to explicitly study the question against the online benchmark. 

We achieve our guarantee by using an (adaptive) pricing-based mechanism, which is easily observed to be dominant-strategy incentive-compatible (DSIC). The prices are set via a recent result of \cite{banihashem2024power}, which guarantees there is no decrease in the social welfare. We can further show that these prices are computable efficiently. 

\begin{definition}
    A matching mechanism is \emph{pricing-based} if for every buyer $t$, before the arrival of $t$ it sets a price $\pi_i$ for every item $i$, depending only on the set of items remaining and $\mathcal{D}_t$ (not its realization). It then queries buyer $t$ for his values $v_t(i)$ for each item $i \in I$, assigns buyer $t$ the item given by $i^* := \arg\max_{i \in \{\emptyset \} \cup I}  (v_t(i) - \pi_i)$, and charges him a price of $\pi_{i^*}$. (We naturally define $v_t(\emptyset)$ and $\pi_{\emptyset}$ to be 0.)
\end{definition}

\begin{obs}
    A pricing-based mechanism is DSIC. 
\end{obs}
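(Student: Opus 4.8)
The plan is to verify the standard ``take-it-or-leave-it menu'' argument: in a pricing-based mechanism, each buyer faces a fixed menu of (item, price) pairs that does not depend on her own report, so reporting truthfully is a pointwise best response and hence a dominant strategy.

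First I would fix an arbitrary buyer $t$, an arbitrary true valuation $v_t$, and an arbitrary realization of everything else --- the valuations, reports, and assignments of all buyers other than $t$, together with any internal randomness of the mechanism used before $t$ is processed. By the definition of a pricing-based mechanism, this data determines the set $S \subseteq I$ of items still available when $t$ arrives and the prices $(\pi_i)_{i \in S}$; crucially, neither $S$ nor the $\pi_i$ depends on the value $\hat v_t$ that $t$ reports, since by definition they depend only on the remaining items and $\mathcal{D}_t$. Extend the price vector by setting $\pi_i := +\infty$ for $i \notin S$ and $\pi_{\emptyset} := 0$, so that the item allocated to $t$ upon report $\hat v_t$ is $i^*(\hat v_t) := \arg\max_{i \in \{\emptyset\} \cup I} (\hat v_t(i) - \pi_i)$ and $t$'s realized utility is $v_t(i^*(\hat v_t)) - \pi_{i^*(\hat v_t)}$ (her true value for the item she gets, minus the price charged).

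Next, observe that when $t$ reports truthfully ($\hat v_t = v_t$) she obtains utility $v_t(i^*(v_t)) - \pi_{i^*(v_t)} = \max_{i \in \{\emptyset\}\cup I} (v_t(i) - \pi_i)$, which is at least $v_t(\emptyset) - \pi_{\emptyset} = 0$. For any alternative report $\hat v_t$, the mechanism assigns her $i^*(\hat v_t)$ and charges $\pi_{i^*(\hat v_t)}$, so her utility is $v_t(i^*(\hat v_t)) - \pi_{i^*(\hat v_t)} \le \max_{i \in \{\emptyset\}\cup I}(v_t(i) - \pi_i)$, i.e., no more than her truthful utility. Moreover $t$'s utility depends only on her own allocation and payment, so her report's effect on later buyers is irrelevant to her. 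Since the inequality holds pointwise for every realization of the other agents and of the mechanism's randomness, truthful reporting weakly dominates every other (possibly randomized, possibly history-dependent) strategy, which is exactly DSIC.

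There is essentially no obstacle here: the only point requiring care is the property, built into the definition, that the menu $(S, (\pi_i)_{i \in S})$ presented to buyer $t$ is independent of $t$'s reported valuation; once that is in hand the remainder is the routine single-agent utility-maximization calculation above.
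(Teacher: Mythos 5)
Your argument is correct: it is the standard take-it-or-leave-it menu argument, and since the menu $(S,(\pi_i)_{i\in S})$ faced by buyer $t$ is by definition independent of $t$'s report, truthful reporting pointwise maximizes quasi-linear utility. The paper states this observation without proof, treating it as immediate, and your write-up is exactly the routine justification the authors had in mind.
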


Using \cite{banihashem2024power}, we will show that \Cref{alg:proposals-core-generalized} can be converted into one that is pricing-based with (i) no loss in the expected social welfare and (ii) prices that are computable in polynomial time. 

\begin{lemma}
    There exists a pricing-based mechanism $\mathcal{M}$ for online Bayesian matching markets obtaining social welfare is at least a $\edgeapprox$-approximation to that of the optimum online algorithm, which sees the true valuations. Furthermore, each computation of prices by $\mathcal{M}$ can be done in polynomial-time. 
\end{lemma}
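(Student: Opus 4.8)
The plan is to invoke the black-box reduction of Banihashem et al.~\cite{banihashem2024power}, which transforms a broad class of online allocation algorithms into pricing-based mechanisms without loss in expected social welfare, and then verify that the transformation's outputs (the prices) remain polynomial-time computable when applied to \Cref{alg:proposals-core-generalized}. Concretely, I would first recall the precise statement of their reduction: given an online algorithm $\mathcal{A}$ for a feasibility constraint (here, bipartite matching, which is a special case of the matroid/downward-closed settings they handle) satisfying the appropriate monotonicity/consistency properties, there is a pricing-based mechanism whose expected welfare equals $\E[\mathrm{ALG}]$ where $\mathrm{ALG}$ is the welfare $\mathcal{A}$ obtains when run on the realized valuations. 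The key observation is that \Cref{alg:proposals-core-generalized} is exactly such an online algorithm: at each step $t$ it inspects the realized type $j$ (equivalently, the realized valuation $v_t$), and among the currently free offline nodes it matches $t$ to the argmax-weight proposer $i^*$ drawn via pivotal sampling on the precomputed vector $(\hat r_{i,j,t})_i$. Crucially, the distribution over which offline node $t$ is matched to depends on the realization $v_t$ only through the sorted order of the weights $w_{i,j,t}$ and not adversarially — this is exactly the structure the reduction needs.

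Second, I would carry out the welfare accounting: by \Cref{thm:edge-weighted-generalization}, running \Cref{alg:proposals-core-generalized} on $\hat{\vec x}$ (computed from an optimal solution to \eqref{LP-PPSW-Gen}) yields expected welfare at least $\edgeapprox \cdot \mathrm{OPT}(\textup{\ref{LP-PPSW-Gen}}) \ge \edgeapprox \cdot OPT_{on}$, where the last inequality is the generalized analogue of \Cref{lem:LP}. Composing with the Banihashem et al.~reduction, which preserves expected welfare exactly, gives a pricing-based mechanism $\mathcal{M}$ with expected welfare at least $\edgeapprox \cdot OPT_{on}$. Since $\mathcal{M}$ is pricing-based, it is DSIC by the stated observation. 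The only remaining point is the running time of price computation: the reduction computes the price $\pi_i$ offered to buyer $t$ for item $i$ as (essentially) the expected marginal contribution of $i$ to $\mathcal{A}$'s future allocation, conditioned on the current set of free nodes. For \Cref{alg:proposals-core-generalized} this conditional expectation can be evaluated in polynomial time: conditioned on the realized free set $F_t$, the future behavior of the algorithm is governed by the precomputed $\hat{\vec x}$ (hence the $\hat r_{i,j,t}$), and by \Cref{obs:matchmarginals}-type identities the relevant marginal probabilities have closed forms, or can be computed by a polynomial-size dynamic program over time steps and a polynomial number of conditioning events. I would spell out that the price for $i$ at time $t$ is a difference of two such efficiently-computable quantities (welfare-to-go with $i$ present versus removed), and that each is a sum over future $t' > t$ of $p_{j,t'}$-weighted expectations of $\min(1,\hat R_{j,t',w})$-type terms, all of which are polynomially evaluable given $\hat{\vec x}$.

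I expect the main obstacle to be \emph{matching the hypotheses of the Banihashem et al.~reduction precisely} — in particular, checking that \Cref{alg:proposals-core-generalized} satisfies whatever structural requirement their theorem imposes (e.g., that the algorithm's allocation rule at each step is a fixed randomized function of the realized valuation and the current feasible set, independent of the realizations' actual magnitudes beyond their induced preference order, and that the resulting allocation is always feasible — here, a matching). The feasibility is immediate since the algorithm only ever matches a free offline node. The independence-of-future-from-past-realizations structure holds because $\hat{\vec x}$ is fixed in advance and pivotal sampling at time $t$ uses only $(\hat r_{i,j,t})_i$ and $F_t$. A secondary, more mechanical obstacle is verifying polynomial-time price computation rigorously; I would handle this by exhibiting the explicit dynamic program, noting that the state space (time step $\times$ conditioning on which of a bounded set of relevant offline nodes are free) is polynomial, leaning on the fact that for the welfare-to-go computation one only needs, per future online node, the quantity $\E[\min(1,\hat R_{j,t',w})]$ under the updated free-probabilities, which the analysis of \Cref{sec:edge-weighted} already shows how to handle. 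Everything else — DSIC, the welfare bound — is a direct composition of results already in the paper.
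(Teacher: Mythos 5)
Your high-level plan coincides with the paper's: invoke the black-box reduction of \cite{banihashem2024power}, compose with \Cref{thm:edge-weighted-generalization} to get the $\edgeapprox$ welfare bound, and note that pricing-based implies DSIC. The welfare accounting and the feasibility/structural checks you describe are fine (the only hypothesis mismatch the paper flags is that the arrival order is known rather than adversarial, which is handled syntactically).

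The genuine gap is in your argument for polynomial-time price computation. You describe the prices as welfare-to-go differences (``welfare with $i$ present versus removed''), to be evaluated by a dynamic program over future time steps. That is not what the reduction of \cite{banihashem2024power} requires, and it is not clear your computation can be carried out in polynomial time: the welfare-to-go of \Cref{alg:proposals-core-generalized} conditioned on the current free set depends on the \emph{joint} distribution of the future free statuses $\{F_{i,t'}\}_i$, whose state space is exponential and which is not a product distribution; the analysis in the paper only establishes marginals and negative-dependence \emph{bounds} on quantities like $\E[\min(1,\hat R_{j,t',w})]$, not exact closed forms, so the ``polynomially evaluable'' claim does not follow. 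The paper's argument avoids this entirely: by \cite[Theorem~4]{banihashem2024power}, the price vector at time $t$ is obtained by solving their (LP1), whose coefficients are only the \emph{one-step} conditional allocation probabilities $x_{i,t}^R=\Pr[(i,t)\in\calM \mid \{F_{i,t}\}_i=R]$ and $x_\emptyset^R$ for the single realization $R$ of the free set actually observed at time $t$. Because \Cref{alg:proposals-core-generalized} is past-valuation-independent (its decision at time $t$ depends only on $F_t$, the realized valuation, and the precomputed $\hat{\vec x}$), these probabilities are efficiently computable from the pivotal-sampling marginal and prefix properties, the LP has at most $|\calD_t|\cdot(n+1)$ variables, and it is solved only once per time step --- so no look-ahead, no dynamic program, and no enumeration over free sets is ever needed. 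To repair your proof you should replace the welfare-to-go computation with this one-step-conditional-probability LP.
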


\begin{proof}
    We rely on \cite[Theorem 4]{banihashem2024power}. Our setting of an online Bayesian matching market is a special case of their model where the \emph{outcome space} of each agent is the subsets of items of size at most 1, and the \emph{feasible outcomes} are those that allocate each item at most once (i.e., form a matching). We note that in our setting the arrival order of online agents is known, unlike that considered by \cite{banihashem2024power}, but the difference can be handled syntactically. By \cite[Theorem 4]{banihashem2024power}, there hence exists a pricing-based mechanism $\mathcal{M}$ which achieves social welfare at least that of \Cref{alg:proposals-core-generalized} in expectation, and furthermore maintains the exact distribution over assigned outcomes (i.e., the joint distribution of $\{F_{i,t}\}_i$ for each $t$ is unchanged). 
    
   It remains to argue that the prices of $\mathcal{M}$ can be computed efficiently. Note first that \Cref{alg:proposals-core-generalized} is \emph{past-valuation-independent}, in the sense that its matching decision for agent $t$ depends only on the set of free items $\{F_{i,t}\}_i$, the arriving valuation $v_t(\cdot)$, and the input distributions. This means, that conditioned on a fixed realization $R$ of $\{F_{i,t}\}_i$, we can efficiently calculate the conditional probabilities $$x_{i,t}^R := \Pr[(i,t) \in \mathcal{M} \mid \{ F_{i,t} \}_i = R]$$ and $$x_{\emptyset}^R := \Pr[ t \not\in \mathcal{M} \mid \{ F_{i,t} \}_i = R].$$ Thus for any fixed $R$, the linear program \cite[(LP1)]{banihashem2024power} has at most $|\mathcal{D}_t| \cdot (n+1)$ variables, and coefficients that can be computed efficiently. Hence we can solve this LP in polynomial time. As the reduction of \cite[Theorem 4]{banihashem2024power} only requires solving this LP at most $t$ times (for each observed realization of $\{F_{i,t}\}_i$), we hence can compute all prices used in one run of the mechanism. 
\end{proof}

\end{document}